\let\bm\mathbf
\def\BibTeX{{\rm B\kern-.05em{\sc i\kern-.025em b}\kern-.08em
  T\kern-.1667em\lower.7ex\hbox{E}\kern-.125em}}
\definecolor{LinkColor}{rgb}{0.36,0.40,0.68}
\definecolor{CiteColor}{HTML}{717c8e}
    \renewcommand*{\bm}[1]{\textbf{#1}}%
    \renewcommand*{\Vec}[1]{\textbf{#1}}%
    \renewcommand{\THEOREM}{Theorem\ }
    \renewcommand{\LEMMA}{Lemma\ }
    \renewcommand{\LEMMAS}{Lemmas\ }
    \renewcommand{\STEP}{Step\ }
\def\ldelim{\ast}
\def\rdelim{\ast}
\newcommand{\?}[1]{%
  \ifthenelse{\boolean{showcomments}}%
  {\textcolor{blue}{\emph{$\textcolor{blue}{\ldelim}$\,#1\,$\textcolor{blue}{\rdelim}$}}}%
  {}%
}%
\newcommand{\FixThis}[1]{%
  \ifthenelse{\boolean{showcomments}}%
  {\textcolor{gray}{\emph{$\ldelim$ #1 $\rdelim$}}}%
  {}%
}%
\newcommand{\TODO}[1][]{%
  \ifthenelse{\boolean{showcomments}}%
  {\emph{\(\textcolor{red}{\ldelim}\) \textcolor{red}{To-Do\IfNE{#1}{: #1}} \(\textcolor{red}{\rdelim}\)}}%
  {}%
}
\let\emptyset\varnothing
\let\EXISTS\exists
\let\NEXISTS\nexists
\renewcommand{\exists}{\EXISTS\,}
\renewcommand{\nexists}{\NEXISTS\,}
\let\FORALL\forall
\renewcommand{\forall}{\FORALL\,}
\newcommand{\HIDEDRAFT}[1]{}
\definecolor{AlgBodyKWColor}{rgb}{0.20,0.32,0.24}
\definecolor{AlgSetupKWColor}{rgb}{0.20,0.32,0.24}
\definecolor{AlgConditionalColor}{rgb}{0.00,0.41,0.43}
\renewcommand\paragraph{\@startsection{paragraph}{8}{\z@}%
  {1ex \@plus1ex \@minus.2ex}%
  {-1em}%
  {\normalfont\normalsize\itshape}}
\titleformat{\paragraph}[runin]{\normalfont\itshape}{}{}{\theparagraph #1.}
\renewenvironment{proof}[1]{\par
  \pushQED{\qed}%
  \normalfont \topsep6\p@\@plus6\p@\relax
  \trivlist
  \item\relax
  {\itshape
  \proofname\ifthenelse{\equal{#1}{}}{\@addpunct{.}}{ (#1)\@addpunct{.}}}\hspace\labelsep\ignorespaces
}{%
  \popQED\endtrivlist\@endpefalse \vskip8\p@\@plus6\p@\relax
}
\newenvironment{subproof}[1]{\par
  \pushQED{\qed}
  \normalfont \topsep6\p@\@plus6\p@\relax
  \trivlist
  \item\relax
  {\itshape
  \proofname\ifthenelse{\equal{#1}{}}{\@addpunct{.}}{ (#1)\@addpunct{.}}}\hspace\labelsep\ignorespaces
}{%
  \popQED\endtrivlist\@endpefalse \vskip8\p@\@plus6\p@\relax
}
\newtheoremstyle{claim}
{3pt}
{3pt}
{\itshape}
{}
{\bfseries}
{.}
{.5em}
{}
\newtheoremstyle{property}
{3pt}
{3pt}
{\itshape}
{}
{}
{}
{.5em}
{}
\newtheoremstyle{LemmaRepeat}
{3pt}
{3pt}
{\itshape}
{}
{\bfseries}
{.}
{.5em}
{\thmname{#1}\thmnote{ #3}}
\newtheorem{theorem}{Theorem}[section]
\newtheorem{corollary}[theorem]{Corollary}
\newtheorem{lemma}[theorem]{Lemma}
\newtheorem{fact}{Fact}[section]
\newtheorem{remark}{Remark}[section]
\newtheorem*{theorem*}{Theorem (restatement)}
\newtheorem*{lemma*}{Lemma (restatement)}
\newtheorem*{corollary*}{Corollary (restatement)}
\newtheorem*{fact*}{Fact (restatement)}
\newtheorem{definition}{Definition}[section]
\theoremstyle{claim}
\newtheorem{claim}[theorem]{Claim}
\theoremstyle{property}
\theoremstyle{LemmaRepeat}
\newlist{itemizetriangle}{itemize}{10}
\setlist[itemizetriangle,1]{label=$\textcolor{black}{\blacktriangleright}$}
\setlist[itemizetriangle,2]{label=\textbullet}
\setlist[itemizetriangle,3]{label=\textendash}
\setlist[itemizetriangle,4]{label=$\textcolor{black}{\triangleright}$}
\newlist{Enumerate}{enumerate}{5}
\setlist[Enumerate,1]{label={(\arabic*)}}
\setlist[Enumerate,2]{label={(\roman*)}}
\setlist[Enumerate,3]{label={(\alph*)}}
\newlist{EnumerateText}{enumerate}{5}
\setlist[EnumerateText,1]{label={(\alph*)}}
\setlist[EnumerateText,2]{label={(\arabic*)}}
\setlist[EnumerateText,3]{label={(\roman*)}}
\newlist{EnumerateThm}{enumerate}{5}
\setlist[EnumerateThm,1]{label={\textup{\thetheorem{} (\arabic*)}}, leftmargin=0.08\linewidth}
\setlist[EnumerateThm,2]{label={(\roman*)}}
\setlist[EnumerateThm,3]{label={(\alph*)}}
\newlist{EnumeratePropRef}{enumerate}{5}
\setlist[EnumeratePropRef,1]{label={\textup{\Prop \thetheorem{} (\arabic*) (HD)}}, leftmargin=0.18\linewidth,first=\itshape}
\setlist[EnumeratePropRef,2]{label={(\roman*)}}
\setlist[EnumeratePropRef,3]{label={(\alph*)}}
\newlist{EnumerateSub}{enumerate}{5}
\setlist[EnumerateSub,1]{label={(\roman*)}}
\setlist[EnumerateSub,2]{label={(\arabic*)}} 
\setlist[EnumerateSub,3]{label={(\alph*)}}
\newlist{EnumerateAlg}{enumerate}{5}
\setlist[EnumerateAlg,1]{label={\bfseries\small\arabic*:},left=0pt,itemindent=0pt}
\setlist[EnumerateAlg,2]{label={(\roman*)}}
\setlist[EnumerateAlg,3]{label={(\alph*)}}
\newlist{EnumerateInline}{enumerate*}{3}
\setlist[EnumerateInline,1]{label={(\roman*)}}
\setlist[EnumerateInline,2]{label={(\alph*)}}
\setlist[EnumerateInline,3]{label={(arabic*)}}
\definecolor{ColorGrayOne}  {rgb}{0.3,0.3,0.3}
\definecolor{ColorGrayTwo}  {rgb}{0.5,0.5,0.5}
\definecolor{ColorGrayThree}{rgb}{0.7,0.7,0.7}
\definecolor{ColorGrayFour} {rgb}{0.8,0.8,0.8}
\definecolor{ColorBlueGrayOne}  {rgb}{0.12,0.15,0.40}
\definecolor{ColorBlueGrayTwo}  {rgb}{0.25,0.30,0.60}
\definecolor{ColorBlueGrayThree}{rgb}{0.70,0.70,0.80}
\definecolor{ColorBlueGrayFour} {rgb}{0.80,0.80,0.90}
\definecolor{ColorGrayBlueOne}  {rgb}{0.30,0.45,0.60}
\definecolor{ColorGrayBlueTwo}  {rgb}{0.15,0.38,0.58}
\definecolor{ColorGrayBlueThree}{rgb}{0.25,0.47,0.38}
\definecolor{ColorGrayBlueFour} {rgb}{0.05,0.31,0.38}
\definecolor{ColorBlueOne}  {rgb}{0.00,0.50,0.65}
\definecolor{ColorBlueTwo}  {rgb}{0.00,0.38,0.58}
\definecolor{ColorBlueThree}{rgb}{0.00,0.47,0.38}
\definecolor{ColorBlueFour} {rgb}{0.00,0.31,0.38}
\definecolor{ColorBrownOne}  {rgb}{0.20,0.10,0.05}
\definecolor{ColorBrownTwo}  {rgb}{0.36,0.10,0.08}
\definecolor{ColorBrownThree}{rgb}{0.60,0.40,0.12}
\definecolor{ColorBrownFour} {rgb}{0.80,0.60,0.50}
\definecolor{ColorPurpleOne}  {rgb}{0.4,0.2,0.4}
\definecolor{ColorPurpleTwo}  {rgb}{0.6,0.3,0.6}
\definecolor{ColorPurpleThree}{rgb}{0.8,0.4,0.8}
\definecolor{ColorPurpleFour} {rgb}{0.9,0.5,0.9}
\definecolor{ColorGreenOne}  {rgb}{0.20,0.30,0.15}
\definecolor{ColorGreenTwo}  {rgb}{0.24,0.40,0.20}
\definecolor{ColorGreenThree}{rgb}{0.37,0.59,0.31}
\definecolor{ColorGreenFour} {rgb}{0.47,0.69,0.41}
\definecolor{ColorBlueGreenOne}  {rgb}{0.00,0.40,0.30}
\definecolor{ColorBlueGreenTwo}  {rgb}{0.10,0.36,0.30}
\definecolor{ColorBlueGreenThree}{rgb}{0.12,0.40,0.28}
\definecolor{ColorBlueGreenFour} {rgb}{0.15,0.45,0.32}
\definecolor{ColorBrown} {rgb}{0.8,0.8,0.8}
\definecolor{KWColor}{rgb}{0.0,0.2,0.4}
\definecolor{TermColor}{rgb}{0.00,0.06,0.56}
\definecolor{EmphColor}{rgb}{0.5,0.3,0.1}
\definecolor{ProbTermColor}{rgb}{0.02,0.30,0.24}
\definecolor{MathInlineColor}{rgb}{0.18,0.38,0.48}
\definecolor{MathInlineBrColor}{rgb}{0.36,0.36,0.20}
\newcommand{\Sub}[1]{\ifthenelse{\isempty{#1}}{}{_{#1}}}
\newcommand{\Sup}[1]{\ifthenelse{\isempty{#1}}{}{^{#1}}}
\newcommand{\IfNE}[2]{\ifthenelse{\isempty{#1}}{}{#2}}
\newcommand{\IfENE}[3]{\ifthenelse{\isempty{#1}}{#2}{#3}}
\newcommand{\OperatorName}[1]{\mathsf{#1}}
\newcommand{\R}{\mathbb{R}}
\newcommand{\Z}{\mathbb{Z}}
\newcommand{\N}{\mathcal{N}}
\newcommand{\dLn}{\longrightarrow~~}
\newcommand{\dCmt}[1][\qquad]{#1\blacktriangleright}
\newcommand{\sgn}[1][]{%
  \OperatorName{sign}%
  \ifthenelse{\isempty{#1}}{}%
  {\ifthenelse{\equal{#1}{-} \OR \equal{#1}{+}}{_{\scriptscriptstyle #1}}{_{#1}}}%
}
\newcommand{\sgnO}{\sgn[0]}
\NewDocumentCommand{\Sgn}{O{} D(){}}{\sgn[#1]\ifthenelse{\isempty{#2}}{}{\left( #2 \right)}}
\NewDocumentCommand{\SgnO}{D(){}}{\sgnO\ifthenelse{\isempty{#1}}{}{\left( #1 \right)}}
\NewDocumentCommand{\Round}{D(){}}{\Round\ifthenelse{\isempty{#1}}{}{\left( #1 \right)}}
\newcommand{\E}{\operatornamewithlimits{\mathbb{E}}}
\NewDocumentCommand{\Ex}{s t_ !O{} D(){}}%
{\mathbb{E}\IfBooleanT{#2}{\Sub{#3}}\ifthenelse{\isempty{#4}}{}{\IfBooleanTF{#1}{[ #4 ]}{\left[ {#4} \right]}}}
\newcommand{\Ent}{H}
\newcommand{\Entropy}[2][]{\Ent\ifthenelse{\isempty{#1}}{}{_{#1}} \ifthenelse{\isempty{#2}}{}{\left( #2 \right)}}
\newcommand{\T}{T}
\NewDocumentCommand{\I}{s m}{{%
 \def\dfmt{\left( #2 \right)}
 \def\tfmt{( #2 )}
 \mathbb{I}
 \IfBooleanTF{#1}{}{\ifthenelse{\isempty{#2}}{}{
    \mathchoice
     {\dfmt} 
     {\tfmt} 
     {\tfmt} 
     {\tfmt} 
}}}}
\RenewDocumentCommand{\Vec}{s +m}{\IfBooleanTF{#1}{#2}{\bm{\mathrm{#2}}}}
\NewDocumentCommand{\Mat}{s +m}{\IfBooleanTF{#1}{#2}{\bm{\mathrm{#2}}}}
\newcommand{\Set}[1]{\mathcal{#1}}
\newcommand{\supp}{\OperatorName{supp}}
\NewDocumentCommand{\Supp}{D(){}}{\supp\ifthenelse{\isempty{#1}}{}{\left( #1 \right)}}
\newcommand{\RV}[1]{#1}
\newcommand{\SparseSubset}[2]{\pi\Sub{#2}\IfNE{#1}{\left[ #1 \right]}}
\newcommand{\SSp}{\text{\hspace{1pt}}}
\newcommand{\MAT}[2]{\mathrm{Mat}\Sub{#1}\IfNE{#2}{( #2 )}}
\NewDocumentCommand{\Rank}{D(){}}{\OperatorName{rank}\ifthenelse{\isempty{#1}}{}{\left( #1 \right)}}
\NewDocumentCommand{\Dim}{D(){}}{\OperatorName{dim}\ifthenelse{\isempty{#1}}{}{\left( #1 \right)}}
\NewDocumentCommand{\Span}{D(){}}{\OperatorName{span}\ifthenelse{\isempty{#1}}{}{\left( #1 \right)}}
\NewDocumentCommand{\Col}{D(){}}{\OperatorName{col}\ifthenelse{\isempty{#1}}{}{\left( #1 \right)}}
\NewDocumentCommand{\Row}{D(){}}{\OperatorName{row}\ifthenelse{\isempty{#1}}{}{\left( #1 \right)}}
\NewDocumentCommand{\Nul}{D(){}}{\OperatorName{nul}\ifthenelse{\isempty{#1}}{}{\left( #1 \right)}}
\NewDocumentCommand{\Ker}{D(){}}{\OperatorName{ker}\ifthenelse{\isempty{#1}}{}{\left( #1 \right)}}
\NewDocumentCommand{\Proj}{s O{} D(){}}%
{\OperatorName{proj}\IfNE{#2}{_{#2}}\ifthenelse{\isempty{#3}}{}{\IfBooleanTF{#1}{(#3)}{\left( #3 \right)}}}
\NewDocumentCommand{\Comp}{s O{} D(){}}%
{\OperatorName{comp}\IfNE{#2}{_{#2}}\ifthenelse{\isempty{#3}}{}{\IfBooleanTF{#1}{(#3)}{\left( #3 \right)}}}
\newcommand{\Diag}{\OperatorName{diag}}
\newcommand{\lnorm}[1]{\ell_{#1}}
\NewDocumentCommand{\Log}{s O{} O{} D(){}}{\log#2#3\IfNE{#4}{\IfBooleanTF{#1}{( #4 )}{\left( #4 \right)}}}
\NewDocumentCommand{\Ln}{s O{} O{} D(){}}{\ln#2#3\IfNE{#4}{\IfBooleanTF{#1}{( #4 )}{\left( #4 \right)}}}
\NewDocumentCommand{\Lg}{s O{} O{} D(){}}{\lg#2#3\IfNE{#4}{\IfBooleanTF{#1}{( #4 )}{\left( #4 \right)}}}
\NewDocumentCommand{\Cos}{s O{} O{} D(){}}{\cos#2#3\IfNE{#4}{\IfBooleanTF{#1}{( #4 )}{\left( #4 \right)}}}
\NewDocumentCommand{\Arccos}{s O{} O{} D(){}}{\arccos#2#3\IfNE{#4}{\IfBooleanTF{#1}{( #4 )}{\left( #4 \right)}}}
\NewDocumentCommand{\Sin}{s O{} O{} D(){}}{\sin#2#3\IfNE{#4}{\IfBooleanTF{#1}{( #4 )}{\left( #4 \right)}}}
\NewDocumentCommand{\Arcsin}{s O{} O{} D(){}}{\arcsin#2#3\IfNE{#4}{\IfBooleanTF{#1}{( #4 )}{\left( #4 \right)}}}
\NewDocumentCommand{\Tan}{s O{} O{} D(){}}{\tan#2#3\IfNE{#4}{\IfBooleanTF{#1}{( #4 )}{\left( #4 \right)}}}
\NewDocumentCommand{\Arctan}{s O{} O{} D(){}}{\arctan#2#3\IfNE{#4}{\IfBooleanTF{#1}{( #4 )}{\left( #4 \right)}}}
\NewDocumentCommand{\pr}{D(){} +m !D||{}}{\operatornamewithlimits{Pr}\ifthenelse{\isempty{#1}}{}{_{#1}}\left[\SSp \ifthenelse{\isempty{#3}}{}{\left.} #2 \ifthenelse{\isempty{#3}}{}{\SSp\right|\SSp #3} \SSp\right]}
\NewDocumentCommand{\Var}{s D(){}}{\OperatorName{\bm{Var}}\ifthenelse{\isempty{#2}}{}{\IfBooleanTF{#1}{(#2)}{\left( #2 \right)}}}
\NewDocumentCommand{\Bern}{D(){}}{\OperatorName{Bern}\ifthenelse{\isempty{#1}}{}{\left( #1 \right)}}
\NewDocumentCommand{\Bin}{O{,} >{\SplitList{#1}} D(){}}{\OperatorName{Bin}\ifthenelse{\isempty{#2}}{}{\left( \printtwonop#2{}{} \right)}}
\NewDocumentCommand{\Geom}{D(){}}{\OperatorName{Geom}\ifthenelse{\isempty{#1}}{}{\left( #1 \right)}}
\NewDocumentCommand{\NBin}{O{,} >{\SplitList{#1}} D(){}}{\OperatorName{NBin}\ifthenelse{\isempty{#2}}{}{\left( \printthreenop#2{}{}{} \right)}}
\NewDocumentCommand{\HGeom}{O{,} >{\SplitList{#1}} D(){}}{\OperatorName{HGeom}\ifthenelse{\isempty{#2}}{}{\left( \printtwonop#2{}{} \right)}}
\NewDocumentCommand{\Pois}{D(){}}{\OperatorName{Pois}\ifthenelse{\isempty{#1}}{}{\left( #1 \right)}}
\NewDocumentCommand{\Unif}{O{,} >{\SplitList{#1}} D(){}}{\OperatorName{Unif}\ifthenelse{\isempty{#2}}{}{\left( \printtwonop#2{}{} \right)}}
\NewDocumentCommand{\DiscreteUnif}{D(){}}{\OperatorName{Unif}\ifthenelse{\isempty{#1}}{}{\left( #1 \right)}}
\NewDocumentCommand{\DiscreteUnifKSet}{m D(){}}%
{\OperatorName{Unif}_{#1}\ifthenelse{\isempty{#2}}{}{\left( #2 \right)}}
\newcommand{\printonenoparen}[1]{\ifthenelse{\isempty{#1}}{}{#1\text{-}}}
\newcommand{\printthree}[3]{\ifthenelse{\isempty{#1}}{}{(#1, #2, #3)\text{-}}}
\newcommand{\printtwo}[2]{\ifthenelse{\isempty{#1}}{}{(#1, #2)\text{-}}}
\newcommand{\printone}[2]{\ifthenelse{\isempty{#1}}{}{#1\text{-}}}
\newcommand{\printtwonop}[2]{#1, #2}
\newcommand{\printthreenop}[3]{#1, #2, #3}
\NewDocumentCommand{\UFF}{O{,} >{\SplitList{#1}} D<>{}}{\printtwo#2{}{}{} \OperatorName{UFF}}
\NewDocumentCommand{\CFF}{O{,} >{\SplitList{#1}} D<>{}}{\printtwo#2{}{}{} \OperatorName{CFF}}
\NewDocumentCommand{\PUFF}{O{,} >{\SplitList{#1}} D<>{}}{\printone#2{}{}{} \OperatorName{PUFF}}
\NewDocumentCommand{\RUFF}{O{,} >{\SplitList{#1}} D<>{}}{\printthree#2{}{}{} \OperatorName{RUFF}}
\NewDocumentCommand{\RUFFs}{O{,} >{\SplitList{#1}} D<>{}}{\printthree#2{}{}{} \OperatorName{RUFF}s}
\newcommand{\defeq}{\triangleq}
\NewDocumentCommand{\Th}{t' O{} O{}}{^{\IfBooleanT{#1}{\prime}#2\mathrm{th}#3}}
\newcommand{\BIGO}{O}
\newcommand{\BIGOMEGA}{\Omega}
\newcommand{\BIGTHETA}{\Theta}
\newcommand{\LITTLEO}{o}
\newcommand{\LITTLEOMEGA}{\omega}
\NewDocumentCommand{\BigO}{t' s D(){}}{\IfBooleanTF{#1}{\tilde{\BIGO}}{\BIGO}\IfBooleanTF{#2}{(#3)}{\ifthenelse{\isempty{#3}}{}{\left( #3 \right)}}}
\NewDocumentCommand{\BigOmega}{t' s D(){}}{\IfBooleanTF{#1}{\tilde{\BIGOMEGA}}{\BIGOMEGA}\IfBooleanTF{#2}{(#3)}{\ifthenelse{\isempty{#3}}{}{\left( #3 \right)}}}
\NewDocumentCommand{\BigTheta}{t' s D(){}}{\IfBooleanTF{#1}{\tilde{\BIGTHETA}}{\BIGTHETA}\IfBooleanTF{#2}{(#3)}{\ifthenelse{\isempty{#3}}{}{\left( #3 \right)}}}
\NewDocumentCommand{\LittleO}{t' s D(){}}{\IfBooleanTF{#1}{\tilde{\LITTLEO}}{\LITTLEO}\IfBooleanTF{#2}{(#3)}{\ifthenelse{\isempty{#3}}{}{\left( #3 \right)}}}
\NewDocumentCommand{\LittleOmega}{t' s D(){}}{\IfBooleanTF{#1}{\tilde{\LITTLEOMEGA}}{\LITTLEOMEGA}\IfBooleanTF{#2}{(#3)}{\ifthenelse{\isempty{#3}}{}{\left( #3 \right)}}}
\newcommand{\cIf}{\text{if}~}
\newcommand{\cOtherwise}{\text{otherwise}}
\newcommand{\+}{\phantom{-}}
\newcommand{\tab}{\ensuremath{~~~~}}
\newcommand{\TAB}{\ensuremath{~~}}
\newcommand{\SPACE}{\ensuremath{~}}
\newcommand{\Tab}[1][1]{%
    \ifthenelse{#1>0}{\tab}{}%
    \ifthenelse{#1>1}{\tab}{}%
    \ifthenelse{#1>2}{\tab}{}%
    \ifthenelse{#1>3}{\tab}{}%
    \ifthenelse{#1>4}{\tab}{}%
    \ifthenelse{#1>5}{\tab}{}%
    \ifthenelse{#1>6}{\tab}{}%
    \ifthenelse{#1>7}{\tab}{}%
    \ifthenelse{#1>8}{\tab}{}%
    \ifthenelse{#1>9}{\tab}{}%
    \ifthenelse{#1>10}{\tab}{}%
    \ifthenelse{#1>11}{\tab}{}%
    \ifthenelse{#1>12}{\tab}{}%
    \ifthenelse{#1>13}{\tab}{}%
    \ifthenelse{#1>14}{\tab}{}%
    \ifthenelse{#1>15}{\tab}{}%
    \ifthenelse{#1>16}{\tab}{}%
    \ifthenelse{#1>17}{\tab}{}%
    \ifthenelse{#1>18}{\tab}{}%
    \ifthenelse{#1>19}{\tab}{}%
    \ifthenelse{#1>20}{\tab}{}%
}
\let\texttick\'
\renewcommand{\'}{\TextOrMath{\texttick}{\;}}
\NewDocumentCommand{\THM}{s !O{ }}{\IfBooleanTF{#1}
{Thm.}{Theorem}#2}
\NewDocumentCommand{\COR}{s !O{ }}{\IfBooleanTF{#1}
{Cor.#1}{\ifthenelse{\equal{#2}{s}}{Corollaries}{Corollary#2}}}
\NewDocumentCommand{\PROP}{s !O{ }}{\IfBooleanTF{#1}
{Prop.}{Proposition}#2}
\NewDocumentCommand{\LEMMA}{s !O{ }}{\IfBooleanTF{#1}{Lemma}{Lemma}#2}
\NewDocumentCommand{\CLAIM}{s !O{ }}{\IfBooleanTF{#1}{Claim}{Claim}#2}
\NewDocumentCommand{\EQN}{!O{~}}{Eq.#1}
\NewDocumentCommand{\EQNS}{!O{~}}{Eqs.#1}
\NewDocumentCommand{\DEF}{s !O{ }}{\IfBooleanTF{#1}
{Def.}{Definition}#2}
\NewDocumentCommand{\ALG}{s !O{ }}{\IfBooleanTF{#1}
{Alg.}{Algorithm}#2}
\NewDocumentCommand{\PROB}{s !O{ }}{\IfBooleanTF{#1}
{Prob.}{Problem}#2}
\NewDocumentCommand{\PROPERTY}{s !O{ }}{Property#2}
\NewDocumentCommand{\LINE}{s !O{ }}{Line#2}
\NewDocumentCommand{\FACT}{s !O{ }}{Fact#2}
\NewDocumentCommand{\SUMMARY}{s !O{ }}{Summary#2}
\NewDocumentCommand{\PG}{s !O{ }}{pp.#2}
\NewDocumentCommand{\LE}{s !O{ }}{Lecture#2}
\NewDocumentCommand{\SECTION}{s !O{ }}{\IfBooleanTF{#1}{Sec.}{Section}#2}
\NewDocumentCommand{\FIG}{s !O{ }}{\IfBooleanTF{#1}
{Fig.}{Figure.}#2}
\NewDocumentCommand{\TABLE}{s !O{ }}{\IfBooleanTF{#1}
{Table}{Table}#2}
\newcommand{\RuleSep}[1][ProbFrameColor]{\textcolor{#1}{\rule{0.33\linewidth}{0.4pt}}}
\newcommand{\CenterRuleSep}[1][]{%
  \IfNE{#1}{\vspace{-\baselineskip}}
  \begin{center}%
  \RuleSep%
  \end{center}%
}
\NewDocumentCommand{\TextItem}{+m !O{}}{(#1)#2}
\newcommand{\Const}[1]{#1}
\newcommand{\lrparens}[2]{\ifthenelse{\isempty{#1}}{\left( #2 \right)}{( #2 )}}
\newcommand{\Net}[2][]{\mathcal{C}_{#2}\IfNE{#1}{^{(#1)}}}
\newcommand{\BallNet}[2][]{\mathcal{D}_{#2}\IfNE{#1}{^{(#1)}}}
\newcommand{\Ball}[2][]{\mathcal{B}_{#2}\IfNE{#1}{^{(#1)}}}
\NewDocumentCommand{\BallSparseSphere}{O{} +m D(){}}{\mathcal{B}_{#2}\IfNE{#1}{^{(#1)}}( #3 ) \cap \Sphere{n} \cap \SparseSubspace{k}{n}}
\newcommand{\Sphere}[2][]{\ifthenelse{\isempty{#1}}{\mathcal{S}^{#2-1}}{\mathcal{S}^{#2}}}
\NewDocumentCommand{\Dist}{s +m +m}{d\IfBooleanTF{#1}{( #2, #3 )}{\left( #2, #3 \right)}}
\NewDocumentCommand{\DistS}{s O{} +m +m}{{%
 \def\dfmt{\big( #3, #4 \big)}
 \def\tfmt{( #3, #4 )}
 d_{\mathcal{S}^{n-1}}#2
 \IfBooleanTF{#1}{\dfmt}{
    \mathchoice
     {\dfmt} 
     {\tfmt} 
     {\tfmt} 
     {\tfmt} 
}}}
\newcommand{\Variable}[1]{#1}
\newcommand{\Value}[1]{#1}
\newcommand{\Function}[1]{#1}
\newcommand{\FunctionVariable}[1]{#1}
\newcommand{\ThresholdOp}{\mathcal{T}\!}
\newcommand{\ThresholdMat}{\Mat{T}\!}
\newcommand{\ThresholdMatEntry}{\Mat*{T}\!}
\NewDocumentCommand{\Threshold}{s +m D(){}}{%
  \ThresholdOp_{#2}%
  \ifthenelse{\isempty{#3}}
  {}
  {\IfBooleanTF{#1}{( #3 )}{\left( #3 \right)}}
}
\NewDocumentCommand{\ThresholdSet}{s +m D(){}}{%
  \ThresholdOp_{#2}%
  \ifthenelse{\isempty{#3}}
  {}
  {\IfBooleanTF{#1}{( #3 )}{\left( #3 \right)}}
}
\NewDocumentCommand{\ThresholdSetMat}{s +m O{}}{%
  \IfBooleanTF{#1}{\ThresholdMatEntry_{\IfNE{#3}{#3;}#2}}{\ThresholdMat_{#2}}%
}
\newcommand{\Text}[1]{\ensuremath{\,\text{#1}\,}}
\newcommand{\ObjectiveFn}{\mathcal{J}}
\newcommand{\Iter}{t}
\renewcommand{\Variable}[1]{#1}
\newcommand{\SparseSubspace}[2]{\Sigma_{#1}^{#2}}
\newcommand{\SparseRealSubspace}[2]{\SparseSubspace{#1}{#2}}
\newcommand{\SparseSphereSubspace}[2]{\mathcal{S}^{#2-1} \cap \SparseSubspace{#1}{#2}}
\NewDocumentCommand{\tNetBIHT}{!O{ }}{\( \frac{\MinErr}{2} \)-net#1}
\newcommand{\PlaceHolder}[1][]{\ensuremath{\textcolor{red}{\triangleright\, \text{todo\IfNE{#1}{: #1}} \,\triangleleft}}}
\NewDocumentCommand{\UB}{s}{%
    \binom{n}{k}^{3}
    \left( \frac{24}{\Epsilon} \right)^{2k}
    \frac{8}{\rho}
}
\newcommand{\Tau}{\tau}
\newcommand{\Epsilon}{\epsilon}
\newcommand{\Rho}{\rho}
\newcommand{\DDelta}{\delta}
\newcommand{\Eta}{\eta}
\newcommand{\Beta}{\beta}
\newcommand{\Alpha}{\alpha}
\newcommand{\Varepsilon}{\varepsilon}
\newcommand{\UnivConstA}{a}
\newcommand{\UnivConstc}{c}
\newcommand{\UnivConstb}{b}
\newcommand{\UnivConstC}{c}
\newcommand{\UnivConstB}{b}
\newcommand{\MeasMat}{\Mat{A}}
\newcommand{\MeasVec}{\Vec{A}}
\newcommand{\hA}[1][]{h_{\MeasMat\IfNE{#1}{;#1}}}
\newcommand{\gA}[1][]{g_{\MeasMat\IfNE{#1}{;#1}}}
\newcommand{\Coords}[1]{#1}
\newcommand{\UBNet}{\binom{n}{k} \left( \frac{6}{\Tau} \right)^{k}}
\newcommand{\UBNetNet}{\binom{n}{k}^{2} \left( \frac{6}{\Tau} \right)^{2k}}
\newcommand{\NUMLSas}{2}
\newcommand{\NUMLSat}{6}
\newcommand{\NUMLSa}{8}
\NewDocumentCommand{\VarepsilonDef}{s O{}}{%
  4\UnivConstc_{1}
  \sqrt{\frac{\Epsilon}{\UnivConstC} \varepsilon( \Iter-1 )}
  +
  4 \UnivConstc_{2} \frac{\Epsilon}{\UnivConstC}
  #2
  \IfBooleanF{#1}{,\quad \Iter \in \Z_{+}}%
}
\newcommand{\VarepsilonAsymptotic}{%
  \left(
    2 \UnivConstc_{1}
    \left( \UnivConstc_{1} + \sqrt{\UnivConstc_{1}^{2} + \UnivConstc_{2}} \right)
    +
    \UnivConstc_{2}
  \right)
  \frac{4 \Epsilon}{\UnivConstC}%
}
\newcommand{\VarepsilonAsymptoticIntermediate}{%
  \frac{32 \Epsilon}{\UnivConstC}%
}
\newcommand{\ErrorRecurrenceDef}[1]{%
  4 \UnivConstc_{1}
  \sqrt{\frac{\Epsilon}{\UnivConstC} \DistS{\Vec{x}}{\Vec{\hat{x}}^{(#1)}}}
  +
  4 \UnivConstc_{2}
  \frac{\Epsilon}{\UnivConstC}%
}
\newcommand{\UnivConstAValue}{16}
\newcommand{\UnivConstBValue}{379.1038}
\newcommand{\UnivConstCValue}{32}
\newcommand{\UnivConstcOneValue}{\sqrt{\frac{3\pi}{\UnivConstB}} \left( 1 + \frac{16\sqrt{2}}{3} \right)}
\newcommand{\UnivConstcTwoValue}{\frac{3}{\UnivConstB} \left( 1 + \frac{4\pi}{3} + \frac{8\sqrt{3\pi}}{3} + 8\sqrt{6\pi} \right)}
\newcommand{\bb}[1]{\mathbb{#1}}
\newcommand{\fl}[1]{\mathbf{#1}}
\newcommand{\s}[1]{\mathsf{#1}}
\newcommand{\remove}[1]{}
\newcommand{\one}{one\xspace}
\newcommand{\onebit}{one-bit\xspace}
\newcommand{\ksparserealvalued}{\( k \)-sparse real-valued\xspace}
\newcommand{\xii}{\xi_{i}}
\newcommand{\subgradient}{subgradient\xspace}
\newcommand{\topkhardthresholding}{top-\( k \) hard thresholding\xspace}
\newcommand{\Topkhardthresholding}{Top-\( k \) hard thresholding\xspace}
\newcommand{\subsethardthresholding}{subset hard thresholding\xspace}
\newcommand{\Subsethardthresholding}{Subset hard thresholding\xspace}
\newcommand{\standardnormal}{Gaussian\xspace}
\newcommand{\GammaX}{\gamma}
\newcommand{\DDeltaX}{\delta'}
\newcommand{\ConstD}{d'}
\newcommand{\UnivConstD}{d}
\newcommand{\UnivConstDValue}{512}
\newcommand{\UnivConstBX}{\UnivConstB}
\newcommand{\UnivConstAX}{a'}
\newcommand{\UnivConstAXValue}{\EDITX{20}}
\newcommand{\UnivConstAXX}{a''}
\newcommand{\UnivConstAXXValue}{\EDITX{8}}
\newcommand{\kX}{k'}
\newcommand{\JX}{J}
\newcommand{\Luv}{\RV{L}_{\Vec{u},\Vec{v}}}
\newcommand{\Rhatuv}{\Vec{\EDIT{\hat{R}_{\Vec{u},\Vec{v}}}}}
\newcommand{\Rhatuvi}[1][i]{\EDIT{\Vec*{\hat{R}}_{#1;\Vec{u},\Vec{v}}}}
\newcommand{\RLCOND}{\EDIT{\Vec{\hat{R}}_{\Vec{u},\Vec{v}} = \Vec{r}, \Luv = \Variable{\ell}}}
\newcommand{\rSet}{\EDIT{\{ 0,1 \}^{m}}}
\newcommand{\RCOND}{\EDIT{\Vec{\hat{R}}_{\Vec{u},\Vec{v}} = \Vec{r}}}
\renewcommand{\UnivConstcOneValue}{\sqrt{\frac{3\pi}{\UnivConstB \UnivConstD}} \left( 1 + \frac{16\sqrt{2}}{3} \right)}
\renewcommand{\UnivConstcTwoValue}{\frac{90 \sqrt{2}}{\UnivConstBX}}
\renewcommand{\UnivConstcOneValue}{\EDIT{\sqrt{\frac{\pi}{\UnivConstB \UnivConstD}}( \sqrt{3} + 16 )}}
\newcommand{\TauValue}{\frac{\GammaX}{\UnivConstD {\Log( 2e/\GammaX )}}}
\newcommand{\GammaXValue}{\frac{\DDelta}{\UnivConstB \sqrt{\Log( 2e/\GammaX )}}}
\newcommand{\GammaXValueX}{\frac{\UnivConstB_{2} \DDelta}{5 \sqrt{72 \Log( 2e/\GammaX )}}}
\newcommand{\QValue}{\GammaX m}
\newcommand{\CMPLXRAICLogTerm}{%
  \log
  \left(
    \binom{n}{k}^{2} \binom{n}{\kO}
    \left( \frac{12 \UnivConstB \UnivConstD \Log[^{3/2}]( 2e/\GammaX )}{\DDelta} \right)^{\kO}
    \left( \frac{\UnivConstA}{\Rho} \right)
  \right)%
}
\newcommand{\CMPLXRAICLogTermX}{%
  \log
  \left(
    \binom{n}{k}^{2} \binom{n}{\kO}
    \left( \frac{12 \UnivConstB \UnivConstD \Log( 2e/\GammaX )}{\GammaX} \right)^{\kO}
    \left( \frac{\UnivConstA}{\Rho} \right)
  \right)%
}
\newcommand{\CMPLXRAICOne}{%
  \frac{\UnivConstB \UnivConstD}{\DDelta}
  \CMPLXRAICLogTerm
}
\newcommand{\CMPLXRAICOneX}{%
  \frac{\UnivConstB \UnivConstD}{\DDelta}
  \CMPLXRAICLogTermX
}
\newcommand{\CMPLXRAICTwo}{%
  \frac{\UnivConstB \UnivConstD k}{\DDelta} \Log( \frac{1}{\GammaX} ) \sqrt{\Log( \frac{2e}{\GammaX} )}
  +
  \frac{64 \UnivConstB}{\DDelta} \Log( \binom{n}{\kO} \EDITX{\frac{\UnivConstAXX}{\Rho}} ) \sqrt{\Log( \frac{2e}{\GammaX} )}
  +
  \frac{\UnivConstB \kOX}{\DDelta}
  \EDITX{\Log( \frac{en}{\kOX} )}
  +
  \frac{\UnivConstB}{\DDelta}
  \Log( \EDITX{\frac{\UnivConstAX}{\Rho}} )
}
\newcommand{\CMPLXRAICBigO}{%
    \BigO'( \frac{k}{\DDelta} \Log( \frac{n}{k} ) \sqrt{\Log( \frac{1}{\DDelta} )} + \frac{k}{\DDelta} \Log[^{3/2}]( \frac{1}{\DDelta} ) )
}
\newcommand{\xToDo}[1]{\normalfont\textcolor{todocolor}{$\textcolor{todocolor}{\ldelim}$\textbf{To\,do}\IfENE{#1}{\textbf{:}\;Fill this in.}{\textbf{:}\;{#1}}$\textcolor{todocolor}{\rdelim}$}}
\newcommand{\ToDo}[1]{%
  \colorlet{origcolor}{.}\everymath{\color{todocolor}}%
  \ifthenelse{\boolean{showcomments}}%
  {\relax\ifmmode\text{\xToDo{#1}}\else\xToDo{#1}\fi}%
  {}%
  \everymath{\color{MathInlineColor}}%
}%
\newcommand{\Enum}[2][]{\IfENE{#1}{(#2)}{\Label{#1}{(#2)}}\xspace}
\newcommand{\TagEqn}{\stepcounter{equation}\tag{\theequation}}
\NewDocumentCommand{\Label}{s m m}{%
  \@bsphack
  \csname phantomsection\endcsname 
  \IfBooleanF{#1}{#3}%
  \def\@currentlabel{#3}{#2}
  \@esphack
}
\newcommand{\subalign}[1]{%
  \vcenter{%
    \Let@ \restore@math@cr \default@tag
    \baselineskip\fontdimen10 \scriptfont\tw@
    \advance\baselineskip\fontdimen12 \scriptfont\tw@
    \lineskip\thr@@\fontdimen8 \scriptfont\thr@@
    \lineskiplimit\lineskip
    \ialign{\hfil$\m@th\scriptstyle##$&$\m@th\scriptstyle{}##$\hfil\crcr
      #1\crcr
    }%
  }%
}
\NewDocumentCommand{\THEOREM}{s O{~}}{\IfBooleanTF{#1}{Theorem}{Theorem}#2\ignorespaces}
\NewDocumentCommand{\THEOREMS}{s O{~}}{\IfBooleanTF{#1}{Theorems}{Theorems}#2\ignorespaces}
\RenewDocumentCommand{\LEMMA}{s O{~}}{\IfBooleanTF{#1}{Lemma}{Lemma}#2\ignorespaces}
\NewDocumentCommand{\LEMMAS}{s O{~}}{\IfBooleanTF{#1}{Lemmas}{Lemmas}#2\ignorespaces}
\RenewDocumentCommand{\CLAIM}{s O{~}}{\IfBooleanTF{#1}{Claim}{Claim}#2\ignorespaces}
\NewDocumentCommand{\STEP}{s O{~}}{\IfBooleanTF{#1}{Step}{Step}#2\ignorespaces}
\NewDocumentCommand{\STEPS}{s O{~}}{\IfBooleanTF{#1}{Steps}{Steps}#2\ignorespaces}
\NewDocumentCommand{\APPENDIX}{s O{~}}{\IfBooleanTF{#1}{Appendix}{Appendix}#2\ignorespaces}
\NewDocumentCommand{\APPENDICES}{s O{~}}{\IfBooleanTF{#1}{Appendices}{Appendices}#2\ignorespaces}
\NewDocumentCommand{\RHS}{s}{\IfBooleanTF{#1}{RHS}{right-hand-side}\xspace}
\NewDocumentCommand{\LHS}{s}{\IfBooleanTF{#1}{LHS}{left-hand-side}\xspace}
\NewDocumentCommand{\RHSs}{s}{\IfBooleanTF{#1}{RHS}{right-hand-sides}\xspace}
\NewDocumentCommand{\LHSs}{s}{\IfBooleanTF{#1}{LHS}{left-hand-sides}\xspace}
\newcommand{\iid}[1][~]{i.i.d.#1}
\newcommand{\AlignTab}{\phantom{=}~~}
\RenewDocumentCommand{\Vec}{s +m}{\IfBooleanTF{#1}{#2}{\mathbf{#2}}}
\RenewDocumentCommand{\Mat}{s +m}{\IfBooleanTF{#1}{#2}{\mathbf{#2}}}
\NewDocumentCommand{\BVec}{s +m}{\mathbf{1}^{#2}}
\newcommand{\hfrac}[2]{#1/#2}
\newcommand{\ZeroTo}[1]{\{ 0, \dots, #1 \}}
\newcommand{\Mid}{|}
\newcommand{\kO}{\EDITX{k_{0}}}
\newcommand{\kOX}{\EDITX{k'_{0}}}
\newcommand{\KO}{\kO}
\newcommand{\KXO}{\kO}
\definecolor{orig}{HTML}{000000}
\definecolor{edit}{HTML}{000000}
\definecolor{editX}{HTML}{000000}
\newcommand{\ORIG}[1]{}
\newcommand{\EDIT}[1]{#1}
\newcommand{\EDITX}[1]{#1}
\newcommand{\BEGINEDIT}{}
\newcommand{\ENDEDIT}{}
\newcommand{\BEGINEDITX}{}
\newcommand{\ENDEDITX}{}
\newenvironment{EDITb}{%
  \ignorespaces%
	\color{edit}%
	}{\color{black}\noindent%
	\ignorespacesafterend}
\newcommand\blfootnote[1]{%
  \begingroup
  \renewcommand\thefootnote{}\footnote{#1}%
  \addtocounter{footnote}{-1}%
  \endgroup
}
\title{Binary Iterative Hard Thresholding Converges with Optimal Number of Measurements  for 1-Bit Compressed Sensing}
\author{Namiko Matsumoto \\
\and Arya Mazumdar\blfootnote{The authors are with the University of California San Diego. This work is supported in part by NSF awards 2133484 and 2127929. Emails: \texttt{\{nmatsumo,arya\}@ucsd.edu}.}}
\date{}
\begin{document}

\maketitle

\begin{abstract}
     {\em Compressed sensing} has been a very successful high-dimensional signal acquisition and recovery technique that relies on linear operations. However, the actual measurements of signals have to be quantized before storing or processing them. \ORIG{1(One)-bit} \EDIT{1-bit (or one-bit)} compressed sensing is a heavily quantized version of compressed sensing, where each linear measurement of a signal is reduced to just one bit: the sign of the measurement. Once enough of such measurements are collected, the recovery problem in 1-bit compressed sensing aims to find the original signal with as much accuracy as possible. The recovery problem is related to the traditional ``halfspace-learning'' problem in learning theory.

    For recovery of sparse vectors, a popular reconstruction method from \one-bit measurements is the {\em binary iterative hard thresholding (BIHT)} algorithm. The algorithm is a simple projected \subgradient descent method, and is known to converge well empirically, despite the nonconvexity of the problem. The convergence property of BIHT was not  theoretically fully justified 
    (e.g., it is known that a number of measurement greater than $\max\{k^{10}, 24^{48}, k^{3.5}/\epsilon\}$, where $k$ is the sparsity and $\epsilon$ denotes the approximation error, is sufficient, Friedlander et al., 2021). In this paper we show that the BIHT estimates converge to the original signal with only \EDIT{\(   \frac{k}{\epsilon}   \)} measurements (up to logarithmic factors). Note that, this dependence on $k$ and $\epsilon$ is  optimal for any recovery method in 1-bit compressed sensing. With this result, to the best of our knowledge, BIHT is the only practical and efficient (polynomial time) algorithm that requires the optimal number of measurements in all parameters (both $k$ and $\epsilon$). This is also an example of a gradient descent algorithm converging to the correct solution for a nonconvex problem, under suitable structural conditions.
\end{abstract}

\section{Introduction}
One-bit compressed sensing (1bCS) is a basic nonlinear sampling method for high-dimensional sparse signals, introduced first in \cite{DBLP:conf/ciss/BoufounosB08}.
Consider an unknown sparse signal  $\fl{x} \in \bb{R}^n$ with sparsity (number of nonzero coordinates) $\left|\left|\fl{x}\right|\right|_0 \le k$, where $k \ll n$. In the 1bCS framework,
measurements of $\fl{x}$ are obtained with a
 sensing matrix $\fl{A}\in \bb{R}^{m \times n}$ via the observations of signs:
 \begin{align*}
     \fl{b} = \s{sign}(\fl{Ax}).
 \end{align*}
 The sign function (formally defined later) is simply the $\pm$ signs of the coordinates.

Compressed sensing,
the method of obtaining signals by taking few linear projections~\cite{DBLP:journals/tit/Donoho06,candes2006robust} has seen a lot of success in the past two decades. 1bCS is an extremely quantized version of compressed sensing where only \one bit per sample of the signal is observed.
 In terms of nonlinearity, this is  one of the simplest examples of a single-index model~\cite{plan2016generalized}: $y_i = f(\langle \fl{a}_i, \fl{x} \rangle), i =1, \dots, m$, where $f$ is a coordinate-wise nonlinear operation.
As a practical case study and for its aesthetic appeal, 1bCS has been studied with interest in the last few years, for example, in \cite{DBLP:conf/ciss/HauptB11,GNJN13,ABK17,DBLP:journals/tit/PlanV13,DBLP:conf/aistats/Li16}.

Notably, it was shown in \cite{JLBB13} that \ORIG{$m = \tilde{\Theta}(k/\epsilon)$}\EDIT{$m = \Theta(k/\epsilon)$} measurements are necessary and sufficient (up to logarithmic factors) to approximate $\fl{x}$ within an $\epsilon$-ball,
\ORIG{But the reconstruction method used to obtain this measurement complexity is via exhaustive search, which is practically infeasible.}%
\EDIT{but no practically feasible reconstruction algorithm achieving this error rate was proposed.}
A linear programming based solution which runs in polynomial time and requires $O(\frac{k}{\epsilon^5}\log^2\frac{n}{k})$ measurements was provided in \cite{plan2013one}. Note the suboptimal dependence on $\epsilon$. 

An incredibly well-performing algorithm turned out to the {\em binary iterative hard thresholding} (BIHT) algorithm, proposed in the former work~\cite{JLBB13}. BIHT is a simple iterative algorithm that converges to the correct solution quickly in practice. However, until later, the reason of its good performance was somewhat unexplained, barring the fact that it is actually a proximal gradient descent algorithm on a certain loss function (provided in Eq.~\eqref{eqn:costFunction}). In the algorithm, the projection is taken onto a nonconvex set (namely, selecting the ``top-$k$'' coordinates and then normalizing), which usually makes a theoretical analysis unwieldy.
Since the work of \cite{JLBB13} there has been some progress explaining the empirical success of the BIHT algorithm. In particular, it was shown in
\EDIT{\cite[Sec.~5]{jacques2013quantized} that after only the first iteration of the BIHT algorithm, an approximation error $\epsilon$ is achievable with $O(\frac{k}{\epsilon^2})$ measurements, up to logarithmic factors.}
Similar results also appear in
\cite[Sec.~3.5]{plan2017high}.
In all these results,  the dependence on $\epsilon$, which is also referred to as the error-rate, is suboptimal. Furthermore, these works also do not show convergence as the algorithm iterates further. 
Beyond the first iteration, it was shown in \cite{liu2019one} that the iterates of BIHT remain bounded, maintaining the same order of accuracy for the subsequent iterations. This, however, does not imply a reduction in the approximation error after the first iteration. This issue has been  mitigated in \cite{friedlander2021nbiht}, which uses a {\em normalized} version of the BIHT algorithm. While \cite{friedlander2021nbiht} manage to show that the normalized BIHT algorithm can achieve optimal dependence on the error-rate as the number of iterations of BIHT tends to infinity, i.e., $m \sim \frac{1}{\epsilon}$, their result is only valid when $m > \max\{ck^{10} \log^{10}\frac{n}{k}, 24^{48}, \frac{c'}{\epsilon} (k\log\frac{n}{k})^{7/2} \}.$ This clearly is highly sub-optimal in terms of dependence on $k$, and does not explain the empirical performance of the algorithm. This has been left as the main open problem in this area as per \cite{friedlander2021nbiht}.

\subsection{Our Contribution and Techniques}
\label{outline:intro|>contributions}
In this paper, we show that the normalized BIHT algorithm converges with a sample complexity having optimal dependence on both the sparsity $k$ and error $\epsilon$ (see, Theorem~\ref{thm:main:convergence:t-rate} below). As such, we further show the convergence rate with respect to iterations for this algorithm. In particular, we show that the approximation error of BIHT decays as $O(\epsilon^{1-2^{-t}})$ with the number of iteration $t$. This encapsulates the very fast convergence of BIHT to the $\epsilon$-ball of the actual signal. Furthermore, this also shows that after just \one iteration of BIHT, an approximation error of $\sqrt{\epsilon}$ is achievable, with $O(\frac{k}{\epsilon}\log \frac{n}{k})$ measurements, which matches the observations of \cite{jacques2013quantized,plan2017high} regarding the performance of BIHT with just \one iteration. Due to the aforementioned fast rate, the approximation error quickly converges to $\epsilon$ resulting in a polynomial time algorithm for recovery in 1bCS with only $\tilde{O}(\frac{k}{\epsilon})$ measurements, the optimal.

There are several difficulties in analyzing BIHT that were pointed out in the past, for example in \cite{friedlander2021nbiht}. First of all, the loss function is not differentiable, and therefore one has to rely on (sub)gradients, which prohibits an easier analysis of convergence.
Secondly, the algorithm projects onto nonconvex sets, so the improvement of the approximation in each iteration is not immediately apparent.
To tackle these hurdles, the key idea is to use some structural property of the measurement or sampling matrix.
Our result relies on such a property of the sampling matrix $\fl{A}$, called the restricted approximate invertibility condition (RAIC).
A somewhat different invertibility property of a matrix also appears in \cite{friedlander2021nbiht}. However, our definition, which looks more natural, allows for a significantly different analysis that yields the improved sample complexity.
Thereafter, we show that random matrices with i.i.d. \standardnormal entries satisfy the invertibility condition with overwhelmingly large probability.

The invertibility condition that is essential for our proof intuitively states that treating the signed measurements as some ``scaled linear'' measurements should lead to adequate estimates, which is an overarching theme of recovery in generalized linear models.
Further, our condition quantifies the ``goodness'' of these estimates in a way that allows us to show a contraction in the BIHT iterations. This contraction of approximation error comes naturally from our definition.
In contrast, while a similar idea appears in \cite{friedlander2021nbiht}, showing the contraction of approximate error is a highly involved exercise therein.
As another point of interest, \cite[Sec.~4.2]{JLBB13} empirically observed that in normalized BIHT, the step-size of the gradient descent algorithm must be carefully chosen, or else the algorithm will not converge.
Our definition of the invertibility condition gives some intuitive justification on why the algorithm is so sensitive to step-size. Our analysis relies on the step-size being set exactly to $\eta = \sqrt{2\pi}$. More generally, if $\eta$ were to deviate too far from $\sqrt{2\pi}$, the contraction would be lost.

\BEGINEDIT
With this all said, \ORIG{the technical burden of our main result}\EDIT{the crucial technical work for our main result} turns out to be showing that \standardnormal matrices satisfy the invertibility condition (see, Definition~\ref{def:raic:modified} in Section~\ref{outline:intro:overview-techniques|>raic-def-thm}).
We need to show that the condition holds for every pair of sparse unit vectors with bounded probability.
We resort to constructing a cover, an ``epsilon-net,'' of the unit sphere, and then separating the analysis for the invertibility condition into two regimes.
First, in the so-called ``large-distance'' regime, we show that the condition is satisfied for two vectors in the epsilon-net whose distance is above a particular threshold \( \tau > 0 \).
Second, in the ``small-distance'' regime, we show that a similar, though actually stronger, condition holds for every \(  k  \)-sparse unit vector paired with each of its close by ``neighbors,'' or more precisely, each point in the \( \tau \)-ball around it.
This second condition further implies that only a small error is added to the first condition when instead of the net points, vectors close to one or both of them are considered.
Together, the above can be combined to handle all possible cases such that the desired invertibility condition holds for every pair of sparse unit vectors.

In proving the invertibility condition
in each of these two regimes, the primary concern is characterizing and bounding a function
\(
  \hA : \R^{n} \times \R^{n} \to \R^n
\)
of the form
\begin{align*}
  \hA \left( \Vec{u}, \Vec{v} \right)
  =
  \frac{\sqrt{2\pi}}{m}
  \MeasMat^{\T}
  \cdot
  \frac{1}{2}
  \left( \Sgn( \MeasMat \Vec{u} ) - \Sgn( \MeasMat \Vec{v} ) \right)
,\end{align*}
where
\(
  \Vec{u}, \Vec{v} \in \R^{n}
\)
\EDIT{are \(  k  \)-sparse unit vectors.}
(Note that, due to the sparsity induced by the thresholding operation of BIHT, we actually consider the function \(  \hA  \) under a restriction to union of the support of \(  \Vec{u}  \) and \(  \Vec{v}  \), and each subset of coordinates, \(  \Coords{J} \subseteq [n]  \) with \(  | \Coords{J} | \leq k  \). However, for the purposes of this intuitive overview, we will ignore this so as to avoid overloading the discussion with notations and formalities.)
This is achieved by a three-term orthogonal decomposition of \(  \hA  \) and curated concentration inequalities associated with these terms, where the latter form the bulk of the techniques used in this paper.
There are two primary reasons for dividing the analysis for \(  \hA  \) into the ``large-'' and ``small-distances'' regimes.
First, while constructing an epsilon-net over the set of \(  k  \)-sparse unit vectors is a standard and useful approach, the analysis cannot immediately be extended to handle points outside of the net via, e.g., the triangle inequality due to the nonlinearity of the \(  \sgn  \) function.
Therefore, this extension to arbitrary points requires separate analysis which is provided in the ``small-distances'' regime.
Crucially, given that this latter regime considers small neighborhoods of points, the local binary stable embedding of \cite{oymak2015near} can be applied to obtain the uniform result for arbitrary points in these small neighborhoods.
To understand the second reason behind our two-regime approach, first notice
that in the above equation,
\(
  |
    \frac{1}{2}
    ( \Sgn( \MeasMat \Vec{u} ) - \Sgn( \MeasMat \Vec{v} ) )
  |
  =
  \I{\Sgn{}( \MeasMat \Vec{u} ) ) \neq \Sgn{}( \MeasMat \Vec{v} )}
\),
where \( | \cdot | \) takes the absolute value of each of the vector's entries, and
\( \I{} \) denotes the indicator function
(see, \SECTION \ref{outline:preliminaries-notation|>notation} for a more rigorous definition).
A key component of our analysis is characterizing this random vector---or more specifically, the number of nonzero entries in it---as it will facilitate the derivation and use of the three concentration inequalities for \(  \hA  \) and will ultimately lead to the invertibility condition's scaling with the distance between points.
In the ``large-distances'' regime, we can apply standard techniques to bound \(  \| \I{\Sgn{}( \MeasMat \Vec{u} ) ) \neq \Sgn{}( \MeasMat \Vec{v} )} \|_{0}  \) for pairs of points, \(  \Vec{u}, \Vec{v}  \), in the net---in particular, there is a Chernoff bound that provides a sufficient bound on \(  \| \I{\Sgn{}( \MeasMat \Vec{u} ) ) \neq \Sgn{}( \MeasMat \Vec{v} )} \|_{0}  \).
However, when considering points with distances below a certain threshold, such a standard Chernoff bound leads to a suboptimal sample complexity.
Instead, to bound \(  \| \I{\Sgn{}( \MeasMat \Vec{u} ) ) \neq \Sgn{}( \MeasMat \Vec{v} )} \|_{0}  \) for these close-together points, \(  \Vec{u}, \Vec{v}  \), we leverage the local binary stable embedding studied in \cite{oymak2015near}, which is a stronger result and allows the preservation of the optimal sample complexity (up to logarithmic factors).
As an aside, it is additionally worth mentioning that in the ``small-distances'' regime, the uniform result for arbitrary points also in part stems from the observation that, after fixing the measurement matrix, \(  \MeasMat  \), the image of the linear transformation induced by \(  \MeasMat^{\T}  \) over \(  \{ -1, 0, 1 \}^{m} \ni \frac{1}{2} ( \Sgn( \MeasMat \Vec{u} ) - \Sgn( \MeasMat \Vec{v} ) )  \) has a finite cardinality.
This finite cardinality enables a union bound over the image of the said linear transformation, which is needed to obtain a uniform result.
\ENDEDIT


One important aspect of BIHT's convergence is that as the approximation error in the \( \Iter\Th \) iteration improves, it makes possible an even smaller error for the \( (\Iter + 1)\Th \) approximation. 
Analogously to the above discussion, each iteration of BIHT involves fewer and fewer measurements, a phenomenon that can be precisely tracked by the number of measurements whose sign-responses {\em mismatch} between the vector $\Vec{x}$ and its approximation \( \Vec{\hat{x}} \) at the \( \Iter\Th \) iteration, where these \emph{mismatches} are captured by
\( \I{\Sgn{}( \MeasMat \Vec{x} ) ) \neq \Sgn{}( \MeasMat \Vec{y} )} \).
As the number of \emph{mismatches} decreases, so does the variance of \( \hA( \Vec{x}, \Vec{\hat{x}} ) \), leading to higher and higher concentration in the approximations.
A primary difficulty of analysis in 1-bit compressed sensing is the nonlinearity imposed by the sign-responses, which prohibits the use of standard techniques developed for compressed sensing.
However,
it turns out that this same difficulty enables different techniques---e.g., the tracking of \emph{mismatches}---that are possible, and natural, in the setting of 1-bit compressed sensing but may not be useful in the more general setting of compressed sensing.

\subsection{Other Related Works}

A generalization of 1bCS is the noisy version of the problem, where the binary observations  $y_i \in \{+1, -1\}$ are random (noisy): i.e., $y_i=1$ with probability $f(\langle \fl{a}_i, \fl{x} \rangle), i =1, \dots, m$, where $f$ is a potentially nonlinear function, such as the sigmoid function. Recovery guarantees for such models were studied in \cite{DBLP:journals/tit/PlanV13}. In another model, observational noise can appear before the quantization, i.e., $y_i = \Sgn(\langle \fl{a}_i, \fl{x} \rangle+\xii), i=1, \dots, m,$ where $\xii$ is random noise. As observed in \cite{plan2016generalized,friedlander2021nbiht}, the noiseless setting (also considered in this work) is actually more difficult to handle because the randomness of noise allows for a maximum likelihood analysis. Indeed, having some control-over $\xii$s (or just assuming them to be i.i.d. Gaussian), helps estimate the norm of $\Vec{x}$~\cite{knudson2016one}, which is otherwise impossible with just sign measurements, as in our model (this is called introducing {\em dither}, a well-known paradigm in signal processing).
In a related line of work, \onebit measurements are taken by adaptively varying the threshold (in our case the threshold is always 0), which can significantly reduce the error-rate,
for example see~\cite{baraniuk2017exponential} and \cite{saab2018quantization}, the latter being an application of sigma-delta quantization methods.

Yet another line of work in 1bCS literature  takes a more combinatorial avenue and looks at the support recovery problem and constructions of structured measurement matrices. Instances of these works are \cite{GNJN13,ABK17,flodin2019superset,mazumdar2021support}. However, the nature of these works is quite different from ours.

\subsection{Organization}

The rest of the paper is organized as follows. The required notations and definitions to state the main result appear in Section \ref{outline:preliminaries-notation}, where we also formally define the 1-bit compressed sensing problem and the reconstruction method, the normalized binary iterative hard thresholding algorithm (Algorithm~\ref{alg:biht:normalized}).
We provide our main result in Section~\ref{outline:main-thm}, which establishes the convergence rate of BIHT~(Theorem~\ref{thm:main:convergence:t-rate}) and the asymptotic error rate ~(Corollary~\ref{thm:main:convergence:t->infty})  with the optimal measurement complexity.
In Section~\ref{outline:intro:overview-techniques} we also overview the derivation of the result, including our invertibility condition for \standardnormal matrices.
In Section~\ref{outline:pf-main-thm} we provide the main proof of the BIHT convergence algorithm, assuming that a structural property if satisfied by the measurement matrix. Proof of this structural property for \standardnormal matrices is the major technical contribution of this paper (Theorem~\ref{thm:technical:raic:modified}), and it has been delegated to Appendix~\ref{outline:technical:pf}. Proofs of all lemmas and intermediate results can be found in the appendix. We conclude with some future directions in Section~\ref{outline:outlook}.

\section{Preliminaries}
\label{outline:preliminaries-notation}
\subsection{Notations and Definitions}
\label{outline:preliminaries-notation|>notation}
The set of all real-valued, \( k \)-sparse vectors in \( n \) dimension is denoted by
\(
  \SparseRealSubspace{k}{n} \EDITX{\subseteq \R^{n}}
\),
\EDITX{and for a coordinate subset \(  J \subseteq [n]  \), the set of all real-valued, \(  n  \)-dimensional vectors whose support is a (possibly improper) subset of \(  J  \) is written:
\(  \SparseRealSubspace{J}{n} \defeq \{ \Vec{u} \in \R^{n} : \Supp( \Vec{u} ) \subseteq J \} \subseteq \R^{n}  \).}
%
The \( \lnorm{2} \)-sphere in \( \R^{n} \) is written
\(
  \Sphere{n} \subset \R^{n}
\),
such that
\(
  (\SparseSphereSubspace{k}{n})
  \subset \SparseRealSubspace{k}{n}
\)
is the subset real-valued, \( k \)-sparse vectors with unit norm. The Euclidean ball of radius $\tau\ge 0$ and center $\Vec{u}\in \R^{n}$ is defined by $\Ball{\tau}(\Vec{u}) = \{\Vec{x} \in \R^{n}: \|\Vec{u}-\Vec{x}\|_2 \le \tau \}.$
Matrices are denoted in uppercase, boldface text, e.g.,
\(
  \Mat{M} \in \R^{m \times n}
\),
with \( (i,j) \)-entries written
\(
  \Mat*{M}_{i,j}
\).
The \( n \times n \) identity matrix written as
\(
  \Mat{I}_{n \times n}
\).
Vectors are likewise indicated by boldface font, using lowercase and uppercase lettering for
nonrandom and random vectors, respectively, e.g.,
\(
  \Vec{u} \in \R^{n}
\)
and
\(
  \Vec{U} \sim \N(\Vec{0},\Mat{I}_{n \times n})
\),
with entries specified such that, e.g.,
\(
  \Vec{u} = (\Vec*{u}_{1}, \dots, \Vec*{u}_{n})
\).
As customary, \( \N(\Vec{0},\Mat{I}_{n \times n}) \) denotes the i.i.d. \( n \)-variate standard normal
distribution (with the univariate case, \( \N(0,1) \)).
Moreover, random sampling from a distribution \( \mathcal{D} \) is denoted by
\(
  \RV{X} \sim \mathcal{D}
\),
and drawing uniformly at random from a set \( \Set{X} \) is written as
\(
  \RV{X} \sim \Set{X}
\).
For any pair of real-valued vectors
\(
  \Vec{u}, \Vec{v} \in \R^{n}
\),
write
\(
  \DistS{\Vec{u}}{\Vec{v}} \in \R_{\geq 0}
\)
for the distance between their projections onto the \( \lnorm{2} \)-sphere, as well as
\(
  \theta_{\Vec{u},\Vec{v}} \in [0,\pi]
\)
%
for their angular distance
and
\(
  \Vec{\theta}_{\Vec{u},\Vec{v}} \in [-\pi,\pi]
\)
for the angular distance and signed angular distance (for a given convention of positive and
negative directions of rotation), respectively, between them.
Formally,
\begin{gather}
\label{def:distance:l2-sphere}
  \DistS{\Vec{u}}{\Vec{v}}
  =
  \begin{cases}
  \left\|
    \frac{\Vec{u}}{\left\| \Vec{u} \right\|_{2}}
    -
    \frac{\Vec{v}}{\left\| \Vec{v} \right\|_{2}}
  \right\|_{2}
  , & \cIf \Vec{u}, \Vec{v} \neq \Vec{0},
  \\
  0, & \cIf \Vec{u} = \Vec{v} = \Vec{0},
  \\
  1, & \cOtherwise,
  \end{cases}
  \\
  \theta_{\Vec{u},\Vec{v}}
  =
  \Arccos
  (
    \frac
    {\langle \Vec{u}, \Vec{v} \rangle}
    {\left\| \Vec{u} \right\|_{2} \left\| \Vec{v} \right\|_{2}}
  )
.\end{gather}
%
Note that these are related by
\(
  \theta_{\Vec{u},\Vec{v}}
  =
  \Arccos( 1 - \frac{\DistS[^{2}]{\Vec{u}}{\Vec{v}}}{2} )
\),
equivalently,
\(
  \DistS{\Vec{u}}{\Vec{v}}
  =
  \sqrt{2 ( 1 - \cos( \theta_{\Vec{u},\Vec{v}} ) )}
\).


The sign function $\Sgn:\R \to \{+1,-1\}$ is defined in the following way:
$$
\Sgn(x) = \begin{cases}
1, & x \ge 0, \\
-1, & x < 0.
\end{cases}
$$
The function can be extended to vectors, i.e., $\Sgn:\R^n \to \{+1,-1\}^n$ by just applying the it on each coordinate.
Additionally, for a condition \( C \in \{ \mathsf{true}, \mathsf{false} \} \), define the indicator function
\( \I{} : \{ \mathsf{true}, \mathsf{false} \} \to \{ 0,1 \} \) by
\begin{gather}
  \I{}( C ) =
  \begin{cases}
  0, & \text{if \( C = \mathsf{false} \)}, \\
  1, & \text{if \( C = \mathsf{true} \)}.
  \end{cases}
\end{gather}
Again, this notation extends to vectors by applying the function coordinate-wise.
\EDIT{``Big-O,'' ``Big-Omega,'' and ``Big-Theta'' notations are defined as standard: for functions \(  f,g  \), we write \(  f = \BigO( g )  \) if there is a constant \(  C  \) and some \(  x_{0} > 0  \) such that \(  f(x) \leq C g(x)  \) for all \(  x \geq x_{0}  \). We write \(  f = \BigOmega( g )  \) if \(  g = \BigO( f )  \), and \(  f = \BigTheta( g )  \) if \(  f = \BigOmega( g )  \) and \(  f = \BigO( g )  \). Additionally, the notations \(  \BigO'  \), \(  \BigOmega'  \), \(  \BigTheta'  \) are defined analogously but hide logarithmic factors.}
\par
We are going use the following universal constants
\(
  \UnivConstA, \UnivConstB, \UnivConstC, \UnivConstc_{1}, \UnivConstc_{2} > 0
\)
in the statement of our results. Their values are
\begin{gather}
  \UnivConstA = \UnivConstAValue, \quad
  \UnivConstAX = \UnivConstAXValue, \quad
  \UnivConstAXX = \UnivConstAXXValue, \quad
  \UnivConstB \gtrsim \UnivConstBValue, \quad
  \UnivConstC = \UnivConstCValue, \quad
  \UnivConstD = \UnivConstDValue, \nonumber\\
  \label{eqn:univConstants}
  \UnivConstb_{1} =
    \EDIT{\sqrt{\frac{\pi}{\UnivConstB \UnivConstD}}( \sqrt{3} + 16 )}
  , \quad
  \UnivConstb_{2} =
    \frac{30 \sqrt{2}}{\UnivConstBX}
  , \\
  \UnivConstc_{1} = \UnivConstcOneValue, \quad
  \UnivConstc_{2} = \UnivConstcTwoValue.\nonumber
\end{gather}
Additionally, in the BIHT algorithm, the step size
\(
  \Eta > 0
\)
is fixed as
\(
  \Eta = \sqrt{2\pi}
\).

We define two hard thresholding operations:
the \emph{\topkhardthresholding operation} and
the \emph{\subsethardthresholding operation},
defined below in \DEF[s] \ref{def:hard-thresholding:max-weight} and \ref{def:hard-thresholding:set}.
When clear from context, we will omit the distinction simply refer to a \emph{hard thresholding operation}.
%
\EDIT{To write down these definitions, we make use of the following notations.
For a coordinate subset, \(  \Coords{J} \subseteq [n]  \), let \(  \BVec{\Coords{J}} \in \{ 0,1 \}^{n}  \) denote the vector with \(  j\Th  \) entries, \(  ( \BVec{\Coords{J}} )_{j} = \I{j \in \Coords{J}}  \), \(  j \in [n]  \).
Additionally, for \(  \Vec{u} \in \R^{n}  \), let \(  \Diag( \Vec{u} ) \in \R^{n \times n}  \) denote the diagonal matrix with diagonal entries given by \(  \Vec{u}  \).}
\begin{definition}
[\Topkhardthresholding operation]
\label{def:hard-thresholding:max-weight}
\ORIG{For \( k \in \Z_{+} \), \( k \leq n \), the \emph{\topkhardthresholding operation},
\(
  \Threshold{k} : \R^{n} \to \SparseRealSubspace{k}{n}
\),
projects a real-valued vector
\(
  \Vec{u} \in \R^{n}
\)
into the space of \ksparserealvalued vectors by setting all but the \( k \) largest
(in absolute value) entries in \( \Vec{u} \) to \( 0 \) (with ties broken arbitrarily).}%
\EDIT{For \( k \in \Z_{+} \), \( k \leq n \), the \emph{\topkhardthresholding operation},
\(
  \Threshold{k} : \R^{n} \to \SparseRealSubspace{k}{n}
\),
projects a real-valued vector
\(
  \Vec{u} \in \R^{n}
\)
into the space of \ksparserealvalued vectors by \(  \Threshold{k}( \Vec{u} ) = \Diag( \BVec{J_{\Vec{u}}} ) \Vec{u}  \), where \(  J_{\Vec{u}} \subseteq [n]  \), \(  | J_{\Vec{u}} | = k  \), satisfies \(  \| \Diag( \BVec{J_{\Vec{u}}} ) \Vec{u} \|_{1} = \max_{J \subseteq [n] : |J| = k} \| \Diag( \BVec{J} ) \Vec{u} \|_{1}  \).
Note that ties are broken arbitrarily.}
\end{definition}

\begin{definition}
[\Subsethardthresholding operation]
\label{def:hard-thresholding:set}
For a coordinate subset,
\(
  \Coords{J} \subseteq [n]
\),
the \emph{\subsethardthresholding operation} associated with \( \Coords{J} \),
\(
  \ThresholdSet{\Coords{J}} : \R^{n} \to \EDIT{\R^{n}} 
\),
\ORIG{projects a real-valued vector
\(
  \Vec{u} \in \R^{n}
\)
into the space of \ksparserealvalued vectors by
\(
  \ThresholdSet{\Coords{J}}( \Vec{u} )_{j} = u_{j} \cdot \I{j \in \Coords{J}}
\)
for each \( j \in [n] \).}%
\EDIT{is the linear transformation given by \(  \ThresholdSet{J}( \Vec{u} ) = \Diag( \BVec{J} ) \Vec{u}  \).}
%
\ORIG{It is simple to show this operation \( \ThresholdSet{\Coords{J}} \) is a linear transformation
associated with a diagonal \( n \times n \) matrix denoted
\(
  \ThresholdSetMat{\Coords{J}}
  =
  \Diag( \ThresholdSetMat*{\Coords{J}}[1], \dots, \ThresholdSetMat*{\Coords{J}}[n] )
\),
where
\(
  \ThresholdSetMat*{\Coords{J}}[j] = \I{j \in \Coords{J}}
\)
for each \( j \in [n] \).}%
\end{definition}


\subsection{1-Bit Compressed Sensing and the BIHT Algorithm}
\label{outline:intro|>biht-algorithm}
Let $\Vec{x} \in \SparseRealSubspace{k}{n}$. A  measurement matrix is denoted by
\(
  \MeasMat \in \R^{m \times n}
\)
and has rows
\(
  \MeasVec^{(1)}, \dots, \MeasVec^{(m)}
  \sim
  \N(\Vec{0},\Mat{I}_{n \times n})
\)
with i.i.d. entries. The 1-bit measurements of $\Vec{x}$ are performed by:
\begin{equation}
    \Vec{b} = \Sgn({\MeasMat \Vec{x}})
\end{equation}

Throughout this work, the unknown signals,
\(
  \Vec{x} \in \SparseRealSubspace{k}{n}
\),
are assume to have unit norm since information about the norm is lost due to the binarization
of the responses.
(For interested readers, see \cite{knudson2016one} for techniques, e.g., dithering, to reconstruct the signal's norm in 1-bit compressed sensing.)
Given $\MeasMat$ and $\Vec{b}$, the goal of 1-bit compressed sensing is to recover $\Vec{x}$ as accurately as possible. We measure the accuracy of reconstruction by the metric $ \DistS{\cdot}{\cdot}.$

The binary iterative hard thresholding (BIHT) reconstruction algorithm,
proposed by \cite{JLBB13}, comprises two iterative steps:
\begin{EnumerateInline}
\item \label{enum:biht-alg:step:1}
a \EDIT{subgradient} descent step, which finds a non-sparse approximation,
\(
  \Vec{\tilde{x}} \in \R^{n}
\),
followed by
\item \label{enum:biht-alg:step:2}
a projection by
\(
  \Vec{\tilde{x}} \mapsto \Vec{\hat{x}} = \Threshold{k}(\Vec{\tilde{x}})
\)
into the space of \( k \)-sparse, real-valued vectors.
\end{EnumerateInline}
As shown by \cite{JLBB13}, the \EDIT{subgradient} step, \ref{enum:biht-alg:step:1}, aims to
minimize the objective function
\begin{gather}\label{eqn:costFunction}
  \ObjectiveFn(\Sgn(\MeasMat \Vec{x}), \Vec{\hat{x}})
  =
  \left\|
    \left[\, \Sgn(\MeasMat \Vec{x}) \odot \EDIT{(\MeasMat \Vec{\hat{x}})} \,\right]_{-}
  \right\|_{1}
,\end{gather}
where
\(
  \Vec{u} \odot \Vec{v}
  =
  (\Vec*{u}_{1} \Vec*{v}_{1}, \dots, \Vec*{u}_{n} \Vec*{v}_{n})
\)
and
\(
  (\, [\, \Vec{u} \,]_{-} \,)_{j} = \Vec*{u}_{j} \cdot \I{\Vec*{u}_{j} < 0}
\).
\EDIT{Per \cite[\LEMMA 5]{JLBB13}, \(  \ObjectiveFn  \) is convex with respect to \(  \Vec{\hat{x}}  \), and its subgradients include \(  \nabla_{\Vec{\hat{x}}}\, \ObjectiveFn \ni \MeasMat^{\T} \cdot \frac{1}{2} ( \sgn( \MeasMat \Vec{x} ) - \sgn( \MeasMat \Vec{\hat{x}} ) )  \).}
While several variants of the BIHT algorithm have been proposed, \cite{JLBB13}, 
this work focuses on the normalized BIHT algorithm, where the projection step,
\ref{enum:biht-alg:step:2}, is modified to project the approximation onto the \( k \)-sparse,
\( \lnorm{2} \)-unit sphere,
\(
  \SparseSphereSubspace{k}{n}
\).
Algorithm \ref{alg:biht:normalized} provides the version of the BIHT algorithm studied in this
work.

%
\begin{algorithm}
\caption{\label{alg:biht:normalized}Binary iterative hard thresholding (BIHT) algorithm, normalized projections}
Set \( \Eta = \sqrt{2\pi} \)\;
%
%
%
\(
  \Vec{\hat{x}}^{(0)}
  \sim
  \SparseSphereSubspace{k}{n}
\)\;
\For
{\( \Iter = 1, 2, 3, \dots \)}
{
  \(
    \Vec{\tilde{x}}^{(\Iter)}
    \gets
    \Vec{\hat{x}}^{(\Iter-1)}
    +
    \frac{\eta}{m}
    \MeasMat^{\T}
    \cdot
    \frac{1}{2}
    \left(
      \Sgn{} \big( \MeasMat \Vec{x} \big) - \Sgn{} \big( \MeasMat \Vec{\hat{x}}^{(\Iter-1)} \big)
    \right)
  \)\;
  \(
    \Vec{\hat{x}}^{(\Iter)}
    \gets
    \frac
    {\Threshold{k}(\Vec{\tilde{x}}^{(\Iter)})}
    {\left\| \Threshold{k}(\Vec{\tilde{x}}^{(\Iter)}) \right\|_{2}}
  \)\;
}
\end{algorithm}

\section{Main Results and Techniques}
\label{outline:main-thm}

\subsection{BIHT Convergence Theorem}
Our main result is presented below
\EDIT{in Theorem \ref{thm:main:convergence:t-rate}, which characterizes the error decay of BIHT approximations, and Corollary \ref{thm:main:convergence:t->infty}, which bounds the asymptotic error rate as the number of iterations \(  \Iter \to \infty  \).}
Informally, it states that with $m = O(\frac{k}{\epsilon}\log \frac{n}{k\sqrt{\epsilon}})$ \onebit (sign) measurements, it is possible to recover any $k$-sparse unit vector within an $\epsilon$-ball, by means of the normalized BIHT algorithm.
\EDIT{Additionally, Figure \ref{fig:error-decay-plot} corroborates the error decay stated in \EQN \eqref{eqn:main:convergence:t-finite:dist} of Theorem \ref{thm:main:convergence:t-rate}.}

\begin{figure}
%
%
\includegraphics[width=\textwidth]{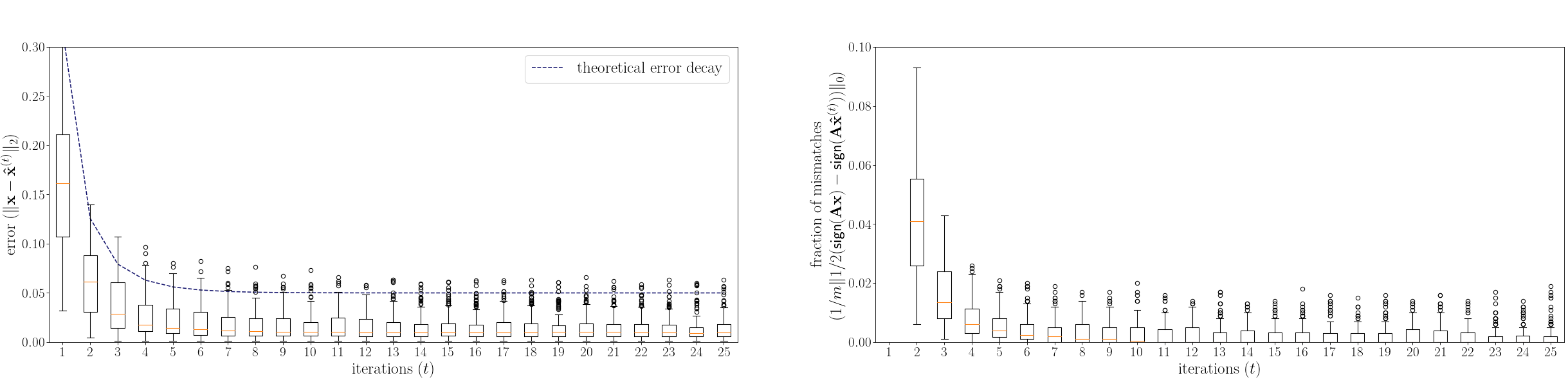}
\caption{\label{fig:error-decay-plot}\BEGINEDITX%
The \LHS shows the error decay of BIHT approximations empirically and theoretically.
The \RHS displays the fraction of measurements which fall onto opposite sides of the hyperplanes associated with the true signal, \(  \Vec{x}  \), and the approximations.
The empirical results were obtained by running \(  100  \) trials of recovering random \(  k  \)-sparse unit vectors via the normalized BIHT algorithm for \(  25  \) iterations.
The parameters were set as: \(  k = 5  \), \(  n = 2000  \), \(  m = 1000  \), \(  \Epsilon = 0.05  \), and \(  \Rho = 0.05  \).%
\ENDEDITX}
%
%
\end{figure}

\begin{theorem}
\label{thm:main:convergence:t-rate}
Let
\(
  \UnivConstA, \UnivConstB, \UnivConstC, \UnivConstD > 0
\)
be universal constants as in Eq.~\eqref{eqn:univConstants}.
Fix
\(
  \Epsilon, \Rho \in (0,1)
\)
and
\(
  k, m, n \in \Z_{+}
\),
where
\begin{EDITb}
\begin{align}
  \nonumber
  m
  &\geq
  \frac{4 \UnivConstB \UnivConstC \UnivConstD k}{\Epsilon}
  \log
  \left(
    \frac{en}{k}
  \right)
  +
  \frac{2 \UnivConstB \UnivConstC \UnivConstD k}{\Epsilon}
  \log
  \left(
    \frac{12 \UnivConstB \UnivConstC \UnivConstD \Log[^{3/2}]( 2e \UnivConstB \UnivConstC/\Epsilon )}{\Epsilon}
  \right)
  +
  \frac{\UnivConstB \UnivConstC \UnivConstD}{\Epsilon}
  \log
  \left(
    \frac{\UnivConstA}{\Rho}
  \right)
  \\ \nonumber
  &\AlignTab+
  \frac{\UnivConstB \UnivConstC \UnivConstD k}{\Epsilon} \Log( \frac{\UnivConstB \UnivConstC}{\Epsilon} ) \sqrt{\Log( \frac{2e \UnivConstB \UnivConstC}{\Epsilon} )}
  +
  \frac{\EDITX{128} \UnivConstB \UnivConstC k}{\Epsilon} \Log( \frac{en}{\EDITX{k}} ) \sqrt{\Log( \frac{2e \UnivConstB \UnivConstC}{\Epsilon} )}
  \\
  &\AlignTab+
  \frac{\EDITX{64} \UnivConstB \UnivConstC}{\Epsilon} \Log( \EDITX{\frac{\UnivConstAXX}{\Rho}} )
  \EDITX{\sqrt{\Log( \frac{2e \UnivConstB \UnivConstC}{\Epsilon} )}}
  +
  \frac{\EDITX{4} \UnivConstB \UnivConstC \EDITX{k}}{\Epsilon}
  \Log( \frac{en}{\EDITX{k}} )
  +
  \frac{\UnivConstB \UnivConstC}{\Epsilon}
  \Log( \EDITX{\frac{\UnivConstAX}{\Rho}} )
\label{eqn:main:convergence:t-finite:m}
.\end{align}
\end{EDITb}
Let the measurement matrix
\(
  \MeasMat \in \R^{m \times n}
\)
have rows with i.i.d. \standardnormal entries.
%
Then, uniformly with probability at least
\(
  1 - \Rho
\),
for every unknown \ksparserealvalued unit vector,
\(
  \Vec{x} \in \SparseSphereSubspace{k}{n}
\),
the normalized BIHT algorithm produces a sequence of approximations,
\(
  \{ \Vec{\hat{x}}^{(\Iter)} \in \SparseSphereSubspace{k}{n} \}_{\Iter \in \Z_{\geq 0}}
\),
which converges to the \( \Epsilon \)-ball around the unknown vector \( \Vec{x} \) at a rate
upper bounded by
\begin{gather}
\label{eqn:main:convergence:t-finite:dist}
  \DistS*{\Vec{x}}{\Vec{\hat{x}}^{(\Iter)}}
  \leq
  2^{2^{-\Iter}}
  \Epsilon^{1 - 2^{-\Iter}}
\end{gather}
for each
\(
  \Iter \in \Z_{\geq 0}
\).
\end{theorem}

\begin{corollary}
\label{thm:main:convergence:t->infty}
Under the conditions stated in Theorem \ref{thm:main:convergence:t-rate},
uniformly with probability at least
\(
  1 - \Rho
\),
for every unknown \ksparserealvalued unit vector,
\(
  \Vec{x} \in \SparseSphereSubspace{k}{n}
\),
the sequence of BIHT approximations,
\(
  \{ \Vec{\hat{x}}^{(\Iter)} \}_{\Iter \in \Z_{\geq 0}}
\),
converges asymptotically to the \( \Epsilon \)-ball around the unknown vector \( \Vec{x} \).
Formally,
\begin{gather}
\label{eqn:main:convergence:t->infty:dist}
  \lim_{\Iter \to \infty} \DistS*{\Vec{x}}{\Vec{\hat{x}}^{(\Iter)}} \leq \Epsilon
.\end{gather}
\end{corollary}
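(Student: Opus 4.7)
The plan is to read Corollary \ref{thm:main:convergence:t->infty} off as an immediate consequence of Theorem \ref{thm:main:convergence:t-rate} by passing to the limit $\Iter \to \infty$. On the high-probability event (of probability at least $1-\Rho$) on which the finite-iteration bound
\[
  \DistS{\Vec{x}}{\Vec{\hat{x}}^{(\Iter)}} \leq 2^{2^{-\Iter}}\, \Epsilon^{1-2^{-\Iter}}
\]
from Theorem \ref{thm:main:convergence:t-rate} holds for every $\Iter \in \Z_{>0}$ and every $\Vec{x} \in \SparseSphereSubspace{k}{n}$, I would fix an arbitrary such $\Vec{x}$ and study the deterministic right-hand side as $\Iter \to \infty$.

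Since $2^{-\Iter} \to 0$ and the map $y \mapsto 2^{y}$ is continuous, the prefactor satisfies $2^{2^{-\Iter}} \to 2^{0} = 1$. Likewise, for any $\Epsilon \in (0,1)$ the map $y \mapsto \Epsilon^{y}$ is continuous at $y = 1$, so $\Epsilon^{1-2^{-\Iter}} \to \Epsilon$. Multiplying, the deterministic upper bound tends to $\Epsilon$, and taking $\limsup$ in $\Iter$ on both sides of the displayed inequality yields
\[
  \limsup_{\Iter \to \infty}\, \DistS{\Vec{x}}{\Vec{\hat{x}}^{(\Iter)}} \leq \Epsilon,
\]
which is exactly Eq.~\eqref{eqn:main:convergence:t->infty:dist}; the ``$\lim$'' in the statement is to be read as this eventual upper envelope, as the corollary only asserts an upper bound on the distance rather than convergence of the distances themselves.

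I do not anticipate any genuine obstacle for the corollary: the entire technical load—the per-iteration contraction captured in the exponent $1-2^{-\Iter}$, the sample complexity \eqref{eqn:main:convergence:t-finite:m}, the use of the restricted-invertibility-type property of the Gaussian measurement matrix, and the uniformity of the bound over all $\Vec{x} \in \SparseSphereSubspace{k}{n}$—has already been absorbed into Theorem \ref{thm:main:convergence:t-rate}. What remains here is only a one-line continuity argument on the closed-form rate $2^{2^{-\Iter}}\Epsilon^{1-2^{-\Iter}}$, and the uniformity over $\Vec{x}$ and the probability bound of $1-\Rho$ transfer verbatim because the same high-probability event on $\MeasMat$ underlies both statements.
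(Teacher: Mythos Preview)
Your proposal is correct. You treat Theorem \ref{thm:main:convergence:t-rate} as a black box and simply take the limit of the closed-form rate $2^{2^{-\Iter}}\Epsilon^{1-2^{-\Iter}}$, which tends to $\Epsilon$; this is entirely valid and is the natural reading of a corollary that begins ``Under the conditions stated in Theorem \ref{thm:main:convergence:t-rate}.''

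The paper takes a slightly different route: it proves the theorem and the corollary in tandem, first establishing by induction that $\DistS{\Vec{x}}{\Vec{\hat{x}}^{(\Iter)}} \leq \Varepsilon(\Iter)$ for the recurrence $\Varepsilon(\Iter) = 4c_1\sqrt{(\Epsilon/C)\,\Varepsilon(\Iter-1)} + 4c_2\Epsilon/C$, and then invoking Lemma \ref{lemma:biht:error:recurrence} to compute the fixed point of this recurrence explicitly. That yields the sharper conclusion $\lim_{\Iter\to\infty}\Varepsilon(\Iter) = \bigl(2c_1(c_1+\sqrt{c_1^2+c_2})+c_2\bigr)\tfrac{4\Epsilon}{C}$, which is \emph{strictly} less than $\Epsilon$ for the paper's constants. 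Your limit, by contrast, hits $\Epsilon$ on the nose because the pointwise envelope $2^{2^{-\Iter}}\Epsilon^{1-2^{-\Iter}}$ from Lemma \ref{lemma:biht:error:explicit} is already a slight overestimate of $\Varepsilon(\Iter)$. Both arguments deliver the stated corollary; the paper's buys a tighter asymptotic constant, while yours is shorter and uses only the theorem's statement. Your remark that the ``$\lim$'' should be read as a $\limsup$ is also apt.
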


\begin{figure}
%
\centering
\includegraphics[width=0.75\textwidth]{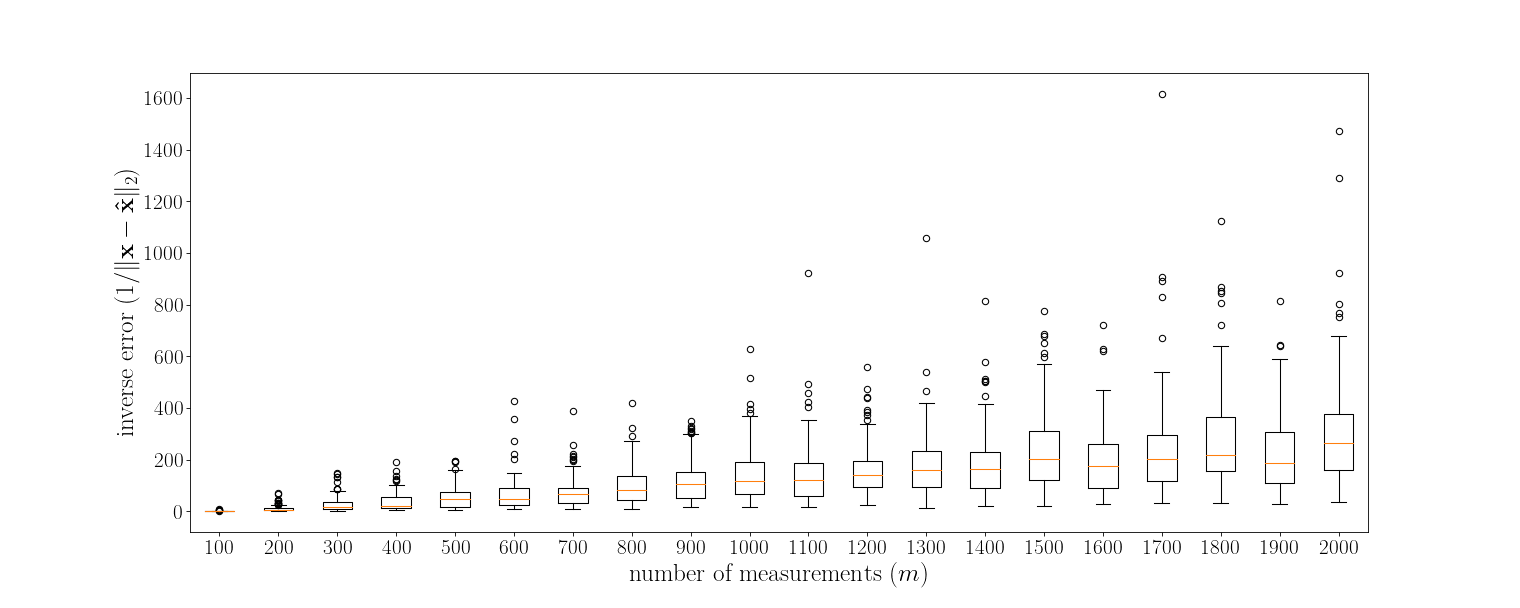}
\caption{\label{fig:m-vs-error-plot}\BEGINEDITX%
This plot shows the (roughly linear) relationship between the number of measurements, \(  m  \), (\(  x  \)-axis) and the inverse error (\(  y  \)-axis), where the error is the \(  \lnorm{2}  \)-distance between the true signal and the approximation obtained after \(  25  \) iterations of the normalized BIHT algorithm.
The sparsity and dimension parameters were set, respectively, as: \(  k = 5  \) and \(  n = 2000  \).%
\ENDEDITX}
%
\end{figure}


\subsection{Technical Overview}
\label{outline:intro:overview-techniques}

The analysis in this work is divided into two components:
\begin{EnumerateInline}[label=(\Roman*)]
\item \label{enum:intro:techniques:1}
the proofs of Theorem \ref{thm:main:convergence:t-rate} and Corollary \ref{thm:main:convergence:t->infty},
which show the universal convergence of the BIHT approximations by using the {\em restricted approximate invertibility condition} (RAIC) for
\standardnormal measurement matries (defined below),
and
\item \label{enum:intro:techniques:2}
the proof of the main technical theorem, Theorem \ref{thm:technical:raic:modified} (also below),
which derives the RAIC for such a measurement matrix.
\end{EnumerateInline}

Informally speaking, we show that the approximation error, $\varepsilon(t)$, of the BIHT algorithm at step $t>0$ satisfies  a recurrence relation of the form $\varepsilon(t) = a_1\sqrt{\epsilon \varepsilon(t-1)} +a_2 \epsilon.$ It is not a difficult exercise to see that we get the desired convergence rate from this recursion, starting from a constant error. The recursion itself is a result of the RAIC property, which tries to capture the fact that the difference between two vectors $\Vec{x}$ and $\Vec{y}$ can be reconstructed by applying $\MeasMat^\T$ on the difference of the corresponding \one-bit measurements. Next we explain the technicalities of these different components of the proof.

\subsubsection{The Restricted Approximate Invertibility Condition}
\label{outline:intro:overview-techniques|>raic-def-thm}

The main technical contribution is an improved sample complexity for the
restricted approximate invertibility condition (RAIC). A different invertibilty condition  was proposed  by \cite{friedlander2021nbiht}. We have included the definition of \cite{friedlander2021nbiht} in Appendix~\ref{outline:technical:raic:friedlander}, for comparison, and to emphasize the major differences.
The definition of RAIC considered in this work is formalized in
Definition \ref{def:raic:modified}, which uses the following notations.
For
\(
  m, n \in \Z_{+}
\),
let
\(
  \MeasMat \in \R^{m \times n}
\)
be a measurement matrix with rows
\(
  \MeasVec^{(i)} \in \R^{n}
\),
\(
  i \in [m]
\).
Then, define the functions
\(
  \hA, \hA[\Coords{J}] : \R^{n} \times \R^{n} \to \R^n
\)
by
\begin{align}
  \hA \left( \Vec{x}, \Vec{y} \right)
  &=
  \frac{\Eta}{m}
  \MeasMat^{\T}
  \cdot
  \frac{1}{2}
  \left( \Sgn( \MeasMat \Vec{x} ) - \Sgn( \MeasMat \Vec{y} ) \right)
\label{eqn:h_Z}
\end{align}
and
\begin{gather}
\label{eqn:h_ZJ}
  \hA[\Coords{J}] \left( \Vec{x}, \Vec{y} \right)
  =
  \ThresholdSet{\supp( \Vec{x} ) \cup \supp( \Vec{y} ) \cup \Coords{J}}
  \left( \hA \left( \Vec{x}, \Vec{y} \right) \right)
\end{gather}
for
\(
  \Vec{x}, \Vec{y} \in \R^{n}
\)
and
\(
  \Coords{J} \subseteq [n]
\),
and where
\(
  \Eta = \sqrt{2\pi}
\).
\EDIT{Recalling from earlier the objective function which BIHT seeks to minimize:
\(  \ObjectiveFn(\Vec{\hat{x}}; \Vec{x}) = \| [\, \sgn(\MeasMat \Vec{x}) \odot \EDIT{(\MeasMat \Vec{\hat{x}})} \,]_{-} \|_{1}  \), note that \(  \nabla_{\Vec{\hat{x}}}\, \ObjectiveFn(\Vec{\hat{x}}; \Vec{x}) \ni \frac{m}{\Eta} \hA( \Vec{x}, \Vec{\hat{x}} )  \).}
%
\begin{definition}
[Restricted approximate invertibility condition (RAIC)]
\label{def:raic:modified}
Fix
\(
  \delta, a_{1}, a_{2} > 0
\)
and
\(
  k, m, n \in \Z_{+}
\)
such that
\(
  0 < k < n
\).
The \( (k, n, \delta, a_{1}, a_{2}) \)-RAIC is satisfied by a measurement matrix
\(
  \MeasMat \in \R^{m \times n}
\)
if
\begin{gather}
\label{eqn:raic:modified:raic:(x,y)}
  \left\| ( \Vec{x} - \Vec{y} ) - \hA[\Coords{J}]( \Vec{x}, \Vec{y} ) \right\|_{2}
  \leq
  a_{1}
  \sqrt{\delta \DistS{\Vec{x}}{\Vec{y}}}
  +
  a_{2} \delta
\end{gather}
uniformly for all
\(
  \Vec{x}, \Vec{y} \in \SparseSphereSubspace{k}{n}
\)
and all
\(
  \Coords{J} \subseteq [n]
\),
\(
  | \Coords{J} | \leq k
\).
\end{definition}
%
\par

Theorem \ref{thm:technical:raic:modified} below is the primary technical result in this analysis
and establishes that \( m \)-many i.i.d. \standardnormal measurements satisfy the
\( (k, n, \DDelta, \UnivConstc_{1}, \UnivConstc_{2} ) \)-RAIC, where the sample complexity
for \( m \) matches the lower bound of \cite[Lemma 1]{JLBB13}.
The proof of the theorem is deferred to Appendix \ref{outline:technical:pf},
while an overview of the proof is given below in Section~\ref{outline:intro:overview-techniques|>raic}.
%
\begin{theorem}
\label{thm:technical:raic:modified}
%
%
%
Let
\(
  \UnivConstA, \UnivConstAX, \UnivConstAXX, \UnivConstB, \UnivConstc_{1}, \UnivConstc_{2}, \UnivConstD > 0
\)
be universal constants as defined in Eq.~\eqref{eqn:univConstants}. 
Fix
\(
  \DDelta, \Rho \in (0,1)
\)
and
\(
  k, m, n \in \Z_{+}
\)
such that
\(
  0 < k \EDITX{\leq} n
\).
\EDITX{Let \(  \kO \defeq \min \{ 2k, n \}  \) and \(  \kOX \defeq \min \{ 4k, n\}  \).}
Define \(  \GammaX \in (0,1)  \) such that
\begin{gather}
\label{eqn:technical:raic:modified:gamma}
  \GammaX = \GammaXValue
,\end{gather}
and let
\begin{align}
  \nonumber
  m
  &= \CMPLXRAICOne
  \\ \label{eqn:technical:raic:modified:m} 
  &\AlignTab
  +  \CMPLXRAICTwo
  \\ \nonumber
  &= \CMPLXRAICBigO
.\end{align}
%
Let
\(
  \MeasMat \in \R^{m \times n}
\)
be a measurement matrix whose rows have i.i.d. \standardnormal entries.
%
Then, \( \MeasMat \) satisfies the
\( (k, n, \DDelta, \UnivConstc_{1}, \UnivConstc_{2}) \)-RAIC
with probability at least
\(
  1 - \Rho
\).
To state this explicitly, uniformly with probability at least
\(
  1 - \Rho
\),
for all
\(
  \Vec{x}, \Vec{y} \in \SparseSphereSubspace{k}{n}
\)
and all
\(
  \Coords{J} \subseteq [n]
\),
\(
  | \Coords{J} | \leq k
\),
\begin{gather}
\label{eqn:technical:raic:modified}
  \left\| ( \Vec{x} - \Vec{y} ) - \hA[\Coords{J}]( \Vec{x}, \Vec{y} ) \right\|_{2}
  \leq
  \UnivConstc_{1}
  \sqrt{\DDelta \DistS{\Vec{x}}{\Vec{y}}}
  +
  \UnivConstc_{2} \DDelta
.\end{gather}
\end{theorem}

\BEGINEDIT
\subsubsection{Comparison of RAIC and Other Properties of Binary Embeddings}
\label{outline:intro:overview-techniques|>comparison}
While not directly comparable, properties of binary embeddings similar to the RAIC appear elsewhere in the literature.
One such property is the sign-product embedding (SPE), studied in \cite{jacques2013quantized} (and previously considered in a more general form in \cite{plan2012robust}), which is a map, \(  \Vec{x} \mapsto \sgn( \Mat{A} \Vec{x} )  \) that upholds:
\begin{gather*}
  \left| \left\langle
    \Vec{x} - \sqrt{\frac{\pi}{2}} \frac{1}{m} \Mat{A}^{\T} \sgn( \Mat{A} \Vec{x} ),
    \Vec{y}
  \right\rangle \right|
  \leq
  \delta
,\end{gather*}
for all \(  k'  \)-sparse \EDITX{unit} vectors, \(  \Vec{x}, \Vec{y} \in \EDITX{\SparseSphereSubspace{k'}{n}}  \), where \(  \Mat{A} \in \R^{m \times n}  \) and \(  \delta > 0  \).
It turns out that the SPE can bound the \LHS of \EQN \eqref{eqn:raic:modified:raic:(x,y)} in the definition of the RAIC (see, Definition~\ref{def:raic:modified}).
Suppose that \(  \Mat{A}  \) satisfies the SPE, where we take \(  k' = 3k  \).
Then, observe:
\begin{align*}
  &
  \left\| ( \Vec{x} - \Vec{y} ) - \hA[\Coords{J}]( \Vec{x}, \Vec{y} ) \right\|_{2}
  \\
  &=
  \left\| ( \Vec{x} - \Vec{y} ) - \ThresholdSet{\supp( \Vec{x} ) \cup \supp( \Vec{y} ) \cup \Coords{J}} \sqrt{\frac{\pi}{2}} \frac{1}{m} \MeasMat^{\T} \left( \Sgn( \MeasMat \Vec{x} ) - \Sgn( \MeasMat \Vec{y} ) \right) \right\|_{2}
  \\
  &=
  \left\| \ThresholdSet{\supp( \Vec{x} ) \cup \supp( \Vec{y} ) \cup \Coords{J}} \left( \Vec{x} - \sqrt{\frac{\pi}{2}} \frac{1}{m} \MeasMat^{\T} \Sgn( \MeasMat \Vec{x} ) \right) - \ThresholdSet{\supp( \Vec{x} ) \cup \supp( \Vec{y} ) \cup \Coords{J}} \left( \Vec{y} - \sqrt{\frac{\pi}{2}} \frac{1}{m} \MeasMat^{\T} \Sgn( \MeasMat \Vec{y} ) \right) \right\|_{2}
  \\
  &\leq
  \left\| \ThresholdSet{\supp( \Vec{x} ) \cup \supp( \Vec{y} ) \cup \Coords{J}} \left( \Vec{x} - \sqrt{\frac{\pi}{2}} \frac{1}{m} \MeasMat^{\T} \Sgn( \MeasMat \Vec{x} ) \right) \right\|_{2}
  +
  \left\| \ThresholdSet{\supp( \Vec{x} ) \cup \supp( \Vec{y} ) \cup \Coords{J}} \left( \Vec{y} - \sqrt{\frac{\pi}{2}} \frac{1}{m} \MeasMat^{\T} \Sgn( \MeasMat \Vec{y} ) \right) \right\|_{2}
  \\
  &\dCmt \text{by the triangle inequality}
  \\
  &=
  \sqrt{\left| \left\langle
    \ThresholdSet{\supp( \Vec{x} ) \cup \supp( \Vec{y} ) \cup \Coords{J}} \left( \Vec{x} - \sqrt{\frac{\pi}{2}} \frac{1}{m} \MeasMat^{\T} \Sgn( \MeasMat \Vec{x} ) \right),
    \ThresholdSet{\supp( \Vec{x} ) \cup \supp( \Vec{y} ) \cup \Coords{J}} \left( \Vec{x} - \sqrt{\frac{\pi}{2}} \frac{1}{m} \MeasMat^{\T} \Sgn( \MeasMat \Vec{x} ) \right)
  \right\rangle \right|}
  \\
  &\AlignTab+
  \sqrt{\left| \left\langle
    \ThresholdSet{\supp( \Vec{x} ) \cup \supp( \Vec{y} ) \cup \Coords{J}} \left( \Vec{y} - \sqrt{\frac{\pi}{2}} \frac{1}{m} \MeasMat^{\T} \Sgn( \MeasMat \Vec{y} ) \right),
    \ThresholdSet{\supp( \Vec{x} ) \cup \supp( \Vec{y} ) \cup \Coords{J}} \left( \Vec{y} - \sqrt{\frac{\pi}{2}} \frac{1}{m} \MeasMat^{\T} \Sgn( \MeasMat \Vec{y} ) \right)
  \right\rangle \right|}
  \\
  &=
  \sqrt{\left| \left\langle
    \Vec{x} - \sqrt{\frac{\pi}{2}} \frac{1}{m} \MeasMat^{\T} \Sgn( \MeasMat \Vec{x} ),
    \ThresholdSet{\supp( \Vec{x} ) \cup \supp( \Vec{y} ) \cup \Coords{J}} \left( \Vec{x} - \sqrt{\frac{\pi}{2}} \frac{1}{m} \MeasMat^{\T} \Sgn( \MeasMat \Vec{x} ) \right)
  \right\rangle \right|}
  \\
  &\AlignTab+
  \sqrt{\left| \left\langle
    \Vec{y} - \sqrt{\frac{\pi}{2}} \frac{1}{m} \MeasMat^{\T} \Sgn( \MeasMat \Vec{y} ),
    \ThresholdSet{\supp( \Vec{x} ) \cup \supp( \Vec{y} ) \cup \Coords{J}} \left( \Vec{y} - \sqrt{\frac{\pi}{2}} \frac{1}{m} \MeasMat^{\T} \Sgn( \MeasMat \Vec{y} ) \right)
  \right\rangle \right|}
  \\
  &\leq
  2 \sqrt{\delta}
  \\
  &\dCmt \text{by the SPE}
.\end{align*}
On the other hand, recall that the RAIC has a bound of the form
\(  \| ( \Vec{x} - \Vec{y} ) - \hA[\Coords{J}]( \Vec{x}, \Vec{y} ) \|_{2} \leq \BigO{}( \sqrt{\delta \DistS{\Vec{x}}{\Vec{y}}} + \delta )  \).
When \(  \Vec{x}  \) and \(  \Vec{y}  \) are far apart---with \(  \DistS{\Vec{x}}{\Vec{y}} = \Theta( 1 )  \)---the SPE provides approximately the same bound as the RAIC, but a comparatively weaker and weaker bound as the distance between \(  \Vec{x}  \) and \(  \Vec{y}  \) decreases.
This is because, unlike the RAIC, the SPE does not scale with the distance between points, and as a result, the SPE is not a sufficient condition to show optimal convergence of BIHT.
A similar phenomenon occurs with the binary stable embedding studied by \cite{JLBB13}, which again does not scale with the distance between points.
\par
A similar notion of the SPE is also considered by \cite{foucart2017flavors}.
Here, a matrix \(  \Mat{A} \in \R^{m \times n}  \) satisfies the SPE if for all \(  \Vec{x} \in \EDITX{\SparseSphereSubspace{k}{n}}  \), the following holds:
\begin{gather*}
  \left\| \Vec{x} - \ThresholdSet{\supp( \Vec{x} )}( \Mat{A}^{\T} \sgn( \Mat{A} \Vec{x} ) ) \right\|_{2}
  =
  \BigO{}( \sqrt{\delta} )
.\end{gather*}
Notice that the RAIC implies that for all \(  \Vec{x} \in \SparseSphereSubspace{k}{n}  \),
\begin{align*}
  \left\| \Vec{x} - \ThresholdSet{\supp( \Vec{x} )}( \left( \sqrt{\frac{\pi}{2}} \frac{1}{m} \Mat{A} \right)^{\T} \Sgn( \left( \sqrt{\frac{\pi}{2}} \frac{1}{m} \Mat{A} \right) \Vec{x} ) ) \right\|_{2}
  &=
  \left\| \Vec{x} - \ThresholdSet{\supp( \Vec{x} )}( \sqrt{\frac{\pi}{2}} \frac{1}{m} \Mat{A}^{\T} \sgn( \Mat{A} \Vec{x} ) ) \right\|_{2}
  \\
  &=
  \frac{1}{2} \left\| ( \Vec{x} - (-\Vec{x}) ) - \hA[\emptyset]( \Vec{x}, -\Vec{x} ) \right\|_{2}
  \\
  &=
  \BigO{}( \sqrt{\delta} + \delta )
,\end{align*}
where the last line applies the RAIC.
Therefore, if a matrix \(  \Mat{A} \in \R^{m \times n}  \) satisfies the RAIC, then the matrix \(  \sqrt{\frac{\pi}{2}} \frac{1}{m} \Mat{A}  \) satisfies the SPE of \cite{foucart2017flavors}.
\par
As a final point of interest, restricted isometry properties (RIP) are well-studied in compressed sensing literature.
The \(  \lnorm{1}/\lnorm{2}  \)-restricted isometry property (\(  \text{RIP}_{\lnorm{1}/\lnorm{2}}  \)) with parameter \(  \lambda > 0  \) (see, e.g., \cite{foucart2017flavors}) for \(  \SparseSphereSubspace{k}{n}  \) is one such property:
a matrix \(  \Mat{A} \in \R^{m \times n}  \) satisfies the \(  \text{RIP}_{\lnorm{1}/\lnorm{2}}  \) with parameter \(  \lambda  \) if for all \(  \Vec{x} \in \SparseSphereSubspace{k}{n}  \),
\begin{gather*}
  \| \Mat{A} \Vec{x} \|_{1} \in [ 1-\lambda, 1+\lambda ]
.\end{gather*}
A simple derivation shows that if \(  \Mat{A}  \) satisfies the RAIC, then \(  \sqrt{\frac{\pi}{2}} \frac{1}{m} \Mat{A}  \) satisfies the \(  \text{RIP}_{\lnorm{1}/\lnorm{2}}  \) with parameter \(  \BigO{}( \sqrt{\delta} + \delta )  \):
\begin{align*}
  \left| 1 - \left\| \sqrt{\frac{\pi}{2}} \frac{1}{m} \Mat{A} \Vec{x} \right\|_{1} \right|
  &=
  \left| 1 - \sum_{j} \left| \left( \sqrt{\frac{\pi}{2}} \frac{1}{m} \Mat{A} \Vec{x} \right)_{j} \right| \right|
  \\
  &=
  \left| 1 - \sum_{j} \left( \sqrt{\frac{\pi}{2}} \frac{1}{m} \Mat{A} \Vec{x} \right)_{j} \sgn( ( \Mat{A} \Vec{x} )_{j} ) \right|
  \\
  &=
  \left| \langle \Vec{x}, \Vec{x} \rangle - \left\langle \sqrt{\frac{\pi}{2}} \frac{1}{m} \Mat{A} \Vec{x}, \sgn( \Mat{A} \Vec{x} ) \right\rangle \right|
  \\
  &=
  \left| \left\langle \Vec{x} - \sqrt{\frac{\pi}{2}} \frac{1}{m} \Mat{A}^{\T} \sgn( \Mat{A} \Vec{x} ), \Vec{x} \right\rangle \right|
  \\
  &\leq
  \left\| \Vec{x} - \sqrt{\frac{\pi}{2}} \frac{1}{m} \Mat{A}^{\T} \sgn( \Mat{A} \Vec{x} ) \right\|_{2}
  \\
  &\dCmt \text{by the Cauchy-Schwarz inequality}
  \\
  &=
  \frac{1}{2} \left\| ( \Vec{x} - (-\Vec{x}) ) - \sqrt{\frac{\pi}{2}} \frac{1}{m} \Mat{A}^{\T}
  \left( \sgn( \Mat{A} \Vec{x} ) - \sgn( \Mat{A} (-\Vec{x}) ) \right) \right\|_{2}
  \\
  &=
  \BigO{}( \sqrt{\delta} + \delta )
  .\\
  &\dCmt \text{by the RAIC}
\end{align*}
\ENDEDIT


\subsubsection{The Uniform Convergence of BIHT Approximations}
\label{outline:intro:overview-techniques|>biht}

Assuming the   desired RAIC property (i.e., the correctness of Theorem \ref{thm:technical:raic:modified}),
the uniform convergence of BIHT approximations 
is shown as
follows.
%
\begin{enumerate}[label=(\alph*)]
\item \label{enum:intro:techniques:biht:1}
The \( 0\Th \) BIHT approximation, which is simply drawn uniformly at random,
\(
  \Vec{\hat{x}}^{(0)} \sim \SparseSphereSubspace{k}{n}
\),
can be seen to have an error of at most \( 2 \) (the diameter of the unit sphere).
Then, the following argument handles each subsequent \( \Iter\Th \) BIHT approximation,
\(
  \Iter \in \Z_{+}
\).
\item \label{enum:intro:techniques:biht:2}
Using standard techniques, the error of any \( \Iter\Th \) BIHT approximation,
\(
  \Iter \in \Z_{+}
\),
can be shown to be (deterministically) upper bounded by
\begin{gather}
\label{eqn:intro:error-upper-bound:alg}
  \DistS*{\Vec{x}}{\Vec{\hat{x}}^{(\Iter)}}
  \leq
  4
  \big\|
    \big( \Vec{x} - \Vec{\hat{x}}^{(\Iter-1)} \big)
    -
    \hA[\supp( \Vec{\hat{x}}^{\Iter} )] \big( \Vec{x}, \Vec{\hat{x}}^{(\Iter-1)} \big)
  \big\|_{2}
.\end{gather}
\item \label{enum:intro:techniques:biht:3}
Subsequently, observing the correspondence between \EQN \eqref{eqn:intro:error-upper-bound:alg}
and the RAIC, Theorem \ref{thm:technical:raic:modified} is applied to further bound the
\( \Iter\Th \) approximation error 
in \eqref{eqn:intro:error-upper-bound:alg} from above by
\begin{align}
\label{eqn:intro:error-upper-bound:raic}
  \DistS*{\Vec{x}}{\Vec{\hat{x}}^{(\Iter)}}
  &\leq
  4
  \left(
    \UnivConstc_{1}
    \sqrt{\frac{\Epsilon}{\UnivConstC} \DistS*{\Vec{x}}{\Vec{\hat{x}}^{(\Iter-1)}}}
    +
    \UnivConstc_{2}
    \frac{\Epsilon}{\UnivConstC}
  \right)\nonumber\\
  &=
  \ErrorRecurrenceDef{\Iter-1}
.\end{align}
\item \label{enum:intro:techniques:biht:4}
Then, the recurrence relation corresponding to the right-hand-side of
\EQN \eqref{eqn:intro:error-upper-bound:raic},
\begin{gather}
  \varepsilon( 0 ) = 2
  ,\\
  \varepsilon( \Iter )
  =
  \VarepsilonDef
,\end{gather}
can be shown to monotonically decrease with \( \Iter \),
asymptotically converging as \( \varepsilon( \Iter ) \sim \Epsilon \), and
pointwise upper bounded by
\(
  \varepsilon( \Iter ) \leq 2^{2^{-\Iter}} \Epsilon^{1-2^{-\Iter}}
\)
for each
\(
  \Iter \in \Z_{\geq 0}
\).
The asymptotic convergence and convergence rate of the BIHT apprximations to the
\( \Epsilon \)-ball around the unknown vector \( \Vec{x} \) directly follow.
This will complete the analysis for the universal convergence of the BIHT algorithm.
\end{enumerate}


\subsubsection{The RAIC for an i.i.d. Gaussian Matrix}
\label{outline:intro:overview-techniques|>raic}

Fixing
\(
  \DDelta, \Rho \in (0,1)
\)
and letting
\(
  \UnivConstc_{1}, \UnivConstc_{2} > 0
\)
be the universal constants specified in Eq.~\eqref{eqn:univConstants}, 
Theorem \ref{thm:technical:raic:modified} establishes that the measurement matrix
\(
  \MeasMat \in \R^{m \times n}
\)
with i.i.d. \standardnormal entries satisfies the
\( (k, n, \DDelta, \UnivConstc_{1}, \UnivConstc_{2}) \)-RAIC
with high probability (at least \( 1 - \Rho \))
when the number of measurements $m$
is at least what is stated in Eq.~\eqref{eqn:technical:raic:modified:m}.
%
The proof of the theorem 
is outlined as follows.
%
\begin{enumerate}[label=(\alph*)]
\item \label{enum:intro:techniques:raic:1}
Writing
\(
  \Tau \defeq \TauValue
\),
suppose
\(
  \{ \Net{\Tau;\Coords{J}} \subseteq \SparseSphereSubspace{k}{n} :
     \Coords{J} \subseteq [n],
     | \Coords{J} | \leq k
  \}
\)
are \( \Tau \)-nets over the subset of vectors in \( \SparseSphereSubspace{k}{n} \) whose
support sets are precisely \( \Coords{J} \).
Then, a \( \Tau \)-net over the entire set of \ksparserealvalued vectors,
\( \SparseSphereSubspace{k}{n} \), is constructed by the union
\(
  \Net{\Tau}
  =
  \bigcup_{\Coords{J} \subseteq [n] : | \Coords{J} | \leq k} \Net{\Tau;\Coords{J}}
\).
%
The goal will be to show that with high probability certain properties hold for (almost) every
ordered pair
\(
  ( \Vec{u}, \Vec{v} ) \in \Net{\Tau} \times \Net{\Tau}
\),
\ORIG{or for every vector
\(
  \Vec{u} \in \Net{\Tau}
\).}%
\EDIT{or for every pair of vectors \(  \Vec{u} \in \SparseSphereSubspace{k}{n}  \) and \(  \Vec{v} \in \Ball{\Tau}( \Vec{u} ) \cap \SparseSphereSubspace{k}{n}  \).}
The desired RAIC will then follow from extending the properties to every pair
\(
  \Vec{x},\Vec{y} \in \SparseSphereSubspace{k}{n}
\).
\item \label{enum:intro:techniques:raic:2}
The first property, corresponding with the ``large distance'' regime
(recall the discussion in \SECTION \ref{outline:intro|>contributions}),
requires that with probability at least
\(
  1 - \Rho_{1}
\),
for every ordered pair,
\(
  ( \Vec{u}, \Vec{v} ) \in \Net{\Tau} \times \Net{\Tau}
\),
in the \( \Tau \)-net with distance at least
\(
  \DistS{\Vec{u}}{\Vec{v}} \geq \Tau
\)
and for every
\(
  \Coords{J} \subseteq [n]
\),
\(
  | \Coords{J} | \leq 2k
\),
\begin{gather}
\label{eqn:intro:techniques:large-scale}
  \left\| ( \Vec{u} - \Vec{v} ) - \hA[\Coords{J}]( \Vec{u}, \Vec{v} ) \right\|_{2}
  \leq
  \UnivConstb_{1}
  \sqrt{\DDelta \DistS{\Vec{u}}{\Vec{v}}}
,\end{gather}
where
\(
  \UnivConstb_{1} > 0
\)
is a small universal constant
(see, \EQN \eqref{eqn:univConstants}).
%
\item \label{enum:intro:techniques:raic:3}
The second property, corresponding with the ``small distance'' regime,
requires that with probability at least
\(
  1 - \Rho_{2}
\),
for each 
\ORIG{\(
  \Vec{u} \in \Net{\Tau}
\),
each
\(
  \Vec{x} \in \BallSparseSphere{\Tau}( \Vec{u} )
\),}%
\EDIT{\(  \Vec{u} \in \SparseSphereSubspace{k}{n}  \) and \(  \Vec{v} \in \Ball{\Tau}( \Vec{u} )  \cap \SparseSphereSubspace{k}{n} \),}
and each
\(
  \Coords{J} \subseteq [n]
\),
\(
  | \Coords{J} | \leq 2k
\),
\begin{gather}
\label{eqn:intro:techniques:small-scale}
  \left\| ( \Vec{x} - \Vec{u} ) - \hA[\Coords{J}]( \Vec{x}, \Vec{u} ) \right\|_{2}
  \leq
  \UnivConstb_{2}
  \DDelta
,\end{gather}
where
\(
  \UnivConstb_{2} > 0
\)
is a small universal constant
(again see, \EQN \eqref{eqn:univConstants}).
%
\item \label{enum:intro:techniques:raic:4}
Requiring
\(
  \Rho_{1} + \Rho_{2} = \Rho
\),
the last step of the proof derives the RAIC claimed in the theorem by using the results from
\STEPS \ref{enum:intro:techniques:raic:2} and \ref{enum:intro:techniques:raic:3}, such that the condition holds with probability at least \( 1 - \Rho \) uniformly in all possible cases.
\end{enumerate}
%
We provide a more thorough overview of
\STEPS \ref{enum:intro:techniques:raic:2} and \ref{enum:intro:techniques:raic:3}
next in \SECTION \ref{outline:intro:overview-techniques|>raic|>large-scale}.


\subsubsection{Large- and Small-Distance Regimes --
Steps \ref{enum:intro:techniques:raic:2} and \ref{enum:intro:techniques:raic:3}}
\label{outline:intro:overview-techniques|>raic|>large-scale}

Before discussing the approach to \STEPS \ref{enum:intro:techniques:raic:2} and
\ref{enum:intro:techniques:raic:3}, let us first motivate the argument.
Let
\(
  \Vec{x}, \Vec{y} \in \SparseSphereSubspace{k}{n}
\).
Notice that the function \( \hA( \Vec{x}, \Vec{y} ) \) can be written as
\begin{align*}
  \hA \left( \Vec{x}, \Vec{y} \right)
  &=
  \frac{\sqrt{2\pi}}{m}
  \MeasMat^{\T}
  \cdot
  \frac{1}{2}
  \big( \Sgn{}( \MeasMat \Vec{x} ) - \Sgn{}( \MeasMat \Vec{y} ) \big)
  \\*
  &=
  \frac{\sqrt{2\pi}}{m}
  \sum_{i=1}^{m}
  \MeasVec^{(i)}
  \cdot
  \frac{1}{2}
  \big(
    \Sgn{}( \langle \MeasVec^{(i)}, \Vec{x} \rangle )
    -
    \Sgn{}( \langle \MeasVec^{(i)}, \Vec{y} \rangle )
  \big)
  \\*
  &=
  \frac{\sqrt{2\pi}}{m}
  \sum_{i=1}^{m}
  \MeasVec^{(i)}
  \cdot
  \Sgn{}( \langle \MeasVec^{(i)}, \Vec{x} \rangle )
  \cdot
  \I{}(\Sgn{}( \langle \MeasVec^{(i)}, \Vec{x} \rangle )
     \neq \Sgn{}( \langle \MeasVec^{(i)}, \Vec{y} \rangle ))
\TagEqn\label{eqn:intro:techniques:hA:rewritten}
.\end{align*}
%
Hence, given the random vector
\begin{gather*}
  \Vec{R}_{\Vec{x},\Vec{y}}
  =
  \frac{1}{2}
  \left( \Sgn( \MeasMat \Vec{x} ) - \Sgn( \MeasMat \Vec{y} ) \right)
,\end{gather*}
which takes values in
\(
  \{ -1, 0, 1 \}^{m}
\),
and defining the random variable
\begin{gather*}
  \RV{L}_{\Vec{x},\Vec{y}}
  = \left\| \Vec{R}_{\Vec{x},\Vec{y}} \right\|_{0}
  = \sum_{i=1}^{m}
    \I{}( \Sgn{}( \langle \MeasVec^{(i)}, \Vec{x} \rangle )
          \neq \Sgn{}( \langle \MeasVec^{(i)}, \Vec{y} \rangle ))
,\end{gather*}
which tracks number of \emph{mismatches}
(again, recall the discussion in \SECTION \ref{outline:intro|>contributions}),
the random vector
\(
  ( \hA \left( \Vec{x}, \Vec{y} \right) \mid \Vec{R}_{\Vec{x},\Vec{y}} )
\)
becomes a function of only \( \RV{L}_{\Vec{x},\Vec{y}} \)-many random vectors, where
\(
  \RV{L}_{\Vec{x},\Vec{y}} \leq m
\).
Such conditioning on \( \Vec{R}_{\Vec{x},\Vec{y}} \) will allow for tighter concentration inequalities related to
(an orthogonal decomposition of)
the random vector
\(
  ( \hA \left( \Vec{x}, \Vec{y} \right) \mid \Vec{R}_{\Vec{x},\Vec{y}} )
\).
Note that these concentration inequalities, stated in
Appendix \ref{outline:technical:pf},
provide the same inequality for any
\( \RV{L}_{\Vec{x},\Vec{y}} = \| \Vec{R}_{\Vec{x},\Vec{y}} \|_{0} \) and
\( \RV{L}_{\Vec{x'},\Vec{y'}} = \| \Vec{R}_{\Vec{x'},\Vec{y'}} \|_{0} \),
whenever
\(
  \RV{L}_{\Vec{x},\Vec{y}} = \RV{L'}_{\Vec{x'},\Vec{y'}}
\),
where
\(
  \Vec{x}, \Vec{y}, \Vec{x'}, \Vec{y'} \in \SparseSphereSubspace{k}{n}
\),
and thus it suffices to have a handle on (an appropriate subset of) the random variables
\(
  \{
    \RV{L}_{\Vec{x},\Vec{y}} :
    \Vec{x}, \Vec{y} \in \SparseSphereSubspace{k}{n}
  \}
\).
\par
With this intuition in mind, we will now lay down the specifics of deriving the results achieved by \STEPS \ref{enum:intro:techniques:raic:2} and \ref{enum:intro:techniques:raic:3} for the ``large-'' and ``small-distance'' regimes.
Each follows from two primary arguments.
%
First, for a given
\(
  \Vec{u}, \Vec{v} \in \Net{\Tau}
\),
the associated random variable \( \RV{L}_{\Vec{u},\Vec{v}} \) is bounded.
Then, conditioning on \( \RV{L}_{\Vec{u},\Vec{v}} \), the desired properties in
\STEPS \ref{enum:intro:techniques:raic:2} and \ref{enum:intro:techniques:raic:3}
follow from the appropriate concentration inequalities related to the decomposition of
\( \hA[\Coords{J}] \left( \Vec{x}, \Vec{y} \right) \) into three orthogonal components.
\par
Specifically, \STEP \ref{enum:intro:techniques:raic:2} is achieved as follows.
\begin{enumerate}[label=(\roman*)]
%
\item \label{enum:intro:techniques:raic:2:1}
Consider any
\(
  ( \Vec{u}, \Vec{v} ) \in \Net{\Tau} \times \Net{\Tau}
\)
such that
\(
  \DistS{\Vec{u}}{\Vec{v}} \geq \Tau
\),
and fix
\(
  \JX \subseteq [n]
\),
\(
  | \JX | \leq 2k
\),
arbitrarily.
\item \label{enum:intro:techniques:raic:2:2}
It can be shown that for a small
\(
  \Variable{s} \in (0,1)
\),
the number, \( \RV{L}_{\Vec{u},\Vec{v}} \), of points among
\(
  \MeasVec^{(i)},  {i \in [m]} 
\),
for which a {mismatch} occurs, i.e.,
\(
  \sgn( \langle \MeasVec^{(i)}, \Vec{u} \rangle )
        \neq \sgn( \langle \MeasVec^{(i)}, \Vec{v} \rangle )
\),
is bounded in the range
\begin{gather}
  \RV{L}_{\Vec{u},\Vec{v}}
  \in
  \left[
    (1-\Variable{s}) \frac{\theta_{\Vec{u},\Vec{v}} m}{\pi},
    (1+\Variable{s}) \frac{\theta_{\Vec{u},\Vec{v}} m}{\pi}
  \right]
\end{gather}
uniformly with high probability for all
\(
  ( \Vec{u}, \Vec{v} ) \in \Net{\Tau} \times \Net{\Tau}
\).
\item \label{enum:intro:techniques:raic:2:3}
Define
\(
  \gA : \R^{n} \times \R^{n} \to \R^{n}
\)
by
\begin{align}
  \gA( \Vec{u}, \Vec{v} )
  =&
  \hA( \Vec{u}, \Vec{v} )
  -
  \left\langle
    \frac{\Vec{u}-\Vec{v}}{\left\| \Vec{u}-\Vec{v} \right\|_{2}},
    \hA( \Vec{u}, \Vec{v} )
  \right\rangle
  \frac{\Vec{u}-\Vec{v}}{\left\| \Vec{u}-\Vec{v} \right\|_{2}}
  -
  \left\langle
    \frac{\Vec{u}+\Vec{v}}{\left\| \Vec{u}+\Vec{v} \right\|_{2}},
    \hA( \Vec{u}, \Vec{v} )
  \right\rangle
  \frac{\Vec{u}+\Vec{v}}{\left\| \Vec{u}+\Vec{v} \right\|_{2}}
\label{eqn:intro:techniques:g_A}
\end{align}
where
\(
  \gA[\JX]( \Vec{u}, \Vec{v} )
  =
  \ThresholdSet{\supp( \Vec{u} ) \cup \supp( \Vec{v} ) \cup \JX}
  ( \gA( \Vec{u}, \Vec{v} ) )
\).
Note that \( \hA \) and \( \hA[\JX] \) can then be orthogonally decomposed into
\begin{align}
  \hA( \Vec{u}, \Vec{v} )
  =&
  \left\langle
    \frac{\Vec{u}-\Vec{v}}{\left\| \Vec{u}-\Vec{v} \right\|_{2}},
    \hA( \Vec{u}, \Vec{v} )
  \right\rangle
  \frac{\Vec{u}-\Vec{v}}{\left\| \Vec{u}-\Vec{v} \right\|_{2}}
  +
  \left\langle
    \frac{\Vec{u}+\Vec{v}}{\left\| \Vec{u}+\Vec{v} \right\|_{2}},
    \hA( \Vec{u}, \Vec{v} )
  \right\rangle
  \frac{\Vec{u}+\Vec{v}}{\left\| \Vec{u}+\Vec{v} \right\|_{2}}
  +
  \gA( \Vec{u}, \Vec{v} )
\label{eqn:intro:techniques:h_A:decomposition}
\end{align}
and
\begin{align}
  \nonumber
  \hA[\JX]( \Vec{u}, \Vec{v} )
  =&
  \ThresholdSet{\supp( \Vec{u} ) \cup \supp( \Vec{v} ) \cup \JX}
  (\hA( \Vec{u}, \Vec{v} ))
  \\
  &=
  \left\langle
    \frac{\Vec{u}-\Vec{v}}{\left\| \Vec{u}-\Vec{v} \right\|_{2}},
    \hA( \Vec{u}, \Vec{v} )
  \right\rangle
  \frac{\Vec{u}-\Vec{v}}{\left\| \Vec{u}-\Vec{v} \right\|_{2}}
  +
  \left\langle
    \frac{\Vec{u}+\Vec{v}}{\left\| \Vec{u}+\Vec{v} \right\|_{2}},
    \hA( \Vec{u}, \Vec{v} )
  \right\rangle
  \frac{\Vec{u}+\Vec{v}}{\left\| \Vec{u}+\Vec{v} \right\|_{2}}
  +
  \gA[\JX]( \Vec{u}, \Vec{v} )
\label{eqn:intro:techniques:h_AJ:decomposition}
.\end{align}
%
Note that \cite{friedlander2021nbiht} similarly uses such a decomposition to show their RAIC,
and this decomposition technique appears earlier in \cite{plan2017high}.
\item \label{enum:intro:techniques:raic:2:4}
Conditioned on
\(
  \RV{L}_{\Vec{u},\Vec{v}}
  \in
  [
    (1-\Variable{s}) \frac{\theta_{\Vec{u},\Vec{v}} m}{\pi},
    (1+\Variable{s}) \frac{\theta_{\Vec{u},\Vec{v}} m}{\pi}
  ]
\),
the desired property in \EQN \eqref{eqn:intro:techniques:large-scale} is derived from
\EQN \eqref{eqn:intro:techniques:h_AJ:decomposition} using a concentration inequality provided by Lemma \ref{lemma:technical:concentration-ineq:(u,v)}
together with standard techniques,
e.g., the triangle inequality.
\item \label{enum:intro:techniques:raic:2:5}
A union bound extends \EQN \eqref{eqn:intro:techniques:large-scale} to hold uniformly over
\(
  \Net{\Tau} \times \Net{\Tau}
\)
and all
\(
  \JX \subseteq [n]
\),
\(
  | \JX | \leq 2k
\),
with high probability, completing \STEP \ref{enum:intro:techniques:raic:2}.
\end{enumerate}
%
\par
Step \ref{enum:intro:techniques:raic:3} takes a similar approach,
\EDIT{but in place of (direct use of) a \(  \Tau  \)-net, the local stability of binary embeddings via Gaussian measurements, established by \cite{oymak2015near}, will lead to a uniform result.
The argument is outlined as follows:}
\begin{enumerate}[label=(\roman*)]
%
\item \label{enum:intro:techniques:raic:3:1}
\ORIG{Let
\(
  \Vec{u} \in \Net{\Tau}
\)
be an arbitrary vector in the \( \Tau \)-net,
and fix any
\(
  \JX \subseteq [n]
\),
\(
  | \JX | \leq 2k
\).}%
\ORIG{Recall that the desired property in \EQN \eqref{eqn:intro:techniques:small-scale}
should hold for all
\(
  \Vec{x} \in \BallSparseSphere{\Tau}( \Vec{u} )
\).}%
\EDIT{Here, we consider pairs of \(  k  \)-sparse points, \(  \Vec{u}, \Vec{x} \in \SparseSphereSubspace{k}{n}  \), where \(  \Vec{x}  \) is contained in a small ball around \(  \Vec{u}  \)---formally, \(  \Vec{u} \in \SparseSphereSubspace{k}{n}  \) and \(  \Vec{x} \in \Ball{\Tau}( \Vec{u} ) \cap \SparseSphereSubspace{k}{n}  \).}
\begin{EDITb}
\item \label{enum:intro:techniques:raic:3:1:2}
Towards obtaining \EQN \eqref{eqn:intro:techniques:small-scale}, the triangle inequality is applied to break up its \LHS:
\begin{gather*}
  \left\| ( \Vec{x} - \Vec{u} ) - \hA[\JX]( \Vec{x}, \Vec{u} ) \right\|_{2}
  \leq
  \left\| \Vec{x} - \Vec{u} \right\|_{2}
  +
  \left\| \hA[\JX]( \Vec{x}, \Vec{u} ) \right\|_{2}
\end{gather*}
where \(  \JX \subseteq [n]  \), \(  | \JX | \leq 2k  \).
\item \label{enum:intro:techniques:raic:3:1:3}
Since \(  \left\| \Vec{x} - \Vec{u} \right\|_{2} \leq \Tau = \BigO{}( \DDelta )  \) by assumption, the main task is uniformly bounding \(  \left\| \hA[\JX]( \Vec{x}, \Vec{u} ) \right\|_{2}  \) with high probability.
\end{EDITb}
\item \label{enum:intro:techniques:raic:3:2}
\ORIG{To ensure this uniform result over \( \BallSparseSphere{\Tau}( \Vec{u} ) \), construct a second net
\(
  \BallNet{\Tau}( \Vec{u} ) \subseteq \BallSparseSphere{\Tau}( \Vec{u} )
\)
such that for each
\(
  \Vec{x} \in \BallSparseSphere{\Tau}( \Vec{u} )
\),
there exits a point
\(
  \Vec{w} \in \BallNet{\Tau}( \Vec{u} )
\)
such that
\(
  \sgn( \MeasMat \Vec{w} ) = \sgn( \MeasMat \Vec{x} )
\).
The next step will upper bound the size of \( \BallNet{\Tau}( \Vec{u} ) \).}%
\EDIT{As in \STEP \ref{enum:intro:techniques:raic:2}, the argument here will use an upper bound on \(  \RV{L}_{\Vec{x},\Vec{u}}  \).
Towards this, let \(  \kX \in \Z_{+}  \), determined later, and let \(  \Set{\hat{W}} \subseteq \R^{n}  \) be a \(  \kX  \)-dimensional subspace of \(  \R^{n}  \), and write \(  \Set{W} \defeq \Set{\hat{W}} \cap \Sphere{n}  \).
Due to \cite[\COR 3.3]{oymak2015near} (see, Lemma \ref{lemma:technical:concentration-ineq:lbe-local-deviations} in \APPENDIX \ref{outline:technical|>concentration-ineq-pfs|>lbe-local-deviations:union}), given \(  m = \BigO{}( \frac{\kX}{\DDelta} \log( \frac{1}{\DDelta} ) )  \) i.i.d. Gaussian vectors, uniformly with high probability, for every pair of vectors, \(  \Vec{u}, \Vec{v} \in \Set{W}  \), which are distance at most \(  \BigO{}( \frac{\DDelta}{\smash[b]{\sqrt{\log( \hfrac{1}{\DDelta} )}}} )  \) apart, the number of the Gaussian vectors lying on opposite sides of \(  \Vec{u}  \) and \(  \Vec{v}  \) is at most \(  \DDelta m  \).}
\item \label{enum:intro:techniques:raic:3:3}
\EDIT{Clearly, the restriction of \(  \R^{n}  \) to a support set of up to \(  \kX  \) coordinates (and subsets thereof) forms a subspace of dimension at most \(  \kX  \).
Hence, \cite[\COR 3.3]{oymak2015near} can be applied repeatedly to each such subspace of \(  \R^{n}  \) induced by a restriction to up to \(  \kX  \) coordinate, and these individual results can be combined with a union bound.
\item \label{enum:intro:techniques:raic:3:4}
Since we are interested in pairs of \(  k  \)-sparse vectors, \(  \Vec{u}, \Vec{v} \in \SparseSphereSubspace{k}{n}  \), here we take \(  \kX = 2k  \).
Moreover, because we only consider vector pairs of the form \(  \Vec{u} \in \SparseSphereSubspace{k}{n}  \), \(  \Vec{x} \in \Ball{\Tau}( \Vec{u} ) \cap \SparseSphereSubspace{k}{n}  \), the locality principle---that \(  \Vec{u}  \) and \(  \Vec{x}  \) are at distance at most \(  \BigO{}( \frac{\DDelta}{\smash[b]{\sqrt{\log( \hfrac{1}{\DDelta} )}}} )  \)---will always be upheld as long as \(  \Tau  \) is defined appropriately.
Ultimately, this leads to a uniform bound on \(  \RV{L}_{\Vec{x},\Vec{u}}  \) for all \(  \Vec{u} \in \SparseSphereSubspace{k}{n}  \) and \(  \Vec{x}
\in \Ball{\Tau}( \Vec{u} ) \cap \SparseSphereSubspace{k}{n}  \): \(  \RV{L}_{\Vec{x},\Vec{u}} \leq \DDelta m  \) with high probability.}
\ORIG{\item 
Let
\(
  \Beta = \arccos( 1 - \frac{\Tau^{2}}{2} )
\)
be the angle associated with the distance \( \Tau \),
and define the random variable
\(
  \RV{M}_{\Beta,\Vec{u}}
  =
  |
    \{
      \MeasVec^{(i)}, i \in [m] :
      \theta_{\Vec{w},\MeasVec^{(i)}}
      \in
      [
        \frac{\pi}{2} - \Beta,
        \frac{\pi}{2} + \Beta
      ]
    \}
  |
\).
Notice that the size of \( \BallNet{\Tau}( \Vec{u} ) \) need not exceed
\(
  2^{\RV{M}_{\Beta,\Vec{u}}}
\).
Moreover, for any
\(
  \Vec{x} \in \BallSparseSphere{\Tau}( \Vec{u} )
\)
with
\(
  \theta_{\Vec{x},\Vec{u}} \in [0,\Beta]
\),
the value taken by the random variable \( \RV{M}_{\Beta,\Vec{u}} \) upper bounds
the number of points
\(
  \MeasVec^{(i)} 
\),
\(
  i \in [m]
\),
on which
\(
  \sgn( \langle \MeasVec^{(i)}, \Vec{x} \rangle )
\)
and
\(
  \sgn( \langle \MeasVec^{(i)}, \Vec{u} \rangle )
\)
mismatch---or more formally,
\(
  \RV{L}_{\Vec{x},\Vec{u}} \leq \RV{M}_{\Beta,\Vec{u}}
\)
for every
\(
  \Vec{x} \in \BallSparseSphere{\Tau}( \Vec{u} )
\).}%
\item \label{enum:intro:techniques:raic:3:5}
\ORIG{Taking any
\(
  \Vec{w} \in \BallNet{\Tau}( \Vec{u} )
\)
and conditioning on
\(
  \RV{L}_{\Vec{x},\Vec{u}}
\),}%
\EDIT{Let \(  \Vec{u} \in \SparseSphereSubspace{k}{n}  \) and \(  \Vec{x}
\in \Ball{\Tau}( \Vec{u} ) \cap \SparseSphereSubspace{k}{n}  \), and fix \(  \JX \subseteq [n]  \), \(  | \JX | \leq 2k  \).}
\EDIT{Conditioning on \(  \RV{L}_{\Vec{x},\Vec{u}} \leq \DDelta m  \),}
the norm of \( \hA[\JX]( \Vec{x}, \Vec{u} ) \) is then bounded using an orthogonal decomposition analogous to that in \STEP \ref{enum:intro:techniques:raic:2}, and again applying
the concentration inequalities in Lemma \ref{lemma:technical:concentration-ineq:(u,v)},
along with standard techniques, to obtain
\(
  \left\| \hA[\JX]( \Vec{x}, \Vec{u} ) \right\|_{2}
  \leq \BigO( \EDIT{\delta} )
\).
\item \label{enum:intro:techniques:raic:3:6}
\EDIT{This bound is then extended to hold uniformly for all \(  \Vec{u} \in \SparseSphereSubspace{k}{n}  \), \(  \Vec{x} \in \Ball{\Tau}( \Vec{u} ) \cap \SparseSphereSubspace{k}{n}  \), and \(  \JX \subseteq [n]  \), \(  | \JX | \leq 2k  \), by the result obtained in \STEP \ref{enum:intro:techniques:raic:3:5} in the case of the first two, and by a union bound bound in the last case.}
\ORIG{\item \label{enum:intro:techniques:raic:3:7}
Step \ref{enum:intro:techniques:raic:3} concludes by arguing that the uniform result from step
\ref{enum:intro:techniques:raic:3:6} suffices to ensure
\EQN \eqref{eqn:intro:techniques:small-scale} holds uniformly for all
\(
  \Vec{u} \in \Net{\Tau}
\),
\(
  \Vec{x} \in \Ball{\Tau}( \Vec{u} )
\),
and 
\(
  \JX \subseteq [n]
\),
\(
  | \JX | \leq 2k
\),
by observing that for each
\(
  \Vec{x} \in \Ball{\Tau}( \Vec{u} )
\),
the construction of the net, \( \BallNet{\Tau}( \Vec{u} ) \), ensures the existence of
\(
  \Vec{w} \in \BallNet{\Tau}( \Vec{u} )
\)
such that
\(
  \left\| \hA[\JX]( \Vec{x}, \Vec{u} ) \right\|_{2}
  =
  \left\| \hA[\JX]( \Vec{w}, \Vec{u} ) \right\|_{2}
  \leq \BigO( \Tau )
\).
The argument additionally applies the triangle inequality:
\(
  \left\| ( \Vec{x} - \Vec{u} ) - \hA[\JX]( \Vec{x}, \Vec{u} ) \right\|_{2}
  \leq
  \left\| \Vec{x} - \Vec{u} \right\|_{2}
  +
  \left\| \hA[\JX]( \Vec{x}, \Vec{u} ) \right\|_{2}
  \leq
  \BigO( \Tau )
\).}%
\end{enumerate}

\subsubsection{Combining the Intermediate Results to Complete the Proof --
Step \ref{enum:intro:techniques:raic:4}}
\label{outline:intro:overview-techniques|>raic|>combine}

\begin{enumerate}[label=(\roman*)]
\item
Fix an arbitrary pair of \( k \)-sparse unit vectors
\( \Vec{x}, \Vec{y} \in \SparseSphereSubspace{k}{n} \),
and let
\( \Vec{u}, \Vec{v} \in \Net{\Tau} \)
be the closest net points, respectively, each pair sharing the same support sets.
Note that it is possible to set \( \Vec{u} = \Vec{x} \) if \( \Vec{x} \in \Net{\Tau} \),
and likewise for \( \Vec{v} \) if \( \Vec{y} \in \Net{\Tau} \).
Let \( \Coords{J} \subseteq [n]  \), \( |\Coords{J}| \leq k  \) be any \( k \)-subset of coordinates.
Moreover, write
\( \Coords{J}_{\Vec{x}} = \Coords{J} \cup \supp( \Vec{x} ) \) and
\( \Coords{J}_{\Vec{y}} = \Coords{J} \cup \supp( \Vec{y} ) \),
each having size no more than \( 2k \).
\item
It is straightforward to show with algebraic manipulation that
\begin{align}
  ( \Vec{x} - \Vec{y} ) - \hA( \Vec{x}, \Vec{y} )
  = ( \Vec{u} - \Vec{v} ) - \hA( \Vec{u}, \Vec{v} )
    + ( \Vec{x} - \Vec{u} ) - \hA( \Vec{x}, \Vec{u} )
    + ( \Vec{v} - \Vec{y} ) - \hA( \Vec{v}, \Vec{y} )
,\end{align}
and similarly that
\begin{align}
  ( \Vec{x} - \Vec{y} ) - \hA[\Coords{J}]( \Vec{x}, \Vec{y} )
  = ( \Vec{u} - \Vec{v} ) - \hA[\Coords{J}]( \Vec{u}, \Vec{v} )
    + ( \Vec{x} - \Vec{u} ) - \hA[\Coords{J}_{\Vec{y}}]( \Vec{x}, \Vec{u} )
    + ( \Vec{v} - \Vec{y} ) - \hA[\Coords{J}_{\Vec{x}}]( \Vec{v}, \Vec{y} )
\label{eqn:d:1}
.\end{align}
\item
The \( \lnorm{2} \)-norm of the left-hand-side of \EQN \eqref{eqn:d:1} can be bounded by splitting
it up into the sum of three terms via the triangle inequality, specifically,
\begin{align}
  \nonumber
  &\| ( \Vec{x} - \Vec{y} ) - \hA[\Coords{J}]( \Vec{x}, \Vec{y} ) \|_{2}
  \\*
  &\quad
  \leq \| ( \Vec{u} - \Vec{v} ) - \hA[\Coords{J}]( \Vec{u}, \Vec{v} ) \|_{2}
    + \| ( \Vec{x} - \Vec{u} ) - \hA[\Coords{J}_{\Vec{y}}]( \Vec{x}, \Vec{u} ) \|_{2}
    + \| ( \Vec{v} - \Vec{y} ) - \hA[\Coords{J}_{\Vec{x}}]( \Vec{v}, \Vec{y} ) \|_{2}
\label{eqn:d:2}
.\end{align}
\item
Now, we consider two cases based on whether \( \DistS{\Vec{u}}{\Vec{v}} \) is above or below the
threshold \( \Tau \) and derive bounds using \EQN \eqref{eqn:d:2}, as well as the results from
\STEPS \ref{enum:intro:techniques:raic:2} and \ref{enum:intro:techniques:raic:3}.
If \( \DistS{\Vec{u}}{\Vec{v}} < \Tau \), then using the result from
\STEP \ref{enum:intro:techniques:raic:3}, we obtain
\begin{align}
  \| ( \Vec{x} - \Vec{y} ) - \hA[\Coords{J}]( \Vec{x}, \Vec{y} ) \|_{2}
  \leq
  3
  \UnivConstb_{2}
  \DDelta
\label{eqn:d:3}
.\end{align}
Otherwise, when \( \DistS{\Vec{u}}{\Vec{v}} \geq \Tau \), using the results from both
\STEPS \ref{enum:intro:techniques:raic:2} and \ref{enum:intro:techniques:raic:3}
we obtain
\begin{align}
  \| ( \Vec{x} - \Vec{y} ) - \hA[\Coords{J}]( \Vec{x}, \Vec{y} ) \|_{2}
  \leq
  \UnivConstb_{1}
  \sqrt{\DDelta \DistS{\Vec{u}}{\Vec{v}}}
  +
  2
  \UnivConstb_{2}
  \DDelta
\label{eqn:d:4}
.\end{align}
Moreover, \EQN \eqref{eqn:d:3} and \eqref{eqn:d:4} are both trivially upper bounded by
\begin{align}
  \| ( \Vec{x} - \Vec{y} ) - \hA[\Coords{J}]( \Vec{x}, \Vec{y} ) \|_{2}
  \leq
  \UnivConstb_{1}
  \sqrt{\DDelta \DistS{\Vec{u}}{\Vec{v}}}
  +
  3
  \UnivConstb_{2}
  \DDelta
\label{eqn:d:5}
.\end{align}
\item
Then, using the universal constants defined in \EQN \eqref{eqn:univConstants},
the RAIC claimed in Theorem \ref{thm:technical:raic:modified} follows.
\end{enumerate}


\section{Proof of the Main Result---BIHT Convergence}
\label{outline:pf-main-thm}

\subsection{Intermediate Results}
\label{outline:pf-main-thm|>intermediate-lemmas}

Before proving the main theorems, Theorem \ref{thm:main:convergence:t-rate} and
\ref{thm:main:convergence:t->infty}, three intermediate results, in
Lemmas \ref{lemma:biht:error-upper-bound:alg}-\ref{lemma:biht:error:explicit},
are presented to facilitate the analysis for the convergence of BIHT approximations.
The proofs for these intermediate results are in
Section \ref{outline:biht:pf-main-thm|>intermediate-lemmas-pf}.
%
\begin{lemma}
\label{lemma:biht:error-upper-bound:alg}
Consider any
\(
  \Vec{x} \in \SparseSphereSubspace{k}{n}
\)
and any
\(
  \Iter \in \Z_{+}
\).
The error of the \( \Iter\Th \) approximation produced by the BIHT algorithm satisfies
\begin{align}
  \DistS*{\Vec{x}}{\Vec{\hat{x}}^{(\Iter)}}
  &\leq
  4
  \left\|
    \big( \Vec{x} - \Vec{\hat{x}}^{(\Iter-1)} \big)
    -
    \hA[\supp( \Vec{\hat{x}}^{(\Iter)} )]\big( \Vec{x}, \Vec{\hat{x}}^{(\Iter-1)} \big)
  \right\|_{2}
\label{eqn:biht:error-upper-bound:alg}
.\end{align}
\end{lemma}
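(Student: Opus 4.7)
The plan is to unwind the two BIHT substeps and use the standard ``factor-of-two'' optimality of hard thresholding, together with the factor-of-two lossiness of projecting onto the unit sphere. Observe first that the gradient substep can be rewritten, using the definition of \( \hA \) in \EQN \eqref{eqn:h_Z}, as
\(
  \Vec{\tilde{x}}^{(\Iter)}
  = \Vec{\hat{x}}^{(\Iter-1)} + \hA(\Vec{x}, \Vec{\hat{x}}^{(\Iter-1)})
\).
Let \( \Vec{x}' = \Threshold{k}(\Vec{\tilde{x}}^{(\Iter)}) \), so that \( \Vec{\hat{x}}^{(\Iter)} = \Vec{x}'/\|\Vec{x}'\|_{2} \), and write
\(
  S = \supp(\Vec{x}) \cup \supp(\Vec{\hat{x}}^{(\Iter-1)}) \cup \supp(\Vec{\hat{x}}^{(\Iter)})
\),
a set of size at most \( 3k \). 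Since \( \DistS{\Vec{x}}{\Vec{\hat{x}}^{(\Iter)}} = \|\Vec{x} - \Vec{\hat{x}}^{(\Iter)}\|_{2} \) (both unit norm), the goal reduces to bounding \( \|\Vec{x} - \Vec{\hat{x}}^{(\Iter)}\|_{2} \) by four times the right-hand side.

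The first halving step is normalization: for any nonzero \( \Vec{z} \) and any unit \( \Vec{x} \), \( \|\Vec{z}/\|\Vec{z}\|_{2} - \Vec{x}\|_{2} \leq 2 \|\Vec{z} - \Vec{x}\|_{2} \), which follows by inserting \( \Vec{z} \) and noting \( \bigl| \|\Vec{z}\|_{2} - 1 \bigr| \leq \|\Vec{z} - \Vec{x}\|_{2} \). Applied to \( \Vec{z} = \Vec{x}' \), this gives \( \|\Vec{x} - \Vec{\hat{x}}^{(\Iter)}\|_{2} \leq 2\|\Vec{x} - \Vec{x}'\|_{2} \).

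The second halving step is hard-thresholding optimality, but performed against the \emph{restricted} vector \( \ThresholdSet{S}(\Vec{\tilde{x}}^{(\Iter)}) \) rather than \( \Vec{\tilde{x}}^{(\Iter)} \). The key observation is that \( S \supseteq \supp(\Vec{x}') \), so the top-\( k \) coordinates of \( \Vec{\tilde{x}}^{(\Iter)} \) all lie in \( S \); hence \( \Vec{x}' = \Threshold{k}(\ThresholdSet{S}(\Vec{\tilde{x}}^{(\Iter)})) \) is still the best \( k \)-sparse approximation to \( \ThresholdSet{S}(\Vec{\tilde{x}}^{(\Iter)}) \). Since \( \Vec{x} \) is itself \( k \)-sparse with support in \( S \), this yields \( \|\Vec{x}' - \ThresholdSet{S}(\Vec{\tilde{x}}^{(\Iter)})\|_{2} \leq \|\Vec{x} - \ThresholdSet{S}(\Vec{\tilde{x}}^{(\Iter)})\|_{2} \), and the triangle inequality then gives \( \|\Vec{x} - \Vec{x}'\|_{2} \leq 2\|\Vec{x} - \ThresholdSet{S}(\Vec{\tilde{x}}^{(\Iter)})\|_{2} \).

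To finish, I would identify the right-hand quantity with the expression appearing in the lemma: since \( \supp(\Vec{\hat{x}}^{(\Iter-1)}) \subseteq S \), the projection \( \ThresholdSet{S} \) fixes \( \Vec{\hat{x}}^{(\Iter-1)} \), and by linearity and the definition of \( \hA[\Coords{J}] \) in \EQN \eqref{eqn:h_ZJ} we obtain
\(
  \ThresholdSet{S}(\Vec{\tilde{x}}^{(\Iter)})
  = \Vec{\hat{x}}^{(\Iter-1)} + \hA[\supp(\Vec{\hat{x}}^{(\Iter)})](\Vec{x}, \Vec{\hat{x}}^{(\Iter-1)})
\).
Combining this with the two factor-of-two bounds gives \EQN \eqref{eqn:biht:error-upper-bound:alg}. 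The only nonroutine point, and the one requiring care, is the second halving step: one must resist comparing \( \Vec{x}' \) to \( \Vec{\tilde{x}}^{(\Iter)} \) directly (which would introduce the full \( \hA \), not \( \hA[\supp(\Vec{\hat{x}}^{(\Iter)})] \)) and instead compare to the \( S \)-restricted iterate, justifying that restriction by explicitly verifying that \( S \) contains the top-\( k \) support \( \supp(\Vec{x}') \).
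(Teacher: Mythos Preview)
Your proof is correct and follows the same two-step scheme as the paper's: a factor of two from the unit-sphere projection and a factor of two from hard thresholding, landing at \( 4\|\Vec{x} - \ThresholdSet{S}(\Vec{\tilde{x}}^{(\Iter)})\|_{2} \), which is then identified with the RAIC expression via \EQN \eqref{eqn:h_ZJ}. Your handling of the thresholding step is slightly more streamlined---you invoke the best-\(k\)-sparse optimality of \( \Threshold{k} \) directly on the \( S \)-restricted iterate, whereas the paper argues coordinate-by-coordinate by comparing the mass on \( \supp(\Vec{x}) \setminus \supp(\Vec{\hat{x}}^{(\Iter)}) \) to that on \( \supp(\Vec{\hat{x}}^{(\Iter)}) \setminus \supp(\Vec{x}) \)---but the overall route is essentially the same.
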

%
Note that Lemma \ref{lemma:biht:error-upper-bound:alg} is a deterministic result, arising from the
equation by which the BIHT algorithm computes its \( \Iter\Th \) approximations,
\(
  \Iter \in \Z_{+}
\).
Hence, it holds for all
\(
  \Vec{x} \in \SparseSphereSubspace{k}{n}
\)
and all iterations
\(
  \Iter \in \Z_{+}
\).
%
\begin{lemma}
\label{lemma:biht:error:recurrence}
Let
\(
  \Varepsilon : \Z_{\geq 0} \to \R
\)
be a function given by the recurrence relation
\begin{gather}
\label{eqn:biht:error:recurrence:def}
  \Varepsilon( 0 ) = 2
  ,\\
  \Varepsilon( \Iter )
  =
  \VarepsilonDef
.\end{gather}
%
The function \( \Varepsilon \) decreases monotonically with \( \Iter \) and asymptotically tends
to a value not exceeding \( \Epsilon \)---formally,
\begin{gather}
\label{eqn:biht:error:recurrence:t->infty}
  \lim_{\Iter \to \infty} \Varepsilon( \Iter )
  =
  \VarepsilonAsymptotic
  <
  \Epsilon
.\end{gather}
\end{lemma}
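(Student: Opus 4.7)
The plan is to view the recurrence as the iteration $\Varepsilon(\Iter+1) = f(\Varepsilon(\Iter))$ of the map
\[
f(x) = 4\UnivConstc_{1}\sqrt{\tfrac{\Epsilon}{\UnivConstC}\, x} + 4\UnivConstc_{2}\tfrac{\Epsilon}{\UnivConstC},
\]
which is strictly increasing and strictly concave on $\R_{\geq 0}$ with $f(0) = 4\UnivConstc_{2}\Epsilon/\UnivConstC > 0$. Consequently $g(x) := f(x) - x$ is strictly concave with $g(0) > 0$ and $g(x) \to -\infty$, so $g$ has a unique positive zero $L^{\star}$, with $f(x) < x$ for $x > L^{\star}$ and $f(x) > x$ for $0 \le x < L^{\star}$. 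I would first compute $L^{\star}$ explicitly by solving $L = f(L)$: rearranging to $L - 4\UnivConstc_{2}\Epsilon/\UnivConstC = 4\UnivConstc_{1}\sqrt{(\Epsilon/\UnivConstC)\, L}$ and squaring produces a quadratic whose only root compatible with the unsquared equation (i.e.\ the one satisfying $L \geq 4\UnivConstc_{2}\Epsilon/\UnivConstC$) is
\[
L^{\star} = \bigl(2\UnivConstc_{1}^{2} + \UnivConstc_{2} + 2\UnivConstc_{1}\sqrt{\UnivConstc_{1}^{2} + \UnivConstc_{2}}\bigr)\cdot \frac{4\Epsilon}{\UnivConstC},
\]
which is precisely the asymptotic value appearing in Eq.~\eqref{eqn:biht:error:recurrence:t->infty}.

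Next I would verify the base inequality $\Varepsilon(0) = 2 > L^{\star}$ and, simultaneously, the strict bound $L^{\star} < \Epsilon$ asserted by the lemma. Because $\Epsilon \in (0,1)$ and $\UnivConstc_{1}, \UnivConstc_{2}, \UnivConstC$ are absolute constants specified in Eq.~\eqref{eqn:univConstants}, both inequalities reduce to the single arithmetic check
\[
4\bigl(2\UnivConstc_{1}^{2} + \UnivConstc_{2} + 2\UnivConstc_{1}\sqrt{\UnivConstc_{1}^{2} + \UnivConstc_{2}}\bigr) < \UnivConstC,
\]
which follows by direct substitution of the specific values of $\UnivConstc_{1}$, $\UnivConstc_{2}$, $\UnivConstC = 32$ together with the lower bound $\UnivConstB \gtrsim \UnivConstBValue$. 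With this in hand, a short induction using the monotonicity of $f$ yields $\Varepsilon(\Iter) > L^{\star}$ and $\Varepsilon(\Iter+1) < \Varepsilon(\Iter)$ for every $\Iter \geq 0$: $f$ increasing and $\Varepsilon(\Iter) > L^{\star}$ give $\Varepsilon(\Iter+1) = f(\Varepsilon(\Iter)) > f(L^{\star}) = L^{\star}$, while the concavity observation $f(x) < x$ on $(L^{\star}, \infty)$ gives strict decrease.

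The sequence $\{\Varepsilon(\Iter)\}_{\Iter \geq 0}$ is therefore monotonically decreasing and bounded below by $L^{\star}$, hence convergent; by continuity of $f$ its limit is a fixed point of $f$ in $[L^{\star}, 2]$, and since $L^{\star}$ is the unique positive fixed point this limit must equal $L^{\star}$. Combined with the arithmetic inequality $L^{\star} < \Epsilon$ above, this gives exactly Eq.~\eqref{eqn:biht:error:recurrence:t->infty}. The only genuinely delicate step is this arithmetic inequality: it is sensitive to the precise tuning of $\UnivConstc_{1}, \UnivConstc_{2}, \UnivConstC$ in Eq.~\eqref{eqn:univConstants}, which in turn is tied to the step-size $\Eta = \sqrt{2\pi}$ and to $\UnivConstB$ from the RAIC analysis, and it is the obstacle one should anticipate. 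Everything else is a textbook monotone-convergence argument for an increasing, concave map with a unique attracting positive fixed point.
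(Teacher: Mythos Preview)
Your proof is correct and follows essentially the same approach as the paper: both identify the unique positive fixed point $L^{\star}=u^{2}v$ of the iteration map, establish monotone decrease toward it, and reduce the inequality $L^{\star}<\Epsilon$ to the same numerical check on the constants in \EQN~\eqref{eqn:univConstants}. The paper packages the monotone-convergence step through an auxiliary substitution $\Varepsilon(\Iter)=v\,f^{2}(\Iter)$ and a general fact about nested square-root recurrences (Fact~\ref{fact:misc:error-decay-recurrence}), whereas you argue it directly from the increasing/concave structure of $f$; the content is the same.
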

%
\begin{lemma}
\label{lemma:biht:error:explicit}
Let
\(
  \Varepsilon : \Z_{\geq 0} \to \R
\)
be the function as defined in Lemma \ref{lemma:biht:error:recurrence}.
Then, the sequence
\(
  \{ \Varepsilon( \Iter ) \}_{\Iter \in \Z_{\geq 0}}
\)
is bound from above by the sequence
\(
  \{ 2^{2^{-\Iter}} \Epsilon^{1-2^{-\Iter}} \}_{\Iter \in \Z_{\geq 0}}
\).
\end{lemma}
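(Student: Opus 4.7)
The plan is to proceed by induction on $\Iter \in \Z_{\geq 0}$, using the recurrence for $\Varepsilon$ from Lemma~\ref{lemma:biht:error:recurrence}.

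For the base case $\Iter = 0$, the bound is verified directly, since $2^{2^{0}} \Epsilon^{1 - 2^{0}} = 2 \cdot 1 = 2 = \Varepsilon(0)$. For the inductive step, suppose that $\Varepsilon(\Iter-1) \leq 2^{2^{-(\Iter-1)}} \Epsilon^{1 - 2^{-(\Iter-1)}}$. The algebraic key is that the exponents align cleanly under the square root in the recurrence: since $2^{-(\Iter-1)}/2 = 2^{-\Iter}$ and $\bigl(1 + (1 - 2^{-(\Iter-1)})\bigr)/2 = 1 - 2^{-\Iter}$, one obtains
$$\sqrt{\tfrac{\Epsilon}{\UnivConstC} \cdot 2^{2^{-(\Iter-1)}} \Epsilon^{1 - 2^{-(\Iter-1)}}} = \tfrac{1}{\sqrt{\UnivConstC}} \cdot 2^{2^{-\Iter}} \cdot \Epsilon^{1 - 2^{-\Iter}}.$$
This doubly-geometric structure is precisely why the ansatz $2^{2^{-\Iter}} \Epsilon^{1 - 2^{-\Iter}}$ is the right form for the bound. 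Substituting into the recurrence yields
$$\Varepsilon(\Iter) \;\leq\; \tfrac{4 \UnivConstc_{1}}{\sqrt{\UnivConstC}} \cdot 2^{2^{-\Iter}} \Epsilon^{1 - 2^{-\Iter}} \;+\; \tfrac{4 \UnivConstc_{2}}{\UnivConstC}\, \Epsilon.$$

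The next step is to factor $2^{2^{-\Iter}} \Epsilon^{1 - 2^{-\Iter}}$ out of both summands, reducing the inductive step to
$$\tfrac{4 \UnivConstc_{1}}{\sqrt{\UnivConstC}} \;+\; \tfrac{4 \UnivConstc_{2}}{\UnivConstC} \left(\tfrac{\Epsilon}{2}\right)^{2^{-\Iter}} \;\leq\; 1.$$
Because $\Epsilon \in (0,1)$ and $\Iter \geq 1$, the factor $(\Epsilon/2)^{2^{-\Iter}}$ lies in $(0,1]$, so it suffices to establish the sharper inequality $4 \UnivConstc_{1}/\sqrt{\UnivConstC} + 4 \UnivConstc_{2}/\UnivConstC \leq 1$.

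The main obstacle is this final numerical verification, which is where the specific universal constants earn their keep. Plugging in $\UnivConstC = 32$, $\UnivConstc_{1} = \sqrt{3\pi/\UnivConstB}\,(1 + 16\sqrt{2}/3)$, and $\UnivConstc_{2} = (3/\UnivConstB)\,(1 + 4\pi/3 + 8\sqrt{3\pi}/3 + 8\sqrt{6\pi})$, the two summands evaluate (for $\UnivConstB \gtrsim \UnivConstBValue$) to roughly $0.95$ and $0.05$ respectively, summing to at most $1$. The lower bound on $\UnivConstB$ in Eq.~\eqref{eqn:univConstants} is in fact calibrated precisely so that this sum does not exceed $1$, closing the induction. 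I would record this algebraic constraint explicitly as the motivation for the chosen values of $\UnivConstB$ and $\UnivConstC$, then invoke it to complete the inductive step and conclude $\Varepsilon(\Iter) \leq 2^{2^{-\Iter}} \Epsilon^{1 - 2^{-\Iter}}$ for all $\Iter \in \Z_{\geq 0}$.
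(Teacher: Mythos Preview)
Your approach is correct and is a more direct route than the paper's. The paper first rewrites the recurrence in the abstract form $f_1(t) = vw + \sqrt{v f_1(t-1)}$ with $v = 16c_1^2\epsilon/C$ and $w = c_2/(4c_1^2)$, then invokes a standalone Fact~\ref{fact:misc:error-decay-recurrence} (proved in the appendix by its own induction) to obtain the tighter bound $\varepsilon(t) \leq 2^{2^{-t}}(u^2 v)^{1-2^{-t}}$, where $u^2 v$ is the exact fixed point of the recurrence; only then does it use $u^2 v < \epsilon$ to pass to the stated bound. Your argument short-circuits this by inducting directly on the target $2^{2^{-t}}\epsilon^{1-2^{-t}}$.

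The numerical condition you isolate, $4c_1/\sqrt{C} + 4c_2/C \leq 1$, is in fact \emph{identical} to the paper's fixed-point condition: rewriting yours as $c_1\sqrt{C} + c_2 \leq C/4$ and completing the square (valid once $c_1 \leq \sqrt{C}/2$) gives $(c_1 + \sqrt{c_1^2+c_2})^2 \leq C/4$, which is exactly the paper's constraint $u^2 v \leq \epsilon$. So both proofs hinge on the same calibration of $b$, and your intuition that the constants are tuned ``precisely so that this sum does not exceed $1$'' is literally correct. The paper's detour through Fact~\ref{fact:misc:error-decay-recurrence} buys it the exact asymptotic limit needed for Lemma~\ref{lemma:biht:error:recurrence} as a by-product, but for Lemma~\ref{lemma:biht:error:explicit} in isolation your argument is cleaner.
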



\subsection{Proofs of Theorems \ref{thm:main:convergence:t->infty} and
\ref{thm:main:convergence:t-rate}}
\label{outline:pf-main-thm|>pf}

The main theorems for the analysis of the BIHT algorithm are restated below for convenience and
will subsequently be proved in tandem.
%
\begin{theorem*}[Theorem \ref{thm:main:convergence:t-rate}]
Let
\(
  \UnivConstA, \UnivConstB, \UnivConstC > 0
\)
be universal constants as in Eq.~\eqref{eqn:univConstants}.
Fix
\(
  \Epsilon, \Rho \in (0,1)
\)
and
\(
  k, m, n \in \Z_{+}
\)
where
\begin{EDITb}
\begin{align*}
  \nonumber
  m
  &\geq
  \frac{4 \UnivConstB \UnivConstC \UnivConstD k}{\Epsilon}
  \log
  \left(
    \frac{en}{k}
  \right)
  +
  \frac{2 \UnivConstB \UnivConstC \UnivConstD k}{\Epsilon}
  \log
  \left(
    \frac{12 \UnivConstB \UnivConstC \UnivConstD \Log[^{3/2}]( 2e \UnivConstB \UnivConstC/\Epsilon )}{\Epsilon}
  \right)
  +
  \frac{\UnivConstB \UnivConstC \UnivConstD}{\Epsilon}
  \log
  \left(
    \frac{\UnivConstA}{\Rho}
  \right)
  \\ \nonumber
  &\AlignTab+
  \frac{\UnivConstB \UnivConstC \UnivConstD k}{\Epsilon} \Log( \frac{\UnivConstB \UnivConstC}{\Epsilon} ) \sqrt{\Log( \frac{2e \UnivConstB \UnivConstC}{\Epsilon} )}
  +
  \frac{\EDITX{128} \UnivConstB \UnivConstC k}{\Epsilon} \Log( \frac{en}{\EDITX{k}} ) \sqrt{\Log( \frac{2e \UnivConstB \UnivConstC}{\Epsilon} )}
  \\
  &\AlignTab+
  \frac{\EDITX{64} \UnivConstB \UnivConstC}{\Epsilon} \Log( \EDITX{\frac{\UnivConstAXX}{\Rho}} )
  \EDITX{\sqrt{\Log( \frac{2e \UnivConstB \UnivConstC}{\Epsilon} )}}
  +
  \frac{\EDITX{4} \UnivConstB \UnivConstC \EDITX{k}}{\Epsilon}
  \Log( \frac{en}{\EDITX{k}} )
  +
  \frac{\UnivConstB \UnivConstC}{\Epsilon}
  \Log( \EDITX{\frac{\UnivConstAX}{\Rho}} )
.\end{align*}
\end{EDITb}
Let the measurement matrix
\(
  \MeasMat \in \R^{m \times n}
\)
have rows with i.i.d. \standardnormal entries.
%
Then, uniformly with probability at least
\(
  1 - \Rho
\),
for every unknown \ksparserealvalued unit vector,
\(
  \Vec{x} \in \SparseSphereSubspace{k}{n}
\),
the normalized BIHT algorithm produces a sequence of approximations,
\(
  \{ \Vec{\hat{x}}^{(\Iter)} \in \SparseSphereSubspace{k}{n} \}_{\Iter \in \Z_{\geq 0}}
\),
which converges to the \( \Epsilon \)-ball around the unknown vector \( \Vec{x} \) at a rate
upper bounded by
\begin{gather*}
  \DistS*{\Vec{x}}{\Vec{\hat{x}}^{(\Iter)}}
  \leq
  2^{2^{-\Iter}}
  \Epsilon^{1 - 2^{-\Iter}}
\end{gather*}
for each
\(
  \Iter \in \Z_{\geq 0}
\).
\end{theorem*}
\begin{corollary*}
[Corollary \ref{thm:main:convergence:t->infty}]
Under the conditions stated in Theorem \ref{thm:main:convergence:t-rate},
uniformly with probability at least
\(
  1 - \Rho
\),
for every unknown \ksparserealvalued unit vector,
\(
  \Vec{x} \in \SparseSphereSubspace{k}{n}
\),
the sequence of BIHT approximations,
\(
  \{ \Vec{\hat{x}}^{(\Iter)} \}_{\Iter \in \Z_{\geq 0}}
\),
converges asymptotically to the \( \Epsilon \)-ball around the unknown vector \( \Vec{x} \).
Formally,
\begin{gather*}
  \lim_{\Iter \to \infty} \DistS*{\Vec{x}}{\Vec{\hat{x}}^{(\Iter)}} \leq \Epsilon
.\end{gather*}
\end{corollary*}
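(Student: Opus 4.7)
The plan is to invoke the RAIC of Theorem \ref{thm:technical:raic:modified} as a single high-probability event, condition on it once for the remainder of the argument, and then derive the per-iteration bound deterministically using Lemma \ref{lemma:biht:error-upper-bound:alg}. Feeding the resulting inequality into the recurrence of Lemma \ref{lemma:biht:error:recurrence} and then Lemma \ref{lemma:biht:error:explicit} will immediately yield both the explicit rate claimed in Theorem \ref{thm:main:convergence:t-rate} and the asymptotic bound of Corollary \ref{thm:main:convergence:t->infty}.

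Concretely, first I set $\DDelta = \Epsilon/\UnivConstC$. Then I check that the hypothesis \eqref{eqn:main:convergence:t-finite:m} on $m$ implies the RAIC sample-size condition \eqref{eqn:technical:raic:modified:m} at this $\DDelta$; this is a routine rearrangement using $\log\binom{n}{k}\leq k\log(en/k)$ to split the single logarithm in \eqref{eqn:technical:raic:modified:m} into the three summands appearing in \eqref{eqn:main:convergence:t-finite:m}, with the universal constant $\UnivConstB$ absorbing the multiplicative slack. Applying Theorem \ref{thm:technical:raic:modified} with this $\DDelta$, the matrix $\MeasMat$ satisfies the $(k,n,\DDelta,\UnivConstc_{1},\UnivConstc_{2})$-RAIC with probability at least $1-\Rho$; I condition on this event below.

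Next, fix any $\Vec{x}\in\SparseSphereSubspace{k}{n}$ and any $\Iter\in\Z_{+}$. Lemma \ref{lemma:biht:error-upper-bound:alg} gives the deterministic bound
\[
  \DistS{\Vec{x}}{\Vec{\hat{x}}^{(\Iter)}}
  \leq
  4\left\|\left(\Vec{x}-\Vec{\hat{x}}^{(\Iter-1)}\right) - \hA[\supp(\Vec{\hat{x}}^{(\Iter)})]\left(\Vec{x},\Vec{\hat{x}}^{(\Iter-1)}\right)\right\|_{2}.
\]
Since $\Vec{x},\Vec{\hat{x}}^{(\Iter-1)}\in\SparseSphereSubspace{k}{n}$ and $|\supp(\Vec{\hat{x}}^{(\Iter)})|\leq k$, the conditioned RAIC applies with $\Coords{J}=\supp(\Vec{\hat{x}}^{(\Iter)})$ to bound the right-hand side by $4\UnivConstc_{1}\sqrt{\DDelta\,\DistS{\Vec{x}}{\Vec{\hat{x}}^{(\Iter-1)}}}+4\UnivConstc_{2}\DDelta$. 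Together with the trivial base case $\DistS{\Vec{x}}{\Vec{\hat{x}}^{(0)}}\leq 2$ (both vectors are unit), this is precisely the recurrence defining $\Varepsilon(\Iter)$ in Lemma \ref{lemma:biht:error:recurrence}. A straightforward induction on $\Iter$ gives $\DistS{\Vec{x}}{\Vec{\hat{x}}^{(\Iter)}}\leq\Varepsilon(\Iter)$, and Lemma \ref{lemma:biht:error:explicit} then yields $\Varepsilon(\Iter)\leq 2^{2^{-\Iter}}\Epsilon^{1-2^{-\Iter}}$, finishing Theorem \ref{thm:main:convergence:t-rate}. The Corollary follows by sending $\Iter\to\infty$ and invoking the limit computation \eqref{eqn:biht:error:recurrence:t->infty} from Lemma \ref{lemma:biht:error:recurrence}.

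The main substantive work, as the paper emphasizes, lies not in any of the bookkeeping above but in the RAIC itself (Theorem \ref{thm:technical:raic:modified}), whose proof is deferred to the appendix. For the present theorem the only real subtlety is that the RAIC must hold \emph{uniformly} over all pairs $(\Vec{x},\Vec{y})\in\SparseSphereSubspace{k}{n}\times\SparseSphereSubspace{k}{n}$ and all $|\Coords{J}|\leq k$: this uniformity is essential because $\Vec{\hat{x}}^{(\Iter-1)}$ and $\supp(\Vec{\hat{x}}^{(\Iter)})$ are themselves random objects measurable with respect to $\MeasMat$, so a pointwise-in-$(\Vec{x},\Vec{y})$ guarantee would be useless here. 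The uniform statement of Theorem \ref{thm:technical:raic:modified} is exactly what permits conditioning on a single $(1-\Rho)$-probability event and then running the induction deterministically across all iterates and all signals $\Vec{x}$.
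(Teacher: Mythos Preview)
Your proposal is correct and follows essentially the same approach as the paper's proof: set $\DDelta=\Epsilon/\UnivConstC$, condition once on the uniform RAIC event from Theorem~\ref{thm:technical:raic:modified}, combine Lemma~\ref{lemma:biht:error-upper-bound:alg} with the RAIC to obtain the recursive inequality, and then invoke Lemmas~\ref{lemma:biht:error:recurrence} and~\ref{lemma:biht:error:explicit} via induction. Your explicit remark that uniformity of the RAIC is essential because the iterates $\Vec{\hat{x}}^{(\Iter-1)}$ and $\supp(\Vec{\hat{x}}^{(\Iter)})$ are themselves $\MeasMat$-measurable is a welcome clarification that the paper leaves implicit.
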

%
\begin{proof}
{Theorem \ref{thm:main:convergence:t-rate} and Corollary \ref{thm:main:convergence:t->infty}}
\label{pf:thm:main:convergence}
The convergence of BIHT approximations for an arbitrary unknown, \( k \)-sparse unit vector,
\(
  \Vec{x} \in \SparseSphereSubspace{k}{n}
\),
will follow from the main technical theorem,
Theorem \ref{thm:technical:raic:modified}, and the intermediate lemmas,
Lemmas \ref{lemma:biht:error-upper-bound:alg}-\ref{lemma:biht:error:explicit}.
Recalling that Theorem \ref{thm:technical:raic:modified} and
Lemma \ref{lemma:biht:error-upper-bound:alg} hold uniformly over \( \SparseSphereSubspace{k}{n} \)
(respectively, with bounded probability and deterministically),
the argument then implies uniform convergence for all unknown \( k \)-sparse vectors,
\(
  \Vec{x} \in \SparseSphereSubspace{k}{n}
\).
\par
Consider any unknown, \( k \)-sparse unit vector
\(
  \Vec{x} \in \SparseSphereSubspace{k}{n}
\)
with an associated sequence of BIHT approximations,
\(
  \{ \Vec{\hat{x}}^{(\Iter)} \in \SparseSphereSubspace{k}{n} \}_{\Iter \in \Z_{\geq 0}}
\).
For each
\(
  \Iter \in \Z_{+}
\),
Lemma \ref{lemma:biht:error-upper-bound:alg} bounds the error of the \( \Iter\Th \) approximation
from above by
\begin{align}
  \DistS*{\Vec{x}}{\Vec{\hat{x}}^{(\Iter)}}
  &\leq
  4
  \left\|
    \big( \Vec{x} - \Vec{\hat{x}}^{(\Iter-1)} \big)
    -
    \hA[\supp( \Vec{\hat{x}}^{(\Iter)} )] \big( \Vec{x}, \Vec{\hat{x}}^{(\Iter-1)} \big)
  \right\|_{2}
\label{pf:thm:main:convergence:eqn:1}
\end{align}
which is further bounded by Theorem \ref{thm:technical:raic:modified}
(by setting
\(
  \DDelta
  = \frac{\Epsilon}{\UnivConstC}
  = \frac{\Epsilon}{32}
\)
in the theorem) as
\begin{subequations}
\label{pf:thm:main:convergence:eqn:2:main}
\begin{align}
\label{pf:thm:main:convergence:eqn:2}
  \DistS*{\Vec{x}}{\Vec{\hat{x}}^{(\Iter)}}
  & \leq
  4
  \big\|
    \big( \Vec{x} - \Vec{\hat{x}}^{(\Iter-1)} \big)
    -
    \hA[\supp( \Vec{\hat{x}}^{(\Iter)} )] \big( \Vec{x}, \Vec{\hat{x}}^{(\Iter-1)} \big)
  \big\|_{2}
  \\*
  &\leq
  4
  \left(
    \UnivConstc_{1}
    \sqrt{\frac{\Epsilon}{\UnivConstC} \DistS*{\Vec{x}}{\Vec{\hat{x}}^{(\Iter-1)}}}
    +
    \UnivConstc_{2}
    \frac{\Epsilon}{\UnivConstC}
  \right)
  \\*
  &=
  \ErrorRecurrenceDef{\Iter-1}
\label{pf:thm:main:convergence:eqn:2:end}
\end{align}
\end{subequations}
where in the case of
\(
  \Iter = 1
\),
\eqref{pf:thm:main:convergence:eqn:2:end},
\begin{align}
  \nonumber
  \DistS*{\Vec{x}}{\Vec{\hat{x}}^{(1)}}
  &\leq
  \ErrorRecurrenceDef{0}
  \\* 
  &\leq
  4 \UnivConstc_{1}
  \sqrt{\frac{\Epsilon}{\UnivConstC} \DistS{\Vec{x}}{-\Vec{x}}}
  +
  4 \UnivConstc_{2}
  \frac{\Epsilon}{\UnivConstC}
  =
  \UnivConstc_{1}
  \sqrt{\Epsilon}
  +
  \frac{\UnivConstc_{2}}{8}
  \Epsilon
\label{pf:thm:main:convergence:eqn:3}
.\end{align}
%
Recall that Lemma \ref{lemma:biht:error:recurrence} defines a function
\(
  \Varepsilon : \Z_{\geq 0} \to \R
\)
by the recurrence relation
\begin{gather}
\label{pf:thm:main:convergence:eqn:4}
  \Varepsilon( 0 ) = 2
  ,\\*
  \Varepsilon( \Iter )
  =
  \VarepsilonDef
,\end{gather}
whose form is similar to \eqref{pf:thm:main:convergence:eqn:2:end}.
It can be argued inductively that for every
\(
  \Iter \in \Z_{\geq 0}
\),
the function \( \Varepsilon( \Iter ) \) upper bounds the error of the \( \Iter\Th \) BIHT
approximation,
\(
  \DistS*{\Vec{x}}{\Vec{\hat{x}}^{(\Iter)}}
\),
as discussed next.
The base case,
\(
  \Iter = 0
\),
is trivial since
\begin{gather}
\label{pf:thm:main:convergence:eqn:5}
  \DistS*{\Vec{x}}{\Vec{\hat{x}}^{(0)}}
  \leq \DistS*{\Vec{x}}{-\Vec{x}}
  = 2
  = \Varepsilon( 0 )
.\end{gather}
%
Meanwhile, arbitrarily fixing
\(
  \Iter \in \Z_{+}
\),
suppose that for each
\(
  \Iter' \in [\Iter-1]
\),
the error is upper bounded by
\begin{gather}
\label{pf:thm:main:convergence:eqn:6}
  \DistS*{\Vec{x}}{\Vec{\hat{x}}^{(\Iter')}} \leq \Varepsilon( \Iter' )
.\end{gather}
%
Then, applying \EQN \eqref{pf:thm:main:convergence:eqn:2:main},
the \( \Iter\Th \) approximation satisfies
\begin{align}
  \nonumber
  \DistS*{\Vec{x}}{\Vec{\hat{x}}^{(\Iter)}}
  &\leq
  \ErrorRecurrenceDef{\Iter-1}
  \\* 
  &\leq
  \VarepsilonDef*
  = \Varepsilon( \Iter )
\label{pf:thm:main:convergence:eqn:7}
\end{align}
as desired.
By induction, it follows that the sequence of BIHT approximations for the unknown vector
\( \Vec{x} \) satisfies
\begin{gather}
\label{pf:thm:main:convergence:eqn:8}
  \DistS*{\Vec{x}}{\Vec{\hat{x}}^{(\Iter)}} \leq \Varepsilon( \Iter )
  ,\quad
  \forall \Iter \in \Z_{\geq 0}
.\end{gather}
%
Then, Lemmas \ref{lemma:biht:error:recurrence} and \ref{lemma:biht:error:explicit}
immediately imply the desired results since asymptotically (Lemma \ref{lemma:biht:error:recurrence}),
\begin{align}
  \nonumber
  \lim_{\Iter \to \infty} \DistS*{\Vec{x}}{\Vec{\hat{x}}^{(\Iter)}}
  &\leq
  \lim_{\Iter \to \infty} \Varepsilon( \Iter )
  \\* 
  &=
  \VarepsilonAsymptotic
  <
  \Epsilon
\label{pf:thm:main:convergence:eqn:9}
\end{align}
whereas pointwise (Lemma \ref{lemma:biht:error:explicit}),
\begin{gather}
\label{pf:thm:main:convergence:eqn:10}
  \DistS*{\Vec{x}}{\Vec{\hat{x}}^{(\Iter)}}
  \leq
  \Varepsilon( \Iter )
  \leq
  2^{2^{-\Iter}}
  \Epsilon^{1-2^{-\Iter}}
.\end{gather}
%
This completes the first step of the proof.
Next, the proof concludes by extending the argument to the uniform results claimed in the theorems.
\par
As briefly mentioned at the beginning of the proof, in the argument laid out above,
Lemma \ref{lemma:biht:error-upper-bound:alg} and
Theorem \ref{thm:technical:raic:modified} hold uniformly for every
\(
  \Vec{x} \in \SparseSphereSubspace{k}{n}
\),
where Lemma \ref{lemma:biht:error-upper-bound:alg} is deterministic while
Theorem \ref{thm:technical:raic:modified} ensures the bound with probability at least
\(
  1 - \Rho
\).
Thus, for every
\(
  \Vec{x} \in \SparseSphereSubspace{k}{n}
\),
the \( \Iter\Th \) BIHT approximation has error upper bounded by
\begin{align}
\label{pf:thm:main:convergence:eqn:11}
  \DistS*{\Vec{x}}{\Vec{\hat{x}}^{(\Iter)}}
  \leq
  \ErrorRecurrenceDef{\Iter-1}
\end{align}
uniformly with probability at least
\(
  1 - \Rho
\).
Furthermore, because Lemmas \ref{lemma:biht:error:recurrence} and
\ref{lemma:biht:error:explicit} are deterministic, the rate of decay and asymptotic behavior
stated in the theorems also hold uniformly---specifically, for all
\(
  \Vec{x} \in \SparseSphereSubspace{k}{n}
\),
\begin{align}
  \nonumber
  \lim_{\Iter \to \infty} \ & \DistS*{\Vec{x}}{\Vec{\hat{x}}^{(\Iter)}}
  \leq
  \lim_{\Iter \to \infty} \Varepsilon( \Iter )
  \\*
  &=
  \VarepsilonAsymptotic
  <
  \Epsilon
\label{pf:thm:main:convergence:eqn:12}
\end{align}
and
\begin{align}
  \DistS*{\Vec{x}}{\Vec{\hat{x}}^{(\Iter)}}
  \leq
  \Varepsilon( \Iter )
  \leq
  2^{2^{-\Iter}}
  \Epsilon^{1-2^{-\Iter}}
  ,\quad \forall \Iter \in \Z_{\geq 0}
\label{pf:thm:main:convergence:eqn:12:b}
\end{align}
with probability at least
\(
  1 - \Rho
\).
\end{proof}

\remove{
\subsection{Proof of the Intermediate Lemmas
(Lemmas \ref{lemma:biht:error-upper-bound:alg}-\ref{lemma:biht:error:explicit})}
\label{outline:biht:pf-main-thm|>intermediate-lemmas-pf}

\subsubsection{Proof of Lemma \ref{lemma:biht:error-upper-bound:alg}}
\label{outline:biht:pf-main-thm|>intermediate-lemmas-pf|>error-alg}

\begin{proof}
{Lemma \ref{lemma:biht:error-upper-bound:alg}}
\label{pf:lemma:biht:error-upper-bound:alg}
Let
\(
  \Vec{x} \in \SparseSphereSubspace{k}{n}
\)
be an arbitrary unknown, \( k \)-sparse vector of unit norm,
and consider any \( \Iter\Th \) BIHT approximation,
\(
  \Vec{\hat{x}}^{(\Iter)} \in \SparseSphereSubspace{k}{n}
\),
\(
  \Iter \in \Z_{+}
\).
Recall that the BIHT algorithm computes its \( \Iter\Th \) approximation by
\begin{gather}
\label{pf:lemma:biht:error-upper-bound:alg:eqn:1}
  \Vec{\tilde{x}}^{(\Iter)}
  =
  \Vec{\hat{x}}^{(\Iter-1)}
  +
  \frac{\Eta}{m}
  \MeasMat^{\T}
  \cdot
  \frac{1}{2}
  \left( \Sgn( \MeasMat \Vec{x} ) - \Sgn( \MeasMat \Vec{\hat{x}}^{(\Iter-1)} ) \right)
  \\
  \Vec{\hat{x}}^{(\Iter)}
  =
  \frac
  {\Threshold{k}( \Vec{\tilde{x}}^{(\Iter)} )}
  {\left\| \Threshold{k}( \Vec{\tilde{x}}^{(\Iter)} ) \right\|_{2}}
\end{gather}
and notice that
\begin{gather}
\label{pf:lemma:biht:error-upper-bound:alg:eqn:2}
  \Vec{\tilde{x}}^{(\Iter)}
  =
  \Vec{\hat{x}}^{(\Iter-1)}
  +
  \hA( \Vec{x}, \Vec{\hat{x}}^{(\Iter-1)} )
  \\
  \ThresholdSet{\supp( \Vec{x} ) \cup \supp( \Vec{\hat{x}}^{(\Iter-1)} )
                \cup \supp( \Vec{\hat{x}}^{(\Iter)} )}
  ( \Vec{\tilde{x}}^{(\Iter)} )
 \nonumber \\ =
  \Vec{\hat{x}}^{(\Iter-1)}
  +
  \hA[\supp( \Vec{\hat{x}}^{(\Iter)} )]( \Vec{x}, \Vec{\hat{x}}^{(\Iter-1)} )
.\end{gather}
%
Applying the triangle inequality, the error of the \( \Iter\Th \) BIHT approximation,
\( \Vec{\hat{x}}^{(\Iter)} \), can be bounded from above.
\begin{subequations}
\begin{align}
\label{pf:lemma:biht:error-upper-bound:alg:eqn:3:1}
  &
  \DistS{\Vec{x}}{\Vec{\hat{x}}^{(\Iter)}}
  \\
  &=
  \left\| \Vec{x} - \Vec{\hat{x}}^{(\Iter)} \right\|_{2}
  \\
  &=
  \left\|
    \left(
      \Vec{x}
      -
      \ThresholdSet{\supp( \Vec{x} ) \cup \supp( \Vec{\hat{x}}^{(\Iter)} )}
      ( \Vec{\tilde{x}}^{(\Iter)} )
    \right) \nonumber\\
    &+
    \left(
      \ThresholdSet{\supp( \Vec{x} ) \cup \supp( \Vec{\hat{x}}^{(\Iter)} )}
      ( \Vec{\tilde{x}}^{(\Iter)} )
      -
      \ThresholdSet{\supp( \Vec{\hat{x}}^{(\Iter)} )}
      ( \Vec{\tilde{x}}^{(\Iter)} )
    \right) \nonumber\\
    &+
    \left(
      \ThresholdSet{\supp( \Vec{\hat{x}}^{(\Iter)} )}
      ( \Vec{\tilde{x}}^{(\Iter)} )
      -
      \Vec{\hat{x}}^{(\Iter)}
    \right)
  \right\|_{2}
  \\ \nonumber
  &\leq
  \left\|
    \Vec{x}
    -
    \ThresholdSet{\supp( \Vec{x} ) \cup \supp( \Vec{\hat{x}}^{(\Iter)} )}
    ( \Vec{\tilde{x}}^{(\Iter)} )
  \right\|_{2} \\
  &+
  \left\|
    \ThresholdSet{\supp( \Vec{x} ) \cup \supp( \Vec{\hat{x}}^{(\Iter)} )}
    ( \Vec{\tilde{x}}^{(\Iter)} )
    -
    \ThresholdSet{\supp( \Vec{\hat{x}}^{(\Iter)} )}
    ( \Vec{\tilde{x}}^{(\Iter)} )
  \right\|_{2} \nonumber\\
  &+
  \left\|
    \ThresholdSet{\supp( \Vec{\hat{x}}^{(\Iter)} )}
    ( \Vec{\tilde{x}}^{(\Iter)} )
    -
    \Vec{\hat{x}}^{(\Iter)}
  \right\|_{2}
  \\
  &\dCmt \Text{by the triangle inequality}
  \\ \label{pf:lemma:biht:error-upper-bound:alg:eqn:3:2}
  &=
  \left\|
    \Vec{x}
    -
    \ThresholdSet{\supp( \Vec{x} ) \cup \supp( \Vec{\hat{x}}^{(\Iter)} )}
    ( \Vec{\tilde{x}}^{(\Iter)} )
  \right\|_{2} \nonumber\\
  &+
  \left\|
    \ThresholdSet{\supp( \Vec{x} ) \cup \supp( \Vec{\hat{x}}^{(\Iter)} )}
    ( \Vec{\tilde{x}}^{(\Iter)} )
    -
    \ThresholdSet{\supp( \Vec{\hat{x}}^{(\Iter)} )}
    ( \Vec{\tilde{x}}^{(\Iter)} )
  \right\|_{2}
  \\ \nonumber
  &\Tab[4]
  +
  \left\|
    \ThresholdSet{\supp( \Vec{\hat{x}}^{(\Iter)} )}( \Vec{\tilde{x}}^{(\Iter)} )
    -
    \frac
    {\ThresholdSet{\supp( \Vec{\hat{x}}^{(\Iter)} )}( \Vec{\tilde{x}}^{(\Iter)} )}
    {\left\| \ThresholdSet{\supp(\Vec{\hat{x}}^{(\Iter)})}(\Vec{\tilde{x}}^{(\Iter)}) \right\|_{2}}
  \right\|_{2}
\end{align}
\end{subequations}
%
The rightmost term in the last line can be upper bounded as follows.
\begin{subequations}
\begin{align}
\label{pf:lemma:biht:error-upper-bound:alg:eqn:4:1}
  &
  \left\|
    \ThresholdSet{\supp( \Vec{\hat{x}}^{(\Iter)} )}( \Vec{\tilde{x}}^{(\Iter)} )
    -
    \frac
    {\ThresholdSet{\supp( \Vec{\hat{x}}^{(\Iter)} )}( \Vec{\tilde{x}}^{(\Iter)} )}
    {\left\| \ThresholdSet{\supp(\Vec{\hat{x}}^{(\Iter)})}(\Vec{\tilde{x}}^{(\Iter)}) \right\|_{2}}
  \right\|_{2}
  \\
  &=
  \left|
    \left\|
      \ThresholdSet{\supp( \Vec{\hat{x}}^{(\Iter)} )}( \Vec{\tilde{x}}^{(\Iter)} )
    \right\|_{2}
    -
    1
  \right|
  \left\|
    \frac
    {\ThresholdSet{\supp( \Vec{\hat{x}}^{(\Iter)} )}( \Vec{\tilde{x}}^{(\Iter)} )}
    {\left\| \ThresholdSet{\supp(\Vec{\hat{x}}^{(\Iter)})}(\Vec{\tilde{x}}^{(\Iter)}) \right\|_{2}}
  \right\|_{2}
  \\
  &=
  \left|
    \left\|
      \ThresholdSet{\supp( \Vec{\hat{x}}^{(\Iter)} )}( \Vec{\tilde{x}}^{(\Iter)} )
    \right\|_{2}
    -
    1
  \right|
  \\
  &=
  \left|
    \left\|
      \ThresholdSet{\supp( \Vec{\hat{x}}^{(\Iter)} )}( \Vec{\tilde{x}}^{(\Iter)} )
    \right\|_{2}
    -
    \left\| \Vec{x} \right\|_{2}
  \right|
  \\
  &\leq
  \left\|
    \ThresholdSet{\supp( \Vec{\hat{x}}^{(\Iter)} )}( \Vec{\tilde{x}}^{(\Iter)} )
    -
    \Vec{x}
  \right\|_{2}
  \dCmt \Text{by the triangle inequality}
  \\
  &=
  \left\|
    \left(
      \ThresholdSet{\supp( \Vec{\hat{x}}^{(\Iter)} )}( \Vec{\tilde{x}}^{(\Iter)} )
      -
      \ThresholdSet{\supp( \Vec{x} ) \cup \supp( \Vec{\hat{x}}^{(\Iter)} )}
      ( \Vec{\tilde{x}}^{(\Iter)} )
    \right)\nonumber\\
    &+
    \left(
      \ThresholdSet{\supp( \Vec{x} ) \cup \supp( \Vec{\hat{x}}^{(\Iter)} )}
      ( \Vec{\tilde{x}}^{(\Iter)} )
      -
      \Vec{x}
    \right)
  \right\|_{2}
  \\
  &\leq
  \left\|
    \ThresholdSet{\supp( \Vec{\hat{x}}^{(\Iter)} )}( \Vec{\tilde{x}}^{(\Iter)} )
    -
    \ThresholdSet{\supp( \Vec{x} ) \cup \supp( \Vec{\hat{x}}^{(\Iter)} )}
    ( \Vec{\tilde{x}}^{(\Iter)} )
  \right\|_{2} \nonumber\\
  &+
  \left\|
    \ThresholdSet{\supp( \Vec{x} ) \cup \supp( \Vec{\hat{x}}^{(\Iter)} )}
    ( \Vec{\tilde{x}}^{(\Iter)} )
    -
    \Vec{x}
  \right\|_{2}
  \\ \nonumber
  &\Tab
  \dCmt \Text{by the triangle inequality}
  \\
  &=
  \left\|
    \Vec{x}
    -
    \ThresholdSet{\supp( \Vec{x} ) \cup \supp( \Vec{\hat{x}}^{(\Iter)} )}
    ( \Vec{\tilde{x}}^{(\Iter)} )
  \right\|_{2}\nonumber\\
  &+
  \left\|
    \ThresholdSet{\supp( \Vec{x} ) \cup \supp( \Vec{\hat{x}}^{(\Iter)} )}
    ( \Vec{\tilde{x}}^{(\Iter)} )
    -
    \ThresholdSet{\supp( \Vec{\hat{x}}^{(\Iter)} )}( \Vec{\tilde{x}}^{(\Iter)} )
  \right\|_{2}
\label{pf:lemma:biht:error-upper-bound:alg:eqn:4:end}
\end{align}
\end{subequations}
%
Combining \eqref{pf:lemma:biht:error-upper-bound:alg:eqn:3:2} and
\eqref{pf:lemma:biht:error-upper-bound:alg:eqn:4:end} yields
\begin{gather}
\label{pf:lemma:biht:error-upper-bound:alg:eqn:5}
  \DistS{\Vec{x}}{\Vec{\hat{x}}^{(\Iter)}}
  =
  2
  \left\|
    \Vec{x}
    -
    \ThresholdSet{\supp( \Vec{x} ) \cup \supp( \Vec{\hat{x}}^{(\Iter)} )}
    ( \Vec{\tilde{x}}^{(\Iter)} )
  \right\|_{2} \nonumber\\
  +
  2
  \left\|
    \ThresholdSet{\supp( \Vec{x} ) \cup \supp( \Vec{\hat{x}}^{(\Iter)} )}
    ( \Vec{\tilde{x}}^{(\Iter)} )
    -
    \ThresholdSet{\supp( \Vec{\hat{x}}^{(\Iter)} )}
    ( \Vec{\tilde{x}}^{(\Iter)} )
  \right\|_{2}
.\end{gather}
%
Taking a closer look at the last term in \eqref{pf:lemma:biht:error-upper-bound:alg:eqn:5},
\begin{gather}
\label{pf:lemma:biht:error-upper-bound:alg:eqn:6}
  \left\|
    \ThresholdSet{\supp( \Vec{x} ) \cup \supp( \Vec{\hat{x}}^{(\Iter)} )}
    ( \Vec{\tilde{x}}^{(\Iter)} )
    -
    \ThresholdSet{\supp( \Vec{\hat{x}}^{(\Iter)} )}
    ( \Vec{\tilde{x}}^{(\Iter)} )
  \right\|_{2} \nonumber\\
  =
  \left\|
    \ThresholdSet{\supp( \Vec{x} ) \setminus \supp( \Vec{\hat{x}}^{(\Iter)} )}
    ( \Vec{\tilde{x}}^{(\Iter)} )
  \right\|_{2}
  \leq
  \left\|
    \ThresholdSet{\supp( \Vec{\hat{x}}^{(\Iter)} ) \setminus \supp( \Vec{x} )}
    ( \Vec{\tilde{x}}^{(\Iter)} )
  \right\|_{2}
\end{gather}
where the rightmost inequality follows from the definition of the thresholding operation
\( \Threshold{k} \), which ensures that for each
\(
  j \in \supp( \Vec{x} ) \setminus \supp( \Vec{\hat{x}}^{(\Iter)} )
\),
the \( j\Th \) entry of \( \Vec{\tilde{x}}^{(\Iter)} \) satisfies
\(
  | \Vec*{\tilde{x}}_{j}^{(\Iter)} |
  \leq
  \min_{j' \in \supp( \Vec{\hat{x}}^{(\Iter)} ) \setminus \supp( \Vec{x} )}
  | \Vec*{\tilde{x}}_{j'}^{(\Iter)} |
\).
Then, observe
\begin{subequations}
\begin{align}
\label{pf:lemma:biht:error-upper-bound:alg:eqn:7}
 & \left\|
    \Vec{x}
    -
    \ThresholdSet{\supp( \Vec{x} ) \cup \supp( \Vec{\hat{x}}^{(\Iter)} )}
    ( \Vec{\tilde{x}}^{(\Iter)} )
  \right\|_{2}^{2}
  =
  \sum_{j \in \supp( \Vec{x} ) \cup \supp( \Vec{\hat{x}}^{(\Iter)} )}
  \left( \Vec*{x}_{j} - \Vec*{\tilde{x}}_{j}^{(\Iter)} \right)^{2}
  \\
  &=
  \sum_{j \in \supp( \Vec{\hat{x}}^{(\Iter)} ) \setminus \supp( \Vec{x} )}
  \left( \Vec*{x}_{j} - \Vec*{\tilde{x}}_{j}^{(\Iter)} \right)^{2}
  +
  \sum_{j \in \supp( \Vec{x} )}
  \left( \Vec*{x}_{j} - \Vec*{\tilde{x}}_{j}^{(\Iter)} \right)^{2}
  \\
  &=
  \sum_{j \in \supp( \Vec{\hat{x}}^{(\Iter)} ) \setminus \supp( \Vec{x} )}
  \left( 0 - \Vec*{\tilde{x}}_{j}^{(\Iter)} \right)^{2}
  +
  \sum_{j \in \supp( \Vec{x} )}
  \left( \Vec*{x}_{j} - \Vec*{\tilde{x}}_{j}^{(\Iter)} \right)^{2}
  \\
  &=
  \sum_{j \in \supp( \Vec{\hat{x}}^{(\Iter)} ) \setminus \supp( \Vec{x} )}
  \left( \Vec*{\tilde{x}}_{j}^{(\Iter)} \right)^{2}
  +
  \sum_{j \in \supp( \Vec{x} )}
  \left( \Vec*{x}_{j} - \Vec*{\tilde{x}}_{j}^{(\Iter)} \right)^{2}
  \\
  &=
  \left\|
    \ThresholdSet{\supp( \Vec{\hat{x}}^{(\Iter)} ) \setminus \supp( \Vec{x} )}
    ( \Vec{\tilde{x}}^{(\Iter)} )
  \right\|_{2}^{2}
  +
  \left\|
    \Vec{x}
    -
    \ThresholdSet{\supp( \Vec{x} )}
    ( \Vec{\tilde{x}}^{(\Iter)} )
  \right\|_{2}^{2}
\end{align}
\end{subequations}
%
It follows that
\begin{subequations}
\begin{align}
\label{pf:lemma:biht:error-upper-bound:alg:eqn:8:1}
  &
  \left\|
    \ThresholdSet{\supp( \Vec{\hat{x}}^{(\Iter)} ) \setminus \supp( \Vec{x} )}
    ( \Vec{\tilde{x}}^{(\Iter)} )
  \right\|_{2}^{2}
  +
  \left\|
    \Vec{x}
    -
    \ThresholdSet{\supp( \Vec{x} )}
    ( \Vec{\tilde{x}}^{(\Iter)} )
  \right\|_{2}^{2} \nonumber\\
  &\qquad=
  \left\|
    \Vec{x}
    -
    \ThresholdSet{\supp( \Vec{x} ) \cup \supp( \Vec{\hat{x}}^{(\Iter)} )}
    ( \Vec{\tilde{x}}^{(\Iter)} )
  \right\|_{2}^{2}
  \\
  &\dLn
  \left\|
    \ThresholdSet{\supp( \Vec{\hat{x}}^{(\Iter)} ) \setminus \supp( \Vec{x} )}
    ( \Vec{\tilde{x}}^{(\Iter)} )
  \right\|_{2}^{2} \nonumber\\
 &\qquad =
  \left\|
    \Vec{x}
    -
    \ThresholdSet{\supp( \Vec{x} ) \cup \supp( \Vec{\hat{x}}^{(\Iter)} )}
    ( \Vec{\tilde{x}}^{(\Iter)} )
  \right\|_{2}^{2}
  -
  \left\|
    \Vec{x}
    -
    \ThresholdSet{\supp( \Vec{x} )}
    ( \Vec{\tilde{x}}^{(\Iter)} )
  \right\|_{2}^{2}
  \\
  &\dLn
  \left\|
    \ThresholdSet{\supp( \Vec{\hat{x}}^{(\Iter)} ) \setminus \supp( \Vec{x} )}
    ( \Vec{\tilde{x}}^{(\Iter)} )
  \right\|_{2}^{2}
  \leq
  \left\|
    \Vec{x}
    -
    \ThresholdSet{\supp( \Vec{x} ) \cup \supp( \Vec{\hat{x}}^{(\Iter)} )}
    ( \Vec{\tilde{x}}^{(\Iter)} )
  \right\|_{2}^{2}
  \\
  &\dLn
  \left\|
    \ThresholdSet{\supp( \Vec{\hat{x}}^{(\Iter)} ) \setminus \supp( \Vec{x} )}
    ( \Vec{\tilde{x}}^{(\Iter)} )
  \right\|_{2}
  \leq
  \left\|
    \Vec{x}
    -
    \ThresholdSet{\supp( \Vec{x} ) \cup \supp( \Vec{\hat{x}}^{(\Iter)} )}
    ( \Vec{\tilde{x}}^{(\Iter)} )
  \right\|_{2}
\label{pf:lemma:biht:error-upper-bound:alg:eqn:8:end}
\end{align}
\end{subequations}
%
Likewise,
\begin{subequations}
\begin{align}
\label{pf:lemma:biht:error-upper-bound:alg:eqn:10:1}
  &
  \left\|
    \Vec{x}
    -
    \ThresholdSet{\supp( \Vec{x} ) \cup \supp( \Vec{\hat{x}}^{(\Iter-1)} )
                  \cup \supp( \Vec{\hat{x}}^{(\Iter)} )}
    ( \Vec{\tilde{x}}^{(\Iter)} )
  \right\|_{2}^{2}
  \\
  &=
  \sum_{j \in \supp( \Vec{x} ) \cup \supp( \Vec{\hat{x}}^{(\Iter-1)} )
              \cup \supp( \Vec{\hat{x}}^{(\Iter)} )}
  \left( \Vec*{x}_{j} - \Vec*{\tilde{x}}_{j}^{(\Iter)} \right)^{2}
  \\
  &=
  \sum_{j \in \supp( \Vec{x} ) \cup \supp( \Vec{\hat{x}}^{(\Iter)} )}
  \left( \Vec*{x}_{j} - \Vec*{\tilde{x}}_{j}^{(\Iter)} \right)^{2}
  +
  \sum_{j \in \supp( \Vec{\hat{x}}^{(\Iter-1)} )
              \setminus ( \supp( \Vec{x} ) \cup \supp( \Vec{\hat{x}}^{(\Iter)} ) )}
  \left( \Vec*{x}_{j} - \Vec*{\tilde{x}}_{j}^{(\Iter)} \right)^{2}
  \\
  &=
  \left\|
    \ThresholdSet{\supp( \Vec{x} ) \cup \supp( \Vec{\hat{x}}^{(\Iter)} )}
    ( \Vec{x} - \Vec{\tilde{x}}^{(\Iter)} )
  \right\|_{2}^{2}
  +
  \left\|
    \ThresholdSet{\supp( \Vec{\hat{x}}^{(\Iter-1)} )
                  \setminus ( \supp( \Vec{x} ) \cup \supp( \Vec{\hat{x}}^{(\Iter)} ) )}
    ( \Vec{x} - \Vec{\tilde{x}}^{(\Iter)} )
  \right\|_{2}^{2}
  \\
  &\geq
  \left\|
    \ThresholdSet{\supp( \Vec{x} ) \cup \supp( \Vec{\hat{x}}^{(\Iter)} )}
    ( \Vec{x} - \Vec{\tilde{x}}^{(\Iter)} )
  \right\|_{2}^{2}
  \\
  &=
  \left\|
    \Vec{x}
    -
    \ThresholdSet{\supp( \Vec{x} ) \cup \supp( \Vec{\hat{x}}^{(\Iter)} )}
    ( \Vec{\tilde{x}}^{(\Iter)} )
  \right\|_{2}^{2}
  \\
  &\dLn
  \left\|
    \Vec{x}
    -
    \ThresholdSet{\supp( \Vec{x} ) \cup \supp( \Vec{\hat{x}}^{(\Iter)} )}
    ( \Vec{\tilde{x}}^{(\Iter)} )
  \right\|_{2}
  \leq
  \left\|
    \Vec{x}
    -
    \ThresholdSet{\supp( \Vec{x} ) \cup \supp( \Vec{\hat{x}}^{(\Iter-1)} )
                  \cup \supp( \Vec{\hat{x}}^{(\Iter)} )}
    ( \Vec{\tilde{x}}^{(\Iter)} )
  \right\|_{2}
\label{pf:lemma:biht:error-upper-bound:alg:eqn:10:end}
\end{align}
\end{subequations}
%
Continuing from \eqref{pf:lemma:biht:error-upper-bound:alg:eqn:5},
\begin{subequations}
\begin{align}
\label{pf:lemma:biht:error-upper-bound:alg:eqn:9}
  &
  \DistS{\Vec{x}}{\Vec{\hat{x}}^{(\Iter)}}
  \\
  &=
  2
  \left\|
    \Vec{x}
    -
    \ThresholdSet{\supp( \Vec{x} ) \cup \supp( \Vec{\hat{x}}^{(\Iter)} )}
    ( \Vec{\tilde{x}}^{(\Iter)} )
  \right\|_{2}
  +
  2
  \left\|
    \ThresholdSet{\supp( \Vec{x} ) \cup \supp( \Vec{\hat{x}}^{(\Iter)} )}
    ( \Vec{\tilde{x}}^{(\Iter)} )
    -
    \ThresholdSet{\supp( \Vec{\hat{x}}^{(\Iter)} )}
    ( \Vec{\tilde{x}}^{(\Iter)} )
  \right\|_{2}
  \\
  &=
  2
  \left\|
    \Vec{x}
    -
    \ThresholdSet{\supp( \Vec{x} ) \cup \supp( \Vec{\hat{x}}^{(\Iter)} )}
    ( \Vec{\tilde{x}}^{(\Iter)} )
  \right\|_{2}
  +
  2
  \left\|
    \ThresholdSet{\supp( \Vec{\hat{x}}^{(\Iter)} ) \setminus \supp( \Vec{x} )}
    ( \Vec{\tilde{x}}^{(\Iter)} )
  \right\|_{2}
  \dCmt \Text{by \EQN \eqref{pf:lemma:biht:error-upper-bound:alg:eqn:6}}
  \\
  &\leq
  4
  \left\|
    \Vec{x}
    -
    \ThresholdSet{\supp( \Vec{x} ) \cup \supp( \Vec{\hat{x}}^{(\Iter)} )}
    ( \Vec{\tilde{x}}^{(\Iter)} )
  \right\|_{2}
  \dCmt \Text{by \EQN \eqref{pf:lemma:biht:error-upper-bound:alg:eqn:8:end}}
  \\
  &\leq
  4
  \left\|
    \Vec{x}
    -
    \ThresholdSet{\supp( \Vec{x} ) \cup \supp( \Vec{\hat{x}}^{(\Iter-1)} )
                  \cup \supp( \Vec{\hat{x}}^{(\Iter)} )}
    ( \Vec{\tilde{x}}^{(\Iter)} )
  \right\|_{2}
  \dCmt \Text{by \EQN \eqref{pf:lemma:biht:error-upper-bound:alg:eqn:10:end}}
  \\
  &=
  4
  \left\|
    \Vec{x}
    -
    \ThresholdSet{\supp( \Vec{x} ) \cup \supp( \Vec{\hat{x}}^{(\Iter-1)} )
                  \cup \supp( \Vec{\hat{x}}^{(\Iter)} )}
    ( \Vec{\hat{x}}^{(\Iter-1)} + \hA( \Vec{x}, \Vec{\hat{x}}^{(\Iter-1)} ) )
  \right\|_{2}
  \\
  &=
  4
  \left\|
    \Vec{x}
    -
    \Vec{\hat{x}}^{(\Iter-1)}
    -
    \ThresholdSet{\supp( \Vec{x} ) \cup \supp( \Vec{\hat{x}}^{(\Iter-1)} )
                  \cup \supp( \Vec{\hat{x}}^{(\Iter)} )}
    ( \hA( \Vec{x}, \Vec{\hat{x}}^{(\Iter-1)} ) )
  \right\|_{2}
  \\
  &=
  4
  \left\|
    \left( \Vec{x} - \Vec{\hat{x}}^{(\Iter-1)} \right)
    -
    \hA[\supp( \Vec{\hat{x}}^{(\Iter)} )]( \Vec{x}, \Vec{\hat{x}}^{(\Iter-1)} )
  \right\|_{2}
\end{align}
\end{subequations}
as desired.
\end{proof}

\subsubsection{Proof of Lemmas \ref{lemma:biht:error:recurrence} and
\ref{lemma:biht:error:explicit}}
\label{outline:biht:pf-main-thm|>intermediate-lemmas-pf|>error}

Lemmas \ref{lemma:biht:error:recurrence} and \ref{lemma:biht:error:explicit}, will
be verified in tandem.
Fact \ref{fact:misc:error-decay-recurrence}, stated below and proved in
Section \ref{outline:misc:error-decay-recurrence}, will facilitate the proof.
%
\begin{fact}
\label{fact:misc:error-decay-recurrence}
Let
\(
  \Variable{u}, \Variable{v}, \Variable{w}, \Variable{w}_{0} \in \R_{+}
\)
such that
\(
  \Variable{u} = \frac{1}{2} \left( 1 + \sqrt{1 + 4 \Variable{w}} \right)
\),
and
\(
  1 \leq \Variable{u} \leq \frac{2}{\sqrt{\Variable{v}}}
\).
Define the functions
\(
  \Function{f}_{1}, \Function{f}_{2} : \Z_{\geq 0} \to \R
\)
by
\begin{gather}
\label{eqn:fact:misc:error-decay-recurrence:g-h-def}
  \Function{f}_{1}(0) = 2
  \\
  \Function{f}_{1}(t) = \Variable{v} \Variable{w} + \sqrt{\Variable{v} g(t-1)}
  ,\quad t \in \Z_{+}
  \\
  \Function{f}_{2}(t) = 2^{2^{-t}} (\Variable{u}^{2} \Variable{v})^{1 - 2^{-t}}
  ,\quad t \in \Z_{\geq 0}
.\end{gather}
%
Then, \( \Function{f}_{1} \) and \( \Function{f}_{2} \) are strictly monotonically decreasing and
asymptotically converges to
\(
  \Variable{u}^{2} \Variable{v}
\).
Moreover, \( \Function{f}_{2} \) pointwise upper bounds \( \Function{f}_{1} \).
Formally,
\begin{gather}
\label{eqn:fact:misc:error-decay-recurrence:g<=h}
  \Function{f}_{1}(t) \leq \Function{f}_{2}(t)
  ,\quad \forall t \in \Z_{\geq 0}
\\
\label{eqn:fact:misc:error-decay-recurrence:t->infty}
  \lim_{t \to \infty} \Function{f}_{2}(t)
  =
  \lim_{t \to \infty} \Function{f}_{1}(t)
  =
  \Variable{u}^{2} \Variable{v}
.\end{gather}
\end{fact}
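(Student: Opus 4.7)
The plan is to exploit the algebraic identity $u(u-1) = w$, which is immediate from squaring $2u - 1 = \sqrt{1 + 4w}$ to obtain $4u(u-1) = 4w$. Once this identity is in hand, every nontrivial step reduces to a short calculation. In particular, $u^2 v$ is the unique positive fixed point of the recurrence map $\phi(\ell) = vw + \sqrt{v\ell}$: writing $r = \sqrt\ell$ in $\ell = \phi(\ell)$ yields the quadratic $r^2 - \sqrt v\, r - vw = 0$, whose positive root matches the definition of $u$, giving $r = u\sqrt v$ and hence $\ell = u^2 v$.

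To establish monotone convergence of $f_1$ to $u^2 v$, I would prove by induction on $t \in \Z_+$ the two-sided invariant $f_1(t) \in [u^2 v,\, f_1(t-1)]$. The base case $f_1(0) = 2 \geq u^2 v$ uses the hypothesis. For the inductive lower bound, if $f_1(t-1) \geq u^2 v$ then
\[
  f_1(t) = vw + \sqrt{v f_1(t-1)} \geq vw + uv = v(w+u) = u^2 v,
\]
invoking $w + u = u^2$. For monotonicity, observe that $\phi(y) \leq y$ rearranges to $y - \sqrt{vy} - vw \geq 0$, a quadratic in $\sqrt y$ whose unique positive root is $u\sqrt v$; so $\phi(y) \leq y$ exactly when $y \geq u^2 v$, which the lower bound supplies. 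Monotone convergence then produces a limit $L \geq u^2 v$ satisfying $L = \phi(L)$, forcing $L = u^2 v$ by uniqueness of the positive fixed point.

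For $f_2$, the limit $f_2(t) = 2^{2^{-t}}(u^2 v)^{1-2^{-t}} \to u^2 v$ is immediate, and the ratio $f_2(t+1)/f_2(t) = (u^2 v/2)^{2^{-(t+1)}} \leq 1$ gives monotonicity. The pointwise bound $f_1(t) \leq f_2(t)$ is proved by induction, with trivial base case $f_1(0) = 2 = f_2(0)$. Using the rewritings
\[
  f_2(t+1) = u^2 v \cdot \left( \tfrac{2}{u^2 v} \right)^{2^{-(t+1)}},
  \qquad
  \sqrt{v f_2(t)} = uv \cdot \left( \tfrac{2}{u^2 v} \right)^{2^{-(t+1)}},
\]
together with the inductive hypothesis $f_1(t) \leq f_2(t)$, the step reduces (after a second application of $u(u-1) = w$) to the inequality $(2/(u^2 v))^{2^{-(t+1)}} \geq 1$, which is again the hypothesis.

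The only nontrivial observation is the identity $u(u-1) = w$; with it, the fixed-point identification, the two-sided invariant for $f_1$, the monotonicity of $f_2$, and the inductive comparison all collapse into one-line calculations. The remainder of the argument is standard induction together with the monotone convergence theorem.
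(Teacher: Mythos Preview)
Your proof is correct and hits the same key points as the paper: the identity $u^{2} = u + w$ (equivalently $u(u-1)=w$), the identification of $u^{2}v$ as the positive fixed point of $\phi(\ell) = vw + \sqrt{v\ell}$, monotone convergence to that fixed point, and an inductive comparison $f_{1} \leq f_{2}$. The paper's route is slightly less direct: it first proves an auxiliary fact for the recurrence $f(t) = \sqrt{w + f(t-1)}$, then observes that $f_{1}(t) = v\,f^{2}(t)$ when $f(0) = \sqrt{2/v}$ and transfers the monotonicity and limit from $f$ to $f_{1}$. Your direct analysis of $\phi$ via the two-sided invariant $f_{1}(t) \in [u^{2}v,\, f_{1}(t-1)]$ avoids that detour. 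For the pointwise bound $f_{1} \leq f_{2}$, both arguments run the same induction, and both reduce the inductive step to $u(u-1)=w$ together with $(2/(u^{2}v))^{\alpha} \geq 1$; the paper just packages this by first rewriting $f_{2}$ as the recurrence $f_{2}(t) = \sqrt{u^{2}v\,f_{2}(t-1)}$.

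One small caveat: your base case $f_{1}(0) = 2 \geq u^{2}v$ and the monotonicity of $f_{2}$ both require $u^{2}v \leq 2$, i.e., $u \leq \sqrt{2/v}$. The statement as written has $u \leq 2/\sqrt{v}$, which only gives $u^{2}v \leq 4$; this is a typo in the paper (its own restatement in the appendix and the downstream application both use $u \leq \sqrt{2/v}$), not a gap in your argument.
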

%
s
\begin{lemma*}
[Lemma \ref{lemma:biht:error:recurrence}]
Let
\(
  \Varepsilon : \Z_{\geq 0} \to \R
\)
be a function given by the recurrence relation
\begin{gather*}
  \Varepsilon( 0 ) = 2
  \\
  \Varepsilon( \Iter )
  =
  \VarepsilonDef
\end{gather*}
%
The function \( \varepsilon \) decreases monotonically with \( \Iter \) and asymptotically tends
to a value not exceeding \( \Epsilon \), formally,
\begin{gather*}
  \lim_{\Iter \to \infty} \Varepsilon( \Iter )
  =
  \VarepsilonAsymptotic
  <
  \Epsilon
\end{gather*}
\end{lemma*}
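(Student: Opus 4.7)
The plan is to reduce the recurrence defining $\Varepsilon$ to the normalized form studied in Fact~\ref{fact:misc:error-decay-recurrence} and then apply that fact directly. Specifically, I would set $v := 16\UnivConstc_{1}^{2}\Epsilon/\UnivConstC$ and $w := \UnivConstc_{2}/(4\UnivConstc_{1}^{2})$, so that $vw = 4\UnivConstc_{2}\Epsilon/\UnivConstC$ and $\sqrt{v\,y} = 4\UnivConstc_{1}\sqrt{(\Epsilon/\UnivConstC)\,y}$ for any $y \geq 0$. Under this substitution, the recurrence in Eq.~\eqref{eqn:biht:error:recurrence:def} matches exactly the recursion $f_{1}(\Iter) = vw + \sqrt{v\,f_{1}(\Iter-1)}$ of Fact~\ref{fact:misc:error-decay-recurrence}, with $f_{1}(0) = \Varepsilon(0) = 2$, so $\Varepsilon(\Iter) \equiv f_{1}(\Iter)$.

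Next I would verify the hypotheses of the fact. The corresponding $u = \tfrac{1}{2}(1 + \sqrt{1+4w}) = (\UnivConstc_{1} + \sqrt{\UnivConstc_{1}^{2}+\UnivConstc_{2}})/(2\UnivConstc_{1})$ satisfies $u \geq 1$ because $w > 0$. The upper bound $u \leq 2/\sqrt{v}$, after squaring, is equivalent to $(\UnivConstc_{1} + \sqrt{\UnivConstc_{1}^{2}+\UnivConstc_{2}})^{2}\,\Epsilon/\UnivConstC \leq 1$, which holds because $\Epsilon < 1$, $\UnivConstC = 32$, and the universal constants in Eq.~\eqref{eqn:univConstants} make $(\UnivConstc_{1}+\sqrt{\UnivConstc_{1}^{2}+\UnivConstc_{2}})^{2} < 8$ (see the final paragraph). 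Fact~\ref{fact:misc:error-decay-recurrence} then delivers both the monotone decrease of $\Varepsilon$ with $\Iter$ and the asymptotic limit $\lim_{\Iter\to\infty}\Varepsilon(\Iter) = u^{2}v$.

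A short algebraic manipulation then identifies this limit with the stated closed form. Indeed, $u^{2}v = \tfrac{(\UnivConstc_{1}+\sqrt{\UnivConstc_{1}^{2}+\UnivConstc_{2}})^{2}}{4\UnivConstc_{1}^{2}} \cdot 16\UnivConstc_{1}^{2}\Epsilon/\UnivConstC = 4(\UnivConstc_{1} + \sqrt{\UnivConstc_{1}^{2}+\UnivConstc_{2}})^{2}\Epsilon/\UnivConstC$, and the identity $(\UnivConstc_{1} + \sqrt{\UnivConstc_{1}^{2}+\UnivConstc_{2}})^{2} = 2\UnivConstc_{1}(\UnivConstc_{1}+\sqrt{\UnivConstc_{1}^{2}+\UnivConstc_{2}}) + \UnivConstc_{2}$ rewrites this as $(2\UnivConstc_{1}(\UnivConstc_{1}+\sqrt{\UnivConstc_{1}^{2}+\UnivConstc_{2}})+\UnivConstc_{2})\cdot 4\Epsilon/\UnivConstC$, matching Eq.~\eqref{eqn:biht:error:recurrence:t->infty}.

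The main obstacle is the strict inequality $u^{2}v < \Epsilon$, which reduces to $(\UnivConstc_{1}+\sqrt{\UnivConstc_{1}^{2}+\UnivConstc_{2}})^{2} < \UnivConstC/4 = 8$. This is a delicate numerical check against the expressions for $\UnivConstc_{1}, \UnivConstc_{2}$ in Eq.~\eqref{eqn:univConstants}: the lower bound $\UnivConstB \gtrsim \UnivConstBValue$ appears to be calibrated precisely so that $\UnivConstc_{1}+\sqrt{\UnivConstc_{1}^{2}+\UnivConstc_{2}}$ lies just below $2\sqrt{2}$, leaving a small positive margin. This same bound simultaneously secures the hypothesis $u \leq 2/\sqrt{v}$ used above, so both verifications collapse into one explicit comparison of $\UnivConstc_{1}+\sqrt{\UnivConstc_{1}^{2}+\UnivConstc_{2}}$ with $2\sqrt{2}$, obtained by substituting the closed-form values of $\UnivConstc_{1}, \UnivConstc_{2}$ in terms of $\UnivConstB$.
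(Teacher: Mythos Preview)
Your proposal is correct and follows essentially the same route as the paper: rewrite $\Varepsilon$ as the recurrence $f_{1}$ of Fact~\ref{fact:misc:error-decay-recurrence} via the substitution $v = 16\UnivConstc_{1}^{2}\Epsilon/\UnivConstC$, $w = \UnivConstc_{2}/(4\UnivConstc_{1}^{2})$, verify the hypothesis on $u$ numerically using $\UnivConstB \gtrsim \UnivConstBValue$, and read off monotonicity and the limit $u^{2}v$, which you then identify with the stated closed form and bound by $\Epsilon$ through the same inequality $(\UnivConstc_{1}+\sqrt{\UnivConstc_{1}^{2}+\UnivConstc_{2}})^{2} < \UnivConstC/4 = 8$. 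The paper's proof is identical in structure; the only cosmetic difference is that it phrases the hypothesis check as $u\sqrt{v} < \sqrt{2}$ (matching the appendix restatement of the Fact) rather than $u \leq 2/\sqrt{v}$, but your numerical bound $(\UnivConstc_{1}+\sqrt{\UnivConstc_{1}^{2}+\UnivConstc_{2}})^{2} < 8$ implies $u^{2}v < \Epsilon < 1 < 2$, which covers either formulation.
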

\begin{lemma*}
[Lemma \ref{lemma:biht:error:explicit}]
Let
\(
  \varepsilon : \Z_{\geq 0} \to \R
\)
be the function as defined in Lemma \ref{lemma:biht:error:recurrence}.
Then, the sequence
\(
  \{ \varepsilon( \Iter ) \}_{\Iter \in \Z_{\geq 0}}
\)
is bound from above by the sequence
\(
  \{ 2^{2^{-\Iter}} \Epsilon^{1-2^{-\Iter}} \}_{\Iter \in \Z_{\geq 0}}
\).
\end{lemma*}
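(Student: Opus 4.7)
{Lemma \ref{lemma:biht:error:explicit} (plan)}

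The plan is to reduce this statement directly to Fact \ref{fact:misc:error-decay-recurrence} by a change of parameters, and then to collapse the bound from the Fact to the target bound using the elementary monotonicity of $x \mapsto x^{1-2^{-\Iter}}$.

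First, I will match the recurrence for $\Varepsilon$ to that of $\Function{f}_{1}$ in Fact \ref{fact:misc:error-decay-recurrence}. Setting $\Variable{v} = 16 \UnivConstc_{1}^{2}\, \Epsilon/\UnivConstC$ makes the square root terms agree, since $\sqrt{\Variable{v}\,\Varepsilon(\Iter-1)} = 4\UnivConstc_{1}\sqrt{(\Epsilon/\UnivConstC)\,\Varepsilon(\Iter-1)}$; then requiring $\Variable{v}\Variable{w} = 4\UnivConstc_{2}\,\Epsilon/\UnivConstC$ forces $\Variable{w} = \UnivConstc_{2}/(4\UnivConstc_{1}^{2})$. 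With these choices, the recurrences and initial conditions coincide: $\Varepsilon(0)=2=\Function{f}_{1}(0)$ and $\Varepsilon(\Iter) = \Function{f}_{1}(\Iter)$ for all $\Iter \in \Z_{+}$.

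Next I verify the hypotheses of the Fact, namely $1 \leq \Variable{u} \leq 2/\sqrt{\Variable{v}}$, where $\Variable{u} = \frac{1}{2}(1+\sqrt{1+4\Variable{w}})$. The lower bound is immediate because $\Variable{w} > 0$. For the upper bound, it suffices to check $\Variable{u}^{2}\Variable{v} \leq 4$; a direct expansion gives
\begin{gather*}
  \Variable{u}^{2}\Variable{v}
  =
  4(\UnivConstc_{1}+\sqrt{\UnivConstc_{1}^{2}+\UnivConstc_{2}})^{2}\,\frac{\Epsilon}{\UnivConstC}
  =
  \left(2\UnivConstc_{1}(\UnivConstc_{1}+\sqrt{\UnivConstc_{1}^{2}+\UnivConstc_{2}})+\UnivConstc_{2}\right)\frac{4\Epsilon}{\UnivConstC},
\end{gather*}
which is exactly the asymptotic value appearing in Lemma \ref{lemma:biht:error:recurrence}. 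Since that lemma asserts this quantity is strictly less than $\Epsilon < 1 \leq 4$, both hypotheses hold.

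Finally, applying Fact \ref{fact:misc:error-decay-recurrence} yields $\Varepsilon(\Iter) \leq \Function{f}_{2}(\Iter) = 2^{2^{-\Iter}}(\Variable{u}^{2}\Variable{v})^{1-2^{-\Iter}}$ for every $\Iter \in \Z_{\geq 0}$. Because $\Variable{u}^{2}\Variable{v} \leq \Epsilon$ and $1-2^{-\Iter} \in [0,1]$, the map $x \mapsto x^{1-2^{-\Iter}}$ is nondecreasing on $[0,\infty)$, so $(\Variable{u}^{2}\Variable{v})^{1-2^{-\Iter}} \leq \Epsilon^{1-2^{-\Iter}}$ and the claimed bound $\Varepsilon(\Iter) \leq 2^{2^{-\Iter}}\Epsilon^{1-2^{-\Iter}}$ follows. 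The only nonroutine point is the parameter-matching calculation that identifies $\Variable{u}^{2}\Variable{v}$ with the asymptote already computed in Lemma \ref{lemma:biht:error:recurrence}; once that is in hand, both lemmas collapse to Fact \ref{fact:misc:error-decay-recurrence}, so the main obstacle is really packaging the bookkeeping cleanly rather than proving anything substantive here.
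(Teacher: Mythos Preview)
Your proposal is correct and follows essentially the same route as the paper: match $\Varepsilon$ to $\Function{f}_{1}$ via the parameter choices $\Variable{v}=16\UnivConstc_{1}^{2}\Epsilon/\UnivConstC$ and $\Variable{w}=\UnivConstc_{2}/(4\UnivConstc_{1}^{2})$, verify the hypothesis on $\Variable{u}$, apply Fact~\ref{fact:misc:error-decay-recurrence}, and then use $\Variable{u}^{2}\Variable{v}<\Epsilon$ to pass to the target bound. The one cosmetic difference is that the paper checks $\Variable{u}\sqrt{\Variable{v}}<\sqrt{2}$ by direct numerical evaluation of the constants, whereas you observe (more cleanly) that $\Variable{u}^{2}\Variable{v}$ coincides with the asymptotic value already shown in Lemma~\ref{lemma:biht:error:recurrence} to be below $\Epsilon$, which immediately gives the needed inequality without repeating the numerics.
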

%
\begin{proof}
{Lemmas \ref{lemma:biht:error:recurrence} and \ref{lemma:biht:error:explicit}}
\label{pf:lemma:biht:error:recurrence-seq}
The lemmas are corollaries to Fact \ref{fact:misc:error-decay-recurrence}.
All that is necessary is writing \( \Varepsilon \) in the form of \( \Function{f}_{1} \) in
Fact \ref{fact:misc:error-decay-recurrence} and verifying that it satisfies the conditions of the
fact.
For
\(
  \Iter = 0
\),
\(
  \Varepsilon(0) = 2 = \Function{f}_{1}(0)
\).
Otherwise, for
\(
  \Iter > 0
\),
observe
\begin{subequations}
\begin{align}
  \Varepsilon( \Iter )
  &=
  4 \UnivConstc_{1}
  \sqrt{\frac{\Epsilon}{\UnivConstC} \Varepsilon( \Iter-1 )}
  +
  4 \UnivConstc_{2} \frac{\Epsilon}{\UnivConstC}
  =
  \left( \frac{16 \UnivConstc_{1}^{2} \Epsilon}{\UnivConstC} \right)
  \left( \frac{16 \UnivConstc_{1}^{2} \Epsilon}{\UnivConstC} \right)^{-1}
  4 \UnivConstc_{2} \frac{\Epsilon}{\UnivConstC}
  +
  \sqrt{\left( \frac{16 \UnivConstc_{1}^{2} \Epsilon}{\UnivConstC} \right) \Varepsilon( \Iter-1 )}
  \\
  &=
  \left( \frac{16 \UnivConstc_{1}^{2} \Epsilon}{\UnivConstC} \right)
  \left( \frac{\UnivConstc_{2}}{4 \UnivConstc_{1}^{2}} \right)
  +
  \sqrt{\left( \frac{16 \UnivConstc_{1}^{2} \Epsilon}{\UnivConstC} \right) \Varepsilon( \Iter-1 )}
  \\
  &=
  \Variable{v} \Variable{w} + \sqrt{\Variable{v} \Varepsilon( \Iter-1 )}
\end{align}
\end{subequations}
where
\(
  \Variable{v} = \frac{16 \UnivConstc_{1}^{2} \Epsilon}{\UnivConstC}
\),
\(
  \Variable{w} = \frac{\UnivConstc_{2}}{4 \UnivConstc_{1}^{2}}
\),
and
\(
  \Variable{u}
  = \frac{1}{2} ( 1 + \sqrt{1 + 4 \cdot \frac{\UnivConstc_{2}}{4 \UnivConstc_{1}^{2}}} )
  = \frac{1}{2} ( 1 + \sqrt{1 + \frac{\UnivConstc_{2}}{\UnivConstc_{1}^{2}}} )
  = \frac{1}{2 \UnivConstc_{1}} ( \UnivConstc_{1} + \sqrt{\UnivConstc_{1}^{2} + \UnivConstc_{2}} )
\).
Recall that the universal constants are fixed as
\(
  \UnivConstc_{1} = \UnivConstcOneValue,
  \UnivConstc_{2} = \UnivConstcTwoValue,
  \UnivConstC = \UnivConstCValue
\).
By numerical calculations, it can be shown that
\(
  \Variable{u} \sqrt{\Variable{v}}
  < \sqrt{2}
\)
whenever
\(
  \UnivConstB \gtrsim \UnivConstBValue
\),
and hence
\(
  \Variable{u}
  < \sqrt{\frac{2}{\Variable{v}}}
\),
as required by Fact \ref{fact:misc:error-decay-recurrence}.
It then follows that \( \Varepsilon \) monotonically decreases with
\(
  \Iter \in \Z_{\geq 0}
\)
and
\begin{gather}
  \lim_{\Iter \to \infty} \Varepsilon(\Iter)
  = \Variable{u}^{2} \Variable{v}
  = \VarepsilonAsymptotic
  < \VarepsilonAsymptoticIntermediate
  = \Epsilon
,\end{gather}
where the last inequality follows from a numerical calculation.
Moreover, Fact \ref{fact:misc:error-decay-recurrence} further implies
\begin{gather}
  \Varepsilon(\Iter)
  \leq
  2^{2^{-\Iter}}
  (\Variable{u}^{2} \Variable{v})^{1-2^{-\Iter}}
  <
  2^{2^{-\Iter}}
  \Epsilon^{1-2^{-\Iter}}
.\end{gather}
\end{proof}

}


\subsection{Proof of the Intermediate Lemmas
(Lemmas \ref{lemma:biht:error-upper-bound:alg}-\ref{lemma:biht:error:explicit})}
\label{outline:biht:pf-main-thm|>intermediate-lemmas-pf}

\subsubsection{Proof of Lemma \ref{lemma:biht:error-upper-bound:alg}}
\label{outline:biht:pf-main-thm|>intermediate-lemmas-pf|>error-alg}

\begin{proof}
{Lemma \ref{lemma:biht:error-upper-bound:alg}}
\label{pf:lemma:biht:error-upper-bound:alg}
Let
\(
  \Vec{x} \in \SparseSphereSubspace{k}{n}
\)
be an arbitrary unknown, \( k \)-spare vector of unit norm,
and consider any \( \Iter\Th \) BIHT approximation,
\(
  \Vec{\hat{x}}^{(\Iter)} \in \SparseSphereSubspace{k}{n}
\),
\(
  \Iter \in \Z_{+}
\).
Recall that the BIHT algorithm computes its \( \Iter\Th \) approximation by
\begin{gather}
\label{pf:lemma:biht:error-upper-bound:alg:eqn:1}
  \Vec{\tilde{x}}^{(\Iter)}
  =
  \Vec{\hat{x}}^{(\Iter-1)}
  +
  \frac{\Eta}{m}
  \MeasMat^{\T}
  \cdot
  \frac{1}{2}
  \left( \Sgn( \MeasMat \Vec{x} ) - \Sgn( \MeasMat \Vec{\hat{x}}^{(\Iter-1)} ) \right)
  \\
  \Vec{\hat{x}}^{(\Iter)}
  =
  \frac
  {\Threshold{k}( \Vec{\tilde{x}}^{(\Iter)} )}
  {\left\| \Threshold{k}( \Vec{\tilde{x}}^{(\Iter)} ) \right\|_{2}}
\end{gather}
and notice that
\begin{gather}
\label{pf:lemma:biht:error-upper-bound:alg:eqn:2}
  \Vec{\tilde{x}}^{(\Iter)}
  =
  \Vec{\hat{x}}^{(\Iter-1)}
  +
  \hA( \Vec{x}, \Vec{\hat{x}}^{(\Iter-1)} )
  \\
  \ThresholdSet{\supp( \Vec{x} ) \cup \supp( \Vec{\hat{x}}^{(\Iter-1)} )
                \cup \supp( \Vec{\hat{x}}^{(\Iter)} )}
  ( \Vec{\tilde{x}}^{(\Iter)} )
  =
  \Vec{\hat{x}}^{(\Iter-1)}
  +
  \hA[\supp( \Vec{\hat{x}}^{(\Iter)} )]( \Vec{x}, \Vec{\hat{x}}^{(\Iter-1)} )
.\end{gather}
%
Applying the triangle inequality, the error of the \( \Iter\Th \) BIHT approximation,
\( \Vec{\hat{x}}^{(\Iter)} \), can be bounded from above.
\begin{subequations}
\begin{align}
\label{pf:lemma:biht:error-upper-bound:alg:eqn:3:1}
  &
  \DistS{\Vec{x}}{\Vec{\hat{x}}^{(\Iter)}}
  \\
  &=
  \left\| \Vec{x} - \Vec{\hat{x}}^{(\Iter)} \right\|_{2}
  \\
  &=
  \left\|
    \left(
      \Vec{x}
      -
      \ThresholdSet{\supp( \Vec{x} ) \cup \supp( \Vec{\hat{x}}^{(\Iter)} )}
      ( \Vec{\tilde{x}}^{(\Iter)} )
    \right)
    +
    \left(
      \ThresholdSet{\supp( \Vec{x} ) \cup \supp( \Vec{\hat{x}}^{(\Iter)} )}
      ( \Vec{\tilde{x}}^{(\Iter)} )
      -
      \ThresholdSet{\supp( \Vec{\hat{x}}^{(\Iter)} )}
      ( \Vec{\tilde{x}}^{(\Iter)} )
    \right)
  \right.
   \\ & \qquad +
   \left.
    \left(
      \ThresholdSet{\supp( \Vec{\hat{x}}^{(\Iter)} )}
      ( \Vec{\tilde{x}}^{(\Iter)} )
      -
      \Vec{\hat{x}}^{(\Iter)}
    \right)
  \right\|_{2}\nonumber
  \\
  &\leq
  \left\|
    \Vec{x}
    -
    \ThresholdSet{\supp( \Vec{x} ) \cup \supp( \Vec{\hat{x}}^{(\Iter)} )}
    ( \Vec{\tilde{x}}^{(\Iter)} )
  \right\|_{2}
  +
  \left\|
    \ThresholdSet{\supp( \Vec{x} ) \cup \supp( \Vec{\hat{x}}^{(\Iter)} )}
    ( \Vec{\tilde{x}}^{(\Iter)} )
    -
    \ThresholdSet{\supp( \Vec{\hat{x}}^{(\Iter)} )}
    ( \Vec{\tilde{x}}^{(\Iter)} )
  \right\|_{2} \nonumber
 \\ & \qquad +
  \left\|
    \ThresholdSet{\supp( \Vec{\hat{x}}^{(\Iter)} )}
    ( \Vec{\tilde{x}}^{(\Iter)} )
    -
    \Vec{\hat{x}}^{(\Iter)}
  \right\|_{2}
  \\
  &\dCmt \Text{by the triangle inequality}
  \\ \label{pf:lemma:biht:error-upper-bound:alg:eqn:3:2}
  &=
  \left\|
    \Vec{x}
    -
    \ThresholdSet{\supp( \Vec{x} ) \cup \supp( \Vec{\hat{x}}^{(\Iter)} )}
    ( \Vec{\tilde{x}}^{(\Iter)} )
  \right\|_{2}
  +
  \left\|
    \ThresholdSet{\supp( \Vec{x} ) \cup \supp( \Vec{\hat{x}}^{(\Iter)} )}
    ( \Vec{\tilde{x}}^{(\Iter)} )
    -
    \ThresholdSet{\supp( \Vec{\hat{x}}^{(\Iter)} )}
    ( \Vec{\tilde{x}}^{(\Iter)} )
  \right\|_{2}
  \\ \nonumber
  &\Tab[4]
  +
  \left\|
    \ThresholdSet{\supp( \Vec{\hat{x}}^{(\Iter)} )}( \Vec{\tilde{x}}^{(\Iter)} )
    -
    \frac
    {\ThresholdSet{\supp( \Vec{\hat{x}}^{(\Iter)} )}( \Vec{\tilde{x}}^{(\Iter)} )}
    {\left\| \ThresholdSet{\supp(\Vec{\hat{x}}^{(\Iter)})}(\Vec{\tilde{x}}^{(\Iter)}) \right\|_{2}}
  \right\|_{2}
\end{align}
\end{subequations}
%
The rightmost term in the last line can be upper bounded as follows.
\begin{subequations}
\begin{align}
\label{pf:lemma:biht:error-upper-bound:alg:eqn:4:1}
  &
  \left\|
    \ThresholdSet{\supp( \Vec{\hat{x}}^{(\Iter)} )}( \Vec{\tilde{x}}^{(\Iter)} )
    -
    \frac
    {\ThresholdSet{\supp( \Vec{\hat{x}}^{(\Iter)} )}( \Vec{\tilde{x}}^{(\Iter)} )}
    {\left\| \ThresholdSet{\supp(\Vec{\hat{x}}^{(\Iter)})}(\Vec{\tilde{x}}^{(\Iter)}) \right\|_{2}}
  \right\|_{2}
  \\
  &\leq
  \left\|
    \ThresholdSet{\supp( \Vec{\hat{x}}^{(\Iter)} )}( \Vec{\tilde{x}}^{(\Iter)} )
    -
    \Vec{x}
  \right\|_{2}
  \dCmt \Text{since} \arg\min_{u \in \Sphere{n}}\|v-u\|_2 = \frac{v}{\|v\|_2}
  \\
  &=
  \left\|
    \left(
      \ThresholdSet{\supp( \Vec{\hat{x}}^{(\Iter)} )}( \Vec{\tilde{x}}^{(\Iter)} )
      -
      \ThresholdSet{\supp( \Vec{x} ) \cup \supp( \Vec{\hat{x}}^{(\Iter)} )}
      ( \Vec{\tilde{x}}^{(\Iter)} )
    \right)
    +
    \left(
      \ThresholdSet{\supp( \Vec{x} ) \cup \supp( \Vec{\hat{x}}^{(\Iter)} )}
      ( \Vec{\tilde{x}}^{(\Iter)} )
      -
      \Vec{x}
    \right)
  \right\|_{2}
  \\
  &\leq
  \left\|
    \Vec{x}
    -
    \ThresholdSet{\supp( \Vec{x} ) \cup \supp( \Vec{\hat{x}}^{(\Iter)} )}
    ( \Vec{\tilde{x}}^{(\Iter)} )
  \right\|_{2}
  +
  \left\|
    \ThresholdSet{\supp( \Vec{x} ) \cup \supp( \Vec{\hat{x}}^{(\Iter)} )}
    ( \Vec{\tilde{x}}^{(\Iter)} )
    -
    \ThresholdSet{\supp( \Vec{\hat{x}}^{(\Iter)} )}( \Vec{\tilde{x}}^{(\Iter)} )
  \right\|_{2}
  \label{pf:lemma:biht:error-upper-bound:alg:eqn:4:end}
  \\ \nonumber
  &\Tab
  \dCmt \Text{by the triangle inequality.}
\end{align}
\end{subequations}
%
Combing \eqref{pf:lemma:biht:error-upper-bound:alg:eqn:3:2} and
\eqref{pf:lemma:biht:error-upper-bound:alg:eqn:4:end} yields
\begin{gather}
\label{pf:lemma:biht:error-upper-bound:alg:eqn:5}
  \DistS{\Vec{x}}{\Vec{\hat{x}}^{(\Iter)}}
  \leq
  2
  \left\|
    \Vec{x}
    -
    \ThresholdSet{\supp( \Vec{x} ) \cup \supp( \Vec{\hat{x}}^{(\Iter)} )}
    ( \Vec{\tilde{x}}^{(\Iter)} )
  \right\|_{2}
  +
  2
  \left\|
    \ThresholdSet{\supp( \Vec{x} ) \cup \supp( \Vec{\hat{x}}^{(\Iter)} )}
    ( \Vec{\tilde{x}}^{(\Iter)} )
    -
    \ThresholdSet{\supp( \Vec{\hat{x}}^{(\Iter)} )}
    ( \Vec{\tilde{x}}^{(\Iter)} )
  \right\|_{2}
.\end{gather}
%
\EDIT{Recall that \(  | \supp( \Vec{x} ) \cup \supp( \Vec{\hat{x}}^{(\Iter)} ) | \leq 2k  \).}
Taking a closer look at the last term in \eqref{pf:lemma:biht:error-upper-bound:alg:eqn:5},
\begin{gather}
\label{pf:lemma:biht:error-upper-bound:alg:eqn:6}
  \left\|
    \ThresholdSet{\supp( \Vec{x} ) \cup \supp( \Vec{\hat{x}}^{(\Iter)} )}
    ( \Vec{\tilde{x}}^{(\Iter)} )
    -
    \ThresholdSet{\supp( \Vec{\hat{x}}^{(\Iter)} )}
    ( \Vec{\tilde{x}}^{(\Iter)} )
  \right\|_{2}
  =
  \left\|
    \ThresholdSet{\supp( \Vec{x} ) \setminus \supp( \Vec{\hat{x}}^{(\Iter)} )}
    ( \Vec{\tilde{x}}^{(\Iter)} )
  \right\|_{2}
  \leq
  \left\|
    \ThresholdSet{\supp( \Vec{\hat{x}}^{(\Iter)} ) \setminus \supp( \Vec{x} )}
    ( \Vec{\tilde{x}}^{(\Iter)} )
  \right\|_{2}
\end{gather}
where the rightmost inequality follows from the definition of the thresholding operation
\( \Threshold{k} \), which ensures that for each
\(
  j \in \supp( \Vec{x} ) \setminus \supp( \Vec{\hat{x}}^{(\Iter)} )
\),
the \( j\Th \) entry of \( \Vec{\tilde{x}}^{(\Iter)} \) satisfies
\(
  | \Vec*{\tilde{x}}_{j}^{(\Iter)} |
  \leq
  \min_{j' \in \supp( \Vec{\hat{x}}^{(\Iter)} ) \setminus \supp( \Vec{x} )}
  | \Vec*{\tilde{x}}_{j'}^{(\Iter)} |
\).
Then, observe
\begin{subequations}
\begin{align}
\label{pf:lemma:biht:error-upper-bound:alg:eqn:7}
  \left\|
    \Vec{x}
    -
    \ThresholdSet{\supp( \Vec{x} ) \cup \supp( \Vec{\hat{x}}^{(\Iter)} )}
    ( \Vec{\tilde{x}}^{(\Iter)} )
  \right\|_{2}^{2}
  &=
  \sum_{j \in \supp( \Vec{x} ) \cup \supp( \Vec{\hat{x}}^{(\Iter)} )}
  \left( \Vec*{x}_{j} - \Vec*{\tilde{x}}_{j}^{(\Iter)} \right)^{2}
  \\
  &=
  \sum_{j \in \supp( \Vec{\hat{x}}^{(\Iter)} ) \setminus \supp( \Vec{x} )}
  \left( \Vec*{x}_{j} - \Vec*{\tilde{x}}_{j}^{(\Iter)} \right)^{2}
  +
  \sum_{j \in \supp( \Vec{x} )}
  \left( \Vec*{x}_{j} - \Vec*{\tilde{x}}_{j}^{(\Iter)} \right)^{2}
  \\
  &=
  \sum_{j \in \supp( \Vec{\hat{x}}^{(\Iter)} ) \setminus \supp( \Vec{x} )}
  \left( \Vec*{\tilde{x}}_{j}^{(\Iter)} \right)^{2}
  +
  \sum_{j \in \supp( \Vec{x} )}
  \left( \Vec*{x}_{j} - \Vec*{\tilde{x}}_{j}^{(\Iter)} \right)^{2}
  \\
  &=
  \left\|
    \ThresholdSet{\supp( \Vec{\hat{x}}^{(\Iter)} ) \setminus \supp( \Vec{x} )}
    ( \Vec{\tilde{x}}^{(\Iter)} )
  \right\|_{2}^{2}
  +
  \left\|
    \Vec{x}
    -
    \ThresholdSet{\supp( \Vec{x} )}
    ( \Vec{\tilde{x}}^{(\Iter)} )
  \right\|_{2}^{2}
\end{align}
\end{subequations}
%
It follows that
\begin{subequations}
\begin{align}
\label{pf:lemma:biht:error-upper-bound:alg:eqn:8:1}
  &
  \left\|
    \ThresholdSet{\supp( \Vec{\hat{x}}^{(\Iter)} ) \setminus \supp( \Vec{x} )}
    ( \Vec{\tilde{x}}^{(\Iter)} )
  \right\|_{2}^{2}
  +
  \left\|
    \Vec{x}
    -
    \ThresholdSet{\supp( \Vec{x} )}
    ( \Vec{\tilde{x}}^{(\Iter)} )
  \right\|_{2}^{2}
  =
  \left\|
    \Vec{x}
    -
    \ThresholdSet{\supp( \Vec{x} ) \cup \supp( \Vec{\hat{x}}^{(\Iter)} )}
    ( \Vec{\tilde{x}}^{(\Iter)} )
  \right\|_{2}^{2}
  \\
  &\dLn
  \left\|
    \ThresholdSet{\supp( \Vec{\hat{x}}^{(\Iter)} ) \setminus \supp( \Vec{x} )}
    ( \Vec{\tilde{x}}^{(\Iter)} )
  \right\|_{2}^{2}
  =
  \left\|
    \Vec{x}
    -
    \ThresholdSet{\supp( \Vec{x} ) \cup \supp( \Vec{\hat{x}}^{(\Iter)} )}
    ( \Vec{\tilde{x}}^{(\Iter)} )
  \right\|_{2}^{2}
  -
  \left\|
    \Vec{x}
    -
    \ThresholdSet{\supp( \Vec{x} )}
    ( \Vec{\tilde{x}}^{(\Iter)} )
  \right\|_{2}^{2}
  \\
  &\dLn
  \left\|
    \ThresholdSet{\supp( \Vec{\hat{x}}^{(\Iter)} ) \setminus \supp( \Vec{x} )}
    ( \Vec{\tilde{x}}^{(\Iter)} )
  \right\|_{2}^{2}
  \leq
  \left\|
    \Vec{x}
    -
    \ThresholdSet{\supp( \Vec{x} ) \cup \supp( \Vec{\hat{x}}^{(\Iter)} )}
    ( \Vec{\tilde{x}}^{(\Iter)} )
  \right\|_{2}^{2}
\label{pf:lemma:biht:error-upper-bound:alg:eqn:8:end}
\end{align}
\end{subequations}
%
Likewise,
\begin{subequations}
\begin{align}
\label{pf:lemma:biht:error-upper-bound:alg:eqn:10:1}
  &
  \left\|
    \Vec{x}
    -
    \ThresholdSet{\supp( \Vec{x} ) \cup \supp( \Vec{\hat{x}}^{(\Iter-1)} )
                  \cup \supp( \Vec{\hat{x}}^{(\Iter)} )}
    ( \Vec{\tilde{x}}^{(\Iter)} )
  \right\|_{2}^{2}
  \\
  &=
  \sum_{j \in \supp( \Vec{x} ) \cup \supp( \Vec{\hat{x}}^{(\Iter-1)} )
              \cup \supp( \Vec{\hat{x}}^{(\Iter)} )}
  \left( \Vec*{x}_{j} - \Vec*{\tilde{x}}_{j}^{(\Iter)} \right)^{2}
  \\
  &=
  \sum_{j \in \supp( \Vec{x} ) \cup \supp( \Vec{\hat{x}}^{(\Iter)} )}
  \left( \Vec*{x}_{j} - \Vec*{\tilde{x}}_{j}^{(\Iter)} \right)^{2}
  +
  \sum_{j \in \supp( \Vec{\hat{x}}^{(\Iter-1)} )
              \setminus ( \supp( \Vec{x} ) \cup \supp( \Vec{\hat{x}}^{(\Iter)} ) )}
  \left( \Vec*{x}_{j} - \Vec*{\tilde{x}}_{j}^{(\Iter)} \right)^{2}
  \\
  &=
  \left\|
    \ThresholdSet{\supp( \Vec{x} ) \cup \supp( \Vec{\hat{x}}^{(\Iter)} )}
    ( \Vec{x} - \Vec{\tilde{x}}^{(\Iter)} )
  \right\|_{2}^{2}
  +
  \left\|
    \ThresholdSet{\supp( \Vec{\hat{x}}^{(\Iter-1)} )
                  \setminus ( \supp( \Vec{x} ) \cup \supp( \Vec{\hat{x}}^{(\Iter)} ) )}
    ( \Vec{x} - \Vec{\tilde{x}}^{(\Iter)} )
  \right\|_{2}^{2}
  \\
  &\geq
  \left\|
    \ThresholdSet{\supp( \Vec{x} ) \cup \supp( \Vec{\hat{x}}^{(\Iter)} )}
    ( \Vec{x} - \Vec{\tilde{x}}^{(\Iter)} )
  \right\|_{2}^{2}
  \\
  &=
  \left\|
    \Vec{x}
    -
    \ThresholdSet{\supp( \Vec{x} ) \cup \supp( \Vec{\hat{x}}^{(\Iter)} )}
    ( \Vec{\tilde{x}}^{(\Iter)} )
  \right\|_{2}^{2}
  \\
  &\dLn
  \left\|
    \Vec{x}
    -
    \ThresholdSet{\supp( \Vec{x} ) \cup \supp( \Vec{\hat{x}}^{(\Iter)} )}
    ( \Vec{\tilde{x}}^{(\Iter)} )
  \right\|_{2}
  \leq
  \left\|
    \Vec{x}
    -
    \ThresholdSet{\supp( \Vec{x} ) \cup \supp( \Vec{\hat{x}}^{(\Iter-1)} )
                  \cup \supp( \Vec{\hat{x}}^{(\Iter)} )}
    ( \Vec{\tilde{x}}^{(\Iter)} )
  \right\|_{2}
\label{pf:lemma:biht:error-upper-bound:alg:eqn:10:end}
\end{align}
\end{subequations}
%
Continuing from \eqref{pf:lemma:biht:error-upper-bound:alg:eqn:5},
\begin{subequations}
\begin{align}
\label{pf:lemma:biht:error-upper-bound:alg:eqn:9}
  &
  \DistS{\Vec{x}}{\Vec{\hat{x}}^{(\Iter)}}
  \\
  &\leq
  2
  \left\|
    \Vec{x}
    -
    \ThresholdSet{\supp( \Vec{x} ) \cup \supp( \Vec{\hat{x}}^{(\Iter)} )}
    ( \Vec{\tilde{x}}^{(\Iter)} )
  \right\|_{2}
  +
  2
  \left\|
    \ThresholdSet{\supp( \Vec{x} ) \cup \supp( \Vec{\hat{x}}^{(\Iter)} )}
    ( \Vec{\tilde{x}}^{(\Iter)} )
    -
    \ThresholdSet{\supp( \Vec{\hat{x}}^{(\Iter)} )}
    ( \Vec{\tilde{x}}^{(\Iter)} )
  \right\|_{2}
  \\
  &=
  2
  \left\|
    \Vec{x}
    -
    \ThresholdSet{\supp( \Vec{x} ) \cup \supp( \Vec{\hat{x}}^{(\Iter)} )}
    ( \Vec{\tilde{x}}^{(\Iter)} )
  \right\|_{2}
  +
  2
  \left\|
    \ThresholdSet{\supp( \Vec{\hat{x}}^{(\Iter)} ) \setminus \supp( \Vec{x} )}
    ( \Vec{\tilde{x}}^{(\Iter)} )
  \right\|_{2}
  \dCmt \Text{by \EQN \eqref{pf:lemma:biht:error-upper-bound:alg:eqn:6}}
  \\
  &\leq
  4
  \left\|
    \Vec{x}
    -
    \ThresholdSet{\supp( \Vec{x} ) \cup \supp( \Vec{\hat{x}}^{(\Iter)} )}
    ( \Vec{\tilde{x}}^{(\Iter)} )
  \right\|_{2}
  \dCmt \Text{by \EQN \eqref{pf:lemma:biht:error-upper-bound:alg:eqn:8:end}}
  \\
  &\leq
  4
  \left\|
    \Vec{x}
    -
    \ThresholdSet{\supp( \Vec{x} ) \cup \supp( \Vec{\hat{x}}^{(\Iter-1)} )
                  \cup \supp( \Vec{\hat{x}}^{(\Iter)} )}
    ( \Vec{\tilde{x}}^{(\Iter)} )
  \right\|_{2}
  \dCmt \Text{by \EQN \eqref{pf:lemma:biht:error-upper-bound:alg:eqn:10:end}}
  \\
  &=
  4
  \left\|
    \Vec{x}
    -
    \ThresholdSet{\supp( \Vec{x} ) \cup \supp( \Vec{\hat{x}}^{(\Iter-1)} )
                  \cup \supp( \Vec{\hat{x}}^{(\Iter)} )}
    ( \Vec{\hat{x}}^{(\Iter-1)} + \hA( \Vec{x}, \Vec{\hat{x}}^{(\Iter-1)} ) )
  \right\|_{2}
  \\
  &=
  4
  \left\|
    \Vec{x}
    -
    \Vec{\hat{x}}^{(\Iter-1)}
    -
    \ThresholdSet{\supp( \Vec{x} ) \cup \supp( \Vec{\hat{x}}^{(\Iter-1)} )
                  \cup \supp( \Vec{\hat{x}}^{(\Iter)} )}
    ( \hA( \Vec{x}, \Vec{\hat{x}}^{(\Iter-1)} ) )
  \right\|_{2}
  \\
  &=
  4
  \left\|
    \left( \Vec{x} - \Vec{\hat{x}}^{(\Iter-1)} \right)
    -
    \hA[\supp( \Vec{\hat{x}}^{(\Iter)} )]( \Vec{x}, \Vec{\hat{x}}^{(\Iter-1)} )
  \right\|_{2}
\end{align}
\end{subequations}
as desired.
\end{proof}

\subsubsection{Proof of Lemmas \ref{lemma:biht:error:recurrence} and
\ref{lemma:biht:error:explicit}}
\label{outline:biht:pf-main-thm|>intermediate-lemmas-pf|>error}

Lemmas \ref{lemma:biht:error:recurrence} and \ref{lemma:biht:error:explicit}, will
be verified in tandem.
Fact \ref{fact:misc:error-decay-recurrence}, stated below and proved in
Section \ref{outline:misc:error-decay-recurrence}, will facilitate the proof.
%
\begin{fact}
\label{fact:misc:error-decay-recurrence}
Let
\(
  \Variable{u}, \Variable{v}, \Variable{w}, \Variable{w}_{0} \in \R_{+}
\)
such that
\(
  \Variable{u} = \frac{1}{2} \left( 1 + \sqrt{1 + 4 \Variable{w}} \right)
\),
and
\(
  1 \leq \Variable{u} \leq \frac{2}{\sqrt{\Variable{v}}}
\).
Define the functions
\(
  \Function{f}_{1}, \Function{f}_{2} : \Z_{\geq 0} \to \R
\)
by
\begin{gather}
\label{eqn:fact:misc:error-decay-recurrence:g-h-def}
  \Function{f}_{1}(0) = 2
  \\
  \Function{f}_{1}(t) = \Variable{v} \Variable{w} + \sqrt{\Variable{v} \Function{f}_{1}(t-1)}
  ,\quad t \in \Z_{+}
  \\
  \Function{f}_{2}(t) = 2^{2^{-t}} (\Variable{u}^{2} \Variable{v})^{1 - 2^{-t}}
  ,\quad t \in \Z_{\geq 0}
.\end{gather}
%
Then, \( \Function{f}_{1} \) and \( \Function{f}_{2} \) are strictly monotonically decreasing and
asymptotically converges to
\(
  \Variable{u}^{2} \Variable{v}
\).
Moreover, \( \Function{f}_{2} \) pointwise upper bounds \( \Function{f}_{1} \).
Formally,
\begin{gather}
\label{eqn:fact:misc:error-decay-recurrence:g<=h}
  \Function{f}_{1}(t) \leq \Function{f}_{2}(t)
  ,\quad \forall t \in \Z_{\geq 0}
\\
\label{eqn:fact:misc:error-decay-recurrence:t->infty}
  \lim_{t \to \infty} \Function{f}_{2}(t)
  =
  \lim_{t \to \infty} \Function{f}_{1}(t)
  =
  \Variable{u}^{2} \Variable{v}
.\end{gather}
\end{fact}
%

%
\begin{proof}
{Lemmas \ref{lemma:biht:error:recurrence} and \ref{lemma:biht:error:explicit}}
\label{pf:lemma:biht:error:recurrence-seq}
The lemmas are corollaries to Fact \ref{fact:misc:error-decay-recurrence}.
All that is necessary is writing \( \Varepsilon \) in the form of \( \Function{f}_{1} \) in
Fact \ref{fact:misc:error-decay-recurrence} and verifying that it satisfies the conditions of the
fact.
For
\(
  \Iter = 0
\),
\(
  \Varepsilon(0) = 2 = \Function{f}_{1}(0)
\).
Otherwise, for
\(
  \Iter > 0
\),
observe
\begin{subequations}
\begin{align}
  \Varepsilon( \Iter )
  &=
  4 \UnivConstc_{1}
  \sqrt{\frac{\Epsilon}{\UnivConstC} \Varepsilon( \Iter-1 )}
  +
  4 \UnivConstc_{2} \frac{\Epsilon}{\UnivConstC}
  =
  \left( \frac{16 \UnivConstc_{1}^{2} \Epsilon}{\UnivConstC} \right)
  \left( \frac{16 \UnivConstc_{1}^{2} \Epsilon}{\UnivConstC} \right)^{-1}
  4 \UnivConstc_{2} \frac{\Epsilon}{\UnivConstC}
  +
  \sqrt{\left( \frac{16 \UnivConstc_{1}^{2} \Epsilon}{\UnivConstC} \right) \Varepsilon( \Iter-1 )}
  \\
  &=
  \left( \frac{16 \UnivConstc_{1}^{2} \Epsilon}{\UnivConstC} \right)
  \left( \frac{\UnivConstc_{2}}{4 \UnivConstc_{1}^{2}} \right)
  +
  \sqrt{\left( \frac{16 \UnivConstc_{1}^{2} \Epsilon}{\UnivConstC} \right) \Varepsilon( \Iter-1 )}
  \\
  &=
  \Variable{v} \Variable{w} + \sqrt{\Variable{v} \Varepsilon( \Iter-1 )}
\end{align}
\end{subequations}
where
\(
  \Variable{v} = \frac{16 \UnivConstc_{1}^{2} \Epsilon}{\UnivConstC}
\),
\(
  \Variable{w} = \frac{\UnivConstc_{2}}{4 \UnivConstc_{1}^{2}}
\),
and
\(
  \Variable{u}
  = \frac{1}{2} ( 1 + \sqrt{1 + 4 \cdot \frac{\UnivConstc_{2}}{4 \UnivConstc_{1}^{2}}} )
  = \frac{1}{2} ( 1 + \sqrt{1 + \frac{\UnivConstc_{2}}{\UnivConstc_{1}^{2}}} )
  = \frac{1}{2 \UnivConstc_{1}} ( \UnivConstc_{1} + \sqrt{\UnivConstc_{1}^{2} + \UnivConstc_{2}} )
\).
Recall that the universal constants are fixed as
\(
  \UnivConstc_{1} = \UnivConstcOneValue,
  \UnivConstc_{2} = \UnivConstcTwoValue,
  \UnivConstC = \UnivConstCValue
\).
By numerical calculations, it can be shown that
\(
  \Variable{u} \sqrt{\Variable{v}}
  < \sqrt{2}
\)
whenever
\(
  \UnivConstB \gtrsim \UnivConstBValue
\),
and hence
\(
  \Variable{u}
  < \sqrt{\frac{2}{\Variable{v}}}
\),
as required by Fact \ref{fact:misc:error-decay-recurrence}.
It then follows that \( \Varepsilon \) monotonically decreases with
\(
  \Iter \in \Z_{\geq 0}
\)
and
\begin{gather}
  \lim_{\Iter \to \infty} \Varepsilon(\Iter)
  = \Variable{u}^{2} \Variable{v}
  = \VarepsilonAsymptotic
  < \VarepsilonAsymptoticIntermediate
  = \Epsilon
,\end{gather}
where the last inequality follows from a numerical calculation.
Moreover, Fact \ref{fact:misc:error-decay-recurrence} further implies
\begin{gather}
  \Varepsilon(\Iter)
  \leq
  2^{2^{-\Iter}}
  (\Variable{u}^{2} \Variable{v})^{1-2^{-\Iter}}
  <
  2^{2^{-\Iter}}
  \Epsilon^{1-2^{-\Iter}}
.\end{gather}
\end{proof}

\section{Outlook}
\label{outline:outlook}
In this paper, we have shown that the binary iterative hard thresholding algorithm, an iterative (proximal) \subgradient descent algorithm for a nonconvex optimization problem, converges under certain structural assumptions, with the optimal number of measurements. It is worth exploring how general this result can be:
what other nonlinear measurements can be handled this way, and what type of measurement noise can be tolerated by such iterative algorithms?
This direction is hopeful because the noiseless sign measurements are often thought to be the hardest to analyze.
As another point of interest, our result is deterministic given a measurement matrix with a certain property. Incidentally, Gaussian measurements satisfy this property with high probability. However, the spherical symmetry of these measurements is crucial in the proof laid out in this work, and it is not clear whether other non-Gaussian (even sub-Gaussian) measurement matrices can have this property, or whether derandomized, explicit construction of measurement matrices is possible.

\paragraph{Acknowledgements.} We would like to thank the anonymous reviewers who helped correct some errors in the initially submitted version, as well as gave suggestions to improve readability significantly.

\bibliography{refs}

@inproceedings{foucart2017flavors,
  title={Flavors of compressive sensing},
  author={Foucart, Simon},
  booktitle={Approximation Theory XV: San Antonio 2016 15},
  pages={61--104},
  year={2017},
  organization={Springer},
  publisher={Springer}
}

@article{plan2012robust,
  title={Robust 1-bit compressed sensing and sparse logistic regression: A convex programming approach},
  author={Plan, Yaniv and Vershynin, Roman},
  journal={IEEE Transactions on Information Theory},
  volume={59},
  number={1},
  pages={482--494},
  year={2012},
  publisher={IEEE}
}

@article{liu2019one,
  title={One-bit compressive sensing with projected subgradient method under sparsity constraints},
  author={Liu, Dekai and Li, Song and Shen, Yi},
  journal={IEEE Transactions on Information Theory},
  volume={65},
  number={10},
  pages={6650--6663},
  year={2019},
  publisher={IEEE}
}

@inproceedings{mazumdar2021support,
  author    = {Arya Mazumdar and
               Soumyabrata Pal},
  editor    = {Mark Braverman},
  title     = {Support Recovery in Universal One-Bit Compressed Sensing},
  booktitle = {13th Innovations in Theoretical Computer Science Conference, {ITCS}
               2022, January 31 - February 3, 2022, Berkeley, CA, {USA}},
  series    = {LIPIcs},
  volume    = {215},
  pages     = {106:1--106:20},
  publisher = {Schloss Dagstuhl - Leibniz-Zentrum f{\"{u}}r Informatik},
  year      = {2022}
}

@article{friedlander2021nbiht,
  title={NBIHT: An Efficient Algorithm for 1-bit Compressed Sensing with Optimal Error Decay Rate},
  author={Friedlander, Michael P and Jeong, Halyun and Plan, Yaniv and Y{\i}lmaz, {\"O}zg{\"u}r},
  journal={IEEE Transactions on Information Theory},
  volume={68},
  number={2},
  pages={1157--1177},
  year={2021},
  publisher={IEEE}
}

@article{jacques2013quantized,
  title={Quantized iterative hard thresholding: Bridging 1-bit and high-resolution quantized compressed sensing},
  author={Jacques, Laurent and Degraux, K{\'e}vin and De Vleeschouwer, Christophe},
  journal={arXiv preprint arXiv:1305.1786},
  volume       = {abs/1305.1786},
  year         = {2013},
  numpages     = {8},
  url          = {http://arxiv.org/abs/1305.1786},
  eprinttype    = {arXiv},
  eprint       = {1305.1786}
}

@article{oymak2015near,
  title={Near-optimal bounds for binary embeddings of arbitrary sets},
  author={Oymak, Samet and Recht, Ben},
  journal={arXiv preprint arXiv:1512.04433},
  volume       = {abs/1512.04433},
  year         = {2015},
  numpages     = {20},
  url          = {http://arxiv.org/abs/1512.04433},
  eprinttype    = {arXiv},
  eprint       = {1512.04433}
}

@article{plan2017high,
  title={High-dimensional estimation with geometric constraints},
  author={Plan, Yaniv and Vershynin, Roman and Yudovina, Elena},
  journal={Information and Inference: A Journal of the IMA},
  volume={6},
  number={1},
  pages={1--40},
  year={2017},
  publisher={Oxford University Press}
}

@book{vershynin2018high,
  title={High-dimensional probability: An introduction with applications in data science},
  author={Vershynin, Roman},
  volume={47},
  year={2018},
  publisher={Cambridge university press}
}

@inproceedings{flodin2019superset,
  title={Superset Technique for Approximate Recovery in One-Bit Compressed Sensing},
  author={Flodin, Larkin and Gandikota, Venkata and Mazumdar, Arya},
  editor       = {Hanna M. Wallach and
                  Hugo Larochelle and
                  Alina Beygelzimer and
                  Florence d'Alch{\'{e}}{-}Buc and
                  Emily B. Fox and
                  Roman Garnett},
  booktitle    = {Advances in Neural Information Processing Systems 32: Annual Conference
                  on Neural Information Processing Systems 2019, NeurIPS 2019, December
                  8-14, 2019, Vancouver, BC, Canada},
  pages        = {10387--10396},
  year         = {2019},
  publisher = {Curran Associates, Inc.},
  url          = {https://proceedings.neurips.cc/paper/2019/hash/c900ced7451da79502d29aa37ebb7b60-Abstract.html},
  timestamp    = {Mon, 16 May 2022 15:41:51 +0200},
  biburl       = {https://dblp.org/rec/conf/nips/FlodinGM19.bib},
  bibsource    = {dblp computer science bibliography, https://dblp.org}
}

@article{Geom,
  author    = {Weisstein, Eric W.},
  title     = {Geometric Distribution},
  journal   = {From MathWorld--A Wolfram Web Resource},
  url       = { http://mathworld.wolfram.com/GeometricDistribution.html},
  year = "2019"
 }

@article{candes2006robust,
  title={Robust uncertainty principles: exact signal reconstruction from highly incomplete frequency information},
  author={Cand{\`e}s, Emmanuel J and Romberg, Justin and Tao, Terence},
  journal={IEEE Transactions on Information Theory},
  volume={52},
  number={2},
  pages={489--509},
  year={2006},
  publisher={IEEE}
}

@article{knudson2016one,
  title={One-bit compressive sensing with norm estimation},
  author={Knudson, Karin and Saab, Rayan and Ward, Rachel},
  journal={IEEE Transactions on Information Theory},
  volume={62},
  number={5},
  pages={2748--2758},
  year={2016},
  publisher={IEEE}
}

@article{plan2013one,
  title={One-Bit Compressed Sensing by Linear Programming},
  author={Plan, Yaniv and Vershynin, Roman},
  journal={Communications on Pure and Applied Mathematics},
  volume={66},
  number={8},
  pages={1275--1297},
  year={2013},
  publisher={Wiley Online Library}
}

@article{JLBB13,
  title={Robust 1-bit compressive sensing via binary stable embeddings of sparse vectors},
  author={Jacques, Laurent and Laska, Jason N and Boufounos, Petros T and Baraniuk, Richard G},
  journal={IEEE Transactions on Information Theory},
  volume={59},
  number={4},
  pages={2082--2102},
  year={2013},
  publisher={IEEE}
}

@inproceedings{GNJN13,
  title={One-bit compressed sensing: Provable support and vector recovery},
  author={Gopi, Sivakant and Netrapalli, Praneeth and Jain, Prateek and Nori, Aditya},
  booktitle    = {Proceedings of the 30th International Conference on Machine Learning,
                  {ICML} 2013, Atlanta, GA, USA, 16-21 June 2013},
  series       = {{JMLR} Workshop and Conference Proceedings},
  volume       = {28},
  pages        = {154--162},
  publisher    = {JMLR.org},
  year         = {2013},
  url          = {http://proceedings.mlr.press/v28/gopi13.html},
  timestamp    = {Mon, 07 Nov 2022 09:21:50 +0100},
  biburl       = {https://dblp.org/rec/conf/icml/GopiN0N13.bib},
  bibsource    = {dblp computer science bibliography, https://dblp.org}
}

@inproceedings{ABK17,
  title={Improved bounds for universal one-bit compressive sensing},
  author={Acharya, Jayadev and Bhattacharyya, Arnab and Kamath, Pritish},
  booktitle    = {2017 {IEEE} International Symposium on Information Theory, {ISIT}
                  2017, Aachen, Germany, June 25-30, 2017},
  pages        = {2353--2357},
  publisher    = {{IEEE}},
  year         = {2017},
  url          = {https://doi.org/10.1109/ISIT.2017.8006950},
  doi          = {10.1109/ISIT.2017.8006950},
  timestamp    = {Wed, 16 Oct 2019 14:14:48 +0200},
  biburl       = {https://dblp.org/rec/conf/isit/AcharyaBK17.bib},
  bibsource    = {dblp computer science bibliography, https://dblp.org}
}

@inproceedings{charikar2002similarity,
  author       = {Moses Charikar},
  editor       = {John H. Reif},
  title        = {Similarity estimation techniques from rounding algorithms},
  booktitle    = {Proceedings on 34th Annual {ACM} Symposium on Theory of Computing,
                  May 19-21, 2002, Montr{\'{e}}al, Qu{\'{e}}bec, Canada},
  pages        = {380--388},
  publisher    = {{ACM}},
  year         = {2002},
  url          = {https://doi.org/10.1145/509907.509965},
  doi          = {10.1145/509907.509965},
  timestamp    = {Sat, 30 Sep 2023 09:57:31 +0200},
  biburl       = {https://dblp.org/rec/conf/stoc/Charikar02.bib},
  bibsource    = {dblp computer science bibliography, https://dblp.org}
}

@article{plan2016generalized,
  title={The generalized lasso with non-linear observations},
  author={Plan, Yaniv and Vershynin, Roman},
  journal={IEEE Transactions on information theory},
  volume={62},
  number={3},
  pages={1528--1537},
  year={2016},
  publisher={IEEE}
}

@article{DBLP:journals/tit/Donoho06,
  author    = {David L. Donoho},
  title     = {Compressed sensing},
  journal   = {{IEEE} Trans. Information Theory},
  volume    = {52},
  number    = {4},
  pages     = {1289--1306},
  year      = {2006}
}

@inproceedings{DBLP:conf/ciss/BoufounosB08,
  author    = {Petros Boufounos and
               Richard G. Baraniuk},
  title     = {1-Bit compressive sensing},
  booktitle = {42nd Annual Conference on Information Sciences and Systems, {CISS}
               2008, Princeton, NJ, USA, 19-21 March 2008},
  pages     = {16--21},
  year      = {2008},
  crossref  = {DBLP:conf/ciss/2008},
  url       = {https://doi.org/10.1109/CISS.2008.4558487},
  doi       = {10.1109/CISS.2008.4558487},
  timestamp = {Fri, 02 Jun 2017 20:47:11 +0200},
  biburl    = {https://dblp.org/rec/bib/conf/ciss/BoufounosB08},
  bibsource = {dblp computer science bibliography, https://dblp.org}
}

@proceedings{DBLP:conf/ciss/2008,
  title     = {42nd Annual Conference on Information Sciences and Systems, {CISS}
               2008, Princeton, NJ, USA, 19-21 March 2008},
  publisher = {{IEEE}},
  year      = {2008},
  url       = {http://ieeexplore.ieee.org/xpl/mostRecentIssue.jsp?punumber=4555640},
  isbn      = {978-1-4244-2246-3},
  timestamp = {Wed, 15 Oct 2014 17:04:27 +0200},
  biburl    = {https://dblp.org/rec/bib/conf/ciss/2008},
  bibsource = {dblp computer science bibliography, https://dblp.org}
}

@inproceedings{DBLP:conf/ciss/HauptB11,
  author    = {Jarvis D. Haupt and
               Richard G. Baraniuk},
  title     = {Robust support recovery using sparse compressive sensing matrices},
  booktitle = {45st Annual Conference on Information Sciences and Systems, {CISS}
               2011, The John Hopkins University, Baltimore, MD, USA, 23-25 March
               2011},
  pages     = {1--6},
  year      = {2011},
  crossref  = {DBLP:conf/ciss/2011},
  url       = {https://doi.org/10.1109/CISS.2011.5766202},
  doi       = {10.1109/CISS.2011.5766202},
  timestamp = {Fri, 19 May 2017 01:25:11 +0200},
  biburl    = {https://dblp.org/rec/bib/conf/ciss/HauptB11},
  bibsource = {dblp computer science bibliography, https://dblp.org}
}

@proceedings{DBLP:conf/ciss/2011,
  title     = {45st Annual Conference on Information Sciences and Systems, {CISS}
               2011, The John Hopkins University, Baltimore, MD, USA, 23-25 March
               2011},
  publisher = {{IEEE}},
  year      = {2011},
  url       = {http://ieeexplore.ieee.org/xpl/mostRecentIssue.jsp?punumber=5764460},
  isbn      = {978-1-4244-9846-8},
  timestamp = {Wed, 15 Oct 2014 17:04:27 +0200},
  biburl    = {https://dblp.org/rec/bib/conf/ciss/2011},
  bibsource = {dblp computer science bibliography, https://dblp.org}
}

@article{saab2018quantization,
  title={Quantization of compressive samples with stable and robust recovery},
  author={Saab, Rayan and Wang, Rongrong and Y{\i}lmaz, {\"O}zg{\"u}r},
  journal={Applied and Computational Harmonic Analysis},
  volume={44},
  number={1},
  pages={123--143},
  year={2018},
  publisher={Elsevier}
}

@article{baraniuk2017exponential,
  title={Exponential decay of reconstruction error from binary measurements of sparse signals},
  author={Baraniuk, Richard G and Foucart, Simon and Needell, Deanna and Plan, Yaniv and Wootters, Mary},
  journal={IEEE Transactions on Information Theory},
  volume={63},
  number={6},
  pages={3368--3385},
  year={2017},
  publisher={IEEE}
}

@article{DBLP:journals/tit/PlanV13,
  author    = {Yaniv Plan and
               Roman Vershynin},
  title     = {Robust 1-bit Compressed Sensing and Sparse Logistic Regression: {A}
               Convex Programming Approach},
  journal   = {{IEEE} Trans. Information Theory},
  volume    = {59},
  number    = {1},
  pages     = {482--494},
  year      = {2013}
}

@inproceedings{DBLP:conf/aistats/Li16,
  author    = {Ping Li},
  title     = {One Scan 1-Bit Compressed Sensing},
  booktitle = {Proceedings of the 19th International Conference on Artificial Intelligence
               and Statistics, {AISTATS} 2016, Cadiz, Spain, May 9-11, 2016},
  pages     = {1515--1523},
  year      = {2016},
  crossref  = {DBLP:conf/aistats/2016},
  url       = {http://jmlr.org/proceedings/papers/v51/li16g.html},
  timestamp = {Wed, 03 Apr 2019 18:02:00 +0200},
  biburl    = {https://dblp.org/rec/bib/conf/aistats/Li16},
  bibsource = {dblp computer science bibliography, https://dblp.org}
}

@proceedings{DBLP:conf/aistats/2016,
  editor    = {Arthur Gretton and
               Christian C. Robert},
  title     = {Proceedings of the 19th International Conference on Artificial Intelligence
               and Statistics, {AISTATS} 2016, Cadiz, Spain, May 9-11, 2016},
  series    = {{JMLR} Workshop and Conference Proceedings},
  volume    = {51},
  publisher = {JMLR.org},
  year      = {2016},
  url       = {http://jmlr.org/proceedings/papers/v51/},
  timestamp = {Wed, 03 Apr 2019 18:02:00 +0200},
  biburl    = {https://dblp.org/rec/bib/conf/aistats/2016},
  bibsource = {dblp computer science bibliography, https://dblp.org}
}

\begin{appendix}
\label{outline:appendix}

\section{Proof of Theorem \ref{thm:technical:raic:modified}}
\label{outline:technical:pf}

This section proves the main technical theorem, Theorem \ref{thm:technical:raic:modified},
which is restated for convenience.

\begin{theorem*}
Let
\(
  \UnivConstA, \UnivConstAX, \UnivConstAXX, \UnivConstB, \UnivConstc_{1}, \UnivConstc_{2}, \UnivConstD > 0
\)
be universal constants as defined in Eq.~\eqref{eqn:univConstants}. 
Fix
\(
  \DDelta, \Rho \in (0,1)
\)
and
\(
  k, m, n \in \Z_{+}
\)
such that
\(
  0 < k \EDITX{\leq} n
\).
\EDITX{Let \(  \kO \defeq \min \{ 2k, n \}  \) and \(  \kOX \defeq \min \{ 4k, n\}  \).}
Define \(  \GammaX \in (0,1)  \) such that
\begin{gather*}
  \GammaX = \GammaXValue
,\end{gather*}
and let
\begin{align*}
  m
  &= \CMPLXRAICOne
  \\ &\AlignTab
  +  \CMPLXRAICTwo
  \\
  &= \CMPLXRAICBigO
.\end{align*}
%
Let
\(
  \MeasMat \in \R^{m \times n}
\)
be a measurement matrix whose rows
\(
  \MeasVec^{(i)} \sim \N( \Vec{0}, \Mat{I}_{n \times n} )
\),
\(
  i \in [m]
\),
have i.i.d. standard normal entries.
Then, the measurement matrix \( \MeasMat \) satisfies the
\( (k, n, \DDelta, \UnivConstc_{1}, \UnivConstc_{2}) \)-RAIC.
Formally, uniformly with probability at least
\(
  1 - \Rho
\),
for all
\(
  \Vec{x}, \Vec{y} \in \SparseSphereSubspace{k}{n}
\)
and all
\(
  \Coords{J} \subseteq [n]
\),
\(
  | \Coords{J} | \leq k
\),
\begin{gather*}
  \left\| ( \Vec{x} - \Vec{y} ) - \hA[\Coords{J}]( \Vec{x}, \Vec{y} ) \right\|_{2}
  \leq
  \UnivConstc_{1}
  \sqrt{\DDelta \DistS{\Vec{x}}{\Vec{y}}}
  +
  \UnivConstc_{2} \DDelta
.\end{gather*}
\end{theorem*}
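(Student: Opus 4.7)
The plan is to execute the roadmap already sketched in the technical overview: build a $\Tau$-net $\Net{\Tau} \subseteq \SparseSphereSubspace{k}{n}$ at scale $\Tau = \DDelta / \UnivConstB$, establish a ``large-distance'' RAIC-type bound for pairs of net points separated by at least $\Tau$, establish a ``small-distance'' bound for deviations within distance $\Tau$ from a net point, and then combine the two via the triangle inequality to extend the RAIC to arbitrary pairs $(\Vec{x}, \Vec{y}) \in \SparseSphereSubspace{k}{n} \times \SparseSphereSubspace{k}{n}$. Throughout, the key conceptual move is to rewrite $\hA(\Vec{u}, \Vec{v}) = \frac{\sqrt{2\pi}}{m} \sum_{i=1}^{m} \MeasVec^{(i)} \sgn(\langle \MeasVec^{(i)}, \Vec{u} \rangle) \cdot \1{\sgn(\langle \MeasVec^{(i)}, \Vec{u} \rangle) \neq \sgn(\langle \MeasVec^{(i)}, \Vec{v} \rangle)}$ so that, conditionally on the sign-disagreement pattern $\Vec{R}_{\Vec{u},\Vec{v}}$, only $\RV{L}_{\Vec{u},\Vec{v}} = \|\Vec{R}_{\Vec{u},\Vec{v}}\|_0$ of the $m$ rows contribute; this reduction in effective sample size is what allows tight concentration to yield the $\sqrt{\DDelta \DistS{\Vec{u}}{\Vec{v}}}$ factor rather than a weaker $\DDelta^{1/2}$ bound.

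For the large-distance regime, I would first apply a Chernoff bound (using that $\RV{L}_{\Vec{u},\Vec{v}} \sim \Bin(m, \theta_{\Vec{u},\Vec{v}}/\pi)$ by rotational symmetry of the Gaussian rows) to show that $\RV{L}_{\Vec{u},\Vec{v}} \in [(1-s)\theta_{\Vec{u},\Vec{v}} m/\pi,\,(1+s)\theta_{\Vec{u},\Vec{v}} m/\pi]$ uniformly over all pairs $(\Vec{u}, \Vec{v}) \in \Net{\Tau} \times \Net{\Tau}$ with $\DistS{\Vec{u}}{\Vec{v}} \geq \Tau$; the lower threshold $\theta \gtrsim \Tau$ is exactly what is needed for Chernoff to give a multiplicative bound. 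I would then decompose $\hA(\Vec{u}, \Vec{v})$ into three orthogonal components along $\Vec{u}-\Vec{v}$, along $\Vec{u}+\Vec{v}$, and in the orthogonal complement (as in Eq.~\eqref{eqn:intro:techniques:h_A:decomposition}); by direct computation the first component has expectation equal to $\Vec{u}-\Vec{v}$ up to a factor depending on $\theta_{\Vec{u},\Vec{v}}$, while the other two have mean zero. Using Lemma \ref{lemma:technical:concentration-ineq:(u,v)}, each component concentrates at a rate $\sim \sqrt{k/\RV{L}_{\Vec{u},\Vec{v}}} \sim \sqrt{k/(\theta_{\Vec{u},\Vec{v}} m)}$, so combining with $\DistS{\Vec{u}}{\Vec{v}} \asymp \theta_{\Vec{u},\Vec{v}}$ yields the $\UnivConstb_1 \sqrt{\DDelta \DistS{\Vec{u}}{\Vec{v}}}$ large-scale bound. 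A union bound over $|\Net{\Tau} \times \Net{\Tau}| \leq \binom{n}{k}^2(6/\Tau)^{2k}$ pairs and over the $\binom{n}{2k} 2^{2k}$ choices of $\Coords{J'}$ drives the required sample complexity of $m \gtrsim \frac{k}{\DDelta} \log(n/(k\DDelta)) + \frac{1}{\DDelta}\log(1/\Rho)$.

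For the small-distance regime, the idea is different because Chernoff on $\RV{L}_{\Vec{x}, \Vec{u}}$ would be too weak at this scale. Instead, for each $\Vec{u} \in \Net{\Tau}$ I would build a secondary, data-dependent net $\BallNet{\Tau}(\Vec{u}) \subseteq \BallSparseSphere{\Tau}(\Vec{u})$ by grouping points in $\BallSparseSphere{\Tau}(\Vec{u})$ according to their sign pattern $\sgn(\MeasMat \Vec{x})$. The cardinality of this net is at most $2^{\RV{M}_{\Beta, \Vec{u}}}$, where $\RV{M}_{\Beta, \Vec{u}}$ counts the rows whose angle with $\Vec{u}$ lies in $[\pi/2 - \Beta,\, \pi/2 + \Beta]$ with $\Beta = \arccos(1 - \Tau^2/2)$, i.e., the rows whose sign can flip under a $\Tau$-perturbation. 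A Chernoff bound gives $\RV{M}_{\Beta, \Vec{u}} \leq \frac{4}{3} \Tau m$ uniformly over $\Vec{u} \in \Net{\Tau}$, which automatically also bounds $\RV{L}_{\Vec{x}, \Vec{u}} \leq \frac{4}{3} \Tau m$. Conditional on this bound, the same orthogonal decomposition and concentration inequalities as in the large-distance regime yield $\|\hA[\Coords{J'}](\Vec{w}, \Vec{u})\|_2 \lesssim \Tau$ for any $\Vec{w} \in \BallNet{\Tau}(\Vec{u})$, and since the construction of $\BallNet{\Tau}(\Vec{u})$ ensures $\hA[\Coords{J'}](\Vec{x}, \Vec{u}) = \hA[\Coords{J'}](\Vec{w}, \Vec{u})$ for some $\Vec{w}$ depending on $\Vec{x}$, a triangle inequality with $\|\Vec{x} - \Vec{u}\|_2 \leq \Tau$ completes this step and gives the $\UnivConstb_2 \DDelta$ bound.

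The main obstacle will be accounting: arranging the thresholds $\Tau, \Beta, s$ and the parameters in the concentration inequalities so that (i) the union bound exponents in the two regimes both fit within the stated sample complexity, and (ii) the two regimes splice together to produce the specific constants $\UnivConstc_1, \UnivConstc_2$ claimed. The reason the small-distance analysis does not dominate the sample complexity is precisely that $\log|\BallNet{\Tau}(\Vec{u})| \leq \RV{M}_{\Beta, \Vec{u}} \lesssim \Tau m$, which is a constant fraction of $m$, i.e., the union bound cost over this net scales with $m$ itself and therefore imposes no additional lower bound on $m$ beyond a constant factor — this is why the sign operation ``saves'' us. Once both regimes are established, the final extension to arbitrary $\Vec{x}, \Vec{y} \in \SparseSphereSubspace{k}{n}$ is routine: pick $\Vec{u}, \Vec{v} \in \Net{\Tau}$ with $\|\Vec{x} - \Vec{u}\|_2, \|\Vec{y} - \Vec{v}\|_2 \leq \Tau$, apply the large-distance bound to $(\Vec{u}, \Vec{v})$ when $\DistS{\Vec{u}}{\Vec{v}} \geq \Tau$ (together with two applications of the small-distance bound for the perturbations $\Vec{x} \leftrightarrow \Vec{u}$ and $\Vec{y} \leftrightarrow \Vec{v}$), and apply only the small-distance bound otherwise; in both cases the triangle inequality yields the advertised $\UnivConstc_1 \sqrt{\DDelta \DistS{\Vec{x}}{\Vec{y}}} + \UnivConstc_2 \DDelta$ bound.
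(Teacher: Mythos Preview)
Your proposal is correct and follows essentially the same approach as the paper: the same $\Tau$-net construction, the same large-/small-distance split, the same orthogonal decomposition of $\hA$ with the concentration inequalities of Lemma~\ref{lemma:technical:concentration-ineq:(u,v)}, the same data-dependent secondary net $\BallNet{\Tau}(\Vec{u})$ with cardinality bounded by $2^{\RV{M}_{\Beta,\Vec{u}}}$, and the same final triangle-inequality combination. The only detail you glossed over is that the paper chooses the net points $\Vec{u}, \Vec{v}$ with the additional constraint $\supp(\Vec{u}) = \supp(\Vec{x})$ and $\supp(\Vec{v}) = \supp(\Vec{y})$ (possible by the design of $\Net{\Tau}$), which is needed so that the thresholding sets line up correctly when decomposing $\hA[\Coords{J}](\Vec{x},\Vec{y})$ into the three pieces $\hA[\Coords{J}](\Vec{u},\Vec{v})$, $\hA[\Coords{J}_{\Vec{y}}](\Vec{x},\Vec{u})$, $\hA[\Coords{J}_{\Vec{x}}](\Vec{v},\Vec{y})$.
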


The proof of the theorem will consider two regimes---the first, in
Section \ref{outline:technical:pf|>large-scale}, looks at points which are at least
distance \( \TauValue \) apart, while the second, in
Section \ref{outline:technical:pf|>small-scale}, handles points which are very close
(less than distance \( \TauValue \)).
Section \ref{outline:technical:pf|>combine} then combines the two regimes to establish the theorem.
\par
Before beginning the proof, let us introduce some notation and intermediate results.
\EDITX{As in \THEOREM \ref{thm:technical:raic:modified}, the notations of  
\(  \kO \defeq \min \{ 2k, n \}  \) and \(  \kOX \defeq \min \{ 4k, n \}  \)
will appear throughout these appendices.}
In addition, recall the definition of
\(
  \hA : \R^{n} \times \R^{n} \to \R
\),
\begin{gather}
\label{eqn:h_Z:restatement}
  \hA \left( \Vec{x}, \Vec{y} \right)
  =
  \sqrt{2\pi}
  \frac{1}{m}
  \MeasMat^{\T}
  \cdot
  \frac{1}{2}
  \left( \Sgn( \MeasMat \Vec{x} ) - \Sgn( \MeasMat \Vec{y} ) \right)
\\
\label{eqn:h_ZJ:restatement}
  \hA[\Coords{J}] \left( \Vec{x}, \Vec{y} \right)
  =
  \ThresholdSet{\supp( \Vec{x} ) \cup \supp( \Vec{y} ) \cup \Coords{J}}
  \left( \hA \left( \Vec{x}, \Vec{y} \right) \right)
\end{gather}
and further define
\begin{gather}
\label{eqn:g_Z}
  \gA \left( \Vec{x}, \Vec{y} \right)
  =
  \hA \left( \Vec{x}, \Vec{y} \right)
  -
  \left\langle
    \frac{
      \frac{\Vec{x}}{\left\| \Vec{x} \right\|_{2}}
      -
      \frac{\Vec{y}}{\left\| \Vec{y} \right\|_{2}}
    }{
      \left\|
      \frac{\Vec{x}}{\left\| \Vec{x} \right\|_{2}}
      -
      \frac{\Vec{y}}{\left\| \Vec{y} \right\|_{2}}
      \right\|_{2}
    },
    \hA \left( \Vec{x}, \Vec{y} \right)
  \right\rangle
  \frac{
    \frac{\Vec{x}}{\left\| \Vec{x} \right\|_{2}}
    -
    \frac{\Vec{y}}{\left\| \Vec{y} \right\|_{2}}
  }{
    \left\|
    \frac{\Vec{x}}{\left\| \Vec{x} \right\|_{2}}
    -
    \frac{\Vec{y}}{\left\| \Vec{y} \right\|_{2}}
    \right\|_{2}
  }
  \\ \nonumber
  -
  \left\langle
    \frac{
      \frac{\Vec{x}}{\left\| \Vec{x} \right\|_{2}}
      +
      \frac{\Vec{y}}{\left\| \Vec{y} \right\|_{2}}
    }{
      \left\|
      \frac{\Vec{x}}{\left\| \Vec{x} \right\|_{2}}
      +
      \frac{\Vec{y}}{\left\| \Vec{y} \right\|_{2}}
      \right\|_{2}
    },
    \hA \left( \Vec{x}, \Vec{y} \right)
  \right\rangle
  \frac{
    \frac{\Vec{x}}{\left\| \Vec{x} \right\|_{2}}
    +
    \frac{\Vec{y}}{\left\| \Vec{y} \right\|_{2}}
  }{
    \left\|
    \frac{\Vec{x}}{\left\| \Vec{x} \right\|_{2}}
    +
    \frac{\Vec{y}}{\left\| \Vec{y} \right\|_{2}}
    \right\|_{2}
  }
\\
\label{eqn:g_ZJ}
  \gA[\Coords{J}] \left( \Vec{x}, \Vec{y} \right)
  =
  \ThresholdSet{\supp( \Vec{x} ) \cup \supp( \Vec{y} ) \cup \Coords{J}}
  \left( \gA \left( \Vec{x}, \Vec{y} \right) \right)
\end{gather}
for
\(
  \Vec{x}, \Vec{y} \in \R^{n}
\)
and
\(
  \Coords{J} \subseteq [n]
\).
\EDIT{The following three lemmas, whose proofs are deferred to \APPENDIX \ref{outline:normal|>concentration-ineq-pfs}, are instrumental in deriving the RAIC.}
The first of the \EDIT{these} lemmas provides concentration inequalities related to these functions
\( \hA \) and \( \gA \).
\EDIT{The second lemma characterizes the number of measurements which lie in an angularly defined, \(  2  \)-dimensional subspace of \( \R^{n} \).}
\EDIT{The third lemma is a corollary to \cite[\COR 3.3]{oymak2015near}, which is stated in the proof of \LEMMA \ref{lemma:technical:concentration-ineq:lbe-local-deviations:union} found in \APPENDIX \ref{outline:normal|>concentration-ineq-pfs}. It is related to the stability of binary embeddings of nearby points via Gaussian measurements.}
%
%
It should be noted that \cite{friedlander2021nbiht} also used results from \cite{oymak2015near} in their analysis of BIHT.

\begin{lemma}
\label{lemma:technical:concentration-ineq:(u,v)}
Fix
\(
  \Variable{\ell}, \Variable{t} > 0
\),
\EDIT{\(  \Eta \defeq \sqrt{2\pi}  \),}
\EDITX{\(  \kO \defeq \min \{ 2k, n \}  \),}
\(
  \Vec{r} \in \{-1,0,1\}^{m}
\),
and
\(
  \Coords{J} \subseteq [n]
\),
such that
\(
  \left\| \Vec{r} \right\|_{0} = \Variable{\ell} > 0
\)
and
\(
  | \Coords{J} | \leq \KO
\).
Let
\(
  ( \Vec{u}, \Vec{v} ) \in \SparseSphereSubspace{k}{n} \times \SparseSphereSubspace{k}{n}
\)
be an ordered pair of real-valued unit vectors,
and define the random variables
\(
  \Vec{R}_{\Vec{u},\Vec{v}}
  =
  ( \Vec*{R}_{1;\Vec{u},\Vec{v}}, \dots, \Vec*{R}_{m;\Vec{u},\Vec{v}} )
  =
  \frac{1}{2}
  ( \sgn( \MeasMat \Vec{u} ) - \sgn( \MeasMat \Vec{v} ) )
\)
and
\(
  \RV{L}_{\Vec{u},\Vec{v}} = \left\| \Vec{R}_{\Vec{u},\Vec{v}} \right\|_{0}
\),
and suppose
\(
  \Vec{R}_{\Vec{u},\Vec{v}} = \Vec{r}
\)
and
\(
  \RV{L}_{\Vec{u},\Vec{v}} = \Variable{\ell}
\).
Then, conditioned on
\ORIG{\(
  \Vec{R}_{\Vec{u},\Vec{v}} = \Vec{r}
\)
and}%
\(
  \RV{L}_{\Vec{u},\Vec{v}} = \Variable{\ell}
\),
the following concentration inequalities hold.
\begin{gather}
\label{eqn:technical:concentration-ineq:(u,v):u-v}
  \Pr
  \left(
    \left|
      \left\langle
        \frac{\Vec{u} - \Vec{v}}{\left\| \Vec{u} - \Vec{v} \right\|_{2}},
        \frac{1}{\Eta}
        \hA[\Coords{J}]( \Vec{u}, \Vec{v} )
      \right\rangle
      -
      \sqrt{\frac{\pi}{2}}
      \frac{\Variable{\ell}}{m}
      \frac{\DistS{\Vec{u}}{\Vec{v}}}{\theta_{\Vec{u},\Vec{v}}}
    \right|
    \geq
    \frac{\Variable{\ell} \Variable{t}}{m}
    \middle|
    \RV{L}_{\Vec{u},\Vec{v}} = \Variable{\ell}
  \right)
  \leq
  2 e^{-\frac{1}{2} \Variable{\ell} \Variable{t}^{2}}
\\
\label{eqn:technical:concentration-ineq:(u,v):u+v}
  \Pr
  \left(
    \left|
      \left\langle
        \frac{\Vec{u} + \Vec{v}}{\left\| \Vec{u} + \Vec{v} \right\|_{2}},
        \frac{1}{\Eta}
        \hA[\Coords{J}]( \Vec{u}, \Vec{v} )
      \right\rangle
    \right|
    \geq
    \frac{\Variable{\ell} \Variable{t}}{m}
    \middle|
    \RV{L}_{\Vec{u},\Vec{v}} = \Variable{\ell}
  \right)
  \leq
  2 e^{-\frac{1}{2} \Variable{\ell} \Variable{t}^{2}}
\\
\label{eqn:technical:concentration-ineq:(u,v):g_Z}
  \Pr
  \left(
    \left\|
      \frac{1}{\Eta}
      \gA[\Coords{J}]( \Vec{u}, \Vec{v} )
    \right\|_{2}
    \geq
    \frac{2\sqrt{\KO \Variable{\ell}}}{m}
    +
    \frac{\Variable{\ell} \Variable{t}}{m}
    \middle|
    \RV{L}_{\Vec{u},\Vec{v}} = \Variable{\ell}
  \right)
  \leq
  2 e^{-\frac{1}{8} \Variable{\ell} \Variable{t}^{2}}
\end{gather}
\end{lemma}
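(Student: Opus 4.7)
The plan is to introduce the orthonormal basis $\Vec{e}_{1} = (\Vec{u}-\Vec{v})/\|\Vec{u}-\Vec{v}\|_{2}$, $\Vec{e}_{2} = (\Vec{u}+\Vec{v})/\|\Vec{u}+\Vec{v}\|_{2}$ of $\Span(\Vec{u},\Vec{v})$ and decompose each measurement row as $\MeasVec^{(i)} = a_{1,i}\Vec{e}_{1} + a_{2,i}\Vec{e}_{2} + \Vec{P}^{(i)}$, where $\Vec{P}^{(i)}$ is the projection of $\MeasVec^{(i)}$ onto $\Span(\Vec{u},\Vec{v})^{\perp}$. By rotational invariance of the standard Gaussian, $(a_{1,i},a_{2,i})$ and $\Vec{P}^{(i)}$ are independent, with $(a_{1,i},a_{2,i}) \sim \N(\Vec{0},\Mat{I}_{2 \times 2})$ and $\Vec{P}^{(i)}$ standard Gaussian on $\Span(\Vec{u},\Vec{v})^{\perp}$. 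The event $R_{i}\ne 0$ depends only on $(a_{1,i},a_{2,i})$, namely that this 2D projection lies in one of the two wedges of half-angle $\theta_{\Vec{u},\Vec{v}}/2$ about $\pm\Vec{e}_{1}$, and by construction of those wedges $R_{i}\, a_{1,i} = |a_{1,i}|$ whenever $r_{i}\ne 0$ while $R_{i}\, a_{2,i}$ is conditionally symmetric about zero. Because the rows of $\MeasMat$ are i.i.d., conditioning on the full pattern $\Vec{R}_{\Vec{u},\Vec{v}} = \Vec{r}$ acts independently across rows and leaves the $\Vec{P}^{(i)}$ i.i.d.\ standard Gaussian on $\Span(\Vec{u},\Vec{v})^{\perp}$, independent of the 2D components.

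For the first inequality, since $\supp(\Vec{e}_{1}) \subseteq \supp(\Vec{u}) \cup \supp(\Vec{v}) \subseteq \supp(\Vec{u}) \cup \supp(\Vec{v}) \cup \Coords{J}$, the thresholding does not change the inner product with $\Vec{e}_{1}$, so $\langle \Vec{e}_{1}, \tfrac{1}{\Eta}\hA[\Coords{J}](\Vec{u},\Vec{v})\rangle = \tfrac{1}{m}\sum_{i : r_{i}\ne 0}|a_{1,i}|$. A direct polar-coordinate integration with $r$ Rayleigh and $\phi$ uniform on $(-\theta_{\Vec{u},\Vec{v}}/2, \theta_{\Vec{u},\Vec{v}}/2)$ yields $\mathbb{E}[|a_{1,i}| \mid R_{i}\ne 0] = \sqrt{\pi/2}\cdot \frac{2\sin(\theta_{\Vec{u},\Vec{v}}/2)}{\theta_{\Vec{u},\Vec{v}}} = \sqrt{\pi/2}\,\DistS{\Vec{u}}{\Vec{v}}/\theta_{\Vec{u},\Vec{v}}$, which is precisely the centering appearing in the claim. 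The summands are conditionally i.i.d.\ and subgaussian (dominated by the Rayleigh radius $r_{i}$), so a Hoeffding-type bound on the centered sum of $\ell$ terms produces the tail $2 e^{-\ell t^{2}/2}$. The second inequality is parallel: $\langle \Vec{e}_{2}, \tfrac{1}{\Eta}\hA[\Coords{J}](\Vec{u},\Vec{v})\rangle = \tfrac{1}{m}\sum_{i : r_{i}\ne 0} R_{i} a_{2,i}$ has conditional mean zero by the $\phi \leftrightarrow -\phi$ symmetry of the wedge density, and the same subgaussian Hoeffding argument applies.

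For the third inequality, the subtraction of the $\Vec{e}_{1},\Vec{e}_{2}$ coordinates built into the definition of $\gA$ gives $\tfrac{1}{\Eta}\gA(\Vec{u},\Vec{v}) = \tfrac{1}{m}\sum_{i:r_{i}\ne 0} R_{i}\Vec{P}^{(i)}$, and applying $\ThresholdSet{\Coords{S}}$ with $\Coords{S} := \supp(\Vec{u}) \cup \supp(\Vec{v}) \cup \Coords{J}$ confines the summands to the subspace $\Coords{V} := \ThresholdSet{\Coords{S}}(\R^{n}) \cap \Span(\Vec{e}_{1},\Vec{e}_{2})^{\perp}$, whose dimension is at most $|\Coords{S}| - 2 \leq 4k - 2$. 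Since $\Vec{P}^{(i)}$ is independent of $R_{i}\in\{\pm 1\}$ and standard Gaussian on $\Span(\Vec{u},\Vec{v})^{\perp}$, each $R_{i}\ThresholdSet{\Coords{S}}(\Vec{P}^{(i)})$ is standard Gaussian on $\Coords{V}$; the sum is $\N(\Vec{0},\ell\Mat{I}_{\Coords{V}})$ with squared norm distributed as $\ell\chi^{2}_{\dim\Coords{V}}$. The Laurent--Massart inequality $\sqrt{\chi^{2}_{d}} \leq \sqrt{d} + \sqrt{2x}$ with probability at least $1 - e^{-x}$, applied with $d \leq 4k$ and $x = \ell t^{2}/8$, then yields $\|\tfrac{1}{\Eta}\gA[\Coords{J}](\Vec{u},\Vec{v})\|_{2} \leq \tfrac{2\sqrt{2k\ell}}{m} + \tfrac{\ell t}{m}$ with probability at least $1 - 2e^{-\ell t^{2}/8}$.

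The main obstacle is the bookkeeping of constants rather than anything conceptual: the polar-coordinate computation must produce exactly the factor $\sqrt{\pi/2}$; the subgaussian analysis on the centered Rayleigh-type summands must be tight enough to yield exponent $1/2$ (rather than the looser constant obtained by naively dominating by $r_{i}$); and one must verify that conditioning on the full pattern $\Vec{R}_{\Vec{u},\Vec{v}} = \Vec{r}$, not merely on each $R_{i}$ separately, preserves both the independence of the $\Vec{P}^{(i)}$ across $i$ and the stated conditional laws of the 2D components---both of which follow from the rows of $\MeasMat$ being i.i.d.\ together with the factorization induced by rotational invariance.
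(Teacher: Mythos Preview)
Your proposal is correct and follows essentially the same route as the paper: the same orthogonal decomposition into $\Vec{e}_{1},\Vec{e}_{2}$ and $\Span(\Vec{u},\Vec{v})^{\perp}$, the same identification of the conditional law as a wedge in the 2D Gaussian plane (yielding the mean $\sqrt{\pi/2}\,\DistS{\Vec{u}}{\Vec{v}}/\theta_{\Vec{u},\Vec{v}}$ for the $\Vec{e}_{1}$ projection and mean zero for the $\Vec{e}_{2}$ projection), and the same Gaussian-norm concentration for the orthogonal part. The paper resolves exactly the obstacle you flag---obtaining the sharp exponent $1/2$---by computing the explicit conditional density (its Lemma~\ref{lemma:normal:distribution:proj_u-v}) and then proving the MGF bound $\psi_{X_i-\mu}(s)\le e^{s^2/2}$ via a monotonicity argument (Lemma~\ref{lemma:appendix:monotonicity-q(s)}); for the third inequality the paper splits $\Coords{J}$ into the parts inside and outside $\supp(\Vec{u})\cup\supp(\Vec{v})$ rather than treating the full $\le 4k$-dimensional subspace at once, but your unified treatment works just as well.
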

%
\begin{lemma}
\label{lemma:technical:concentration-ineq:counting}
Fix
\(
  \Variable{t} \in (0,1)
\).
%
Let
\(
  \Vec{u}, \Vec{v} \in \R^{n}
\),
and define the random variable
\(  \RV{L}_{\Vec{u},\Vec{v}} = \left\|   \frac{1}{2} ( \sgn( \MeasMat \Vec{u} ) - \sgn( \MeasMat \Vec{v} ) ) \right\|_{0}  \),
as in \LEMMA \ref{lemma:technical:concentration-ineq:(u,v)}.
Then,
\begin{gather}
  \mu_{\RV{L}_{\Vec{u},\Vec{v}}}
  =
  \E \left[ \RV{L}_{\Vec{u},\Vec{v}} \right]
  =
  \frac{\theta_{\Vec{u},\Vec{v}} m}{\pi}
\end{gather}
and
\begin{gather}
  \Pr
  \left(
    \RV{L}_{\Vec{u},\Vec{v}}
    \notin
    \left[
      (1 - \Variable{t}) \mu_{\RV{L}_{\Vec{u},\Vec{v}}},
      (1 + \Variable{t}) \mu_{\RV{L}_{\Vec{u},\Vec{v}}}
    \right]
  \right)
  \leq
  2e^{-\frac{1}{3} \mu_{\RV{L}_{\Vec{u},\Vec{v}}} \Variable{t}^{2}}
.\end{gather}
\end{lemma}
%
\begin{lemma}[{Corollary to \cite[\COR 3.3]{oymak2015near}}]
\label{lemma:technical:concentration-ineq:lbe-local-deviations:union}
Fix \(  \UnivConstD > 0  \) as the universal constant specified in \EQN \eqref{eqn:univConstants}, and let \(  \DDeltaX \in (0,1)  \) \EDITX{and \(  \kO \defeq \min \{ \KO, n \}  \)}.
Let \(  \MeasMat \in \R^{m \times n}  \) be a standard Gaussian matrix with i.i.d. entries.
If
\(  m \geq \frac{\UnivConstD k}{\DDeltaX} \Log( \frac{1}{\DDeltaX} )  \),
then with probability at least
\(  1 - 2 \binom{n}{\kO} e^{-\frac{1}{64} \DDeltaX m}  \),
 uniformly for all \(  \Vec{u}, \Vec{v} \in \SparseSphereSubspace{k}{n}  \) such that
\(  \| \Vec{u} - \Vec{v} \|_{2} \leq \frac{\DDeltaX}{\UnivConstD \sqrt{\Log( 1/\DDeltaX )}}  \),
the random variable \(  \RV{L}_{\Vec{u},\Vec{v}} = \left\|   \frac{1}{2} ( \sgn( \MeasMat \Vec{u} ) - \sgn( \MeasMat \Vec{v} ) ) \right\|_{0}  \), defined as in \LEMMA \ref{lemma:technical:concentration-ineq:(u,v)}, satisfies \(  \RV{L}_{\Vec{u},\Vec{v}} \leq \DDeltaX m  \).
\end{lemma}

Lastly, for the purposes of the proof, a \( \Tau \)-net
\(
  \Net{\Tau} \subset \SparseSphereSubspace{k}{n}
\)
over the set of \( k \)-sparse, real-valued unit vectors is designed as follows, where
\(
  \Tau \defeq \TauValue
\)
is defined to lighten the notation.
For each
\(
  \Coords{J} \subseteq [n]
\),
\(
  | \Coords{J} | \leq k
\),
let
\(
  \Net{\Tau;\Coords{J}} \subset \SparseSphereSubspace{k}{n}
\)
be a \( \Tau \)-net over the set
\(
  \{ \Vec{x} \in \SparseSphereSubspace{k}{n} : \supp( \Vec{x} ) = \Coords{J} \}
\).
Then, construct the \( \Tau \)-net
\(
  \Net{\Tau} \subset \SparseSphereSubspace{k}{n}
\)
as their union,
\(
  \Net{\Tau}
  =
  \bigcup_{\Coords{J} \subseteq [n] : | \Coords{J} | \leq k}
  \Net{\Tau;\Coords{J}}
\).
Note that
\(
  | \Net{\Tau} |
  \leq \binom{n}{k} \left( \frac{3}{\Tau} \right)^{k} 2^{k}
  = \UBNet
\)
and
\(
  | \Net{\Tau} \times \Net{\Tau} |
  \leq \binom{n}{k}^{2} \left( \frac{3}{\Tau} \right)^{\KO} 2^{\KO}
  = \UBNetNet
\).
This construction is consistent throughout
Sections \ref{outline:technical:pf|>large-scale}-\ref{outline:technical:pf|>combine}.


\subsection{``Large distances'' regime}
\label{outline:technical:pf|>large-scale}

The first regime considers the RAIC for ordered pairs of points in the \( \Tau \)-net which are
at least distance \( \Tau \) from each other.
Lemma \ref{lemma:technical:raic:large-scale} formalizes a uniform result in this regime.

\begin{lemma}
\label{lemma:technical:raic:large-scale}
Let
\(
  \UnivConstb_{1} > 0
\)
be a universal constant.
\EDIT{Define \(  \GammaX, m > 0  \) as in \THEOREM \ref{thm:technical:raic:modified}.}
Fix
\(
  \DDelta, \Rho_{1} \in (0,1)
\),
where
\( \Rho_{1} \defeq \frac{\Rho}{2} \),
and let
\(
  \Tau \defeq \TauValue
\).
%
%
Uniformly with probability at least
\(
  1 - \Rho_{1}
\),
\begin{gather}
\label{eqn:technical:raic:large-scale}
  \left\| ( \Vec{u} - \Vec{v} ) - \hA[\Coords{J}]( \Vec{u}, \Vec{v} ) \right\|_{2}
  \leq
  \UnivConstb_{1}
  \sqrt{\DDelta \DistS{\Vec{u}}{\Vec{v}}}
\end{gather}
for all
\(
  ( \Vec{u}, \Vec{v} )
  \in
  \Net{\Tau} \times \Net{\Tau}
\)
satisfying
\(
  \DistS{\Vec{u}}{\Vec{v}} \geq \Tau
\),
and
\(
  \Coords{J} \subseteq [n]
\),
\(
  | \Coords{J} | \leq \KO
\).
\end{lemma}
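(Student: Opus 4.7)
The plan is to exploit the orthogonal decomposition introduced in Section \ref{outline:intro:overview-techniques|>raic|>large-scale} together with the conditional concentration inequalities of Lemma \ref{lemma:technical:concentration-ineq:(u,v)} and the counting bound of Lemma \ref{lemma:technical:concentration-ineq:counting}. Throughout, I will write $\hat{d}_{-}=\frac{\Vec{u}-\Vec{v}}{\|\Vec{u}-\Vec{v}\|_{2}}$ and $\hat{d}_{+}=\frac{\Vec{u}+\Vec{v}}{\|\Vec{u}+\Vec{v}\|_{2}}$; note that both vectors are supported on $\supp(\Vec{u})\cup\supp(\Vec{v})$, so thresholding by $\supp(\Vec{u})\cup\supp(\Vec{v})\cup\Coords{J}$ leaves their $\hA$-coordinates untouched. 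Since $\Vec{u},\Vec{v}\in\SparseSphereSubspace{k}{n}$, we also have $\|\Vec{u}-\Vec{v}\|_{2}=\DistS{\Vec{u}}{\Vec{v}}$.

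First I would fix an arbitrary pair $(\Vec{u},\Vec{v})\in\Net{\Tau}\times\Net{\Tau}$ with $\DistS{\Vec{u}}{\Vec{v}}\geq\Tau$ and an arbitrary $\Coords{J}\subseteq[n]$ with $|\Coords{J}|\leq 2k$, and decompose
\[
  (\Vec{u}-\Vec{v}) - \hA[\Coords{J}](\Vec{u},\Vec{v})
  =
  \left(\|\Vec{u}-\Vec{v}\|_{2} - \langle\hat{d}_{-},\hA(\Vec{u},\Vec{v})\rangle\right)\hat{d}_{-}
  - \langle\hat{d}_{+},\hA(\Vec{u},\Vec{v})\rangle\hat{d}_{+}
  - \gA[\Coords{J}](\Vec{u},\Vec{v}),
\]
which by orthogonality gives a squared-norm that is the sum of the three squared components. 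Conditioning on $\RV{L}_{\Vec{u},\Vec{v}}=\ell$, Lemma \ref{lemma:technical:concentration-ineq:(u,v)} shows that $\langle\hat{d}_{-},\frac{1}{\Eta}\hA\rangle$ concentrates at $\sqrt{\pi/2}\cdot\frac{\ell}{m}\cdot\frac{\DistS{\Vec{u}}{\Vec{v}}}{\theta_{\Vec{u},\Vec{v}}}$, that $\langle\hat{d}_{+},\frac{1}{\Eta}\hA\rangle$ concentrates at $0$, and that $\|\frac{1}{\Eta}\gA[\Coords{J}]\|_{2}$ is at most $\frac{2\sqrt{2k\ell}}{m}+\frac{\ell t}{m}$, all with Gaussian tails in $\sqrt{\ell}\,t$. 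Plugging in $\Eta=\sqrt{2\pi}$ turns the expected value of $\langle\hat{d}_{-},\hA\rangle$ into $\frac{\ell\pi}{\theta_{\Vec{u},\Vec{v}}m}\cdot\DistS{\Vec{u}}{\Vec{v}}$.

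The second step is to pin down $\RV{L}_{\Vec{u},\Vec{v}}$ to the interval $[(1-s)\theta_{\Vec{u},\Vec{v}}m/\pi,(1+s)\theta_{\Vec{u},\Vec{v}}m/\pi]$ for some small constant $s\in(0,1)$, using a Chernoff bound on $\RV{L}_{\Vec{u},\Vec{v}}\sim\Bin(m,\theta_{\Vec{u},\Vec{v}}/\pi)$ (which is where the hypothesis $\DistS{\Vec{u}}{\Vec{v}}\geq\Tau$ is used to keep the mean $\geq\theta_{\Tau}m/\pi\gtrsim\Tau m$, large enough for the Chernoff bound to beat the net sizes). Conditional on this event, the leading term $\|\Vec{u}-\Vec{v}\|_{2}-\langle\hat{d}_{-},\hA\rangle$ splits as a deterministic bias (of order $s\,\DistS{\Vec{u}}{\Vec{v}}$) plus a fluctuation of order $\Eta\ell t/m\lesssim\theta_{\Vec{u},\Vec{v}}t\lesssim\DistS{\Vec{u}}{\Vec{v}}\,t$; the $\hat{d}_{+}$ term contributes at most $\DistS{\Vec{u}}{\Vec{v}}\,t$; and the $\gA[\Coords{J}]$ term contributes at most $\sqrt{k\theta_{\Vec{u},\Vec{v}}/m}+\theta_{\Vec{u},\Vec{v}}t$. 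Choosing $t=C_{0}\sqrt{\DDelta/\DistS{\Vec{u}}{\Vec{v}}}$ for a small constant $C_{0}$ makes each piece bounded by a constant multiple of $\sqrt{\DDelta\DistS{\Vec{u}}{\Vec{v}}}$, provided the sample count $m\geq\frac{C_{1}k}{\DDelta}$ absorbs the $\sqrt{k\theta/m}$ term; this is exactly the regime of the hypothesis on $m$ in \eqref{eqn:technical:raic:modified:m}.

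The final step is a union bound over $\Net{\Tau}\times\Net{\Tau}$ and the $\binom{n}{2k}2^{2k}$-many $\Coords{J}$, i.e., at most $\UBNetNet\cdot\UBCoordSubsets$ triples. Each triple fails with probability at most $\BigO(e^{-c\ell t^{2}})=\BigO(e^{-c'\DDelta m})$, and the choice of $m$ in \eqref{eqn:technical:raic:modified:m} is calibrated so that this dominates the logarithm of the net size plus $\log(1/\Rho_{1})$. The main obstacle I foresee is the bookkeeping in step two: one has to set the concentration slack $t$, the Chernoff slack $s$ for $\RV{L}_{\Vec{u},\Vec{v}}$, and the measurement count $m$ so that all three orthogonal components simultaneously yield the same $\sqrt{\DDelta\DistS{\Vec{u}}{\Vec{v}}}$ scaling \emph{without} introducing a spurious $\sqrt{k\log(n/k)/m}$-type term that would ruin the optimal sample complexity in~\eqref{eqn:technical:raic:modified:m}. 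The crucial leverage is that conditioning on $\RV{L}_{\Vec{u},\Vec{v}}\approx\theta_{\Vec{u},\Vec{v}}m/\pi$ (rather than on the raw $m$) shrinks the effective noise by a factor of $\sqrt{\theta_{\Vec{u},\Vec{v}}}$, which is precisely what converts the $\DDelta$ inside the square root into $\DDelta\DistS{\Vec{u}}{\Vec{v}}$.
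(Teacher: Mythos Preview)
Your proposal is correct and follows essentially the same route as the paper: orthogonal decomposition along $\hat d_-,\hat d_+,\gA[\Coords J]$, the conditional concentration inequalities of Lemma~\ref{lemma:technical:concentration-ineq:(u,v)} given $\RV L_{\Vec u,\Vec v}=\ell$, the Chernoff-type control of $\RV L_{\Vec u,\Vec v}$ around $\theta_{\Vec u,\Vec v}m/\pi$ from Lemma~\ref{lemma:technical:concentration-ineq:counting}, and a final union bound over $\Net{\Tau}\times\Net{\Tau}$ and all $\Coords J$. The only cosmetic differences are that the paper bounds the three components via the triangle inequality rather than Pythagoras, and it chooses the slacks $s_{\Vec u,\Vec v},t_{\Vec u,\Vec v}$ as explicit functions of $\theta_{\Vec u,\Vec v}$ and the net sizes (rather than your scale-free $t\asymp\sqrt{\DDelta/\DistS{\Vec u}{\Vec v}}$), but these lead to the same $\sqrt{\DDelta\,\DistS{\Vec u}{\Vec v}}$ bound and the same constant $\UnivConstb_1=\sqrt{3\pi/\UnivConstB}\,(1+16\sqrt2/3)$.
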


\BEGINEDIT
Before proving \LEMMA \ref{lemma:technical:raic:large-scale}, we introduce the following fact which bounds the ratio of \(  \theta_{\Vec{u},\Vec{v}}/\DistS{\Vec{u}}{\Vec{v}}  \).

\BEGINEDITX
\begin{fact}
\label{fact:technical:theta/d}
For \(  \Vec{u}, \Vec{v} \in \Sphere{n}  \),
\begin{gather}
  \DistS{\Vec{u}}{\Vec{v}}
  \leq
  \theta_{\Vec{u},\Vec{v}}
  \leq
  \frac{\pi}{2} \DistS{\Vec{u}}{\Vec{v}}
.\end{gather}
\end{fact}

\begin{proof}
{\FACT \ref{fact:technical:theta/d}}
Since, \(  \sin(x) \leq x  \)
for \(  x \geq 0  \), it follows that \(  \DistS{\Vec{u}}{\Vec{v}} = \| \Vec{u} - \Vec{v} \|_{2} = \sqrt{2 ( 1 - \cos( \theta_{\Vec{u},\Vec{v}} ) )}  = 2 \sin( \frac{\theta_{\Vec{u},\Vec{v}}}{2} )  \leq    \theta_{\Vec{u},\Vec{v}}\).
Additionally, using basic calculus, it can be shown that on the interval \(  x \in [0,\pi/2]  \),
\(  \frac{\sin(x)}{x}  \)
decreases, implying (since \(  \theta_{\Vec{u},\Vec{v}} \in [0,\pi]\)),

\begin{align*}
  \frac{\DistS{\Vec{u}}{\Vec{v}}}{\theta_{\Vec{u},\Vec{v}}}
  =
  \frac{2 \sin( \frac{\theta_{\Vec{u},\Vec{v}}}{2} )}{\theta_{\Vec{u},\Vec{v}}}
  =
  \frac{\sin( \frac{\theta_{\Vec{u},\Vec{v}}}{2} )}{\frac{\theta_{\Vec{u},\Vec{v}}}{2}}
  \geq \frac{\sin(\frac{\pi}{2})}{\frac{\pi}{2}} =
  \frac{2}{\pi}
.\end{align*}
%
\end{proof}
\ENDEDITX

With this, we are ready to prove Lemma \ref{lemma:technical:raic:large-scale}.
\ENDEDIT

\begin{proof}
{Lemma \ref{lemma:technical:raic:large-scale}}
\label{pf:lemma:technical:raic:large-scale}
Let
\(
  ( \Vec{u}, \Vec{v} )
  \in
  \Net{\Tau} \times \Net{\Tau}
\)
be an arbitrary ordered pair of points in the \( \Tau \)-net whose distance is at least
\(
  \DistS{\Vec{u}}{\Vec{v}} \geq \Tau
\).
Similar to the approach by \cite{friedlander2021nbiht} and seen in \cite{plan2016generalized},
the function \( \hA[\Coords{J}] \) can be orthogonally decomposed as
\begin{gather}
\label{pf:lemma:technical:raic:large-scale:eqn:1}
  \hA[\Coords{J}]( \Vec{u}, \Vec{v} )
  =
  \left\langle
  \frac{\Vec{u} - \Vec{v}}{\left\| \Vec{u} - \Vec{v} \right\|_{2}},
  \hA[\Coords{J}]( \Vec{u}, \Vec{v} )
  \right\rangle
  \frac{\Vec{u} - \Vec{v}}{\left\| \Vec{u} - \Vec{v} \right\|_{2}}
  +
  \left\langle
  \frac{\Vec{u} + \Vec{v}}{\left\| \Vec{u} + \Vec{v} \right\|_{2}},
  \hA[\Coords{J}]( \Vec{u}, \Vec{v} )
  \right\rangle
  \frac{\Vec{u} + \Vec{v}}{\left\| \Vec{u} + \Vec{v} \right\|_{2}}
  +
  \gA[\Coords{J}]( \Vec{u}, \Vec{v} )
\end{gather}
%
Combining \eqref{pf:lemma:technical:raic:large-scale:eqn:1} with the triangle inequality yields
\begin{subequations}
\begin{align}
\label{pf:lemma:technical:raic:large-scale:eqn:2}
  &
  \left\| ( \Vec{u} - \Vec{v} ) - \hA[\Coords{J}]( \Vec{u}, \Vec{v} ) \right\|_{2}
  \\
  &=
  \left\|
    ( \Vec{u} - \Vec{v} )
    -
    \left(
      \left\langle
      \frac{\Vec{u} - \Vec{v}}{\left\| \Vec{u} - \Vec{v} \right\|_{2}},
      \hA[\Coords{J}]( \Vec{u}, \Vec{v} )
      \right\rangle
      \frac{\Vec{u} - \Vec{v}}{\left\| \Vec{u} - \Vec{v} \right\|_{2}}
      +
      \left\langle
      \frac{\Vec{u} + \Vec{v}}{\left\| \Vec{u} + \Vec{v} \right\|_{2}},
      \hA[\Coords{J}]( \Vec{u}, \Vec{v} )
      \right\rangle
      \frac{\Vec{u} + \Vec{v}}{\left\| \Vec{u} + \Vec{v} \right\|_{2}}
      +
      \gA[\Coords{J}]( \Vec{u}, \Vec{v} )
    \right)
  \right\|_{2}
  \\
  &\leq
  \left\|
    ( \Vec{u} - \Vec{v} )
    -
    \left\langle
    \frac{\Vec{u} - \Vec{v}}{\left\| \Vec{u} - \Vec{v} \right\|_{2}},
    \hA[\Coords{J}]( \Vec{u}, \Vec{v} )
    \right\rangle
    \frac{\Vec{u} - \Vec{v}}{\left\| \Vec{u} - \Vec{v} \right\|_{2}}
  \right\|_{2}
  +
  \left\|
    \left\langle
    \frac{\Vec{u} + \Vec{v}}{\left\| \Vec{u} + \Vec{v} \right\|_{2}},
    \hA[\Coords{J}]( \Vec{u}, \Vec{v} )
    \right\rangle
    \frac{\Vec{u} + \Vec{v}}{\left\| \Vec{u} + \Vec{v} \right\|_{2}}
  \right\|_{2}
  +
  \left\|
    \gA[\Coords{J}]( \Vec{u}, \Vec{v} )
  \right\|_{2}
  \\ \nonumber
  &\dCmt \Text{by the triangle inequality}
  \\
  &=
  \left|
    \left\| \Vec{u} - \Vec{v} \right\|_{2}
    -
    \left\langle
    \frac{\Vec{u} - \Vec{v}}{\left\| \Vec{u} - \Vec{v} \right\|_{2}},
    \hA[\Coords{J}]( \Vec{u}, \Vec{v} )
    \right\rangle
  \right|
  \left\|
    \frac{\Vec{u} - \Vec{v}}{\left\| \Vec{u} - \Vec{v} \right\|_{2}}
  \right\|_{2}
  +
  \left|
    \left\langle
    \frac{\Vec{u} + \Vec{v}}{\left\| \Vec{u} + \Vec{v} \right\|_{2}},
    \hA[\Coords{J}]( \Vec{u}, \Vec{v} )
    \right\rangle
  \right|
  \left\|
    \frac{\Vec{u} + \Vec{v}}{\left\| \Vec{u} + \Vec{v} \right\|_{2}}
  \right\|_{2}
  +
  \left\|
    \gA[\Coords{J}]( \Vec{u}, \Vec{v} )
  \right\|_{2}
  \\
  &=
  \left|
    \left\| \Vec{u} - \Vec{v} \right\|_{2}
    -
    \left\langle
    \frac{\Vec{u} - \Vec{v}}{\left\| \Vec{u} - \Vec{v} \right\|_{2}},
    \hA[\Coords{J}]( \Vec{u}, \Vec{v} )
    \right\rangle
  \right|
  +
  \left|
    \left\langle
    \frac{\Vec{u} + \Vec{v}}{\left\| \Vec{u} + \Vec{v} \right\|_{2}},
    \hA[\Coords{J}]( \Vec{u}, \Vec{v} )
    \right\rangle
  \right|
  +
  \left\|
    \gA[\Coords{J}]( \Vec{u}, \Vec{v} )
  \right\|_{2}
\end{align}
\end{subequations}
%
Lemma \ref{lemma:technical:concentration-ineq:(u,v)} provides the following concentration
inequalities.
\begin{gather}
\label{pf:lemma:technical:raic:large-scale:eqn:3:1}
  \Pr
  \left(
    \left|
      \left\langle
        \frac{\Vec{u} - \Vec{v}}{\left\| \Vec{u} - \Vec{v} \right\|_{2}},
        \frac{1}{\Eta}
        \hA[\Coords{J}]( \Vec{u}, \Vec{v} )
      \right\rangle
      -
      \sqrt{\frac{\pi}{2}}
      \frac{\Variable{\ell}_{\Vec{u},\Vec{v}}}{m}
      \frac{\DistS{\Vec{u}}{\Vec{v}}}{\theta_{\Vec{u},\Vec{v}}}
    \right|
    >
    \frac{\Variable{\ell}_{\Vec{u},\Vec{v}} \Variable{t}_{\Vec{u},\Vec{v}}}{m}
    \middle|
    \RV{L}_{\Vec{u},\Vec{v}} = \Variable{\ell}_{\Vec{u},\Vec{v}}
  \right)
  \leq
  2 e^{-\frac{1}{2} \Variable{\ell}_{\Vec{u},\Vec{v}} \Variable{t}_{\Vec{u},\Vec{v}}^{2}}
\\
\label{pf:lemma:technical:raic:large-scale:eqn:3:2}
  \Pr
  \left(
    \left|
      \left\langle
        \frac{\Vec{u} + \Vec{v}}{\left\| \Vec{u} + \Vec{v} \right\|_{2}},
        \frac{1}{\Eta}
        \hA[\Coords{J}]( \Vec{u}, \Vec{v} )
      \right\rangle
    \right|
    >
    \frac{\Variable{\ell}_{\Vec{u},\Vec{v}} \Variable{t}_{\Vec{u},\Vec{v}}}{m}
    \middle|
    \RV{L}_{\Vec{u},\Vec{v}} = \Variable{\ell}_{\Vec{u},\Vec{v}}
  \right)
  \leq
  2 e^{-\frac{1}{2} \Variable{\ell}_{\Vec{u},\Vec{v}} \Variable{t}_{\Vec{u},\Vec{v}}^{2}}
\\
\label{pf:lemma:technical:raic:large-scale:eqn:3:3}
  \Pr
  \left(
    \left\|
      \frac{1}{\Eta}
      \gA[\Coords{J}]( \Vec{u}, \Vec{v} )
    \right\|_{2}
    >
    \frac{2\sqrt{\KO \Variable{\ell}_{\Vec{u},\Vec{v}}}}{m}
    +
    \frac{\Variable{\ell}_{\Vec{u},\Vec{v}} \Variable{t}_{\Vec{u},\Vec{v}}}{m}
  \middle|
    \RV{L}_{\Vec{u},\Vec{v}} = \Variable{\ell}_{\Vec{u},\Vec{v}}
  \right)
  \leq
  2 e^{-\frac{1}{8} \Variable{\ell}_{\Vec{u},\Vec{v}} \Variable{t}_{\Vec{u},\Vec{v}}^{2}}
\end{gather}
\ORIG{where \( \Vec{R}_{\Vec{u},\Vec{v}} \) and \( \RV{L}_{\Vec{u},\Vec{v}} \) are random variables
defined as
\(
  \Vec{R}_{\Vec{u},\Vec{v}}
  =
  ( \Vec*{R}_{1;\Vec{u},\Vec{v}}, \dots, \Vec*{R}_{m;\Vec{u},\Vec{v}} )
  =
  \frac{1}{2}
  \left( \sgn( \MeasMat \Vec{u} ) - \sgn( \MeasMat \Vec{v} ) \right)
\)
and
\(
  \RV{L}_{\Vec{u},\Vec{v}} = \left\| \Vec{R}_{\Vec{u},\Vec{v}} \right\|_{0}
\),
and
\(
  \Vec{r} \in \{-1,0,1\}^{m}
\),
\(
  \Variable{\ell}_{\Vec{u},\Vec{v}} \in [m]
\).}%
\EDIT{where \(  \RV{L}_{\Vec{u},\Vec{v}} = \| \frac{1}{2} \left( \sgn( \MeasMat \Vec{u} ) - \sgn( \MeasMat \Vec{v} ) \right) \|_{0}  \) and \(  \Variable{\ell}_{\Vec{u},\Vec{v}} \in [m]  \).}
\EQN \eqref{pf:lemma:technical:raic:large-scale:eqn:3:1} further implies
\begin{gather}
\label{pf:lemma:technical:raic:large-scale:eqn:3b:1}
\nonumber
  \Pr
  \left(
    \left|
      \left(
        \left\| \Vec{u} - \Vec{v} \right\|_{2}
        -
        \left\langle
          \frac{\Vec{u} - \Vec{v}}{\left\| \Vec{u} - \Vec{v} \right\|_{2}},
          \hA[\Coords{J}]( \Vec{u}, \Vec{v} )
        \right\rangle
      \right)
      -
      \left(
        \left\| \Vec{u} - \Vec{v} \right\|_{2}
        -
        \sqrt{\frac{\pi}{2}}
        \frac{\Eta \Variable{\ell}_{\Vec{u},\Vec{v}}}{m}
        \frac{\DistS{\Vec{u}}{\Vec{v}}}{\theta_{\Vec{u},\Vec{v}}}
      \right)
    \right|
    >
    \frac{\Eta \Variable{\ell}_{\Vec{u},\Vec{v}} \Variable{t}_{\Vec{u},\Vec{v}}}{m}
    \middle|
    \RV{L}_{\Vec{u},\Vec{v}} = \Variable{\ell}_{\Vec{u},\Vec{v}}
  \right)
  \\
  \leq
  2 e^{-\frac{1}{2} \Variable{\ell}_{\Vec{u},\Vec{v}} \Variable{t}_{\Vec{u},\Vec{v}}^{2}}
\end{gather}
while \EQNS \eqref{pf:lemma:technical:raic:large-scale:eqn:3:2} and
\eqref{pf:lemma:technical:raic:large-scale:eqn:3:3} can be written
\begin{gather}
\label{pf:lemma:technical:raic:large-scale:eqn:3b:2}
  \Pr
  \left(
    \left|
      \left\langle
        \frac{\Vec{u} + \Vec{v}}{\left\| \Vec{u} + \Vec{v} \right\|_{2}},
        \hA[\Coords{J}]( \Vec{u}, \Vec{v} )
      \right\rangle
    \right|
    >
    \frac{\Eta \Variable{\ell}_{\Vec{u},\Vec{v}} \Variable{t}_{\Vec{u},\Vec{v}}}{m}
    \middle|
    \RV{L}_{\Vec{u},\Vec{v}} = \Variable{\ell}_{\Vec{u},\Vec{v}}
  \right)
  \leq
  2 e^{-\frac{1}{2} \Variable{\ell}_{\Vec{u},\Vec{v}} \Variable{t}_{\Vec{u},\Vec{v}}^{2}}
\\
\label{pf:lemma:technical:raic:large-scale:eqn:3b:3}
  \Pr
  \left(
    \left\|
      \gA[\Coords{J}]( \Vec{u}, \Vec{v} )
    \right\|_{2}
    >
    \frac{2\Eta \sqrt{\KO \Variable{\ell}_{\Vec{u},\Vec{v}}}}{m}
    +
    \frac{\Eta \Variable{\ell}_{\Vec{u},\Vec{v}} \Variable{t}_{\Vec{u},\Vec{v}}}{m}
  \middle|
    \RV{L}_{\Vec{u},\Vec{v}} = \Variable{\ell}_{\Vec{u},\Vec{v}}
  \right)
  \leq
  2 e^{-\frac{1}{8} \Variable{\ell}_{\Vec{u},\Vec{v}} \Variable{t}_{\Vec{u},\Vec{v}}^{2}}
\end{gather}
%
It follows that given
\(
  \RV{L}_{\Vec{u},\Vec{v}} = \Variable{\ell}_{\Vec{u},\Vec{v}}
\),
with probability at least
\(
  1 - 6 e^{-\frac{1}{8} \Variable{\ell}_{\Vec{u},\Vec{v}} \Variable{t}_{\Vec{u},\Vec{v}}^{2}}
\),
the following holds:
\begin{subequations}
\label{pf:lemma:technical:raic:large-scale:eqn:4:0}
\begin{align}
\label{pf:lemma:technical:raic:large-scale:eqn:4}
  &
  \left\| ( \Vec{u} - \Vec{v} ) - \hA[\Coords{J}]( \Vec{u}, \Vec{v} ) \right\|_{2}
  \\
  &\leq
  \left|
    \left\| \Vec{u} - \Vec{v} \right\|_{2}
    -
    \left\langle
    \frac{\Vec{u} - \Vec{v}}{\left\| \Vec{u} - \Vec{v} \right\|_{2}},
    \hA[\Coords{J}]( \Vec{u}, \Vec{v} )
    \right\rangle
  \right|
  +
  \left|
    \left\langle
    \frac{\Vec{u} + \Vec{v}}{\left\| \Vec{u} + \Vec{v} \right\|_{2}},
    \hA[\Coords{J}]( \Vec{u}, \Vec{v} )
    \right\rangle
  \right|
  +
  \left\|
    \gA[\Coords{J}]( \Vec{u}, \Vec{v} )
  \right\|_{2}
  \\
  &\leq
  \left|
    \left\| \Vec{u} - \Vec{v} \right\|_{2}
    -
    \sqrt{\frac{\pi}{2}}
    \frac{\Eta \Variable{\ell}_{\Vec{u},\Vec{v}}}{m}
    \frac{\DistS{\Vec{u}}{\Vec{v}}}{\theta_{\Vec{u},\Vec{v}}}
  \right|
  +
  \frac{\Eta \Variable{\ell}_{\Vec{u},\Vec{v}} \Variable{t}_{\Vec{u},\Vec{v}}}{m}
  +
  \frac{\Eta \Variable{\ell}_{\Vec{u},\Vec{v}} \Variable{t}_{\Vec{u},\Vec{v}}}{m}
  +
  \frac{2\Eta \sqrt{\KO \Variable{\ell}_{\Vec{u},\Vec{v}}}}{m}
  +
  \frac{\Eta \Variable{\ell}_{\Vec{u},\Vec{v}} \Variable{t}_{\Vec{u},\Vec{v}}}{m}
  \\
  &=
  \left|
    \DistS{\Vec{u}}{\Vec{v}}
    -
    \sqrt{\frac{\pi}{2}}
    \frac{\Eta \Variable{\ell}_{\Vec{u},\Vec{v}}}{m}
    \frac{\DistS{\Vec{u}}{\Vec{v}}}{\theta_{\Vec{u},\Vec{v}}}
  \right|
  +
  \frac{3 \Eta \Variable{\ell}_{\Vec{u},\Vec{v}} \Variable{t}_{\Vec{u},\Vec{v}}}{m}
  +
  \frac{2\Eta \sqrt{\KO \Variable{\ell}_{\Vec{u},\Vec{v}}}}{m}
  \\
  &=
  \left|
    1
    -
    \sqrt{\frac{\pi}{2}}
    \frac{\Eta \Variable{\ell}_{\Vec{u},\Vec{v}}}{m}
    \frac{1}{\theta_{\Vec{u},\Vec{v}}}
  \right|
  \DistS{\Vec{u}}{\Vec{v}}
  +
  \frac{3 \Eta \Variable{\ell}_{\Vec{u},\Vec{v}} \Variable{t}_{\Vec{u},\Vec{v}}}{m}
  +
  \frac{2 \Eta \sqrt{\KO \Variable{\ell}_{\Vec{u},\Vec{v}}}}{m}
\end{align}
\end{subequations}
%
\par
Let us next get a handle on the random variable \( \RV{L}_{\Vec{u},\Vec{v}} \), which tallies up
the number of sign differences between
\(
  \sgn( \MeasMat \Vec{u} )
\)
and
\(
  \sgn( \MeasMat \Vec{v} )
\).
%
%
By Lemma \ref{lemma:technical:concentration-ineq:counting}, the random variable
\( \RV{L}_{\Vec{u},\Vec{v}} \) can be characterized by its expectation:
\begin{gather}
  \E \left[ \RV{L}_{\Vec{u},\Vec{v}} \right]
  =
  \frac{\theta_{\Vec{u},\Vec{v}} m}{\pi}
\end{gather}
and the concentration inequality:
\begin{gather}
  \Pr
  \left(
    \RV{L}_{\Vec{u},\Vec{v}}
    \notin
    \left[
      ( 1 - \Variable{s}_{\Vec{u},\Vec{v}} ) \frac{\theta_{\Vec{u},\Vec{v}} m}{\pi}
      ,
      ( 1 + \Variable{s}_{\Vec{u},\Vec{v}} ) \frac{\theta_{\Vec{u},\Vec{v}} m}{\pi}
    \right]
  \right)
  \leq
  2 e^{-\frac{1}{3\pi} \theta_{\Vec{u},\Vec{v}} m\Variable{s}_{\Vec{u},\Vec{v}}^{2}}
.\end{gather}
%
\par
Thus far, it has been shown that for a given pair
\(
  ( \Vec{u}, \Vec{v} ) \in \Net{\Tau} \times \Net{\Tau}
\),
where
\(
  \DistS{\Vec{u}}{\Vec{v}} \geq \Tau
\),
with probability at least
\(
  1
  - 6 e^{-\frac{1}{8} \Variable{\ell}_{\Vec{u},\Vec{v}} \Variable{t}_{\Vec{u},\Vec{v}}^{2}}
  - 2 e^{-\frac{1}{3\pi} \theta_{\Vec{u},\Vec{v}} m\Variable{s}_{\Vec{u},\Vec{v}}^{2}}
\),
\begin{gather}
\label{pf:lemma:technical:raic:large-scale:eqn:5}
  \left\| ( \Vec{u} - \Vec{v} ) - \hA[\Coords{J}]( \Vec{u}, \Vec{v} ) \right\|_{2}
  \leq
  \left|
    1
    -
    \sqrt{\frac{\pi}{2}}
    \frac{\Eta \Variable{\ell}_{\Vec{u},\Vec{v}}}{m}
    \frac{1}{\theta_{\Vec{u},\Vec{v}}}
  \right|
  \DistS{\Vec{u}}{\Vec{v}}
  +
  \frac{3 \Eta \Variable{\ell}_{\Vec{u},\Vec{v}} \Variable{t}_{\Vec{u},\Vec{v}}}{m}
  +
  \frac{2\Eta \sqrt{\KO \Variable{\ell}_{\Vec{u},\Vec{v}}}}{m}
\end{gather}
where
\(
  \Variable{\ell}_{\Vec{u},\Vec{v}}
  \in
  [
    ( 1 - \Variable{s}_{\Vec{u},\Vec{v}} ) \frac{\theta_{\Vec{u},\Vec{v}} m}{\pi}
    ,
    ( 1 + \Variable{s}_{\Vec{u},\Vec{v}} ) \frac{\theta_{\Vec{u},\Vec{v}} m}{\pi}
  ]
\).
Next, this result will be extended---via union bounding---to hold uniformly for over all pairs
\(
  ( \Vec{u}, \Vec{v} )
  \in
  \Net{\Tau} \times \Net{\Tau}
\)
with
\(
  \DistS{\Vec{u}}{\Vec{v}} \geq \Tau
\)
and each
\(
  \Coords{J} \subseteq [n]
\),
\(
  | \Coords{J} | \leq \KO
\).
Let
\(
  \Rho'_{1}, \Rho''_{1} \in (0,1)
\)
such that
\(
  \Rho'_{1} + \Rho''_{1} = \Rho_{1}
\).
For each pair
\(
  \Vec{u}, \Vec{v} \in \Net{\Tau}
\)
and every
\(
  \Coords{J} \subseteq [n]
\),
\(
  | \Coords{J} | = \KO
\),
the parameters
\(
  \Variable{s}_{\Vec{u},\Vec{v}}
\)
and
\(
  \Variable{t}_{\Vec{u},\Vec{v}}
\)
should ensure
\begin{gather}
\label{pf:lemma:technical:raic:large-scale:eqn:6}
  \Pr
  \left(
    \exists \Vec{u}, \Vec{v} \in \Net{\Tau},
    \SPACE
    \DistS{\Vec{u}}{\Vec{v}} \geq \Tau,
    \TAB
    \RV{L}_{\Vec{u},\Vec{v}}
    \notin
    \left[
      ( 1 - \Variable{s}_{\Vec{u},\Vec{v}} ) \frac{\theta_{\Vec{u},\Vec{v}} m}{\pi}
      ,
      ( 1 + \Variable{s}_{\Vec{u},\Vec{v}} ) \frac{\theta_{\Vec{u},\Vec{v}} m}{\pi}
    \right]
  \right)
  \leq
  \Rho'_{1}
\end{gather}
and
\begin{gather}
\label{pf:lemma:technical:raic:large-scale:eqn:7}
  \Pr
  \left(
  \begin{array}{l}
    \exists ( \Vec{u}, \Vec{v} )
    \in
    \Net{\Tau} \times \Net{\Tau},
    \SPACE
    \DistS{\Vec{u}}{\Vec{v}} \geq \Tau,
    \\
    \exists \Coords{J} \subseteq [n],
    \SPACE
    | \Coords{J} | \leq \KO,
    \\
    \left\| ( \Vec{u} - \Vec{v} ) - \hA[\Coords{J}]( \Vec{u}, \Vec{v} ) \right\|_{2}
    \\
    \TAB
    >
    \left|
      1
      -
      \sqrt{\frac{\pi}{2}}
      \frac{\Eta \Variable{\ell}_{\Vec{u},\Vec{v}}}{m}
      \frac{1}{\theta_{\Vec{u},\Vec{v}}}
    \right|
    \DistS{\Vec{u}}{\Vec{v}}
    +
    \frac{3 \Eta \Variable{\ell}_{\Vec{u},\Vec{v}} \Variable{t}_{\Vec{u},\Vec{v}}}{m}
    +
    \frac{2\Eta \sqrt{\KO \Variable{\ell}_{\Vec{u},\Vec{v}}}}{m}
  \end{array}
  \middle|
    \RV{L}_{\Vec{u},\Vec{v}} = \Variable{\ell}_{\Vec{u},\Vec{v}}
    \in
    \left[
      ( 1 \pm \Variable{s}_{\Vec{u},\Vec{v}} ) \frac{\theta_{\Vec{u},\Vec{v}} m}{\pi}
    \right]
  \right)
  \leq
  \Rho''_{1}
\end{gather}
%
For the former, \eqref{pf:lemma:technical:raic:large-scale:eqn:6}, observe,
\begin{subequations}
\begin{align}
\label{pf:lemma:technical:raic:large-scale:eqn:6b}
  &
  \Pr
  \left(
    \exists \Vec{u}, \Vec{v} \in \Net{\Tau},
    \SPACE
    \DistS{\Vec{u}}{\Vec{v}} \geq \Tau,
    \TAB
    \RV{L}_{\Vec{u},\Vec{v}}
    \notin
    \left[
      ( 1 - \Variable{s}_{\Vec{u},\Vec{v}} ) \frac{\theta_{\Vec{u},\Vec{v}} m}{\pi}
      ,
      ( 1 + \Variable{s}_{\Vec{u},\Vec{v}} ) \frac{\theta_{\Vec{u},\Vec{v}} m}{\pi}
    \right]
  \right)
  \leq
  \Rho'_{1}
  \\
  &\dLn
  \binom{n}{k}^{2}
  \left( \frac{6}{\Tau} \right)^{\KO}
  \Pr
  \left(
    \RV{L}_{\Vec{u},\Vec{v}}
    \notin
    \left[
      ( 1 - \Variable{s}_{\Vec{u},\Vec{v}} ) \frac{\theta_{\Vec{u},\Vec{v}} m}{\pi}
      ,
      ( 1 + \Variable{s}_{\Vec{u},\Vec{v}} ) \frac{\theta_{\Vec{u},\Vec{v}} m}{\pi}
    \right]
  \right)
  \leq
  \Rho'_{1}
  \\
  &\dLn
  \binom{n}{k}^{2}
  \left( \frac{6}{\Tau} \right)^{\KO}
  2 e^{-\frac{1}{3\pi} \theta_{\Vec{u},\Vec{v}} m\Variable{s}_{\Vec{u},\Vec{v}}^{2}}
  \leq
  \Rho'_{1}
  \\
  &\dLn
  \Variable{s}_{\Vec{u},\Vec{v}}
  \geq
  \sqrt{
    \frac{3\pi}{\theta_{\Vec{u},\Vec{v}} m}
    \log
    \left(
      \binom{n}{k}^{2}
      \left( \frac{6}{\Tau} \right)^{\KO}
      \left( \frac{\NUMLSas}{\Rho'_{1}} \right)
    \right)
  }
\end{align}
\end{subequations}
%
Hence, the parameter is set as
\begin{gather}
\label{pf:lemma:technical:raic:large-scale:eqn:8}
  \Variable{s}_{\Vec{u},\Vec{v}}
  =
  \sqrt{
    \frac{3\pi}{\theta_{\Vec{u},\Vec{v}} m}
    \log
    \left(
      \binom{n}{k}^{2}
      \left( \frac{6}{\Tau} \right)^{\KO}
      \left( \frac{\NUMLSas}{\Rho'_{1}} \right)
    \right)
  }
  \in (0,1)
\end{gather}
%
Then,
\begin{gather}
\label{pf:lemma:technical:raic:large-scale:eqn:9}
  \Variable{\ell}_{\Vec{u},\Vec{v}}
  \leq
  ( 1 + \Variable{s}_{\Vec{u},\Vec{v}} )
  \frac{\theta_{\Vec{u},\Vec{v}} m}{\pi}
  \leq
  \left(
    1
    +
    \sqrt{
      \frac{3\pi}{\theta_{\Vec{u},\Vec{v}} m}
      \log
      \left(
        \binom{n}{k}^{2}
        \left( \frac{6}{\Tau} \right)^{\KO}
        \left( \frac{\NUMLSas}{\Rho'_{1}} \right)
      \right)
    }
  \right)
  \frac{\theta_{\Vec{u},\Vec{v}} m}{\pi}
  \leq
  \frac{2}{\pi} \theta_{\Vec{u},\Vec{v}} m
.\end{gather}
%
On the other hand, using \eqref{pf:lemma:technical:raic:large-scale:eqn:7},
\( \Variable{t}_{\Vec{u},\Vec{v}} \) is determined as follows.
Note that the number subsets
\(
  \Coords{J} \subseteq [n]
\),
\(
  | \Coords{J} | \leq \KO
\),
is at most
\(
  \binom{n}{\KO} 2^{\KO}
\)
(which will be used momentarily in a union bound),
and then observe,
\begin{subequations}
\begin{align}
\label{pf:lemma:technical:raic:large-scale:eqn:7b}
  &
  \Pr
  \left(
  \begin{array}{l}
    \exists
    ( \Vec{u}, \Vec{v} )
    \in
    \Net{\Tau} \times \Net{\Tau},
    \SPACE
    \DistS{\Vec{u}}{\Vec{v}} \geq \Tau,
    \\
    \exists \Coords{J} \subseteq [n],
    \SPACE
    | \Coords{J} | \leq \KO,
    \\
    \left\| ( \Vec{u} - \Vec{v} ) - \hA[\Coords{J}]( \Vec{u}, \Vec{v} ) \right\|_{2}
    \\
    \TAB
    >
    \left|
      1
      -
      \sqrt{\frac{\pi}{2}}
      \frac{\Eta \Variable{\ell}_{\Vec{u},\Vec{v}}}{m}
      \frac{1}{\theta_{\Vec{u},\Vec{v}}}
    \right|
    \DistS{\Vec{u}}{\Vec{v}}
    +
    \frac{3 \Eta \Variable{\ell}_{\Vec{u},\Vec{v}} \Variable{t}_{\Vec{u},\Vec{v}}}{m}
    +
    \frac{2\Eta \sqrt{\KO \Variable{\ell}_{\Vec{u},\Vec{v}}}}{m}
  \end{array}
  \middle|
    \RV{L}_{\Vec{u},\Vec{v}} = \Variable{\ell}_{\Vec{u},\Vec{v}}
    \in
    \left[
      ( 1 \pm \Variable{s}_{\Vec{u},\Vec{v}} ) \frac{\theta_{\Vec{u},\Vec{v}} m}{\pi}
    \right]
  \right)
  \leq
  \Rho''_{1}
  \\
  &\dLn
  \binom{n}{k}^{2}
  \left( \frac{6}{\Tau} \right)^{\KO}
  2^{\KO}
  \binom{n}{\KO}
  6 e^{-\frac{1}{8} \Variable{\ell}_{\Vec{u},\Vec{v}} \Variable{t}_{\Vec{u},\Vec{v}}^{2}}
  \leq
  \Rho''_{1}
  \\
  &\dLn
  \binom{n}{k}^{2} \binom{n}{\KO}
  \left( \frac{12}{\Tau} \right)^{\KO}
  6 e^{-\frac{1}{8} \Variable{\ell}_{\Vec{u},\Vec{v}} \Variable{t}_{\Vec{u},\Vec{v}}^{2}}
  \leq
  \Rho''_{1}
  \\
  &\dLn
  \Variable{t}_{\Vec{u},\Vec{v}}
  \geq
  \sqrt{
    \frac{8}{\Variable{\ell}_{\Vec{u},\Vec{v}}}
    \log
    \left(
      \binom{n}{k}^{2} \binom{n}{\KO}
      \left( \frac{12}{\Tau} \right)^{\KO}
      \left( \frac{\NUMLSat}{\Rho''_{1}} \right)
    \right)
  }
\end{align}
\end{subequations}
%
Thus, the parameter can be set as
\begin{gather}
\label{pf:lemma:technical:raic:large-scale:eqn:10}
  \Variable{t}_{\Vec{u},\Vec{v}}
  =
  \sqrt{
    \frac{8}{\Variable{\ell}_{\Vec{u},\Vec{v}}}
    \log
    \left(
      \binom{n}{k}^{2} \binom{n}{\KO}
      \left( \frac{12}{\Tau} \right)^{\KO}
      \left( \frac{\NUMLSat}{\Rho''_{1}} \right)
    \right)
  }
.\end{gather}
%
Note that
\begin{gather}
\label{pf:lemma:technical:raic:large-scale:eqn:11}
  \frac{\Variable{\ell}_{\Vec{u},\Vec{v}}}{m}
  \leq
  ( 1 + \Variable{s}_{\Vec{u},\Vec{v}} )
  \frac{\theta_{\Vec{u},\Vec{v}} m}{\pi}
  \cdot
  \frac{1}{m}
  =
  \frac{( 1 + \Variable{s}_{\Vec{u},\Vec{v}} )}{\pi}
  \theta_{\Vec{u},\Vec{v}}
  \leq
  \frac{2}{\pi}
  \theta_{\Vec{u},\Vec{v}}
\end{gather}
and
\begin{subequations}
\label{pf:lemma:technical:raic:large-scale:eqn:12:0}
\begin{align}
\label{pf:lemma:technical:raic:large-scale:eqn:12}
  \frac{\Variable{\ell}_{\Vec{u},\Vec{v}} \Variable{t}_{\Vec{u},\Vec{v}}}{m}
  &\leq
  \frac{\Variable{\ell}_{\Vec{u},\Vec{v}}}{m}
  \sqrt{
    \frac{8}{\Variable{\ell}_{\Vec{u},\Vec{v}}}
    \log
    \left(
      \binom{n}{k}^{2} \binom{n}{\KO}
      \left( \frac{12}{\Tau} \right)^{\KO}
      \left( \frac{\NUMLSat}{\Rho''_{1}} \right)
    \right)
  }
  =
  \frac{1}{m}
  \sqrt{
    8 \Variable{\ell}_{\Vec{u},\Vec{v}}
    \log
    \left(
      \binom{n}{k}^{2} \binom{n}{\KO}
      \left( \frac{12}{\Tau} \right)^{\KO}
      \left( \frac{\NUMLSat}{\Rho''_{1}} \right)
    \right)
  }
  \\
  &\leq
  \frac{1}{m}
  \sqrt{
    8
    \cdot
    \frac{2}{\pi}
    \theta_{\Vec{u},\Vec{v}} m
    \log
    \left(
      \binom{n}{k}^{2} \binom{n}{\KO}
      \left( \frac{12}{\Tau} \right)^{\KO}
      \left( \frac{\NUMLSat}{\Rho''_{1}} \right)
    \right)
  }
  \\
  &=
  \sqrt{
    8
    \cdot
    \frac{2}{\pi}
    \frac{\theta_{\Vec{u},\Vec{v}}}{m}
    \log
    \left(
      \binom{n}{k}^{2} \binom{n}{\KO}
      \left( \frac{12}{\Tau} \right)^{\KO}
      \left( \frac{\NUMLSat}{\Rho''_{1}} \right)
    \right)
  }
  \\
  &\leq
  \sqrt{
    \frac{\EDIT{8} \DistS{\Vec{u}}{\Vec{v}}}{m}
    \log
    \left(
      \binom{n}{k}^{2} \binom{n}{\KO}
      \left( \frac{12}{\Tau} \right)^{\KO}
      \left( \frac{\NUMLSat}{\Rho''_{1}} \right)
    \right)
  }
  \\
  &\leq
  \EDIT{\sqrt{8}}
  \cdot
  \sqrt{
    \frac{\DistS{\Vec{u}}{\Vec{v}}}{m}
    \log
    \left(
      \binom{n}{k}^{2} \binom{n}{\KO}
      \left( \frac{12}{\Tau} \right)^{\KO}
      \left( \frac{\NUMLSat}{\Rho''_{1}} \right)
    \right)
  }
\end{align}
\end{subequations}
%
In regard to the parameter \( \Variable{s}_{\Vec{u},\Vec{v}} \), observe
\begin{subequations}
\label{pf:lemma:technical:raic:large-scale:eqn:revision:1}
\begin{align}
  \Variable{s}_{\Vec{u},\Vec{v}}
  \DistS{\Vec{u}}{\Vec{v}}
  &=
  \DistS{\Vec{u}}{\Vec{v}}
  \sqrt{
    \frac{3\pi}{\theta_{\Vec{u},\Vec{v}} m}
    \log
    \left(
      \binom{n}{k}^{2}
      \left( \frac{6}{\Tau} \right)^{\KO}
      \left( \frac{\NUMLSas}{\Rho'_{1}} \right)
    \right)
  }
  \\
  &\leq
  \sqrt{
    \frac{3\pi \DistS{\Vec{u}}{\Vec{v}}}{m}
    \log
    \left(
      \binom{n}{k}^{2}
      \left( \frac{6}{\Tau} \right)^{\KO}
      \left( \frac{\NUMLSas}{\Rho'_{1}} \right)
    \right)
  }
\end{align}
\end{subequations}
%
Then, from the above discussion, with high probability,
\(
  \left\| ( \Vec{u} - \Vec{v} ) - \hA[\Coords{J}]( \Vec{u}, \Vec{v} ) \right\|_{2}
\)
is upper bounded as follows.
\begin{subequations}
\begin{align*}
  &
  \left\| ( \Vec{u} - \Vec{v} ) - \hA[\Coords{J}]( \Vec{u}, \Vec{v} ) \right\|_{2}
  \\
  &\leq
  \left|
    1
    -
    \sqrt{\frac{\pi}{2}}
    \frac{\Eta \Variable{\ell}_{\Vec{u},\Vec{v}}}{m}
    \frac{1}{\theta_{\Vec{u},\Vec{v}}}
  \right|
  \DistS{\Vec{u}}{\Vec{v}}
  +
  \frac{3 \Eta \Variable{\ell}_{\Vec{u},\Vec{v}} \Variable{t}_{\Vec{u},\Vec{v}}}{m}
  +
  \frac{2\Eta \sqrt{\KO \Variable{\ell}_{\Vec{u},\Vec{v}}}}{m}
  \\
  &\EDIT{\dCmt \text{by \EQN \eqref{pf:lemma:technical:raic:large-scale:eqn:4:0}}}
  \\
  &\leq
  \left|
    1
    -
    \sqrt{\frac{\pi}{2}}
    \Eta
    \frac{( 1 + \Variable{s}_{\Vec{u},\Vec{v}} ) \theta_{\Vec{u},\Vec{v}}}{\pi}
    \frac{1}{\theta_{\Vec{u},\Vec{v}}}
  \right|
  \DistS{\Vec{u}}{\Vec{v}}
  +
  \frac{3 \Eta \Variable{\ell}_{\Vec{u},\Vec{v}} \Variable{t}_{\Vec{u},\Vec{v}}}{m}
  +
  \frac{2\Eta \sqrt{\KO \Variable{\ell}_{\Vec{u},\Vec{v}}}}{m}
  \\
  &\EDIT{\dCmt \text{by \EQN \eqref{pf:lemma:technical:raic:large-scale:eqn:11}}}
  \\
  &=
  \Variable{s}_{\Vec{u},\Vec{v}}
  \DistS{\Vec{u}}{\Vec{v}}
  +
  \frac{3 \Eta \Variable{\ell}_{\Vec{u},\Vec{v}} \Variable{t}_{\Vec{u},\Vec{v}}}{m}
  +
  \frac{2\Eta \sqrt{\KO \Variable{\ell}_{\Vec{u},\Vec{v}}}}{m}
  \\
  &\EDIT{\dCmt \text{by canceling terms and using \(  \Eta = \sqrt{2\pi}  \)}}
  \\
  &\leq
  \sqrt{
    \frac{3\pi \DistS{\Vec{u}}{\Vec{v}}}{m}
    \log
    \left(
      \binom{n}{k}^{2}
      \left( \frac{6}{\Tau} \right)^{\KO}
      \left( \frac{\NUMLSas}{\Rho'_{1}} \right)
    \right)
  }
  +
  \frac{3 \Eta \Variable{\ell}_{\Vec{u},\Vec{v}} \Variable{t}_{\Vec{u},\Vec{v}}}{m}
  +
  \frac{2\Eta \sqrt{\KO \Variable{\ell}_{\Vec{u},\Vec{v}}}}{m}
  \\
  &\EDIT{\dCmt \text{by \EQN \eqref{pf:lemma:technical:raic:large-scale:eqn:revision:1}}}
  \\
  &\leq
  \sqrt{
    \frac{3\pi \DistS{\Vec{u}}{\Vec{v}}}{m}
    \log
    \left(
      \binom{n}{k}^{2}
      \left( \frac{6}{\Tau} \right)^{\KO}
      \left( \frac{\NUMLSas}{\Rho'_{1}} \right)
    \right)
  }
  +
  \EDIT{3\sqrt{8} \Eta}
  \cdot
  \sqrt{
    \frac{\DistS{\Vec{u}}{\Vec{v}}}{m}
    \log
    \left(
      \binom{n}{k}^{2} \binom{n}{\KO}
      \left( \frac{12}{\Tau} \right)^{\KO}
      \left( \frac{\NUMLSat}{\Rho''_{1}} \right)
    \right)
  }
  \\ \nonumber
  &\TAB
  +
  \EDIT{\sqrt{8} \Eta \cdot \sqrt{\frac{k}{m} \cdot \frac{2}{\pi} \theta_{\Vec{u},\Vec{v}}}}
  \\
  &\EDIT{\dCmt \text{by \EQNS \eqref{pf:lemma:technical:raic:large-scale:eqn:11} and \eqref{pf:lemma:technical:raic:large-scale:eqn:12:0}}}
  \\
  &\leq
  \sqrt{
    \frac{3\pi \DistS{\Vec{u}}{\Vec{v}}}{m}
    \log
    \left(
      \binom{n}{k}^{2}
      \left( \frac{6}{\Tau} \right)^{\KO}
      \left( \frac{\NUMLSas}{\Rho'_{1}} \right)
    \right)
  }
  +
  \EDIT{3\sqrt{8} \Eta}
  \cdot
  \sqrt{
    \frac{\DistS{\Vec{u}}{\Vec{v}}}{m}
    \log
    \left(
      \binom{n}{k}^{2} \binom{n}{\KO}
      \left( \frac{12}{\Tau} \right)^{\KO}
      \left( \frac{\NUMLSat}{\Rho''_{1}} \right)
    \right)
  }
  \\ \nonumber
  &\TAB
  +
  \EDIT{\sqrt{8} \Eta} \cdot \sqrt{\frac{k \DistS{\Vec{u}}{\Vec{v}}}{m}}
  \\
  &\EDIT{\dCmt \text{by Fact~\ref{fact:technical:theta/d}}}
  \\
  &=
  \sqrt{3\pi}
  \cdot
  \sqrt{
    \frac{\DistS{\Vec{u}}{\Vec{v}}}{m}
    \log
    \left(
      \binom{n}{k}^{2}
      \left( \frac{6}{\Tau} \right)^{\KO}
      \left( \frac{\NUMLSas}{\Rho'_{1}} \right)
    \right)
  }
  +
  \EDIT{3\sqrt{8} \Eta}
  \cdot
  \sqrt{
    \frac{\DistS{\Vec{u}}{\Vec{v}}}{m}
    \log
    \left(
      \binom{n}{k}^{2} \binom{n}{\KO}
      \left( \frac{12}{\Tau} \right)^{\KO}
      \left( \frac{\NUMLSat}{\Rho''_{1}} \right)
    \right)
  }
  \\ \nonumber
  &\TAB
  +
  \EDIT{\sqrt{8} \Eta}
  \cdot
  \sqrt{\frac{k \DistS{\Vec{u}}{\Vec{v}}}{m}}
  \\
  &\EDIT{\dCmt \text{by rearranging terms}}
  \\
  &\leq
  \sqrt{3\pi}
  \cdot
  \sqrt{
    \frac{\DistS{\Vec{u}}{\Vec{v}}}{m}
    \log
    \left(
      \binom{n}{k}^{2} \binom{n}{\KO}
      \left( \frac{12}{\Tau} \right)^{\KO}
      \left( \frac{\NUMLSas}{\Rho'_{1}} \right)
    \right)
  }
  +
  \EDIT{3\sqrt{8} \Eta}
  \cdot
  \sqrt{
    \frac{\DistS{\Vec{u}}{\Vec{v}}}{m}
    \log
    \left(
      \binom{n}{k}^{2} \binom{n}{\KO}
      \left( \frac{12}{\Tau} \right)^{\KO}
      \left( \frac{\NUMLSat}{\Rho''_{1}} \right)
    \right)
  }
  \\ \nonumber
  &\TAB
  +
  \EDIT{\sqrt{8} \Eta}
  \cdot
  \sqrt{\frac{k \DistS{\Vec{u}}{\Vec{v}}}{m}}
  \\
  &\EDIT{\dCmt \because     \log \left( \binom{n}{k}^{2} \left( \frac{6}{\Tau} \right)^{\KO} \left( \frac{\NUMLSas}{\Rho'_{1}} \right) \right) \leq \log \left( \binom{n}{k}^{2} \binom{n}{\KO} \left( \frac{12}{\Tau} \right)^{\KO} \left( \frac{\NUMLSas}{\Rho'_{1}} \right) \right)}
  \\
  &=
  \sqrt{3\pi}
  \cdot
  \sqrt{
    \frac{\DistS{\Vec{u}}{\Vec{v}}}{m}
    \log
    \left(
      \binom{n}{k}^{2} \binom{n}{\KO}
      \left( \frac{12}{\Tau} \right)^{\KO}
      \left( \frac{\NUMLSa}{\Rho_{1}} \right)
    \right)
  }
  +
  \EDIT{3\sqrt{8} \Eta}
  \cdot
  \sqrt{
    \frac{\DistS{\Vec{u}}{\Vec{v}}}{m}
    \log
    \left(
      \binom{n}{k}^{2} \binom{n}{\KO}
      \left( \frac{12}{\Tau} \right)^{\KO}
      \left( \frac{\NUMLSa}{\Rho_{1}} \right)
    \right)
  }
  \\ \nonumber
  &\TAB
  +
  \EDIT{\sqrt{8} \Eta}
  \cdot
  \sqrt{\frac{k \DistS{\Vec{u}}{\Vec{v}}}{m}}
  \\
  &\dCmt \text{set \(  \Rho'_{1} = \frac{1}{4} \Rho_{1},\ \Rho''_{1} = \frac{3}{4} \Rho_{1}  \) such that \(  \Rho'_{1} + \Rho'_{2} = \Rho_{1}  \)}
  \\
  &\leq
  \sqrt{3\pi}
  \cdot
  \sqrt{
    \frac{\DistS{\Vec{u}}{\Vec{v}}}{m}
    \log
    \left(
      \binom{n}{k}^{2} \binom{n}{\KO}
      \left( \frac{12}{\Tau} \right)^{\KO}
      \left( \frac{\NUMLSa}{\Rho_{1}} \right)
    \right)
  }
  +
  \EDIT{3\sqrt{8} \Eta}
  \cdot
  \sqrt{
    \frac{\DistS{\Vec{u}}{\Vec{v}}}{m}
    \log
    \left(
      \binom{n}{k}^{2} \binom{n}{\KO}
      \left( \frac{12}{\Tau} \right)^{\KO}
      \left( \frac{\NUMLSa}{\Rho_{1}} \right)
    \right)
  }
  \\ \nonumber
  &\TAB
  +
  \EDIT{\sqrt{8} \Eta}
  \cdot
  \sqrt{
    \frac{\DistS{\Vec{u}}{\Vec{v}}}{m}
    \log
    \left(
      \binom{n}{k}^{2} \binom{n}{\KO}
      \left( \frac{12}{\Tau} \right)^{\KO}
      \left( \frac{\NUMLSa}{\Rho_{1}} \right)
    \right)
  }
  \\
  &\EDIT{\dCmt \because k \leq k \log \left( \frac{n}{k} \right) \leq \log \binom{n}{k} \leq \log  \left( \binom{n}{k}^{2} \binom{n}{\KO} \left( \frac{12}{\Tau} \right)^{\KO} \left( \frac{\NUMLSa}{\Rho_{1}} \right) \right)}
  \\
  &=
  \left( \sqrt{3\pi} + \EDIT{3\sqrt{8} \Eta} + \EDIT{\sqrt{8} \Eta} \right)
  \sqrt{
    \frac{\DistS{\Vec{u}}{\Vec{v}}}{m}
    \log
    \left(
      \binom{n}{k}^{2} \binom{n}{\KO}
      \left( \frac{12}{\Tau} \right)^{\KO}
      \left( \frac{\NUMLSa}{\Rho_{1}} \right)
    \right)
  }
  \\
  &\EDIT{\dCmt \text{by distributivity}}
  \\
  &=
  \left( \sqrt{3\pi} + \EDIT{12\sqrt{\pi}} + \EDIT{4\sqrt{\pi}} \right)
  \sqrt{
    \frac{\DistS{\Vec{u}}{\Vec{v}}}{m}
    \log
    \left(
      \binom{n}{k}^{2} \binom{n}{\KO}
      \left( \frac{12}{\Tau} \right)^{\KO}
      \left( \frac{\NUMLSa}{\Rho_{1}} \right)
    \right)
  }
  \\
  &\EDIT{\dCmt \text{using \(  \Eta = \sqrt{2\pi}  \)}}
  \\
  &=
  \EDIT{\sqrt{\pi} ( \sqrt{3} + 16 )}
  \sqrt{
    \frac{\DistS{\Vec{u}}{\Vec{v}}}{m}
    \log
    \left(
      \binom{n}{k}^{2} \binom{n}{\KO}
      \left( \frac{12}{\Tau} \right)^{\KO}
      \left( \frac{\NUMLSa}{\Rho_{1}} \right)
    \right)
  }
  \\
  &\EDIT{\dCmt \text{by simplifying terms}}
  \\
  &\leq
  \EDIT{\sqrt{\pi} ( \sqrt{3} + 16 )}
  \sqrt{
    \frac{\DistS{\Vec{u}}{\Vec{v}}}{m}
    \log
    \left(
      \binom{n}{k}^{2} \binom{n}{\KO}
      \left( \frac{12}{\Tau} \right)^{\KO}
      \left( \frac{\UnivConstA}{\Rho} \right)
    \right)
  }
  \\
  &\EDIT{\dCmt \because \Rho_{1} = \frac{\Rho}{2}, \UnivConstA = 16}
\end{align*}
\end{subequations}
%
Recall that
\begin{align*}
  \Tau = \TauValue
\end{align*}
and
\begin{align*}
  m
  &\geq
  \CMPLXRAICOne
  \\
  &=
  \CMPLXRAICOneX
  \\
  &=
  \frac{\UnivConstB \UnivConstD}{\DDelta}
  \log
  \left(
    \binom{n}{k}^{2} \binom{n}{\KO}
    \left( \frac{12}{\Tau} \right)^{\KO}
    \left( \frac{\UnivConstA}{\Rho} \right)
  \right)
.\end{align*}
%
Thus, continuing the above derivation,
\begin{align*}
  &\left\| ( \Vec{u} - \Vec{v} ) - \hA[\Coords{J}]( \Vec{u}, \Vec{v} ) \right\|_{2}
  \\
  &\leq
  \EDIT{\sqrt{\pi}( \sqrt{3} + 16 )}
  \sqrt{
    \frac{\DistS{\Vec{u}}{\Vec{v}}}{m}
    \log
    \left(
      \binom{n}{k}^{2} \binom{n}{\KO}
      \left( \frac{12}{\Tau} \right)^{\KO}
      \left( \frac{\UnivConstA}{\Rho} \right)
    \right)
  }
  \\
  &\leq
  \EDIT{\sqrt{\pi}( \sqrt{3} + 16 )}
  \sqrt{\frac{\DDelta \DistS{\Vec{u}}{\Vec{v}}}{\UnivConstB \UnivConstD}}
  \\
  &=
  \EDIT{\sqrt{\frac{\pi}{\UnivConstB \UnivConstD}}( \sqrt{3} + 16 )}
  \sqrt{\DDelta \DistS{\Vec{u}}{\Vec{v}}}
\end{align*}
%
In short, the above step yields
\begin{gather}
  \left\| ( \Vec{u} - \Vec{v} ) - \hA[\Coords{J}]( \Vec{u}, \Vec{v} ) \right\|_{2}
  \leq
  \UnivConstb_{1}
  \sqrt{\DDelta \DistS{\Vec{u}}{\Vec{v}}}
\end{gather}
where the universal constant is set as
\begin{gather}
  \UnivConstb_{1}
  =
  \EDIT{\sqrt{\frac{\pi}{\UnivConstB \UnivConstD}}( \sqrt{3} + 16 )}
.\end{gather}
%
Then, the lemma's universal result follows---with probability at least
\(
  1 - \Rho_{1}
\),
\begin{gather}
  \left\| ( \Vec{u} - \Vec{v} ) - \hA[\Coords{J}]( \Vec{u}, \Vec{v} ) \right\|_{2}
  \leq
  \UnivConstb_{1}
  \sqrt{\DDelta \DistS{\Vec{u}}{\Vec{v}}}
\end{gather}
uniformly for all
\(
  ( \Vec{u}, \Vec{v} )
  \in
  \Net{\Tau} \times \Net{\Tau}
\),
\(
  \DistS{\Vec{u}}{\Vec{v}} \geq \Tau
\),
and all
\(
  \Coords{J} \subseteq [n]
\),
\(
  | \Coords{J} | \leq \KO
\).
\end{proof}


\subsection{``Small distances'' regime}
\label{outline:technical:pf|>small-scale}

In contrast to the regime in Section \ref{outline:technical:pf|>large-scale}, the regime
under consideration in this section looks at points in the \( \Tau \)-ball around every
\(  k  \)-sparse unit vector, \(  \Vec{u} \in \SparseSphereSubspace{k}{n}  \).
\EDIT{Note that here, the argument will not use the constructed  \(  \Tau  \)-net, \(  \Net{\Tau}  \), but rather provide a uniform result for all of \(  \SparseSphereSubspace{k}{n}  \).}
Lemma \ref{lemma:technical:raic:small-scale} states the formal result.

\begin{lemma}
\label{lemma:technical:raic:small-scale}
Let
\(
  \UnivConstb_{2} > 0
\)
be a universal constant.
\EDIT{Define \(  \GammaX, m > 0  \) as in \THEOREM \ref{thm:technical:raic:modified}.}
Fix
\(
  \DDelta, \Rho_{2} \in (0,1)
\),
where
\(  \Rho_{2} \defeq \frac{\Rho}{2}  \),
and let
\(
  \Tau \defeq \TauValue
\).
%
%
Uniformly with probability at least
\(
  1 - \Rho_{2}
\),
\begin{gather}
\label{eqn:technical:raic:small-scale}
  \left\| ( \Vec{x} - \Vec{u} ) - \hA[\Coords{J}]( \Vec{x}, \Vec{u} ) \right\|_{2}
  \leq
  \UnivConstb_{2} \DDelta
\end{gather}
for all
\(
  \Vec{u} \in \SparseSphereSubspace{k}{n}
\),
for all
\(
  \Vec{x} \in \BallSparseSphere{\Tau}( \Vec{u} )
\),
and for all
\(
  \Coords{J} \subseteq [n]
\),
\(
  | \Coords{J} | \leq \KO
\).
\end{lemma}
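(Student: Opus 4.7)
The plan is to reduce the continuous-parameter statement to a finite union bound by exploiting the insensitivity of $\sgn(\MeasMat\,\cdot)$ to small perturbations, and then combine with the triangle inequality. Since $\Vec{x}\in\BallSparseSphere{\Tau}(\Vec{u})$ gives $\|\Vec{x}-\Vec{u}\|_{2}\le\Tau=\DDelta/\UnivConstB$ for free, it suffices to show that $\|\hA[\Coords{J}](\Vec{x},\Vec{u})\|_{2}=O(\DDelta)$ uniformly over the desired range; the claim then follows from the triangle inequality with the universal constant $\UnivConstb_{2}$ chosen appropriately.

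For each $\Vec{u}\in\Net{\Tau}$, I would first construct a secondary discretization $\BallNet{\Tau}(\Vec{u})\subseteq\BallSparseSphere{\Tau}(\Vec{u})$ that keeps one representative $\Vec{w}$ per realizable pair $(\supp(\Vec{x}),\sgn(\MeasMat\Vec{x}))$, so that $\hA[\Coords{J}](\Vec{x},\Vec{u})=\hA[\Coords{J}](\Vec{w},\Vec{u})$ for every $\Coords{J}$. Because only those measurements $\MeasVec^{(i)}$ with $\theta_{\Vec{u},\MeasVec^{(i)}}\in[\tfrac{\pi}{2}-\Beta,\tfrac{\pi}{2}+\Beta]$, where $\Beta=\arccos(1-\Tau^{2}/2)$, can flip sign between $\Vec{u}$ and any point in $\BallSparseSphere{\Tau}(\Vec{u})$, the size satisfies $|\BallNet{\Tau}(\Vec{u})|\le\binom{n}{k}\,2^{\RV{M}_{\Beta,\Vec{u}}}$. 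Lemma \ref{lemma:technical:concentration-ineq:counting} combined with a Chernoff bound and a union bound over $\Net{\Tau}$ then yields $\RV{M}_{\Beta,\Vec{u}}\le\tfrac{4}{3}\Tau m$ uniformly with probability at least $1-\Rho_{2}/2$, which in turn gives $\RV{L}_{\Vec{w},\Vec{u}}\le\RV{M}_{\Beta,\Vec{u}}\le\tfrac{4}{3}\Tau m$ for every $\Vec{w}\in\BallNet{\Tau}(\Vec{u})$. I would then apply the same orthogonal decomposition used in Lemma \ref{lemma:technical:raic:large-scale}---projecting $\hA[\Coords{J}](\Vec{w},\Vec{u})$ along $\Vec{w}-\Vec{u}$, along $\Vec{w}+\Vec{u}$, and onto the orthogonal complement---and, conditionally on $\RV{L}_{\Vec{w},\Vec{u}}$, invoke the three concentration inequalities of Lemma \ref{lemma:technical:concentration-ineq:(u,v)}. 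With $\ell=O(\Tau m)$, the expectation term in~\eqref{eqn:technical:concentration-ineq:(u,v):u-v} is already $O(\Tau)$, the deviation terms scale as $\sqrt{\ell\log(\cdot)}/m$, and the residual $\gA[\Coords{J}]$-term contributes $\sqrt{k\ell}/m$; all three pieces are $O(\Tau)=O(\DDelta)$ once the sample complexity in~\eqref{eqn:technical:raic:modified:m} is met. The uniform statement then follows by a union bound over $|\Net{\Tau}|\cdot 2^{\RV{M}_{\Beta,\Vec{u}}}\cdot\binom{n}{2k}2^{2k}$ tuples $(\Vec{u},\Vec{w},\Coords{J})$, and by transporting the bound from $\Vec{w}$ back to $\Vec{x}$ via the defining property of $\BallNet{\Tau}(\Vec{u})$.

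The main obstacle is controlling the (potentially enormous) cardinality $2^{\RV{M}_{\Beta,\Vec{u}}}$ of the secondary net inside the union bound: this factor contributes an additive $\Omega(\Tau m)$ to the log-failure exponent, which must be dominated by the $\ell t^{2}$ exponents coming from Lemma \ref{lemma:technical:concentration-ineq:(u,v)}. The analysis only closes because $\Tau m=\DDelta m/\UnivConstB$ can be made small relative to the concentration exponents by taking $\UnivConstB$ large, which is precisely why the hidden universal constant must satisfy $\UnivConstB\gtrsim\UnivConstBValue$. A secondary subtlety is that the deviation parameters $t$ in the three concentration inequalities must be tuned jointly with the secondary-net size and with the $\binom{n}{2k}2^{2k}$ union over $\Coords{J}$ so that every resulting deviation term remains $O(\Tau)$; this is what forces a different (and in fact simpler) calibration than in the large-scale regime, where one can afford $t\sqrt{\ell}=\sqrt{\DistS{\Vec{u}}{\Vec{v}}\log(\cdot)}$ because $\DistS{\Vec{u}}{\Vec{v}}\ge\Tau$ is itself macroscopic and absorbs the additional log factors.
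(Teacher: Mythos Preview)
Your proposal is correct and follows essentially the same approach as the paper: the triangle-inequality reduction to bounding $\|\hA[\Coords{J}](\Vec{x},\Vec{u})\|_{2}$, the secondary net $\BallNet{\Tau}(\Vec{u})$ indexed by sign patterns, the Chernoff control of $\RV{M}_{\Beta,\Vec{u}}\le\tfrac{4}{3}\Tau m$, the three-term orthogonal decomposition with Lemma~\ref{lemma:technical:concentration-ineq:(u,v)}, and the final union bound in which the $2^{\RV{M}_{\Beta,\Vec{u}}}$ factor is absorbed by choosing $t_{\Vec{w},\Vec{u}}\asymp\sqrt{\Tau m/\ell}$ all match the paper line for line. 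Your inclusion of $\supp(\Vec{x})$ in the secondary-net index (and the resulting extra $\binom{n}{k}$ factor) is a harmless refinement that makes the identity $\hA[\Coords{J}](\Vec{x},\Vec{u})=\hA[\Coords{J}](\Vec{w},\Vec{u})$ hold on the nose rather than up to a relabeling of $\Coords{J}$; the paper instead keeps only one representative per sign pattern and implicitly absorbs the support discrepancy into the union over $\Coords{J}$.
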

%
\BEGINEDIT
\begin{proof}
{\LEMMA \ref{lemma:technical:raic:small-scale}}
\label{pf:lemma:technical:raic:small-scale}
Fix
\(
  \Vec{u} \in \SparseSphereSubspace{k}{n}
\),
\(
  \Vec{x} \in \BallSparseSphere{\Tau}( \Vec{u} )
\),
and
\(
  \Coords{J} \subseteq [n]
\),
\(
  | \Coords{J} | \leq \KO
\),
arbitrarily.
The definition of \(  \BallSparseSphere{\Tau}( \Vec{u} )  \) directly implies that
\(  \| \Vec{x} - \Vec{u} \|_{2} \leq \Tau  \).
%
Combining this observation with the triangle inequality yields the following:
\begin{gather}
\label{pf:lemma:technical:raic:small-scale:eqn:2}
  \left\| ( \Vec{x} - \Vec{u} ) - \hA[\Coords{J}]( \Vec{x}, \Vec{u} ) \right\|_{2}
  \leq
  \left\| \Vec{x} - \Vec{u} \right\|_{2}
  +
  \left\| \hA[\Coords{J}]( \Vec{x}, \Vec{u} ) \right\|_{2}
  \leq
  \Tau + \left\| \hA[\Coords{J}]( \Vec{x}, \Vec{u} ) \right\|_{2}
.\end{gather}
%
Hence, the primary task in proving the lemma is controlling the rightmost term in \eqref{pf:lemma:technical:raic:small-scale:eqn:2}, \(  \| \hA[\Coords{J}]( \Vec{x}, \Vec{u} ) \|_{2}  \), uniformly with high probability for all
\(
  \Vec{u} \in \SparseSphereSubspace{k}{n}
\),
\(
  \Vec{x} \in \BallSparseSphere{\Tau}( \Vec{u} )
\),
and
\(
  \Coords{J} \subseteq [n]
\),
\(
  | \Coords{J} | \leq \KO
\).
The overall approach is similar to that taken in the proof of \LEMMA \ref{lemma:technical:raic:large-scale} for the ``large distances'' regime with some key differences in counting the number of sign mismatches and applying an appropriate union bound to obtain a uniform result.
Specifically, in place of \LEMMA \ref{lemma:technical:concentration-ineq:counting}, we will use \LEMMA \ref{lemma:technical:concentration-ineq:lbe-local-deviations:union} to upper bound the number of Gaussian measurements, \(  \MeasVec^{(i)}  \), on which \(  \sgn( \langle \MeasVec^{(i)}, \Vec{u} \rangle )  \) and \(  \sgn( \langle \MeasVec^{(i)}, \Vec{v} \rangle )  \) differ.
Moreover, rather than taking a union bound over a \(  \Tau  \)-net, we will consider the cardinality of the image of \(  \hA[\Coords{J}]  \) for each \(  \Coords{J} \subseteq [n]  \), \(  | \Coords{J} | \leq \KO  \), and union bound accordingly.
%
%
%
Let us begin by introducing some notations and discussing this more formally.
%
\par 
%
Let \(  \Set{I} \subseteq 2^{[m]}  \) be the (random) set of all possible subsets of \(  [m]  \) indexing mismatches for at least one vector in \(  \SparseSphereSubspace{k}{n}  \) and one vector at most \(  \Tau  \)-far away from it---or more formally,
\begin{gather}
\label{pf:lemma:technical:raic:small-scale:eqn:I:def}
  \Set{I} \defeq
  \left\{
    \{ i \in [m] : \Sgn{}( \langle \MeasVec^{(i)}, \Vec{x} \rangle ) \neq \Sgn{}( \langle \MeasVec^{(i)}, \Vec{u} \rangle )
    \} :
    \Vec{u} \in \SparseSphereSubspace{k}{n}, \Vec{x} \in \BallSparseSphere{\Tau}( \Vec{u} )
  \right\}
.\end{gather}
%
Additionally, define the random variable
\begin{gather}
\label{pf:lemma:technical:raic:small-scale:eqn:q:def}
  \RV{Q} \defeq \max_{\Coords{I} \in \Set{I}} |I|
.\end{gather}
%
As discussed earlier, the random variable \(  \RV{L}_{\Vec{x},\Vec{u}} \defeq \| \frac{1}{2} ( \sgn( \MeasMat \Vec{x} ) - \sgn( \MeasMat \Vec{u} ) ) \|_{0}  \) is equivalently given by \(  \RV{L}_{\Vec{x},\Vec{u}} = \| \I{ \sgn( \MeasMat \Vec{x} ) \neq \sgn( \MeasMat \Vec{u} ) } \|_{0}  \).
Therefore,
\begin{gather}
\label{pf:lemma:technical:raic:small-scale:eqn:q:def:b}
  \RV{Q} = \sup_{\substack{\Vec{u} \in \SparseSphereSubspace{k}{n}, \\ \Vec{x} \in \BallSparseSphere{\Tau}( \Vec{u} )}} \RV{L}_{\Vec{x},\Vec{u}}
.\end{gather}
%
Recall that as in \EQN \eqref{eqn:intro:techniques:hA:rewritten}, \(  \hA  \) can be written as follows:
\begin{align}
\label{pf:lemma:technical:raic:small-scale:eqn:hA:rewritten}
  \hA( \Vec{x}, \Vec{u} )
  &=
  \frac{\sqrt{2\pi}}{m}
  \sum_{i=1}^{m}
  \MeasVec^{(i)}
  \cdot
  \Sgn{}( \langle \MeasVec^{(i)}, \Vec{x} \rangle )
  \cdot
  \I{}(\Sgn{}( \langle \MeasVec^{(i)}, \Vec{x} \rangle )
     \neq \Sgn{}( \langle \MeasVec^{(i)}, \Vec{u} \rangle ))
\end{align}
for
\(
  \Vec{u}, \Vec{x} \in \SparseSphereSubspace{k}{n}
\)
and
\(
  \Coords{J} \subseteq [n]
\),
and hence,
\begin{align*}
  \hA[\Coords{J}]( \Vec{x}, \Vec{u} )
  &=
  \ThresholdSet{\Supp( \Vec{x} ) \cup \Supp( \Vec{u} ) \cup \Coords{J}}(
  \frac{\sqrt{2\pi}}{m}
  \sum_{i=1}^{m}
  \MeasVec^{(i)}
  \cdot
  \Sgn{}( \langle \MeasVec^{(i)}, \Vec{x} \rangle )
  \cdot
  \I{}(\Sgn{}( \langle \MeasVec^{(i)}, \Vec{x} \rangle )
     \neq \Sgn{}( \langle \MeasVec^{(i)}, \Vec{u} \rangle ))
  )
  \\
  &=
  \frac{\sqrt{2\pi}}{m}
  \sum_{i=1}^{m}
  \ThresholdSet{\Supp( \Vec{x} ) \cup \Supp( \Vec{u} ) \cup \Coords{J}}{}( \MeasVec^{(i)} )
  \cdot
  \Sgn{}( \langle \MeasVec^{(i)}, \Vec{x} \rangle )
  \cdot
  \I{}(\Sgn{}( \langle \MeasVec^{(i)}, \Vec{x} \rangle )
     \neq \Sgn{}( \langle \MeasVec^{(i)}, \Vec{u} \rangle ))
\TagEqn\label{pf:lemma:technical:raic:small-scale:eqn:hAJ:rewritten}
.\end{align*}
%
From \EQN \eqref{pf:lemma:technical:raic:small-scale:eqn:hAJ:rewritten}, it is clear that upon fixing the Gaussian vectors, \(  \MeasVec^{(i)}  \), \(  i \in [m]  \), the image of \(  \hA[\Coords{J}]  \) can only take finitely many values for each of the (finitely many) choices of \(  \Coords{J}  \).
As such, writing
\(  \Set{Y} \defeq \bigcup_{\Coords{J} \subseteq [m]: |\Coords{J}| \leq \kO} \hA[\Coords{J}] [ \Set{S} ]  \),
where
\(  \Set{S} \defeq \{ ( \Vec{x}, \Vec{u} ) : \Vec{u} \in \SparseSphereSubspace{k}{n}, \Vec{x} \in \BallSparseSphere{\Tau}( \Vec{u} ) \}  \),
the following claim bounds \(  | \Set{Y} |  \).
%
%
\begin{claim}
\label{pf:lemma:technical:raic:small-scale:claim:1}
Fix \(  \MeasVec^{(i)} \in \R^{n}  \), \(  i \in [m]  \).
%
%
Suppose \(  \RV{Q} = \Variable{q}  \).
Then,
\begin{gather}
\label{pf:lemma:technical:raic:small-scale:claim:1:eqn:1}
  | \Set{Y} |
  \leq
  \EDITX{\left( \frac{2e m}{\Variable{q}} \right)^{\Variable{q}} \left( \frac{en}{\kOX} \right)^{\kOX}}
.\end{gather}
\end{claim}
%
\begin{subproof}
{\CLAIM \ref{pf:lemma:technical:raic:small-scale:claim:1}}
Looking at \EQN \eqref{pf:lemma:technical:raic:small-scale:eqn:hAJ:rewritten}, the cardinality of \(  \Set{Y}  \) can be upper bounded by considering the set of all possible subsets that can comprise \(  \Supp( \Vec{x} ) \cup \Supp( \Vec{u} ) \cup \Coords{J}  \) and the set of all vectors that can be taken by \(  \sgn( \MeasMat \Vec{x} ) \odot \I{}( \sgn( \MeasMat \Vec{x} ) \neq \sgn( \MeasMat \Vec{u} ) )  \), and then multiplying the sizes of these two sets.
More concretely, let
\(  \Set{Y}_{1}, \Set{Y}'_{1} \subseteq 2^{[n]}  \) and
\(  \Set{Y}_{2}, \Set{Y}'_{2} \subseteq \{ -1, 0, 1 \}^{m} \)
be the sets given by
\begin{gather}
  \Set{Y}_{1} \defeq \{ \Supp(\Vec{u}) \cup \Supp( \Vec{v} ) \cup \Coords{J} : \Vec{u} \in \SparseSphereSubspace{k}{n}, \Vec{x} \in \BallSparseSphere{\Tau}( \Vec{u} ), \Coords{J} \subseteq [n], | \Coords{J} | \leq \kO \}
  ,\\
  \Set{Y}'_{1} \defeq \{ \Coords{J'} \subseteq [n] : 1 \leq | \Coords{J'} | \leq \kOX \}
  ,\\
  \Set{Y}_{2} \defeq \{ \sgn( \MeasMat \Vec{x} ) \odot \I{}( \sgn( \MeasMat \Vec{x} ) \neq \sgn( \MeasMat \Vec{u} ) ) : \Vec{u} \in \SparseSphereSubspace{k}{n}, \Vec{x} \in \BallSparseSphere{\Tau}( \Vec{u} ) \}
  ,\\
  \Set{Y}'_{2} \defeq \{ \Vec{w} \in \{ -1, 0, 1 \}^{m} : \| \Vec{w} \|_{0} \leq \Variable{q} \}
.\end{gather}
%
Note that
\(  \Set{Y}_{1} \subseteq \Set{Y}'_{1}  \) and
\(  \Set{Y}_{2} \subseteq \Set{Y}'_{2}  \),
where the latter holds due to the claim's assumption that \(  \RV{Q} = \Variable{q}  \).
These properties imply that
\(  | \Set{Y}_{1} | \leq | \Set{Y}'_{1} |  \) and
\(  | \Set{Y}_{2} | \leq | \Set{Y}'_{2} |  \),
where
\begin{gather}
  | \Set{Y}'_{1} |
  = \sum_{\ell=1}^{\kOX} \binom{n}{\ell}
  \leq \EDITX{\left( \frac{en}{\kOX} \right)^{\kOX}}
  ,\\
  | \Set{Y}'_{2} |
  = \sum_{\ell=0}^{\Variable{q}} 2^{\ell} \binom{m}{\ell}
  \leq 2^{\Variable{q}} \left( \frac{e m}{\Variable{q}} \right)^{\Variable{q}}
  = \left( \frac{2e m}{\Variable{q}} \right)^{\Variable{q}}
.\end{gather}
%
From the discussion earlier, the claim's bound on \(  | \Set{Y} |  \) now follows:
\begin{gather}
  | \Set{Y} |
  \leq | \Set{Y}_{1} | | \Set{Y}_{2} |
  \leq | \Set{Y}'_{1} | | \Set{Y}'_{2} |
  \EDITX{\leq \left( \frac{2e m}{\Variable{q}} \right)^{\Variable{q}} \left( \frac{en}{\kOX} \right)^{\kOX}}
.\end{gather}
\end{subproof}
%
Per \CLAIM \ref{pf:lemma:technical:raic:small-scale:claim:1}, it is possible to bound \(  \| \hA[\Coords{J}]( \Vec{x}, \Vec{u} ) \|_{2}  \) for an arbitrary choice of \(  \Vec{u} \in \SparseSphereSubspace{k}{n}  \), \(  \Vec{x} \in \BallSparseSphere{\Tau}( \Vec{u} )  \), and \(  \Coords{J} \subseteq [n]  \), \(  | \Coords{J} | \leq \KO  \), and subsequently union bound over \(  \Set{Y}  \).
Given the above discussion, the remainder of the proof of \LEMMA \ref{lemma:technical:raic:small-scale} will be carried out as follows.
%
\Enum[\label{pf:lemma:technical:raic:small-scale:enum:1:i}]{i}
First, arbitrarily fixing \(  \Vec{u} \in \SparseSphereSubspace{k}{n}  \), \(  \Vec{x} \in \BallSparseSphere{\Tau}( \Vec{u} )  \), and \(  \Coords{J} \subseteq [n]  \), \(  | \Coords{J} | \leq \KO  \), a probabilistic upper bound on \(  \| \hA[\Coords{J}]( \Vec{x}, \Vec{u} ) \|_{2}  \) will be derived.
This will subsequently be extended to all \(  \Vec{u} \in \SparseSphereSubspace{k}{n}  \), \(  \Vec{x} \in \BallSparseSphere{\Tau}( \Vec{u} )  \), and \(  \Coords{J} \subseteq [n]  \), \(  | \Coords{J} | \leq \KO  \) via a union bound over \(  \Set{Y}  \).
The result obtained in this step will be established in terms of the random variable \(  \RV{Q}  \), defined in \EQN \eqref{pf:lemma:technical:raic:small-scale:eqn:q:def}.
\Enum[\label{pf:lemma:technical:raic:small-scale:enum:1:ii}]{ii}
Then, the random variable \(  \RV{Q}  \) will be upper bounded with high probability via \LEMMA \ref{lemma:technical:concentration-ineq:lbe-local-deviations:union}, leading to further bounds on \(  | \Set{Y} |  \) in \EQN \eqref{pf:lemma:technical:raic:small-scale:claim:1:eqn:1} and on \(  \| \hA[\Coords{J}]( \Vec{x}, \Vec{u} ) \|_{2}  \) from \STEP \ref{pf:lemma:technical:raic:small-scale:enum:1:i}.
\Enum[\label{pf:lemma:technical:raic:small-scale:enum:1:iii}]{iii}
Finally, via appropriate union bounds together with the results obtained in \STEPS \ref{pf:lemma:technical:raic:small-scale:enum:1:i} and \ref{pf:lemma:technical:raic:small-scale:enum:1:ii}, the lemma will follow.
%
\paragraph{\STEP \ref{pf:lemma:technical:raic:small-scale:enum:1:i}.} 
%
Fix \(  \Vec{u} \in \SparseSphereSubspace{k}{n}  \), \(  \Vec{x} \in \BallSparseSphere{\Tau}( \Vec{u} )  \), and \(  \Coords{J} \subseteq [n]  \), \(  | \Coords{J} | \leq \KO  \).
As in the proof of Lemma \ref{lemma:technical:raic:large-scale},
the function \( \hA[\Coords{J}] \) can be expressed using orthogonal projections as
\begin{gather}
\label{pf:lemma:technical:raic:small-scale:eqn:3}
  \hA[\Coords{J}]( \Vec{x}, \Vec{u} )
  =
  \left\langle
  \frac{\Vec{x} - \Vec{u}}{\left\| \Vec{x} - \Vec{u} \right\|_{2}},
  \hA[\Coords{J}]( \Vec{u}, \Vec{v} )
  \right\rangle
  \frac{\Vec{x} - \Vec{u}}{\left\| \Vec{x} - \Vec{u} \right\|_{2}}
  +
  \left\langle
  \frac{\Vec{x} + \Vec{u}}{\left\| \Vec{x} + \Vec{u} \right\|_{2}},
  \hA[\Coords{J}]( \Vec{x}, \Vec{u} )
  \right\rangle
  \frac{\Vec{x} + \Vec{v}}{\left\| \Vec{x} + \Vec{u} \right\|_{2}}
  +
  \gA[\Coords{J}]( \Vec{x}, \Vec{u} )
,\end{gather}
and by the triangle inequality
\begin{subequations}
\label{pf:lemma:technical:raic:small-scale:eqn:4}
\begin{align}
  &
  \left\| \hA[\Coords{J}]( \Vec{x}, \Vec{u} ) \right\|_{2}
  \\
  &=
  \left\|
    \left\langle
    \frac{\Vec{x} - \Vec{u}}{\left\| \Vec{x} - \Vec{u} \right\|_{2}},
    \hA[\Coords{J}]( \Vec{u}, \Vec{v} )
    \right\rangle
    \frac{\Vec{x} - \Vec{u}}{\left\| \Vec{x} - \Vec{u} \right\|_{2}}
    +
    \left\langle
    \frac{\Vec{x} + \Vec{u}}{\left\| \Vec{x} + \Vec{u} \right\|_{2}},
    \hA[\Coords{J}]( \Vec{x}, \Vec{u} )
    \right\rangle
    \frac{\Vec{x} + \Vec{v}}{\left\| \Vec{x} + \Vec{u} \right\|_{2}}
    +
    \gA[\Coords{J}]( \Vec{x}, \Vec{u} )
  \right\|_{2}
  \\
  &\leq
  \left\|
    \left\langle
    \frac{\Vec{x} - \Vec{u}}{\left\| \Vec{x} - \Vec{u} \right\|_{2}},
    \hA[\Coords{J}]( \Vec{u}, \Vec{v} )
    \right\rangle
    \frac{\Vec{x} - \Vec{u}}{\left\| \Vec{x} - \Vec{u} \right\|_{2}}
  \right\|_{2}
  +
  \left\|
    \left\langle
    \frac{\Vec{x} + \Vec{u}}{\left\| \Vec{x} + \Vec{u} \right\|_{2}},
    \hA[\Coords{J}]( \Vec{x}, \Vec{u} )
    \right\rangle
    \frac{\Vec{x} + \Vec{v}}{\left\| \Vec{x} + \Vec{u} \right\|_{2}}
   \right\|_{2}
   +
   \left\|
    \gA[\Coords{J}]( \Vec{x}, \Vec{u} )
  \right\|_{2}
  \\
  &=
  \left|
    \left\langle
    \frac{\Vec{x} - \Vec{u}}{\left\| \Vec{x} - \Vec{u} \right\|_{2}},
    \hA[\Coords{J}]( \Vec{u}, \Vec{v} )
    \right\rangle
  \right|
  +
  \left|
    \left\langle
    \frac{\Vec{x} + \Vec{u}}{\left\| \Vec{x} + \Vec{u} \right\|_{2}},
    \hA[\Coords{J}]( \Vec{x}, \Vec{u} )
    \right\rangle
   \right|
   +
   \left\|
    \gA[\Coords{J}]( \Vec{x}, \Vec{u} )
  \right\|_{2}
.\end{align}
\end{subequations}
%
\par
Recall the concentration inequalities provided by
Lemma \ref{lemma:technical:concentration-ineq:(u,v)}:
\begin{gather}
  \label{pf:lemma:technical:raic:small-scale:eqn:5:1}
  \Pr
  \left(
      \left\langle
        \frac{\Vec{x} - \Vec{u}}{\left\| \Vec{x} - \Vec{u} \right\|_{2}},
        \frac{1}{\Eta}
        \hA[\Coords{J}]( \Vec{x}, \Vec{u} )
      \right\rangle
      -
      \sqrt{\frac{\pi}{2}}
      \frac{\Variable{\ell}_{\Vec{x},\Vec{u}}}{m}
      \frac{\DistS{\Vec{x}}{\Vec{u}}}{\theta_{\Vec{x},\Vec{u}}}
    \geq
    \frac{\Variable{\ell}_{\Vec{x},\Vec{u}} \Variable{t}_{\Vec{x},\Vec{u}}}{m}
    \middle|
    \RV{L}_{\Vec{x},\Vec{u}} = \Variable{\ell}_{\Vec{x},\Vec{u}}
  \right)
  \leq
  e^{-\frac{1}{2} \Variable{\ell}_{\Vec{x},\Vec{u}} \Variable{t}_{\Vec{x},\Vec{u}}^{2}}
  ,\\
  \label{pf:lemma:technical:raic:small-scale:eqn:5:2}
  \Pr
  \left(
    \left|
      \left\langle
        \frac{\Vec{x} + \Vec{u}}{\left\| \Vec{x} + \Vec{u} \right\|_{2}},
        \frac{1}{\Eta}
        \hA[\Coords{J}]( \Vec{x}, \Vec{u} )
      \right\rangle
    \right|
    \geq
    \frac{\Variable{\ell}_{\Vec{x},\Vec{u}} \Variable{t}_{\Vec{x},\Vec{u}}}{m}
    \middle|
    \RV{L}_{\Vec{x},\Vec{u}} = \Variable{\ell}_{\Vec{x},\Vec{u}}
  \right)
  \leq
  2 e^{-\frac{1}{2} \Variable{\ell}_{\Vec{x},\Vec{u}} \Variable{t}_{\Vec{x},\Vec{u}}^{2}}
  ,\\
  \label{pf:lemma:technical:raic:small-scale:eqn:5:3}
  \Pr
  \left(
    \left\|
      \frac{1}{\Eta}
      \gA[\Coords{J}]( \Vec{x}, \Vec{u} )
    \right\|_{2}
    \geq
    \frac{2 \sqrt{\KO \Variable{\ell}_{\Vec{x},\Vec{u}}}}{m}
    +
    \frac{\Variable{\ell}_{\Vec{x},\Vec{u}} \Variable{t}_{\Vec{x},\Vec{u}}}{m}
    \middle|
    \RV{L}_{\Vec{x},\Vec{u}} = \Variable{\ell}_{\Vec{x},\Vec{u}}
  \right)
  \leq
  2 e^{-\frac{1}{8} \Variable{\ell}_{\Vec{x},\Vec{u}} \Variable{t}_{\Vec{x},\Vec{u}}^{2}}
,\end{gather}
\ORIG{where \( \Vec{R}_{\Vec{x},\Vec{u}} \) and \( \RV{L}_{\Vec{x},\Vec{u}} \) are random variables
defined as
\(
  \Vec{R}_{\Vec{x},\Vec{u}}
  =
  ( \Vec*{R}_{1;\Vec{x},\Vec{u}}, \dots, \Vec*{R}_{m;\Vec{x},\Vec{u}} )
  =
  \frac{1}{2}
  \left( \sgn( \MeasMat \Vec{x} ) - \sgn( \MeasMat \Vec{u} ) \right)
\)
and
\(
  \RV{L}_{\Vec{x},\Vec{u}} = \left\| \Vec{R}_{\Vec{x},\Vec{u}} \right\|_{0}
\),
and
\(
  \Vec{r} \in \{-1,0,1\}^{m}
\),
\(
  \Variable{\ell}_{\Vec{x},\Vec{u}} \in [m]
\).}%
\EDIT{where \(  \RV{L}_{\Vec{x},\Vec{u}} = \| \frac{1}{2} \left( \sgn( \MeasMat \Vec{x} ) - \sgn( \MeasMat \Vec{u} ) \right) \|_{0}  \) and \(  \Variable{\ell}_{\Vec{x},\Vec{u}} \in [m]  \).}
Note that \EQN \eqref{pf:lemma:technical:raic:small-scale:eqn:5:1} uses the one-sided version of \EQN \eqref{eqn:technical:concentration-ineq:(u,v):u-v}, which can be seen in the proof of \LEMMA \ref{lemma:technical:concentration-ineq:(u,v)}.
Additionally, \EQN \eqref{pf:lemma:technical:raic:small-scale:eqn:5:1} can be replaced by
\begin{align}
  \nonumber
  &
  \Pr
  \left(
    \left\langle
      \frac{\Vec{x} - \Vec{u}}{\left\| \Vec{x} - \Vec{u} \right\|_{2}},
      \frac{1}{\Eta}
      \hA[\Coords{J}]( \Vec{x}, \Vec{u} )
    \right\rangle
    \geq
    \sqrt{\frac{\pi}{2}}
    \frac{\Variable{\ell}_{\Vec{x},\Vec{u}}}{m}
    \frac{\DistS{\Vec{x}}{\Vec{u}}}{\theta_{\Vec{x},\Vec{u}}}
    +
    \frac{\Variable{\ell}_{\Vec{x},\Vec{u}} \Variable{t}_{\Vec{x},\Vec{u}}}{m}
    \middle|
    \RV{L}_{\Vec{x},\Vec{u}} = \Variable{\ell}_{\Vec{x},\Vec{u}}
  \right)
  \leq
  e^{-\frac{1}{2} \Variable{\ell}_{\Vec{x},\Vec{u}} \Variable{t}_{\Vec{x},\Vec{u}}^{2}}
  \\ \nonumber 
  &\dLn
  \Pr
  \left(
    \left\langle
      \frac{\Vec{x} - \Vec{u}}{\left\| \Vec{x} - \Vec{u} \right\|_{2}},
      \frac{1}{\Eta}
      \hA[\Coords{J}]( \Vec{x}, \Vec{u} )
    \right\rangle
    \geq
    \sqrt{\frac{\pi}{2}}
    \frac{\Variable{\ell}_{\Vec{x},\Vec{u}}}{m}
    +
    \frac{\Variable{\ell}_{\Vec{x},\Vec{u}} \Variable{t}_{\Vec{x},\Vec{u}}}{m}
    \middle|
    \RV{L}_{\Vec{x},\Vec{u}} = \Variable{\ell}_{\Vec{x},\Vec{u}}
  \right)
  \leq
  e^{-\frac{1}{2} \Variable{\ell}_{\Vec{x},\Vec{u}} \Variable{t}_{\Vec{x},\Vec{u}}^{2}}
  \\ \label{pf:lemma:technical:raic:small-scale:eqn:5b:1:3}
  &\dLn
  \Pr
  \left(
    \left\langle
      \frac{\Vec{x} - \Vec{u}}{\left\| \Vec{x} - \Vec{u} \right\|_{2}},
      \frac{1}{\Eta}
      \hA[\Coords{J}]( \Vec{x}, \Vec{u} )
    \right\rangle
    \geq
    \left( \sqrt{\frac{\pi}{2}} + \Variable{t}_{\Vec{x},\Vec{u}} \right)
    \frac{\Variable{\ell}_{\Vec{x},\Vec{u}}}{m}
    \middle|
    \RV{L}_{\Vec{x},\Vec{u}} = \Variable{\ell}_{\Vec{x},\Vec{u}}
  \right)
  \leq
  e^{-\frac{1}{2} \Variable{\ell}_{\Vec{x},\Vec{u}} \Variable{t}_{\Vec{x},\Vec{u}}^{2}}
.\end{align}
%
Therefore, due to \EQN \eqref{pf:lemma:technical:raic:small-scale:eqn:4} in combination with \EQNS \eqref{pf:lemma:technical:raic:small-scale:eqn:5:2}, \eqref{pf:lemma:technical:raic:small-scale:eqn:5:3}, and \eqref{pf:lemma:technical:raic:small-scale:eqn:5b:1:3}, given \(  \RV{L}_{\Vec{x},\Vec{u}} = \Variable{\ell}_{\Vec{x},\Vec{u}}  \), the norm of \(  \hA[\Coords{J}]( \Vec{x}, \Vec{u} )  \) is bounded from above by
\begin{align}
  \nonumber
  \left\| \hA[\Coords{J}]( \Vec{x}, \Vec{u} ) \right\|_{2}
  &\leq
  \left( \sqrt{\frac{\pi}{2}} + \Variable{t}_{\Vec{x},\Vec{u}} \right)
  \frac{\Variable{\ell}_{\Vec{x},\Vec{u}}}{m}
  +
  \frac{\Variable{\ell}_{\Vec{x},\Vec{u}} \Variable{t}_{\Vec{x},\Vec{u}}}{m}
  +
  \frac{2 \sqrt{\KO \Variable{\ell}_{\Vec{x},\Vec{u}}}}{m}
  +
  \frac{\Variable{\ell}_{\Vec{x},\Vec{u}} \Variable{t}_{\Vec{x},\Vec{u}}}{m}
  \\ \nonumber
  &=
  \left( \sqrt{\frac{\pi}{2}} + 3 \Variable{t}_{\Vec{x},\Vec{u}} \right)
  \frac{\Variable{\ell}_{\Vec{x},\Vec{u}}}{m}
  +
  \frac{2 \sqrt{\KO \Variable{\ell}_{\Vec{x},\Vec{u}}}}{m}
  \\
  &=
  \left( 3 \Variable{t}_{\Vec{x},\Vec{u}} + \sqrt{\frac{\pi}{2}} \right)
  \frac{\Variable{\ell}_{\Vec{x},\Vec{u}}}{m}
  +
  \frac{2 \sqrt{\KO \Variable{\ell}_{\Vec{x},\Vec{u}}}}{m}
\label{pf:lemma:technical:raic:small-scale:eqn:4b}
\end{align}
with probability at least
\begin{gather}
\label{pf:lemma:technical:raic:small-scale:eqn:5b}
  1
  - e^{-\frac{1}{2} \Variable{\ell}_{\Vec{x},\Vec{u}} \Variable{t}_{\Vec{x},\Vec{u}}^{2}}
  - 2 e^{-\frac{1}{2} \Variable{\ell}_{\Vec{x},\Vec{u}} \Variable{t}_{\Vec{x},\Vec{u}}^{2}}
  - 2 e^{-\frac{1}{8} \Variable{\ell}_{\Vec{x},\Vec{u}} \Variable{t}_{\Vec{x},\Vec{u}}^{2}}
  \geq
  1 - 5 e^{-\frac{1}{8} \Variable{\ell}_{\Vec{x},\Vec{u}} \Variable{t}_{\Vec{x},\Vec{u}}^{2}}
.\end{gather}
%
Due to the conditioning in the above concentration bounds, we will need to have a handle on the
random variable
\(
  \RV{L}_{\Vec{x},\Vec{u}}
\).
%
By \EQN \eqref{pf:lemma:technical:raic:small-scale:eqn:q:def:b}, it is always the case that \(  \RV{L}_{\Vec{x},\Vec{u}} \leq \RV{Q}  \) for any \(  \Vec{u} \in \SparseSphereSubspace{k}{n}  \) and \(  \Vec{x} \in \BallSparseSphere{\Tau}( \Vec{u} )  \).
Notice that the \RHS of \EQN \eqref{pf:lemma:technical:raic:small-scale:eqn:4b} increases with \(  \RV{L}_{\Vec{x},\Vec{u}} = \Variable{\ell}_{\Vec{x},\Vec{u}}  \).
Therefore, we can consider the bound on \(  \| \hA[\Coords{J}]( \Vec{x}, \Vec{u} ) \|_{2}  \) when \(  \RV{L}_{\Vec{x},\Vec{u}}  \) is at the boundary, \(  \RV{L}_{\Vec{x},\Vec{u}} = \RV{Q}  \).
Specifically, given \(  \RV{Q} = \Variable{q}  \),
\begin{gather}
\label{pf:lemma:technical:raic:small-scale:eqn:4c}
  \left\| \hA[\Coords{J}]( \Vec{x}, \Vec{u} ) \right\|_{2}
  \leq
  \left( 3 \Variable{t}_{\Vec{x},\Vec{u}} + \sqrt{\frac{\pi}{2}} \right)
  \frac{\Variable{q}}{m}
  +
  \frac{2 \sqrt{\KO \Variable{q}}}{m}
\end{gather}
with probability at least
\begin{gather}
\label{pf:lemma:technical:raic:small-scale:eqn:5c}
  1 - 5 e^{-\frac{1}{8} \Variable{q} \Variable{t}_{\Vec{x},\Vec{u}}^{2}}
.\end{gather}
%
As the specific choice of \(  \Vec{u} \in \SparseSphereSubspace{k}{n}  \) and \(  \Vec{x} \in \BallSparseSphere{\Tau}( \Vec{u} )  \) will not be important, and since, indeed, we ultimately want to obtain a uniform result, let us rewrite this result with simpler notation:
if \(  \RV{Q} = \Variable{q}  \), then
\begin{gather}
\label{pf:lemma:technical:raic:small-scale:eqn:4d}
  \left\| \hA[\Coords{J}]( \Vec{x}, \Vec{u} ) \right\|_{2}
  \leq
  \left( 3 \Variable{t} + \sqrt{\frac{\pi}{2}} \right)
  \frac{\Variable{q}}{m}
  +
  \frac{2 \sqrt{\KO \Variable{q}}}{m}
\end{gather}
with probability at least
\begin{gather}
\label{pf:lemma:technical:raic:small-scale:eqn:5d}
  1 - 5 e^{-\frac{1}{8} \Variable{q} \Variable{t}^{2}}
,\end{gather}
where \(  \Variable{t} > 0  \) is determined later.
Applying \CLAIM \ref{pf:lemma:technical:raic:small-scale:claim:1} and union bounding over \(  \Set{Y}  \), it follows that for all \(  \Vec{u} \in \SparseSphereSubspace{k}{n}  \), \(  \Vec{x} \in \BallSparseSphere{\Tau}( \Vec{u} )  \), and \(  \Coords{J} \subseteq [n]  \), \(  | \Coords{J} | \leq \KO  \), \EQN \eqref{pf:lemma:technical:raic:small-scale:eqn:4d} holds uniformly with probability at least
\begin{gather}
\label{pf:lemma:technical:raic:small-scale:eqn:5e}
  1 - 5 | \Set{Y} | e^{-\frac{1}{8} \Variable{q} \Variable{t}^{2}}
  \geq
  1 -
  \EDITX{5 \left( \frac{2e m}{\Variable{q}} \right)^{\Variable{q}} \left( \frac{en}{\kOX} \right)^{\kOX}}
  e^{-\frac{1}{8} \Variable{q} \Variable{t}^{2}}
\end{gather}
when \(  \RV{Q} = \Variable{q}  \).
Finally, for \(  \Rho'_{2} \in (0,1)  \), specified later, setting
\begin{align}
  \Variable{t}
  &=
  \sqrt{
    \frac{8}{\Variable{q}}
    \Log(
      \EDITX{5 \left( \frac{2e m}{\Variable{q}} \right)^{\Variable{q}} \left( \frac{en}{\kOX} \right)^{\kOX}}
      \left( \frac{1}{\Rho'_{2}} \right)
    )
  }
  =
  \sqrt{
    \frac{8}{\Variable{q}}
    \Log(
      \EDITX{\left( \frac{2e m}{\Variable{q}} \right)^{\Variable{q}} \left( \frac{en}{\kOX} \right)^{\kOX} \left( \frac{5}{\Rho'_{2}} \right)}
    )
  }
  \\ \nonumber
  &=
  \BigO(
  \sqrt{
    \frac{k}{\Variable{q}}
    \Log( \frac{n}{k} )
    +
    \Log( \frac{m}{\Variable{q}} )
    +
    \frac{1}{\Variable{q}}
    \Log( \frac{\EDITX{1}}{\Rho} )
  }
  )
,\end{align}
uniformly with probability at least \(  1 - \Rho'_{2}  \), for all \(  \Vec{u} \in \SparseSphereSubspace{k}{n}  \), \(  \Vec{x} \in \BallSparseSphere{\Tau}( \Vec{u} )  \), and \(  \Coords{J} \subseteq [n]  \), \(  | \Coords{J} | \leq \KO  \),
\begin{align*}
  \left\| \hA[\Coords{J}]( \Vec{x}, \Vec{u} ) \right\|_{2}
  &\leq
  \left(
    3
    \sqrt{
      \frac{8}{\Variable{q}}
      \Log(
        \EDITX{\left( \frac{2e m}{\Variable{q}} \right)^{\Variable{q}} \left( \frac{en}{\kOX} \right)^{\kOX} \left( \frac{5}{\Rho'_{2}} \right)}
      )
    }
    +
    \sqrt{\frac{\pi}{2}}
  \right)
  \frac{\Variable{q}}{m}
  +
  \frac{2 \sqrt{\KO \Variable{q}}}{m}
  \\
  &=
  \frac{3}{m}
  \sqrt{
    8 \Variable{q}
    \Log(
      \EDITX{\left( \frac{2e m}{\Variable{q}} \right)^{\Variable{q}} \left( \frac{en}{\kOX} \right)^{\kOX} \left( \frac{5}{\Rho'_{2}} \right)}
    )
  }
  +
  \sqrt{\frac{\pi}{2}} \frac{\Variable{q}}{m}
  +
  \frac{\sqrt{8k \Variable{q}}}{m}
  \\
  &=
  \frac{\sqrt{72 \Variable{q}}}{m}
  \sqrt{
    \Log(
      \EDITX{\left( \frac{2e m}{\Variable{q}} \right)^{\Variable{q}} \left( \frac{en}{\kOX} \right)^{\kOX} \left( \frac{5}{\Rho'_{2}} \right)}
    )
  }
  +
  \sqrt{\frac{\pi}{2}} \frac{\Variable{q}}{m}
  +
  \frac{\sqrt{8k \Variable{q}}}{m}
  \\
  &=
  \frac{\sqrt{72 \Variable{q}}}{m}
  \sqrt{
  \Log( \left( \frac{2e m}{\Variable{q}} \right)^{\Variable{q}} )
  +
  \Log( \EDITX{\left( \frac{en}{\kOX} \right)^{\kOX}} )
  +
  \Log( \frac{\EDITX{5}}{\Rho'_{2}} )
  }
  +
  \sqrt{\frac{\pi}{2}} \frac{\Variable{q}}{m}
  +
  \frac{\sqrt{8k \Variable{q}}}{m}
  \\
  &\leq
  \frac{\sqrt{72 \Variable{q}}}{m}
  \sqrt{\Log( \left( \frac{2e m}{\Variable{q}} \right)^{\Variable{q}} )}
  +
  \frac{\sqrt{72 \Variable{q}}}{m}
  \sqrt{\Log( \EDITX{\left( \frac{en}{\kOX} \right)^{\kOX}} )}
  +
  \frac{\sqrt{72 \Variable{q}}}{m}
  \sqrt{\Log( \frac{\EDITX{5}}{\Rho'_{2}} )}
  +
  \sqrt{\frac{\pi}{2}} \frac{\Variable{q}}{m}
  +
  \frac{\sqrt{8k \Variable{q}}}{m}
  \\
  &=
  \frac{\sqrt{72 \Variable{q}}}{m}
  \sqrt{\Variable{q} \Log( \frac{2e m}{\Variable{q}} )}
  +
  \frac{\sqrt{72 \Variable{q}}}{m}
  \sqrt{\EDITX{\kOX \Log( \frac{en}{\kOX} )}}
  +
  \frac{\sqrt{72 \Variable{q}}}{m}
  \sqrt{\Log( \frac{\EDITX{5}}{\Rho'_{2}} )}
  +
  \sqrt{\frac{\pi}{2}} \frac{\Variable{q}}{m}
  +
  \frac{\sqrt{8k \Variable{q}}}{m}
  \\
  &=
  \frac{\sqrt{72} \Variable{q}}{m} \sqrt{\Log( \frac{2e m}{\Variable{q}} )}
  +
  \frac{\sqrt{72 \kOX \Variable{q}}}{m} \sqrt{\EDITX{\Log( \frac{en}{\kOX} )}}
  +
  \frac{\sqrt{72 \Variable{q}}}{m} \sqrt{\Log( \frac{\EDITX{5}}{\Rho'_{2}} )}
  +
  \sqrt{\frac{\pi}{2}} \frac{\Variable{q}}{m}
  +
  \frac{\sqrt{8k \Variable{q}}}{m}
.\end{align*}
%
Lastly, set
\(  \Rho'_{2} = \frac{\Rho_{2}}{2} = \frac{\Rho}{4}  \).
%
Then, with probability at least \(  1 - \frac{\Rho_{2}}{2}  \), for all \(  \Vec{u} \in \SparseSphereSubspace{k}{n}  \), \(  \Vec{x} \in \BallSparseSphere{\Tau}( \Vec{u} )  \), and \(  \Coords{J} \subseteq [n]  \), \(  | \Coords{J} | \leq \KO  \),
\begin{align*}
  \left\| \hA[\Coords{J}]( \Vec{x}, \Vec{u} ) \right\|_{2}
  &\leq
  \frac{\sqrt{72} \Variable{q}}{m} \sqrt{\Log( \frac{2e m}{\Variable{q}} )}
  +
  \frac{\sqrt{72 \kOX \Variable{q}}}{m} \sqrt{\EDITX{\Log( \frac{en}{\kOX} )}}
  +
  \frac{\sqrt{72 \Variable{q}}}{m} \sqrt{\Log( \frac{\EDITX{5}}{\Rho'_{2}} )}
  +
  \sqrt{\frac{\pi}{2}} \frac{\Variable{q}}{m}
  +
  \frac{\sqrt{8k \Variable{q}}}{m}
  \\
  &=
  \frac{\sqrt{72} \Variable{q}}{m} \sqrt{\Log( \frac{2e m}{\Variable{q}} )}
  +
  \frac{\sqrt{72 \kOX \Variable{q}}}{m} \sqrt{\EDITX{\Log( \frac{en}{\kOX} )}}
  +
  \frac{\sqrt{72 \Variable{q}}}{m} \sqrt{\Log( \frac{\EDITX{20}}{\Rho} )}
  +
  \sqrt{\frac{\pi}{2}} \frac{\Variable{q}}{m}
  +
  \frac{\sqrt{8k \Variable{q}}}{m}
  \\
  &=
  \frac{\sqrt{72} \Variable{q}}{m} \sqrt{\Log( \frac{2e m}{\Variable{q}} )}
  +
  \frac{\sqrt{72 \kOX \Variable{q}}}{m} \sqrt{\EDITX{\Log( \frac{en}{\kOX} )}}
  +
  \frac{\sqrt{72 \Variable{q}}}{m} \sqrt{\Log( \frac{\UnivConstAX}{\Rho} )}
  +
  \sqrt{\frac{\pi}{2}} \frac{\Variable{q}}{m}
  +
  \frac{\sqrt{8k \Variable{q}}}{m}
\end{align*}
%
This completes \STEP \ref{pf:lemma:technical:raic:small-scale:enum:1:i}.
%
\paragraph{\STEP \ref{pf:lemma:technical:raic:small-scale:enum:1:ii}.} 
%
Proceeding to the next step, the goal now is to upper bound the random variable \(  \RV{Q}  \) with high probability.
Here, we will leverage \LEMMA \ref{lemma:technical:concentration-ineq:lbe-local-deviations:union} to establish a uniform result over all \(  \Vec{u} \in \SparseSphereSubspace{k}{n}  \) and \(  \Vec{x} \in \BallSparseSphere{\Tau}( \Vec{u} )  \).
By \LEMMA \ref{lemma:technical:concentration-ineq:lbe-local-deviations:union}, the random variable \(  \RV{L}_{\Vec{x},\Vec{u}}  \) is upper bounded by \(  \RV{L}_{\Vec{x},\Vec{u}} \leq \GammaX m  \) uniformly with probability at least \(  1 - \EDITX{2} \binom{n}{\KO} e^{-\frac{1}{64} \GammaX m}  \) for all choices of \(  \Vec{u} \in \SparseSphereSubspace{k}{n}  \) and \(  \Vec{x} \in \BallSparseSphere{\Tau}( \Vec{u} )  \).
Due to the relationship between \(  \RV{L}_{\Vec{x},\Vec{u}}  \) and \(  \RV{Q}  \) stated in \EQN \eqref{pf:lemma:technical:raic:small-scale:eqn:q:def:b}, it follows that
\begin{gather}
\label{pf:lemma:technical:raic:small-scale:eqn:Q-bound}
  \RV{Q} \leq \QValue
\end{gather}
with probability at least
\begin{gather}
\label{pf:lemma:technical:raic:small-scale:eqn:6}
  1 - \EDITX{2} \binom{n}{\KO} e^{-\frac{1}{64} \GammaX m}
.\end{gather}
%
Recall that
\begin{align*}
  m
  &\geq
  \frac{64 \UnivConstB}{\DDelta} \Log( \binom{n}{\KO} \EDITX{\frac{\UnivConstAXX}{\Rho}} )
  \sqrt{\Log( \frac{2e}{\GammaX} )}
  \\
  &=
  \frac{64}{\GammaX} \Log( \EDITX{2} \binom{n}{\KO} \frac{2}{\Rho_{2}} )
,\end{align*}
and therefore \EQN \eqref{pf:lemma:technical:raic:small-scale:eqn:6} is bounded from below by
\begin{gather}
\label{pf:lemma:technical:raic:small-scale:eqn:6b}
  1 - \EDITX{2} \binom{n}{\KO} e^{-\frac{1}{64} \GammaX m}
  \geq
  1 - \frac{\Rho_{2}}{2}
.\end{gather}
%
\paragraph{\STEP \ref{pf:lemma:technical:raic:small-scale:enum:1:iii}.} 
%
The final step will put together the analysis from \STEPS \ref{pf:lemma:technical:raic:small-scale:enum:1:i} and \ref{pf:lemma:technical:raic:small-scale:enum:1:ii} to complete the proof of the lemma.
By a union bound combining \EQN \eqref{pf:lemma:technical:raic:small-scale:eqn:5e} and \eqref{pf:lemma:technical:raic:small-scale:eqn:6}, together with \EQNS \eqref{pf:lemma:technical:raic:small-scale:eqn:4d} and \eqref{pf:lemma:technical:raic:small-scale:eqn:Q-bound}, with probability at least
\begin{gather}
\label{pf:lemma:technical:raic:small-scale:eqn:5f}
  1
  - \EDITX{5 \left( \frac{2e m}{\Variable{q}} \right)^{\Variable{q}} \left( \frac{en}{\kOX} \right)^{\kOX}}
    e^{-\frac{1}{8} \Variable{q} \Variable{t}^{2}}
  - \EDITX{2} \binom{n}{\KO} e^{-\frac{1}{64} \GammaX m}
  \geq
  1 - \frac{\Rho_{2}}{2} - \frac{\Rho_{2}}{2}
  =
  1 - \Rho_{2}
,\end{gather}
uniformly for all \(  \Vec{u} \in \SparseSphereSubspace{k}{n}  \), \(  \Vec{x} \in \BallSparseSphere{\Tau}( \Vec{u} )  \), and \(  \Coords{J} \subseteq [n]  \), \(  | \Coords{J} | \leq \KO  \), the following holds
\begin{align}
  \nonumber
  \left\| \hA[\Coords{J}]( \Vec{x}, \Vec{u} ) \right\|_{2}
  &\leq
  \frac{\sqrt{72} \GammaX m}{m} \sqrt{\Log( \frac{2e m}{\GammaX m} )}
  +
  \frac{\sqrt{72 \kOX \GammaX m}}{m} \sqrt{\EDITX{\Log( \frac{en}{\kOX} )}}
  +
  \frac{\sqrt{72 \GammaX m}}{m} \sqrt{\Log( \EDITX{\frac{\UnivConstAX}{\Rho}} )}
  +
  \frac{\GammaX m}{m} \sqrt{\frac{\pi}{2}}
  +
  \frac{\sqrt{8k \GammaX m}}{m}
  \\ \nonumber
  &\leq
  \GammaX \sqrt{72 \Log( \frac{2e}{\GammaX} )}
  +
  \sqrt{\frac{72 \kOX \GammaX}{m} \EDITX{\Log( \frac{en}{\kOX} )}}
  +
  \sqrt{\frac{72 \GammaX}{m} \Log( \EDITX{\frac{\UnivConstAX}{\Rho}} )}
  +
  \GammaX \sqrt{\frac{\pi}{2}}
  +
  \sqrt{\frac{8k \GammaX}{m}}
  \\ \nonumber
  &\leq
  \GammaX \sqrt{72 \Log( \frac{2e}{\GammaX} )}
  +
  \sqrt{\frac{72 \kOX \GammaX}{m} \EDITX{\Log( \frac{en}{\kOX} )}}
  +
  \sqrt{\frac{72 \GammaX}{m} \Log( \EDITX{\frac{\UnivConstAX}{\Rho}} )}
  +
  \GammaX \sqrt{\frac{\pi}{2}}
  +
  \sqrt{\frac{8k \GammaX}{m}}
  \\
  &=
  \GammaX \sqrt{72 \Log( \frac{2e}{\GammaX} )}
  +
  \GammaX \sqrt{\frac{\pi}{2}}
  +
  \sqrt{\frac{72 \kOX \GammaX}{m} \EDITX{\Log( \frac{en}{\kOX} )}}
  +
  \sqrt{\frac{72 \GammaX}{m} \Log( \EDITX{\frac{\UnivConstAX}{\Rho}} )}
  +
  \sqrt{\frac{8k \GammaX}{m}}
\label{pf:lemma:technical:raic:small-scale:eqn:4f}
.\end{align}
%
Observe:
\begin{gather*}
  \GammaX = \GammaXValue = \GammaXValueX < \frac{\UnivConstb_{2} \DDelta}{5 \sqrt{72}}
,\end{gather*}
where
\begin{gather*}
  \UnivConstb_{2} = \frac{30 \sqrt{2}}{\UnivConstBX} = \frac{5 \sqrt{72}}{\UnivConstBX}
,\end{gather*}
and recall that
\begin{align*}
  m
  \geq
  \frac{\UnivConstB \kOX}{\DDelta}
  \EDITX{\Log( \frac{en}{\kOX} )}
  +
  \frac{\UnivConstB}{\DDelta}
  \Log( \EDITX{\frac{\UnivConstAX}{\Rho}} )
  =
  \frac{5 \sqrt{72} \kOX}{\UnivConstb_{2} \DDelta}
  \EDITX{\Log( \frac{en}{\kOX} )}
  +
  \frac{5 \sqrt{72}}{\UnivConstb_{2} \DDelta}
  \Log( \EDITX{\frac{\UnivConstAX}{\Rho}} )
.\end{align*}
%
Thus, \EQN \eqref{pf:lemma:technical:raic:small-scale:eqn:4f} can be bounded from above as follows:
\begin{align*}
  \left\| \hA[\Coords{J}]( \Vec{x}, \Vec{u} ) \right\|_{2}
  &\leq
  \GammaX \sqrt{72 \Log( \frac{2e}{\GammaX} )}
  +
  \GammaX \sqrt{\frac{\pi}{2}}
  +
  \sqrt{\frac{72 \kOX \GammaX}{m} \EDITX{\Log( \frac{en}{\kOX} )}}
  +
  \sqrt{\frac{72 \GammaX}{m} \Log( \EDITX{\frac{\UnivConstAX}{\Rho}} )}
  +
  \sqrt{\frac{8k \GammaX}{m}}
  \\
  &\leq
  \GammaXValueX \sqrt{72 \Log( \frac{2e}{\GammaX} )}
  +
  \frac{\UnivConstb_{2} \DDelta}{5 \sqrt{72}} \sqrt{\frac{\pi}{2}}
  +
  \sqrt{72 \kOX \cdot \frac{\UnivConstb_{2} \DDelta}{5 \sqrt{72}} \cdot \frac{\UnivConstb_{2} \DDelta}{5 \sqrt{72} \kOX \EDITX{\Log( \frac{en}{\kOX} )}} \EDITX{\Log( \frac{en}{\kOX} )}}
  \\ &\AlignTab+
  \sqrt{72 \cdot \frac{\UnivConstb_{2} \DDelta}{5 \sqrt{72}} \cdot \frac{\UnivConstb_{2} \DDelta}{5 \sqrt{72} \Log( \EDITX{\frac{\UnivConstAX}{\Rho}} )} \Log( \EDITX{\frac{\UnivConstAX}{\Rho}} )}
  +
  \sqrt{8k \cdot \frac{\UnivConstb_{2} \DDelta}{5 \sqrt{72}} \cdot \frac{\UnivConstb_{2} \DDelta}{5 \sqrt{72} \kOX \EDITX{\Log( \frac{en}{\kOX} )}}}
  \\
  &\leq
  \GammaXValueX \sqrt{72 \Log( \frac{2e}{\GammaX} )}
  +
  \frac{\UnivConstb_{2} \DDelta}{5 \sqrt{72}} \sqrt{\frac{\pi}{2}}
  +
  \sqrt{72 \kOX \cdot \frac{\UnivConstb_{2} \DDelta}{5 \sqrt{72}} \cdot \frac{\UnivConstb_{2} \DDelta}{5 \sqrt{72} \kOX \EDITX{\Log( \frac{en}{\kOX} )}} \EDITX{\Log( \frac{en}{\kOX} )}}
  \\ &\AlignTab+
  \sqrt{72 \cdot \frac{\UnivConstb_{2} \DDelta}{5 \sqrt{72}} \cdot \frac{\UnivConstb_{2} \DDelta}{5 \sqrt{72} \Log( \EDITX{\frac{\UnivConstAX}{\Rho}} )} \Log( \EDITX{\frac{\UnivConstAX}{\Rho}} )}
  +
  \sqrt{8k \cdot \frac{\UnivConstb_{2} \DDelta}{5 \sqrt{72}} \cdot \frac{\UnivConstb_{2} \DDelta}{5 \sqrt{72} k}}
  \\
  &\leq
  5 \cdot \frac{\UnivConstb_{2} \DDelta}{5}
  \\
  &=
  \UnivConstb_{2} \DDelta
.\end{align*}
%
To summarize, uniformly with probability at least \(  1 - \Rho_{2}  \), for all \(  \Vec{u} \in \SparseSphereSubspace{k}{n}  \), \(  \Vec{x} \in \BallSparseSphere{\Tau}( \Vec{u} )  \), and \(  \Coords{J} \subseteq [n]  \), \(  | \Coords{J} | \leq \KO  \),
\begin{gather}
  \left\| \hA[\Coords{J}]( \Vec{x}, \Vec{u} ) \right\|_{2}
  \leq
  \UnivConstb_{2} \DDelta
,\end{gather}
as desired.
\end{proof}
\ENDEDIT


\subsection{Combining the regimes to prove Theorem \ref{thm:technical:raic:modified}}
\label{outline:technical:pf|>combine}

Using Lemmas \ref{lemma:technical:raic:large-scale} and \ref{lemma:technical:raic:small-scale},
Theorem \ref{thm:technical:raic:modified} can now be established with a direct argument.

%
\begin{proof}
{Theorem \ref{thm:technical:raic:modified}}
\label{pf:thm:technical:raic:modified}
Fix
\(  \Rho_{1} = \Rho_{2} = \frac{\Rho}{2}  \),
such that
\(
  \Rho_{1} + \Rho_{2} = \Rho
\).
Let
\(
  \Vec{x}, \Vec{y} \in \SparseSphereSubspace{k}{n}
\)
be an arbitrary pair of \( k \)-sparse unit vectors.
Suppose
\(
  \Vec{u}, \Vec{v} \in \Net{\Tau}
\)
are the closest points to
\(
  \Vec{x}, \Vec{y}
\),
respectively, subject to
\(
  \supp( \Vec{u} ) = \supp( \Vec{x} )
\)
and
\(
  \supp( \Vec{v} ) = \supp( \Vec{y} )
\),
where it is possible that \( \Vec{u} = \Vec{x} \) when \( \Vec{x} \) is in the net,
and similarly for \( \Vec{v} \) when \( \Vec{y} \) is in the net.
Formally,
\begin{gather}
\label{pf:thm:technical:raic:modified:eqn:1}
  \Vec{u}
  =
  \arg \min_{\substack{\Vec{u'} \in \Net{\Tau}: \\ \supp( \Vec{u'} ) = \supp( \Vec{x} )}}
  \left\| \Vec{x} - \Vec{u'} \right\|_{2}
  \\
  \Vec{v}
  =
  \arg \min_{\substack{\Vec{v'} \in \Net{\Tau}: \\ \supp( \Vec{v'} ) = \supp( \Vec{y} )}}
  \left\| \Vec{y} - \Vec{v'} \right\|_{2}
\end{gather}
%
Note that the requirement
\(
  \supp( \Vec{u} ) = \supp( \Vec{x} )
\)
and
\(
  \supp( \Vec{v} ) = \supp( \Vec{y} )
\)
is possible due to the design of the \( \Tau \)-net \( \Net{\Tau} \) as specified at the beginning
of Section \ref{outline:technical:pf}.
Observe
\begin{subequations}
\begin{align}
\label{pf:thm:technical:raic:modified:eqn:2}
  &
  ( \Vec{x} - \Vec{y} ) - \hA( \Vec{x}, \Vec{y} )
  \\
  &=
  ( \Vec{x} - \Vec{y} )
  -
  \sqrt{2\pi} \frac{1}{m}
  \MeasMat^{\T}
  \cdot
  \frac{1}{2}
  \left(
    \Sgn( \MeasMat \Vec{x} )
    -
    \Sgn( \MeasMat \Vec{y} )
  \right)
  \\
  &=
  ( \Vec{u} - \Vec{v} )
  +
  ( \Vec{x} - \Vec{u} )
  +
  ( \Vec{v} - \Vec{y} )
  -
  \sqrt{2\pi} \frac{1}{m}
  \MeasMat^{\T}
  \cdot
  \frac{1}{2}
  \left(
    \Sgn( \MeasMat \Vec{u} )
    -
    \Sgn( \MeasMat \Vec{v} )
  \right)
  \\ \nonumber
  &\TAB
  -
  \sqrt{2\pi} \frac{1}{m}
  \MeasMat^{\T}
  \cdot
  \frac{1}{2}
  \left(
    \Sgn( \MeasMat \Vec{x} )
    -
    \Sgn( \MeasMat \Vec{u} )
  \right)
  -
  \sqrt{2\pi} \frac{1}{m}
  \MeasMat^{\T}
  \cdot
  \frac{1}{2}
  \left(
    \Sgn( \MeasMat \Vec{v} )
    -
    \Sgn( \MeasMat \Vec{y} )
  \right)
  \\
  &=
  ( \Vec{u} - \Vec{v} )
  -
  \sqrt{2\pi} \frac{1}{m}
  \MeasMat^{\T}
  \cdot
  \frac{1}{2}
  \left(
    \Sgn( \MeasMat \Vec{x} )
    -
    \Sgn( \MeasMat \Vec{u} )
  \right)
  \\ \nonumber
  &\TAB
  +
  ( \Vec{x} - \Vec{u} )
  -
  \sqrt{2\pi} \frac{1}{m}
  \MeasMat^{\T}
  \cdot
  \frac{1}{2}
  \left(
    \Sgn( \MeasMat \Vec{u} )
    -
    \Sgn( \MeasMat \Vec{v} )
  \right)
  \\ \nonumber
  &\TAB
  +
  ( \Vec{v} - \Vec{y} )
  -
  \sqrt{2\pi} \frac{1}{m}
  \MeasMat^{\T}
  \cdot
  \frac{1}{2}
  \left(
    \Sgn( \MeasMat \Vec{v} )
    -
    \Sgn( \MeasMat \Vec{y} )
  \right)
  \\
  &=
  ( \Vec{u} - \Vec{v} )
  -
  \hA( \Vec{u}, \Vec{v} )
  +
  ( \Vec{x} - \Vec{u} )
  -
  \hA( \Vec{x}, \Vec{u} )
  +
  ( \Vec{v} - \Vec{y} )
  -
  \hA( \Vec{v}, \Vec{y} )
\end{align}
\end{subequations}
%
Write
\(
  \Coords{J}_{\Vec{x}}
  =
  \Coords{J} \cup \supp( \Vec{x} )
\)
and
\(
  \Coords{J}_{\Vec{y}}
  =
  \Coords{J} \cup \supp( \Vec{y} )
\),
where
\(
  | \Coords{J}_{\Vec{x}} |, | \Coords{J}_{\Vec{y}} | \leq  \KO
\).
Then,
\begin{subequations}
\label{pf:thm:technical:raic:modified:eqn:3}
\begin{align}
\label{pf:thm:technical:raic:modified:eqn:3:1}
  ( \Vec{x} - \Vec{y} ) - \hA[\Coords{J}]( \Vec{x}, \Vec{y} )
  &=
  ( \Vec{x} - \Vec{y} )
  -
  \Threshold{\supp( \Vec{x} ) \cup \supp( \Vec{y} ) \cup \Coords{J}}
  ( \hA( \Vec{x}, \Vec{y} ) )
  \\
  &=
  ( \Vec{u} - \Vec{v} )
  -
  \Threshold{\supp( \Vec{x} ) \cup \supp( \Vec{y} ) \cup \Coords{J}}
  ( \hA( \Vec{u}, \Vec{v} ) )
  \\ \nonumber
  &+
  ( \Vec{x} - \Vec{u} )
  -
  \Threshold{\supp( \Vec{x} ) \cup \supp( \Vec{y} ) \cup \Coords{J}}
  ( \hA( \Vec{x}, \Vec{u} ) )
  \\ \nonumber
  &+
  ( \Vec{v} - \Vec{y} )
  -
  \Threshold{\supp( \Vec{x} ) \cup \supp( \Vec{y} ) \cup \Coords{J}}
  ( \hA( \Vec{v}, \Vec{y} ) )
  \\
  &=
  ( \Vec{u} - \Vec{v} )
  -
  \Threshold{\supp( \Vec{u} ) \cup \supp( \Vec{v} ) \cup \Coords{J}}
  ( \hA( \Vec{u}, \Vec{v} ) )
  \\ \nonumber
  &+
  ( \Vec{x} - \Vec{u} )
  -
  \Threshold{\supp( \Vec{x} ) \cup \supp( \Vec{u} ) \cup \Coords{J}_{\Vec{y}}}
  ( \hA( \Vec{x}, \Vec{u} ) )
  \\ \nonumber
  &+
  ( \Vec{v} - \Vec{y} )
  -
  \Threshold{\supp( \Vec{v} ) \cup \supp( \Vec{y} ) \cup \Coords{J}_{\Vec{x}}}
  ( \hA( \Vec{v}, \Vec{y} ) )
  \\
  &=
  ( \Vec{u} - \Vec{v} )
  -
  \hA[\Coords{J}]( \Vec{u}, \Vec{v} )
  +
  ( \Vec{x} - \Vec{u} )
  -
  \hA[\Coords{J}_{\Vec{y}}]( \Vec{x}, \Vec{u} )
  +
  ( \Vec{v} - \Vec{y} )
  -
  \hA[\Coords{J}_{\Vec{x}}]( \Vec{v}, \Vec{y} )
\end{align}
\end{subequations}
%
The norm of \eqref{pf:thm:technical:raic:modified:eqn:3} is then bounded by the triangle inequality.
\begin{subequations}
\begin{align}
\label{pf:thm:technical:raic:modified:eqn:4}
  &
  \left\| ( \Vec{x} - \Vec{y} ) - \hA[\Coords{J}]( \Vec{x}, \Vec{y} ) \right\|_{2}
  \\
  &=
  \left\|
    ( \Vec{u} - \Vec{v} )
    -
    \hA[\Coords{J}]( \Vec{u}, \Vec{v} )
    +
    ( \Vec{x} - \Vec{u} )
    -
    \hA[\Coords{J}_{\Vec{y}}]( \Vec{x}, \Vec{u} )
    +
    ( \Vec{v} - \Vec{y} )
    -
    \hA[\Coords{J}_{\Vec{x}}]( \Vec{v}, \Vec{y} )
  \right\|_{2}
  \\
  &\leq
  \left\|
    ( \Vec{u} - \Vec{v} )
    -
    \hA[\Coords{J}]( \Vec{u}, \Vec{v} )
  \right\|_{2}
  +
  \left\|
    ( \Vec{x} - \Vec{u} )
    -
    \hA[\Coords{J}_{\Vec{y}}]( \Vec{x}, \Vec{u} )
  \right\|_{2}
  +
  \left\|
    ( \Vec{v} - \Vec{y} )
    -
    \hA[\Coords{J}_{\Vec{x}}]( \Vec{v}, \Vec{y} )
  \right\|_{2}
\end{align}
\end{subequations}
Suppose
\(
  \DistS{\Vec{u}}{\Vec{v}} < \Tau
\).
Then, by Lemma \ref{lemma:technical:raic:small-scale},
\begin{subequations}
\label{pf:thm:technical:raic:modified:eqn:6}
\begin{align}
  &
  \left\| ( \Vec{x} - \Vec{y} ) - \hA[\Coords{J}]( \Vec{x}, \Vec{y} ) \right\|_{2}
  \\
  &\leq
  \left\|
    ( \Vec{u} - \Vec{v} )
    -
    \hA[\Coords{J}]( \Vec{u}, \Vec{v} )
  \right\|_{2}
  +
  \left\|
    ( \Vec{x} - \Vec{u} )
    -
    \hA[\Coords{J}_{\Vec{y}}]( \Vec{x}, \Vec{u} )
  \right\|_{2}
  +
  \left\|
    ( \Vec{v} - \Vec{y} )
    -
    \hA[\Coords{J}_{\Vec{x}}]( \Vec{v}, \Vec{y} )
  \right\|_{2}
  \\
  &\leq
  3 \UnivConstb_{2} \DDelta
  \\
  &\leq
  \UnivConstb_{1} \sqrt{\DDelta \DistS{\Vec{u}}{\Vec{v}}}
  +
  3 \UnivConstb_{2} \DDelta
\end{align}
\end{subequations}
uniformly with probability at least
\(
  1 - \Rho_{2} = 1 - \frac{\Rho}{2}
\).
On the other hand, if
\(
  \DistS{\Vec{u}}{\Vec{v}} \geq \Tau
\),
then by Lemmas \ref{lemma:technical:raic:large-scale} and \ref{lemma:technical:raic:small-scale},
\begin{subequations}
\label{pf:thm:technical:raic:modified:eqn:7}
\begin{align}
  &
  \left\| ( \Vec{x} - \Vec{y} ) - \hA[\Coords{J}]( \Vec{x}, \Vec{y} ) \right\|_{2}
  \\
  &\leq
  \left\|
    ( \Vec{u} - \Vec{v} )
    -
    \hA[\Coords{J}]( \Vec{u}, \Vec{v} )
  \right\|_{2}
  +
  \left\|
    ( \Vec{x} - \Vec{u} )
    -
    \hA[\Coords{J}_{\Vec{y}}]( \Vec{x}, \Vec{u} )
  \right\|_{2}
  +
  \left\|
    ( \Vec{v} - \Vec{y} )
    -
    \hA[\Coords{J}_{\Vec{x}}]( \Vec{v}, \Vec{y} )
  \right\|_{2}
  \\
  &\leq
  \UnivConstb_{1} \sqrt{\DDelta \DistS{\Vec{u}}{\Vec{v}}}
  +
  \UnivConstb_{2} \DDelta
  +
  \UnivConstb_{2} \DDelta
  \\
  &=
  \UnivConstb_{1} \sqrt{\DDelta \DistS{\Vec{u}}{\Vec{v}}}
  +
  2 \UnivConstb_{2} \DDelta
  \\
  &\leq
  \UnivConstb_{1} \sqrt{\DDelta \DistS{\Vec{u}}{\Vec{v}}}
  +
  3 \UnivConstb_{2} \DDelta
\end{align}
\end{subequations}
uniformly with probability at least
\(
  1 - \Rho_{1} - \Rho_{2} = 1 - \Rho
\).
Therefore, with probability at least
\(
  1 - \Rho
\),
for all
\(
  \Vec{x}, \Vec{y} \in \SparseSphereSubspace{k}{n}
\)
and all
\(
  \Coords{J} \subseteq [n]
\),
\(
  | \Coords{J} | \leq k
\),
\begin{gather}
\label{pf:thm:technical:raic:modified:eqn:5}
  \left\| ( \Vec{x} - \Vec{y} ) - \hA[\Coords{J}]( \Vec{x}, \Vec{y} ) \right\|_{2}
  \leq
  \UnivConstc_{1} \sqrt{\DDelta \DistS{\Vec{u}}{\Vec{v}}}
  +
  \UnivConstc_{2} \DDelta
\end{gather}
where
\(
  \UnivConstc_{1}
  = \UnivConstb_{1}
  = \UnivConstcOneValue
\),
\(
  \UnivConstc_{2}
  = 3\UnivConstb_{2}
  = 3 \cdot \frac{5 \sqrt{72}}{\UnivConstBX}
  = \UnivConstcTwoValue
\),
\(
  \UnivConstB \gtrsim \UnivConstBValue
\),
and
\(
  \UnivConstD = \UnivConstDValue
\),
as specified in \EQN \eqref{eqn:univConstants}.
Succinctly, the measurement matrix \( \MeasMat \) satisfies the \( (k, n, \DDelta, \UnivConstc_{1}, \UnivConstc_{2}) \)-RAIC with probability at least \( 1 - \Rho \).
\end{proof}


\section{Proofs of the concentration inequalities,
Lemmas \ref{lemma:technical:concentration-ineq:(u,v)}-\ref{lemma:technical:concentration-ineq:lbe-local-deviations:union}}
\label{outline:normal|>concentration-ineq-pfs}

\subsection{Orthogonal projections: proof of Lemma \ref{lemma:technical:concentration-ineq:(u,v)}}
\label{outline:normal|>concentration-ineq-pfs|>orthogonal-projections}

%
%
%
%
This appendix proves \LEMMA \ref{lemma:technical:concentration-ineq:(u,v)}.
\APPENDIX \ref{outline:normal|>concentration-ineq-pfs|>orthogonal-projections|>intermediate} presents three intermediate concentration inequalities which, in comparison to \LEMMA \ref{lemma:technical:concentration-ineq:(u,v)}, have an additional condition.
Subsequently, \APPENDIX \ref{outline:normal|>concentration-ineq-pfs|>orthogonal-projections|>pf} proves \LEMMA \ref{lemma:technical:concentration-ineq:(u,v)}, while the proofs of the intermediate lemmas in \APPENDIX \ref{outline:normal|>concentration-ineq-pfs|>orthogonal-projections|>intermediate} are deferred to \APPENDIX \ref{outline:normal|>concentration-ineq-pfs|>orthogonal-projections|>intermediate-pfs}.
The analysis here, as well as in \APPENDIX \ref{outline:normal|>concentration-ineq-pfs|>orthogonal-projections|>intermediate-pfs}, will use the notation of \(  m  \) \iid Gaussian vectors,
\(  \Vec{Z}^{(1)}, \dots, \Vec{Z}^{(m)} \sim \N( \Vec{0}, \Mat{I}_{n \times n} )  \),
as well as (in separate contexts) a single Gaussian vector,
\(  \Vec{Z} \sim \N( \Vec{0}, \Mat{I}_{n \times n} )  \).
Additionally, for \(  \Vec{u}, \Vec{v} \in \R^{n}  \), define the random variables
\(  \RV{R_{i;\Vec{u},\Vec{v}}} \defeq \frac{1}{2} ( \sgn( \langle \Vec{Z}^{(i)}, \Vec{u} \rangle ) - \sgn( \langle \Vec{Z}^{(i)}, \Vec{v} \rangle ) )  \), \(  i \in [m]  \),
and the random vectors
\(  \Vec{R}_{\Vec{u},\Vec{v}} \defeq ( \RV{R_{1;\Vec{u},\Vec{v}}}, \dots, \RV{R_{m;\Vec{u},\Vec{v}}} )  \) and
\(  \Rhatuv \defeq \I{ \Vec{R}_{\Vec{u},\Vec{v}} \neq \Vec{0} }  \).
Then, write
\(  \RV{L}_{\Vec{u},\Vec{v}} \defeq \| \Rhatuv \|_{0}  \).


\subsubsection{Intermediate Lemmas}
\label{outline:normal|>concentration-ineq-pfs|>orthogonal-projections|>intermediate}

\begin{lemma}
\label{lemma:normal:concentration-ineq:proj_u-v}
Let
\(
  \Variable{\ell}, t > 0
\)
and
\(
  \Vec{r} \in \rSet
\)
such that
\(
  \left\| \Vec{r} \right\|_{0} = \Variable{\ell}
\).
Fix an ordered pair of real-valued unit vectors,
\(
  ( \Vec{u}, \Vec{v} ) \in \Sphere{n} \times \Sphere{n}
\).
\ORIG{Define the random variable
\(
  \RV{L}_{\Vec{u},\Vec{v}} = \left\| \Rhatuv \right\|_{0}
\),
and suppose
\(
  \Rhatuv = \Vec{r}
\)
and
\(
  \RV{L}_{\Vec{u},\Vec{v}} = \Variable{\ell}
\).
Then, the random variable}%
\EDIT{The random variable}
\(
  \RV{X}_{\Vec{u},\Vec{v}}
  =
  \left\langle
    \frac{\Vec{u} - \Vec{v}}{\left\| \Vec{u} - \Vec{v} \right\|_{2}},
    \sum_{i=1}^{m}
    \Vec{Z}^{(i)} \RV{R_{i;\Vec{u},\Vec{v}}}
  \right\rangle
\)
conditioned on
\(
  \Rhatuv = \Vec{r},
  \RV{L}_{\Vec{u},\Vec{v}} = \Variable{\ell}
\)
is concentrated around its mean such that
\begin{gather}
\label{eqn:normal:distribution:proj_u-v:inequality}
  \Pr
  \left(
    \left|
      \RV{X}_{\Vec{u},\Vec{v}}
      -
      \E \left[ \RV{X}_{\Vec{u},\Vec{v}} \middle| \Rhatuv = \Vec{r}, \RV{L}_{\Vec{u},\Vec{v}} = \Variable{\ell} \right]
    \right|
    \geq
    \Variable{\ell} \Variable{t}
    \phantom{\Big|}\middle|
    \Rhatuv = \Vec{r},
    \RV{L}_{\Vec{u},\Vec{v}} = \Variable{\ell}
  \right)
  \leq
  2 e^{-\frac{1}{2} \Variable{\ell} t^{2}}
,\end{gather}
\BEGINEDIT
where
\begin{gather}
  \E \left[ \RV{X}_{\Vec{u},\Vec{v}} \middle| \Rhatuv = \Vec{r}, \RV{L}_{\Vec{u},\Vec{v}} = \Variable{\ell} \right]
  =
  \sqrt{\frac{\pi}{2}}
  \frac{\Variable{\ell} \DistS{\Vec{u}}{\Vec{v}}}{\theta_{\Vec{u},\Vec{v}}}
.\end{gather}
\ENDEDIT
\end{lemma}

\begin{lemma}
\label{lemma:normal:concentration-ineq:proj_u+v}
Let
\(
  \Variable{\ell}, t > 0
\)
and
\(
  \Vec{r} \in \rSet
\)
such that
\(
  \left\| \Vec{r} \right\|_{0} = \Variable{\ell}
\).
Fix an ordered pair of real-valued unit vectors,
\(
  ( \Vec{u}, \Vec{v} ) \in \Sphere{n} \times \Sphere{n}
\).
\ORIG{Define the random variable
\(
  \RV{L}_{\Vec{u},\Vec{v}} = \left\| \Rhatuv \right\|_{0}
\),
and suppose
\(
  \Rhatuv = \Vec{r}
\)
and
\(
  \RV{L}_{\Vec{u},\Vec{v}} = \Variable{\ell}
\).}%
\ORIG{Then, the random variable}%
\EDIT{The random variable}
\(
  \RV{X}_{\Vec{u},\Vec{v}}
  =
  \left\langle
    \frac{\Vec{u} + \Vec{v}}{\left\| \Vec{u} + \Vec{v} \right\|_{2}},
    \sum_{i=1}^{m}
    \Vec{Z}^{(i)} \RV{R_{i;\Vec{u},\Vec{v}}}
  \right\rangle
\)
conditioned on
\(
  \Rhatuv = \Vec{r},
  \RV{L}_{\Vec{u},\Vec{v}} = \Variable{\ell}
\)
is concentrated around zero such that
\begin{gather}
\label{eqn:normal:distribution:proj_u+v:inequality}
  \Pr
  \left(
    \left| \RV{X}_{\Vec{u},\Vec{v}} \right|
    \geq
    \Variable{\ell} \Variable{t}
    \phantom{\Big|}\middle|
    \Rhatuv = \Vec{r},
    \RV{L}_{\Vec{u},\Vec{v}} = \Variable{\ell}
  \right)
  \leq
  2 e^{-\frac{1}{2} \Variable{\ell} t^{2}}
.\end{gather}
\end{lemma}

\begin{lemma}
\label{lemma:normal:concentration-ineq:third}
Let
\(
  \Variable{d}, \Variable{\ell}, \Variable{t} > 0
\).
\EDITX{Write \(  \kO \defeq \min \{ 2k, n \}  \).}
Fix an ordered pair of \( k \)-sparse, real-valued unit vectors,
\(
  ( \Vec{u}, \Vec{v} ) \in (\SparseSphereSubspace{k}{n}) \times (\SparseSphereSubspace{k}{n})
\),
and let
\(
  \Coords{J} \subseteq [n]
\)
with
\(
  | \Coords{J} | \leq \Variable{d}
\).
Define the random variables
\(
  \Vec{Y}_{\Vec{u},\Vec{v}}^{(i)}
  =
  \Vec{Z}^{(i)}
  -
  \left\langle
    \frac{\Vec{u} - \Vec{v}}{\left\| \Vec{u} - \Vec{v} \right\|_{2}},
    \Vec{Z}^{(i)}
  \right\rangle
  \frac{\Vec{u} - \Vec{v}}{\left\| \Vec{u} - \Vec{v} \right\|_{2}}
  -
  \left\langle
    \frac{\Vec{u} + \Vec{v}}{\left\| \Vec{u} + \Vec{v} \right\|_{2}},
    \Vec{Z}^{(i)}
  \right\rangle
  \frac{\Vec{u} + \Vec{v}}{\left\| \Vec{u} + \Vec{v} \right\|_{2}}
\)
\EDIT{and}
\(
  \RV{X}_{\Vec{u},\Vec{v}}
  =
  \left\|
    \ThresholdSet{\EDIT{\Supp( \Vec{u} ) \cup \Supp( \Vec{v} ) \cup} \Coords{J}}
    (
      \sum_{i=1}^{m} \Vec{Y}_{\Vec{u},\Vec{v}}^{(i)} \RV{R}_{i;\Vec{u},\Vec{v}}
    )
  \right\|_{2}
\).
\ORIG{
and
\(
  \RV{L}_{\Vec{u},\Vec{v}} = \left\| \Rhatuv \right\|_{0}
\),
and suppose
\(
  \Rhatuv = \Vec{r}
\)
and
\(
  \RV{L}_{\Vec{u},\Vec{v}} = \Variable{\ell}
\).}%
Then,
\begin{gather}
\label{eqn:normal:concentration-ineq:third}
  \Pr
  \left(
    \RV{X}_{\Vec{u},\Vec{v}}
    \geq
    \left( \sqrt{\KXO} + \sqrt{\Variable{d}} \right) \sqrt{\Variable{\ell}}
    +
    \Variable{\ell} \Variable{t}
    \phantom{\Big|}\middle|
    \Rhatuv = \Vec{r},
    \RV{L}_{\Vec{u},\Vec{v}} = \Variable{\ell}
  \right)
  \leq
  2 e^{-\frac{1}{8} \Variable{\ell} \Variable{t}^{2}}
\end{gather}
\end{lemma}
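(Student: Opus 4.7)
The plan is to reduce the problem to a standard Gaussian concentration inequality by exploiting the orthogonality structure built into the definition of $\Vec{Y}^{(i)}_{\Vec{u},\Vec{v}}$. First I would observe that, because $\Vec{u}$ and $\Vec{v}$ are unit vectors, $\langle \Vec{u}-\Vec{v},\ \Vec{u}+\Vec{v}\rangle = \|\Vec{u}\|_{2}^{2} - \|\Vec{v}\|_{2}^{2} = 0$, so the normalized vectors $\Vec{a} = (\Vec{u}-\Vec{v})/\|\Vec{u}-\Vec{v}\|_{2}$ and $\Vec{b} = (\Vec{u}+\Vec{v})/\|\Vec{u}+\Vec{v}\|_{2}$ are orthonormal and together span $\Span(\Vec{u},\Vec{v})$. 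Consequently $\Vec{Y}^{(i)}_{\Vec{u},\Vec{v}} = \Mat{P}\Vec{Z}^{(i)}$, where $\Mat{P} = \Mat{I}_{n \times n} - \Vec{a}\Vec{a}^{\T} - \Vec{b}\Vec{b}^{\T}$ is the orthogonal projection onto the $(n-2)$-dimensional subspace $\Span(\Vec{u},\Vec{v})^{\perp}$. Since $\Vec{R}_{\Vec{u},\Vec{v}}$ is determined by the signs of $\langle \Vec{u}, \Vec{Z}^{(i)}\rangle$ and $\langle \Vec{v}, \Vec{Z}^{(i)}\rangle$, it is measurable with respect to the projections of the $\Vec{Z}^{(i)}$'s onto $\Span(\Vec{u},\Vec{v})$, which are independent of their orthogonal components. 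Conditioning on $\Vec{R}_{\Vec{u},\Vec{v}} = \Vec{r}$ and $\RV{L}_{\Vec{u},\Vec{v}} = \Variable{\ell}$ therefore leaves the $\Vec{Y}^{(i)}$'s as i.i.d.\ centered Gaussians with covariance $\Mat{P}$.

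Next I would identify the conditional law of $\Vec{W} = \sum_{i=1}^{m} \RV{R}_{i;\Vec{u},\Vec{v}} \Vec{Y}^{(i)}_{\Vec{u},\Vec{v}}$. Only the $\Variable{\ell}$ indices with $r_{i} = \pm 1$ contribute, and since $r_{i}\Vec{Y}^{(i)}$ is equal in distribution to $\Vec{Y}^{(i)}$ by Gaussian symmetry, $\Vec{W}$ is conditionally distributed as $\N(\Vec{0}, \Variable{\ell}\Mat{P})$---equivalently, as $\sqrt{\Variable{\ell}}\,\Mat{P}\Vec{G}$ for a standard Gaussian $\Vec{G} \sim \N(\Vec{0}, \Mat{I}_{n \times n})$. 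To produce the $\sqrt{2k}$ term, I would enlarge $\Coords{J}$ to $\Coords{J}^{*} = \Coords{J} \cup \supp(\Vec{u}) \cup \supp(\Vec{v})$, of size at most $d + 2k$, so that $\|\ThresholdSet{\Coords{J}}(\Vec{W})\|_{2} \leq \|\ThresholdSet{\Coords{J}^{*}}(\Vec{W})\|_{2}$, and then bound the expectation by
\begin{gather*}
  \E \|\ThresholdSet{\Coords{J}^{*}}(\Vec{W})\|_{2}
  \leq
  \sqrt{\E \|\ThresholdSet{\Coords{J}^{*}}(\Vec{W})\|_{2}^{2}}
  =
  \sqrt{\Variable{\ell}\,\OperatorName{tr}(\ThresholdSetMat{\Coords{J}^{*}}\Mat{P}\ThresholdSetMat{\Coords{J}^{*}})}
  \leq
  \sqrt{\Variable{\ell}\,|\Coords{J}^{*}|}
  \leq
  \sqrt{\Variable{\ell}(d+2k)}
  \leq
  (\sqrt{2k}+\sqrt{d})\sqrt{\Variable{\ell}},
\end{gather*}
using that the diagonal entries of $\Mat{P}$ are at most $1$ and the subadditivity of $\sqrt{\cdot}$ in the last step.

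Finally, I would invoke Borell--TIS Gaussian Lipschitz concentration applied to the map $\Vec{G} \mapsto \sqrt{\Variable{\ell}}\,\|\ThresholdSetMat{\Coords{J}^{*}}\Mat{P}\Vec{G}\|_{2}$, whose Lipschitz constant is at most $\sqrt{\Variable{\ell}}\,\|\ThresholdSetMat{\Coords{J}^{*}}\Mat{P}\|_{\mathrm{op}} \leq \sqrt{\Variable{\ell}}$ (since both $\ThresholdSetMat{\Coords{J}^{*}}$ and $\Mat{P}$ are orthogonal projections). This yields the one-sided tail $\Pr(\|\ThresholdSet{\Coords{J}^{*}}(\Vec{W})\|_{2} \geq \E\|\ThresholdSet{\Coords{J}^{*}}(\Vec{W})\|_{2} + \Variable{\ell}\Variable{t}) \leq e^{-\Variable{\ell}\Variable{t}^{2}/2}$; combined with the mean bound and the crude inequality $e^{-\Variable{\ell}\Variable{t}^{2}/2} \leq 2e^{-\Variable{\ell}\Variable{t}^{2}/8}$, this delivers the stated claim. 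The main subtlety is the conditional-independence step---verifying rigorously that the conditional distribution of each $\Vec{Y}^{(i)}$ given $\Vec{R}_{\Vec{u},\Vec{v}} = \Vec{r}$ retains its unconditional $\N(\Vec{0}, \Mat{P})$ law---which ultimately rests on the independence of the parallel and orthogonal components of an isotropic Gaussian relative to any fixed two-dimensional subspace.
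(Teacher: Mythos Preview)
Your proof is correct and, in fact, streamlines the paper's argument. The paper proceeds by splitting $\Coords{J}$ into $\Coords{J'} = \Coords{J} \cap (\supp(\Vec{u}) \cup \supp(\Vec{v}))$ and $\Coords{J''} = \Coords{J} \setminus (\supp(\Vec{u}) \cup \supp(\Vec{v}))$, bounding $\|\ThresholdSet{\Coords{J'}}(\cdot)\|_{2}$ (further enlarged to all of $\supp(\Vec{u}) \cup \supp(\Vec{v})$) and $\|\ThresholdSet{\Coords{J''}}(\cdot)\|_{2}$ separately via an explicit orthonormal-basis reduction together with the paper's own Gaussian-norm Chernoff bound (Corollary~\ref{cor:appendix:concentration-norm-of-sum-of-normal}), each with deviation $\tfrac{1}{2}\Variable{\ell}\Variable{t}$, and then union-bounds; this is precisely where the constant $\tfrac{1}{8}$ in the exponent and the prefactor $2$ arise. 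You instead enlarge to $\Coords{J}^{*} = \Coords{J} \cup \supp(\Vec{u}) \cup \supp(\Vec{v})$ in one shot, observe directly that the conditional law is $\N(\Vec{0},\Variable{\ell}\Mat{P})$, bound the mean via the trace, and apply Borell--TIS once, obtaining the sharper exponent $-\tfrac{1}{2}\Variable{\ell}\Variable{t}^{2}$ before relaxing.

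Both arguments rest on the same conditional-independence observation (your version is exactly the content of the paper's Lemma~\ref{lemma:normal:distribution:third}, phrased at the level of covariance rather than coordinate-by-coordinate). What your route buys is a single concentration step instead of two plus a union bound, and a cleaner constant; what the paper's route buys is that it stays within the elementary Chernoff-type tools developed in the appendix and never invokes Borell--TIS externally.
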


\HIDEDRAFT{
\begin{lemma}
\label{lemma:normal:concentration-ineq:third:old}
Let
\(
  \Variable{\ell}, \Variable{t} > 0
\).
Fix an ordered pair of \( k \)-sparse, real-valued unit vectors,
\(
  ( \Vec{u}, \Vec{v} ) \in (\SparseSphereSubspace{k}{n}) \times (\SparseSphereSubspace{k}{n})
\).
Let
\(
  \Set{J} \subseteq [n] \setminus ( \supp(\Vec{u}) \cup \supp(\Vec{v}) )
\)
with
\(
  | \Set{J} | \leq k
\).
Define the random variable
\(
  \Luv = \left\| \Rhatuv \right\|_{0}
\),
and suppose
\(
  \Luv = \Variable{\ell}
\).
Then, the random variable
\(
  \RV{X}
  =
  \left\|
    \ThresholdSet{\Set{J}}
    (
      \sum_{i=1}^{m} \Vec{Z}^{(i)} \RV{R_{i;\Vec{u},\Vec{v}}}
    )
  \right\|_{2}
\),
conditioned on
\(
  \Luv = \Variable{\ell}
\),
\( \Vec{X} \) is concentrated around its mean such that
\begin{gather}
\label{eqn:normal:distribution:outside-supp:concentration:restatement}
  \Pr
  \left(
    \left|
      \RV{X}
      -
      \E \left[ \RV{X} \middle| \Luv = \Variable{\ell} \right]
    \right|
    \geq
    \Variable{\ell} \Variable{t}
    \phantom{\Big|}\middle|
    \Luv = \Variable{\ell}
  \right)
  \leq
  2 e^{-\frac{1}{2} \Variable{\ell} t^{2}}
,\end{gather}
where
\(
  \E[X]
  \leq
  \sqrt{k \Variable{\ell}}
\).
\end{lemma}
}


\BEGINEDIT
\subsubsection{Proof of \LEMMA \ref{lemma:technical:concentration-ineq:(u,v)}}
\label{outline:normal|>concentration-ineq-pfs|>orthogonal-projections|>pf}

\begin{proof}
{\LEMMA \ref{lemma:technical:concentration-ineq:(u,v)}}
Fix \(  \Variable{t} > 0  \), \(  \Variable{\ell} \in \ZeroTo{m}  \), and \(  \Vec{u}, \Vec{v} \in \R^{n}  \) arbitrarily.
Towards proving the lemma, the following claim about the random vector \(  \Rhatuv  \) is stated and verified.
\begin{claim}
\label{pf:lemma:technical:concentration-ineq:(u,v):claim:1}
Write
\(  \Set{R}_{\Variable{\ell}} \defeq \{ \Vec{r} \in \rSet : \| \Vec{r} \|_{0} = \Variable{\ell} \}  \).
For \(  \Vec{r} \in \rSet  \),
\begin{gather}
  \Pr \left( \Rhatuv = \Vec{r} \middle| \RV{L}_{\Vec{u},\Vec{v}} = \Variable{\ell} \right)
  =
  \begin{cases}
  0,                                       &\cIf \Vec{r} \notin \Set{R}_{\Variable{\ell}}, \\
  \frac{1}{| \Set{R}_{\Variable{\ell}} |}, &\cIf \Vec{r} \in \Set{R}_{\Variable{\ell}}.
  \end{cases}
\end{gather}
\end{claim}
\begin{subproof}
{\CLAIM \ref{pf:lemma:technical:concentration-ineq:(u,v):claim:1}}
Notice that \(  \Rhatuv \in \Set{R}_{\Variable{\ell}}  \) if and only if \(  \RV{L}_{\Vec{u},\Vec{v}} = \Variable{\ell}  \).
Thus, for \(  \Vec{r} \in \Set{R}_{\Variable{\ell}}  \), there is an equality:
\(  \Pr \left( \Rhatuv = \Vec{r} \middle| \RV{L}_{\Vec{u},\Vec{v}} = \Variable{\ell} \right) = \Pr \left( \Rhatuv = \Vec{r} \middle| \Rhatuv \in \Set{R}_{\Variable{\ell}} \right)  \).
Hence,
\begin{align*}
  \sum_{\Vec{r} \in \Set{R}_{\Variable{\ell}}}
  \Pr \left( \Rhatuv = \Vec{r} \middle| \RV{L}_{\Vec{u},\Vec{v}} = \Variable{\ell} \right)
  =
  \sum_{\Vec{r} \in \Set{R}_{\Variable{\ell}}}
  \Pr \left( \Rhatuv = \Vec{r} \middle| \Rhatuv \in \Set{R}_{\Variable{\ell}} \right)
  = 1
,\end{align*}
and by complementation
\begin{align*}
  \sum_{\Vec{r} \in \rSet \setminus \Set{R}_{\Variable{\ell}}}
  \Pr \left( \Rhatuv = \Vec{r} \middle| \RV{L}_{\Vec{u},\Vec{v}} = \Variable{\ell} \right)
  =
  \sum_{\Vec{r} \in \rSet \setminus \Set{R}_{\Variable{\ell}}}
  \Pr \left( \Rhatuv = \Vec{r} \middle| \Rhatuv \in \Set{R}_{\Variable{\ell}} \right)
  = 0
.\end{align*}
By the latter equation and the first probability axiom (nonnegativity), if \(  \Vec{r} \in \rSet \setminus \Set{R}_{\Variable{\ell}}  \), then
\(  \Pr \left( \Rhatuv = \Vec{r} \middle| \RV{L}_{\Vec{u},\Vec{v}} = \Variable{\ell} \right) = 0  \).
On the other hand, because the random variables
\(  \RV{R}_{i;\Vec{u},\Vec{v}}  \), \(  i \in [m]  \),
are \iid and \(  \RV{L}_{\Vec{u},\Vec{v}}  \) (as an unweighted sum of \iid random variables) is invariant to permutation of the indexes, \(  i \in [m]  \), it follows that
\(  \Pr \left( \Rhatuv = \Vec{r} \middle| \RV{L}_{\Vec{u},\Vec{v}} = \Variable{\ell} \right) = \Pr \left( \Rhatuv = \Vec{r'} \middle| \RV{L}_{\Vec{u},\Vec{v}} = \Variable{\ell} \right)  \)
for all pairs, \(  \Vec{r}, \Vec{r'} \in \Set{R}_{\Variable{\ell}}  \).
Writing \(  p = \Pr \left( \Rhatuv = \Vec{r} \middle| \RV{L}_{\Vec{u},\Vec{v}} = \Variable{\ell} \right)  \) for an arbitrary choice of \(  \Vec{r} \in \Set{R}_{\Variable{\ell}}  \), observe:
\begin{align*}
  &
  1
  =
  \sum_{\Vec{r} \in \Set{R}_{\Variable{\ell}}}
  \Pr \left( \Rhatuv = \Vec{r} \middle| \RV{L}_{\Vec{u},\Vec{v}} = \Variable{\ell} \right)
  =
  \sum_{\Vec{r} \in \Set{R}_{\Variable{\ell}}} p
  =
  | \Set{R}_{\Variable{\ell}} | p
  \\
  &\longrightarrow
  p = \frac{1}{| \Set{R}_{\Variable{\ell}} |}
.\end{align*}
Thus,
\(  \Pr \left( \Rhatuv = \Vec{r} \middle| \RV{L}_{\Vec{u},\Vec{v}} = \Variable{\ell} \right) = \frac{1}{| \Set{R}_{\Variable{\ell}} |}  \)
for \(  \Vec{r} \in \Set{R}_{\Variable{\ell}}  \).
This completes the proof of the claim.
%
\end{subproof}
Throughout the remainder of the proof of \LEMMA \ref{lemma:technical:concentration-ineq:(u,v)}, the notation of the set \(  \Set{R}_{\Variable{\ell}}  \) will persist.
With \CLAIM \ref{pf:lemma:technical:concentration-ineq:(u,v):claim:1} established, \EQN \ref{eqn:technical:concentration-ineq:(u,v):u-v} will be the concentration inequality proved first.
It is recalled below:
\begin{gather*}
  \Pr
  \left(
    \left|
      \RV{X}_{\Vec{u},\Vec{v}}
      -
      \sqrt{\frac{\pi}{2}}
      \frac{\Variable{\ell}}{m}
      \frac{\DistS{\Vec{u}}{\Vec{v}}}{\theta_{\Vec{u},\Vec{v}}}
    \right|
    \geq
    \frac{\Variable{\ell} \Variable{t}}{m}
    \middle|
    \RV{L}_{\Vec{u},\Vec{v}} = \Variable{\ell}
  \right)
  \leq
  2 e^{-\frac{1}{2} \Variable{\ell} \Variable{t}^{2}}
,\end{gather*}
where
\begin{gather*}
  \RV{X}_{\Vec{u},\Vec{v}}
  \defeq
  \left\langle
    \frac{\Vec{u} - \Vec{v}}{\left\| \Vec{u} - \Vec{v} \right\|_{2}},
    \frac{1}{m}
    \sum_{i=1}^{m}
    \MeasVec^{(i)}
    \cdot
    \frac{1}{2} \left( \sgn( \langle \MeasVec^{(i)}, \Vec{u} \rangle ) - \sgn( \langle \MeasVec^{(i)}, \Vec{v} \rangle ) \right)
  \right\rangle
.\end{gather*}
This concentration inequality is derived as follows, using the law of total probability, \LEMMA \ref{lemma:normal:concentration-ineq:proj_u-v}, and \CLAIM \ref{pf:lemma:technical:concentration-ineq:(u,v):claim:1}:
\begin{align*}
  &
  \Pr
  \left(
    \left|
      \RV{X}_{\Vec{u},\Vec{v}}
      -
      \sqrt{\frac{\pi}{2}}
      \frac{\Variable{\ell}}{m}
      \frac{\DistS{\Vec{u}}{\Vec{v}}}{\theta_{\Vec{u},\Vec{v}}}
    \right|
    \geq
    \frac{\Variable{\ell} \Variable{t}}{m}
    \middle|
    \RV{L}_{\Vec{u},\Vec{v}} = \Variable{\ell}
  \right)
  \\
  &=
  \sum_{\Vec{r} \in \rSet}
  \Pr
  \left(
    \left|
      \RV{X}_{\Vec{u},\Vec{v}}
      -
      \sqrt{\frac{\pi}{2}}
      \frac{\Variable{\ell}}{m}
      \frac{\DistS{\Vec{u}}{\Vec{v}}}{\theta_{\Vec{u},\Vec{v}}}
    \right|
    \geq
    \frac{\Variable{\ell} \Variable{t}}{m}
    \middle|
    \Rhatuv = \Vec{r},
    \RV{L}_{\Vec{u},\Vec{v}} = \Variable{\ell}
  \right)
  \Pr \left( \Rhatuv = \Vec{r} \middle| \RV{L}_{\Vec{u},\Vec{v}} = \Variable{\ell} \right)
  \\
  &\dCmt \text{by the law of total probability}
  \\
  &=
  \sum_{\Vec{r} \in \Set{R}_{\Variable{\ell}}}
  \Pr
  \left(
    \left|
      \RV{X}_{\Vec{u},\Vec{v}}
      -
      \sqrt{\frac{\pi}{2}}
      \frac{\Variable{\ell}}{m}
      \frac{\DistS{\Vec{u}}{\Vec{v}}}{\theta_{\Vec{u},\Vec{v}}}
    \right|
    \geq
    \frac{\Variable{\ell} \Variable{t}}{m}
    \middle|
    \Rhatuv = \Vec{r},
    \RV{L}_{\Vec{u},\Vec{v}} = \Variable{\ell}
  \right)
  \Pr \left( \Rhatuv = \Vec{r} \middle| \RV{L}_{\Vec{u},\Vec{v}} = \Variable{\ell} \right)
  \\
  &\AlignTab
  +
  \sum_{\Vec{r} \in \rSet \setminus \Set{R}_{\Variable{\ell}}}
  \Pr
  \left(
    \left|
      \RV{X}_{\Vec{u},\Vec{v}}
      -
      \sqrt{\frac{\pi}{2}}
      \frac{\Variable{\ell}}{m}
      \frac{\DistS{\Vec{u}}{\Vec{v}}}{\theta_{\Vec{u},\Vec{v}}}
    \right|
    \geq
    \frac{\Variable{\ell} \Variable{t}}{m}
    \middle|
    \Rhatuv = \Vec{r},
    \RV{L}_{\Vec{u},\Vec{v}} = \Variable{\ell}
  \right)
  \Pr \left( \Rhatuv = \Vec{r} \middle| \RV{L}_{\Vec{u},\Vec{v}} = \Variable{\ell} \right)
  \\
  &\dCmt \text{by partitioning the image of \(  \Rhatuv  \)}
  \\
  &=
  \sum_{\Vec{r} \in \Set{R}_{\Variable{\ell}}}
  \Pr
  \left(
    \left|
      \RV{X}_{\Vec{u},\Vec{v}}
      -
      \sqrt{\frac{\pi}{2}}
      \frac{\Variable{\ell}}{m}
      \frac{\DistS{\Vec{u}}{\Vec{v}}}{\theta_{\Vec{u},\Vec{v}}}
    \right|
    \geq
    \frac{\Variable{\ell} \Variable{t}}{m}
    \middle|
    \Rhatuv = \Vec{r},
    \RV{L}_{\Vec{u},\Vec{v}} = \Variable{\ell}
  \right)
  \Pr \left( \Rhatuv = \Vec{r} \middle| \RV{L}_{\Vec{u},\Vec{v}} = \Variable{\ell} \right)
  \\
  &\AlignTab
  +
  \sum_{\Vec{r} \in \rSet \setminus \Set{R}_{\Variable{\ell}}}
  \Pr
  \left(
    \left|
      \RV{X}_{\Vec{u},\Vec{v}}
      -
      \sqrt{\frac{\pi}{2}}
      \frac{\Variable{\ell}}{m}
      \frac{\DistS{\Vec{u}}{\Vec{v}}}{\theta_{\Vec{u},\Vec{v}}}
    \right|
    \geq
    \frac{\Variable{\ell} \Variable{t}}{m}
    \middle|
    \Rhatuv = \Vec{r},
    \RV{L}_{\Vec{u},\Vec{v}} = \Variable{\ell}
  \right)
  \cdot 0
  \\
  &\dCmt \text{by \CLAIM \ref{pf:lemma:technical:concentration-ineq:(u,v):claim:1}}
  \\
  &=
  \sum_{\Vec{r} \in \Set{R}_{\Variable{\ell}}}
  \Pr
  \left(
    \left|
      \RV{X}_{\Vec{u},\Vec{v}}
      -
      \sqrt{\frac{\pi}{2}}
      \frac{\Variable{\ell}}{m}
      \frac{\DistS{\Vec{u}}{\Vec{v}}}{\theta_{\Vec{u},\Vec{v}}}
    \right|
    \geq
    \frac{\Variable{\ell} \Variable{t}}{m}
    \middle|
    \Rhatuv = \Vec{r},
    \RV{L}_{\Vec{u},\Vec{v}} = \Variable{\ell}
  \right)
  \Pr \left( \Rhatuv = \Vec{r} \middle| \RV{L}_{\Vec{u},\Vec{v}} = \Variable{\ell} \right)
  \\
  &=
  \frac{1}{| \Set{R}_{\Variable{\ell}} |}
  \sum_{\Vec{r} \in \Set{R}_{\Variable{\ell}}}
  \Pr
  \left(
    \left|
      \RV{X}_{\Vec{u},\Vec{v}}
      -
      \sqrt{\frac{\pi}{2}}
      \frac{\Variable{\ell}}{m}
      \frac{\DistS{\Vec{u}}{\Vec{v}}}{\theta_{\Vec{u},\Vec{v}}}
    \right|
    \geq
    \frac{\Variable{\ell} \Variable{t}}{m}
    \middle|
    \Rhatuv = \Vec{r},
    \RV{L}_{\Vec{u},\Vec{v}} = \Variable{\ell}
  \right)
  \\
  &\dCmt \text{by \CLAIM \ref{pf:lemma:technical:concentration-ineq:(u,v):claim:1}}
  \\
  &=
  \frac{1}{| \Set{R}_{\Variable{\ell}} |}
  \sum_{\Vec{r} \in \Set{R}_{\Variable{\ell}}}
  \Pr
  \left(
    \left|
      \RV{X}_{\Vec{u},\Vec{v}}
      -
      \E \left[ \RV{X}_{\Vec{u},\Vec{v}} \middle| \Rhatuv = \Vec{r}, \RV{L}_{\Vec{u},\Vec{v}} = \Variable{\ell} \right]
    \right|
    \geq
    \frac{\Variable{\ell} \Variable{t}}{m}
    \middle|
    \Rhatuv = \Vec{r},
    \RV{L}_{\Vec{u},\Vec{v}} = \Variable{\ell}
  \right)
  \\
  &\dCmt \text{by \LEMMA \ref{lemma:normal:concentration-ineq:proj_u-v}}
  \\
  &\leq
  \frac{1}{| \Set{R}_{\Variable{\ell}} |}
  \sum_{\Vec{r} \in \Set{R}_{\Variable{\ell}}}
  2 e^{-\frac{1}{2} \Variable{\ell} \Variable{t}^{2}}
  \\
  &\dCmt \text{by \LEMMA \ref{lemma:normal:concentration-ineq:proj_u-v}}
  \\
  &=
  \frac{1}{| \Set{R}_{\Variable{\ell}} |}
  \cdot | \Set{R}_{\Variable{\ell}} |
  \cdot
  2 e^{-\frac{1}{2} \Variable{\ell} \Variable{t}^{2}}
  \\
  &=
  2 e^{-\frac{1}{2} \Variable{\ell} \Variable{t}^{2}}
.\end{align*}
This completes the derivation of \EQN \eqref{eqn:technical:concentration-ineq:(u,v):u-v}.
%
\par 
%
The derivations of \EQNS \eqref{eqn:technical:concentration-ineq:(u,v):u+v} and \eqref{eqn:technical:concentration-ineq:(u,v):g_Z} will follow an analogous approach.
Towards verifying \EQN \eqref{eqn:technical:concentration-ineq:(u,v):u+v}, define the random variable
\begin{gather*}
  \RV{X'}_{\Vec{u},\Vec{v}}
  \defeq
  \left\langle
    \frac{\Vec{u} + \Vec{v}}{\left\| \Vec{u} + \Vec{v} \right\|_{2}},
    \frac{1}{m}
    \sum_{i=1}^{m}
    \MeasVec^{(i)}
    \cdot
    \frac{1}{2} \left( \sgn( \langle \MeasVec^{(i)}, \Vec{u} \rangle ) - \sgn( \langle \MeasVec^{(i)}, \Vec{v} \rangle ) \right)
  \right\rangle
\end{gather*}
so that with this notation, \EQN \eqref{eqn:technical:concentration-ineq:(u,v):u+v} is stated as follows:
\begin{gather*}
  \Pr
  \left(
    \left|
      \RV{X'}_{\Vec{u},\Vec{v}}
    \right|
    \geq
    \frac{\Variable{\ell} \Variable{t}}{m}
    \middle|
    \RV{L}_{\Vec{u},\Vec{v}} = \Variable{\ell}
  \right)
  \leq
  2 e^{-\frac{1}{2} \Variable{\ell} \Variable{t}^{2}}
.\end{gather*}
As similarly seen in the derivation of \EQN \eqref{eqn:technical:concentration-ineq:(u,v):u-v}, \EQN \eqref{eqn:technical:concentration-ineq:(u,v):u+v} is obtained as follows:
\begin{align*}
  &
  \Pr
  \left(
    \left|
      \RV{X'}_{\Vec{u},\Vec{v}}
    \right|
    \geq
    \frac{\Variable{\ell} \Variable{t}}{m}
    \middle|
    \RV{L}_{\Vec{u},\Vec{v}} = \Variable{\ell}
  \right)
  \\
  &=
  \sum_{\Vec{r} \in \rSet}
  \Pr
  \left(
    \left|
      \RV{X'}_{\Vec{u},\Vec{v}}
    \right|
    \geq
    \frac{\Variable{\ell} \Variable{t}}{m}
    \middle|
    \Rhatuv = \Vec{r},
    \RV{L}_{\Vec{u},\Vec{v}} = \Variable{\ell}
  \right)
  \Pr \left( \Rhatuv = \Vec{r} \middle| \RV{L}_{\Vec{u},\Vec{v}} = \Variable{\ell} \right)
  \\
  &\dCmt \text{by the law of total probability}
  \\
  &=
  \sum_{\Vec{r} \in \Set{R}_{\Variable{\ell}}}
  \Pr
  \left(
    \left|
      \RV{X'}_{\Vec{u},\Vec{v}}
    \right|
    \geq
    \frac{\Variable{\ell} \Variable{t}}{m}
    \middle|
    \Rhatuv = \Vec{r},
    \RV{L}_{\Vec{u},\Vec{v}} = \Variable{\ell}
  \right)
  \Pr \left( \Rhatuv = \Vec{r} \middle| \RV{L}_{\Vec{u},\Vec{v}} = \Variable{\ell} \right)
  \\
  &\AlignTab
  +
  \sum_{\Vec{r} \in \rSet \setminus \Set{R}_{\Variable{\ell}}}
  \Pr
  \left(
    \left|
      \RV{X'}_{\Vec{u},\Vec{v}}
    \right|
    \geq
    \frac{\Variable{\ell} \Variable{t}}{m}
    \middle|
    \Rhatuv = \Vec{r},
    \RV{L}_{\Vec{u},\Vec{v}} = \Variable{\ell}
  \right)
  \Pr \left( \Rhatuv = \Vec{r} \middle| \RV{L}_{\Vec{u},\Vec{v}} = \Variable{\ell} \right)
  \\
  &\dCmt \text{by partitioning the image of \(  \Rhatuv  \)}
  \\
  &=
  \sum_{\Vec{r} \in \Set{R}_{\Variable{\ell}}}
  \Pr
  \left(
    \left|
      \RV{X'}_{\Vec{u},\Vec{v}}
    \right|
    \geq
    \frac{\Variable{\ell} \Variable{t}}{m}
    \middle|
    \Rhatuv = \Vec{r},
    \RV{L}_{\Vec{u},\Vec{v}} = \Variable{\ell}
  \right)
  \Pr \left( \Rhatuv = \Vec{r} \middle| \RV{L}_{\Vec{u},\Vec{v}} = \Variable{\ell} \right)
  \\
  &\AlignTab
  +
  \sum_{\Vec{r} \in \rSet \setminus \Set{R}_{\Variable{\ell}}}
  \Pr
  \left(
    \left|
      \RV{X'}_{\Vec{u},\Vec{v}}
    \right|
    \geq
    \frac{\Variable{\ell} \Variable{t}}{m}
    \middle|
    \Rhatuv = \Vec{r},
    \RV{L}_{\Vec{u},\Vec{v}} = \Variable{\ell}
  \right)
  \cdot 0
  \\
  &\dCmt \text{by \CLAIM \ref{pf:lemma:technical:concentration-ineq:(u,v):claim:1}}
  \\
  &=
  \sum_{\Vec{r} \in \Set{R}_{\Variable{\ell}}}
  \Pr
  \left(
    \left|
      \RV{X'}_{\Vec{u},\Vec{v}}
    \right|
    \geq
    \frac{\Variable{\ell} \Variable{t}}{m}
    \middle|
    \Rhatuv = \Vec{r},
    \RV{L}_{\Vec{u},\Vec{v}} = \Variable{\ell}
  \right)
  \Pr \left( \Rhatuv = \Vec{r} \middle| \RV{L}_{\Vec{u},\Vec{v}} = \Variable{\ell} \right)
  \\
  &=
  \frac{1}{| \Set{R}_{\Variable{\ell}} |}
  \sum_{\Vec{r} \in \Set{R}_{\Variable{\ell}}}
  \Pr
  \left(
    \left|
      \RV{X'}_{\Vec{u},\Vec{v}}
    \right|
    \geq
    \frac{\Variable{\ell} \Variable{t}}{m}
    \middle|
    \Rhatuv = \Vec{r},
    \RV{L}_{\Vec{u},\Vec{v}} = \Variable{\ell}
  \right)
  \\
  &\dCmt \text{by \CLAIM \ref{pf:lemma:technical:concentration-ineq:(u,v):claim:1}}
  \\
  &\leq
  \frac{1}{| \Set{R}_{\Variable{\ell}} |}
  \sum_{\Vec{r} \in \Set{R}_{\Variable{\ell}}}
  2 e^{-\frac{1}{2} \Variable{\ell} \Variable{t}^{2}}
  \\
  &\dCmt \text{by \LEMMA \ref{lemma:normal:concentration-ineq:proj_u+v}}
  \\
  &=
  \frac{1}{| \Set{R}_{\Variable{\ell}} |}
  \cdot | \Set{R}_{\Variable{\ell}} |
  \cdot
  2 e^{-\frac{1}{2} \Variable{\ell} \Variable{t}^{2}}
  \\
  &=
  2 e^{-\frac{1}{2} \Variable{\ell} \Variable{t}^{2}}
.\end{align*}
%
\par 
%
Lastly, recall \EQN \eqref{eqn:technical:concentration-ineq:(u,v):g_Z}:
\begin{gather*}
  \Pr
  \left(
    \RV{X''}_{\Vec{u},\Vec{v}}
    \geq
    \frac{2\sqrt{\KXO \Variable{\ell}}}{m}
    +
    \frac{\Variable{\ell} \Variable{t}}{m}
    \middle|
    \RV{L}_{\Vec{u},\Vec{v}} = \Variable{\ell}
  \right)
  \leq
  2 e^{-\frac{1}{8} \Variable{\ell} \Variable{t}^{2}}
,\end{gather*}
where
\begin{align*}
  \RV{X''}_{\Vec{u},\Vec{v}}
  \defeq
  \| \ThresholdSet{\Supp( \Vec{u} ) \cup \Supp( \Vec{v} ) \cup \Coords{J}}( \Vec{Y}_{\Vec{u},\Vec{v}} ) \|_{2}
\end{align*}
and
\begin{align*}
  \Vec{Y}_{\Vec{u},\Vec{v}}
  &\defeq
  \frac{1}{m}
  \sum_{i=1}^{m}
  \MeasVec^{(i)}
  \cdot
  \frac{1}{2} \left( \sgn( \langle \MeasVec^{(i)}, \Vec{u} \rangle ) - \sgn( \langle \MeasVec^{(i)}, \Vec{v} \rangle ) \right)
  \\ &\AlignTab
  -
  \left\langle
    \frac{\Vec{u} - \Vec{v}}{\left\| \Vec{u} - \Vec{v} \right\|_{2}},
    \frac{1}{m}
    \sum_{i=1}^{m}
    \MeasVec^{(i)}
    \cdot
    \frac{1}{2} \left( \sgn( \langle \MeasVec^{(i)}, \Vec{u} \rangle ) - \sgn( \langle \MeasVec^{(i)}, \Vec{v} \rangle ) \right)
  \right\rangle
  \frac{\Vec{u} - \Vec{v}}{\left\| \Vec{u} - \Vec{v} \right\|_{2}}
  \\ &\AlignTab
  -
  \left\langle
    \frac{\Vec{u} + \Vec{v}}{\left\| \Vec{u} + \Vec{v} \right\|_{2}},
    \frac{1}{m}
    \sum_{i=1}^{m}
    \MeasVec^{(i)}
    \cdot
    \frac{1}{2} \left( \sgn( \langle \MeasVec^{(i)}, \Vec{u} \rangle ) - \sgn( \langle \MeasVec^{(i)}, \Vec{v} \rangle ) \right)
  \right\rangle
  \frac{\Vec{u} + \Vec{v}}{\left\| \Vec{u} + \Vec{v} \right\|_{2}}
\end{align*}
This result can again be verified with the same techniques as presented above:
\begin{align*}
  &
  \Pr
  \left(
    \RV{X''}_{\Vec{u},\Vec{v}}
    \geq
    \frac{2\sqrt{\KXO \Variable{\ell}}}{m}
    +
    \frac{\Variable{\ell} \Variable{t}}{m}
    \middle|
    \RV{L}_{\Vec{u},\Vec{v}} = \Variable{\ell}
  \right)
  \\
  &=
  \sum_{\Vec{r} \in \rSet}
  \Pr
  \left(
    \RV{X''}_{\Vec{u},\Vec{v}}
    \geq
    \frac{2\sqrt{\KXO \Variable{\ell}}}{m}
    +
    \frac{\Variable{\ell} \Variable{t}}{m}
    \middle|
    \Rhatuv = \Vec{r},
    \RV{L}_{\Vec{u},\Vec{v}} = \Variable{\ell}
  \right)
  \Pr \left( \Rhatuv = \Vec{r} \middle| \RV{L}_{\Vec{u},\Vec{v}} = \Variable{\ell} \right)
  \\
  &\dCmt \text{by the law of total probability}
  \\
  &=
  \sum_{\Vec{r} \in \Set{R}_{\Variable{\ell}}}
  \Pr
  \left(
    \RV{X''}_{\Vec{u},\Vec{v}}
    \geq
    \frac{2\sqrt{\KXO \Variable{\ell}}}{m}
    +
    \frac{\Variable{\ell} \Variable{t}}{m}
    \middle|
    \Rhatuv = \Vec{r},
    \RV{L}_{\Vec{u},\Vec{v}} = \Variable{\ell}
  \right)
  \Pr \left( \Rhatuv = \Vec{r} \middle| \RV{L}_{\Vec{u},\Vec{v}} = \Variable{\ell} \right)
  \\
  &\AlignTab
  +
  \sum_{\Vec{r} \in \rSet \setminus \Set{R}_{\Variable{\ell}}}
  \Pr
  \left(
    \RV{X''}_{\Vec{u},\Vec{v}}
    \geq
    \frac{2\sqrt{\KXO \Variable{\ell}}}{m}
    +
    \frac{\Variable{\ell} \Variable{t}}{m}
    \middle|
    \Rhatuv = \Vec{r},
    \RV{L}_{\Vec{u},\Vec{v}} = \Variable{\ell}
  \right)
  \Pr \left( \Rhatuv = \Vec{r} \middle| \RV{L}_{\Vec{u},\Vec{v}} = \Variable{\ell} \right)
  \\
  &\dCmt \text{by partitioning the image of \(  \Rhatuv  \)}
  \\
  &=
  \sum_{\Vec{r} \in \Set{R}_{\Variable{\ell}}}
  \Pr
  \left(
    \RV{X''}_{\Vec{u},\Vec{v}}
    \geq
    \frac{2\sqrt{\KXO \Variable{\ell}}}{m}
    +
    \frac{\Variable{\ell} \Variable{t}}{m}
    \middle|
    \Rhatuv = \Vec{r},
    \RV{L}_{\Vec{u},\Vec{v}} = \Variable{\ell}
  \right)
  \Pr \left( \Rhatuv = \Vec{r} \middle| \RV{L}_{\Vec{u},\Vec{v}} = \Variable{\ell} \right)
  \\
  &\AlignTab
  +
  \sum_{\Vec{r} \in \rSet \setminus \Set{R}_{\Variable{\ell}}}
  \Pr
  \left(
    \RV{X''}_{\Vec{u},\Vec{v}}
    \geq
    \frac{2\sqrt{\KXO \Variable{\ell}}}{m}
    +
    \frac{\Variable{\ell} \Variable{t}}{m}
    \middle|
    \Rhatuv = \Vec{r},
    \RV{L}_{\Vec{u},\Vec{v}} = \Variable{\ell}
  \right)
  \cdot 0
  \\
  &\dCmt \text{by \CLAIM \ref{pf:lemma:technical:concentration-ineq:(u,v):claim:1}}
  \\
  &=
  \sum_{\Vec{r} \in \Set{R}_{\Variable{\ell}}}
  \Pr
  \left(
    \RV{X''}_{\Vec{u},\Vec{v}}
    \geq
    \frac{2\sqrt{\KXO \Variable{\ell}}}{m}
    +
    \frac{\Variable{\ell} \Variable{t}}{m}
    \middle|
    \Rhatuv = \Vec{r},
    \RV{L}_{\Vec{u},\Vec{v}} = \Variable{\ell}
  \right)
  \Pr \left( \Rhatuv = \Vec{r} \middle| \RV{L}_{\Vec{u},\Vec{v}} = \Variable{\ell} \right)
  \\
  &=
  \frac{1}{| \Set{R}_{\Variable{\ell}} |}
  \sum_{\Vec{r} \in \Set{R}_{\Variable{\ell}}}
  \Pr
  \left(
    \RV{X''}_{\Vec{u},\Vec{v}}
    \geq
    \frac{2\sqrt{\KXO \Variable{\ell}}}{m}
    +
    \frac{\Variable{\ell} \Variable{t}}{m}
    \middle|
    \Rhatuv = \Vec{r},
    \RV{L}_{\Vec{u},\Vec{v}} = \Variable{\ell}
  \right)
  \\
  &\dCmt \text{by \CLAIM \ref{pf:lemma:technical:concentration-ineq:(u,v):claim:1}}
  \\
  &=
  \frac{1}{| \Set{R}_{\Variable{\ell}} |}
  \sum_{\Vec{r} \in \Set{R}_{\Variable{\ell}}}
  \Pr
  \left(
    \RV{X''}_{\Vec{u},\Vec{v}}
    \geq
    \frac{2\sqrt{\KXO \Variable{\ell}}}{m}
    +
    \frac{\Variable{\ell} \Variable{t}}{m}
    \middle|
    \Rhatuv = \Vec{r},
    \RV{L}_{\Vec{u},\Vec{v}} = \Variable{\ell}
  \right)
  \\
  &\dCmt \text{by \LEMMA \ref{lemma:normal:concentration-ineq:proj_u-v}}
  \\
  &\leq
  \frac{1}{| \Set{R}_{\Variable{\ell}} |}
  \sum_{\Vec{r} \in \Set{R}_{\Variable{\ell}}}
  2 e^{-\frac{1}{8} \Variable{\ell} \Variable{t}^{2}}
  \\
  &\dCmt \text{by \LEMMA \ref{lemma:normal:concentration-ineq:third}, setting \(  d = | \Coords{J} | \leq \KXO  \)}
  \\
  &=
  \frac{1}{| \Set{R}_{\Variable{\ell}} |}
  \cdot | \Set{R}_{\Variable{\ell}} |
  \cdot
  2 e^{-\frac{1}{8} \Variable{\ell} \Variable{t}^{2}}
  \\
  &=
  2 e^{-\frac{1}{8} \Variable{\ell} \Variable{t}^{2}}
.\end{align*}
\end{proof}
\ENDEDIT


\subsection{Proof of \LEMMAS \ref{lemma:normal:concentration-ineq:proj_u-v}-\ref{lemma:normal:concentration-ineq:third}}
\label{outline:normal|>concentration-ineq-pfs|>orthogonal-projections|>intermediate-pfs}

Before proving the lemmas
(see Appendix \ref{outline:normal|>concentration-ineq-pfs|>orthogonal-projections|>main-lemma-pf}),
several intermediate results are stated and proved in
Appendix \ref{outline:normal|>concentration-ineq-pfs|>orthogonal-projections|>distributions}
to facilitate the proofs.


\subsubsection{The distributions of orthogonal projections of i.i.d. standard normal vectors}
\label{outline:normal|>concentration-ineq-pfs|>orthogonal-projections|>distributions}

\begin{lemma}
\label{lemma:normal:distribution:proj_u-v}
Fix an ordered pair of real-valued vectors,
\(
  ( \Vec{u}, \Vec{v} ) \in \Sphere{n} \times \Sphere{n}
\),
of unit norm.
Let
\(
  \Vec{Z} \sim \N(\Vec{0},\Mat{I}_{n \times n})
\)
be a standard normal random vector,
and let
\(
  \RV{R}
\)
be the (discrete) random variable taking values in
\(
  \{ -1, 0, 1 \}
\)
and given by
\(
  \RV{R}_{\Vec{u},\Vec{v}}
  =
  \frac{1}{2}
  \left(
    \sgn( \langle \Vec{u}, \Vec{Z} \rangle )
    -
    \sgn( \langle \Vec{v}, \Vec{Z} \rangle )
  \right)
\).
Define the map
\(
  \Alpha : \R \to \R
\)
by
\(
  \Alpha(x)
  = x \Tan( \frac{\theta_{\Vec{u},\Vec{v}}}{2} )
  = x \sqrt{\frac{\DistS[^{2}]{\Vec{u}}{\Vec{v}}}{4 - \DistS[^{2}]{\Vec{u}}{\Vec{v}}}}
\).
Then, the density function
\(
  f_{\RV{X} \Mid \RV{R}} : \R \to \R_{\geq 0}
\)
for the random variable
\(
  \RV{X}_{\Vec{u},\Vec{v}}
  =
  \left\langle
    \frac{\Vec{u}-\Vec{v}}{\left\| \Vec{u}-\Vec{v} \right\|_{2}},
    \Vec{Z}
  \right\rangle
  \RV{R}_{\Vec{u},\Vec{v}}
\)
conditioned on
\(
  \RV{R} \neq 0
\)
is given by
\begin{gather}
\label{eqn:normal:distribution:proj_u-v:density-function}
  f_{\RV{X}_{\Vec{u},\Vec{v}} \Mid \RV{R}_{\Vec{u},\Vec{v}}}(x \Mid \Value{r} \neq 0)
  =
  \begin{cases}
    \frac{\pi}{\theta_{\Vec{u},\Vec{v}}}
    \sqrt{\frac{2}{\pi}}
    e^{-\frac{x^{2}}{2}}
    \cdot
    \frac{1}{\sqrt{2\pi}}
    \int_{y=-\Alpha(x)}^{y=\Alpha(x)}
    e^{-\frac{y^{2}}{2}}
    dy
    , &\cIf x \geq 0,
    \\
    0, &\cIf x < 0.
  \end{cases}
\end{gather}
Moreover, in expectation,
\begin{gather}
\label{eqn:normal:distribution:proj_u-v:expectation}
  \E(\RV{X}_{\Vec{u},\Vec{v}} \Mid \RV{R}_{\Vec{u},\Vec{v}} \neq 0)
  =
  \sqrt{\frac{\pi}{2}}
  \frac{\DistS{\Vec{u}}{\Vec{v}}}{\theta_{\Vec{u},\Vec{v}}}
.\end{gather}
\end{lemma}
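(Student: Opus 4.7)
My proof plan is to reduce everything to a two-dimensional problem via the natural orthogonal decomposition, then push the whole calculation through in polar coordinates. Since $\Vec{u}, \Vec{v}$ have unit norm, the two vectors $\Vec{e}_{1} = (\Vec{u}-\Vec{v})/\|\Vec{u}-\Vec{v}\|_{2}$ and $\Vec{e}_{2} = (\Vec{u}+\Vec{v})/\|\Vec{u}+\Vec{v}\|_{2}$ are orthonormal, with $\|\Vec{u}-\Vec{v}\|_{2} = 2\sin(\theta/2)$ and $\|\Vec{u}+\Vec{v}\|_{2} = 2\cos(\theta/2)$ where $\theta = \theta_{\Vec{u},\Vec{v}}$. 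Consequently $\Vec{u} = \cos(\theta/2)\Vec{e}_{2} + \sin(\theta/2)\Vec{e}_{1}$ and $\Vec{v} = \cos(\theta/2)\Vec{e}_{2} - \sin(\theta/2)\Vec{e}_{1}$. Writing $Z_{1} = \langle \Vec{e}_{1}, \Vec{Z}\rangle$ and $Z_{2} = \langle \Vec{e}_{2}, \Vec{Z}\rangle$, these are i.i.d. standard normal and independent of the remaining components of $\Vec{Z}$ (which do not appear anywhere in the statement). Then $\RV{X}_{\Vec{u},\Vec{v}} = Z_{1}\RV{R}_{\Vec{u},\Vec{v}}$, and $\RV{R}_{\Vec{u},\Vec{v}}$ is determined by the signs of $\cos(\theta/2)Z_{2} \pm \sin(\theta/2)Z_{1}$.

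Next I would characterize the nonzero events of $\RV{R}_{\Vec{u},\Vec{v}}$ directly. A short calculation shows that $\RV{R}_{\Vec{u},\Vec{v}} = 1$ exactly when $Z_{1} > 0$ and $|Z_{2}| < Z_{1}\tan(\theta/2)$, while $\RV{R}_{\Vec{u},\Vec{v}} = -1$ exactly when $Z_{1} < 0$ and $|Z_{2}| < -Z_{1}\tan(\theta/2)$; in both cases $\RV{X}_{\Vec{u},\Vec{v}} = |Z_{1}| \geq 0$. The marginal probability $\Pr(\RV{R}_{\Vec{u},\Vec{v}} \neq 0) = \theta/\pi$ follows from the classical hyperplane-rounding identity (or equivalently from uniformity of the polar angle of $(Z_{1},Z_{2})$). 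For $x > 0$, by independence the joint density of $(|Z_{1}|, \mathbb{1}_{\RV{R}\neq 0})$ at $x$ is obtained by fixing $Z_{1} = \pm x$ and integrating the density of $Z_{2}$ over $(-\Alpha(x), \Alpha(x))$ with $\Alpha(x) = x\tan(\theta/2)$; the $\pm$ cases contribute identically, yielding $2\phi(x) \cdot \frac{1}{\sqrt{2\pi}}\int_{-\Alpha(x)}^{\Alpha(x)} e^{-y^{2}/2}\,dy$. Dividing by $\Pr(\RV{R}_{\Vec{u},\Vec{v}} \neq 0) = \theta/\pi$ and collecting the prefactor $2\phi(x) = \sqrt{2/\pi}\,e^{-x^{2}/2}$ produces the density claimed in \eqref{eqn:normal:distribution:proj_u-v:density-function}; the identification $\tan(\theta/2) = \sqrt{\DistS[^{2}]{\Vec{u}}{\Vec{v}}/(4 - \DistS[^{2}]{\Vec{u}}{\Vec{v}})}$ is immediate from the half-angle identity $\DistS[^{2}]{\Vec{u}}{\Vec{v}} = 2(1 - \cos\theta) = 4\sin^{2}(\theta/2)$.

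For the expectation, rather than integrate the density (which would require evaluating $\int_{0}^{\infty} x e^{-x^{2}/2}\operatorname{erf}(\Alpha(x)/\sqrt{2})\,dx$), I would switch to polar coordinates $Z_{1} = r\cos\phi$, $Z_{2} = r\sin\phi$, where $r$ and $\phi$ are independent, $\phi$ is uniform on $[0,2\pi)$, and $\E[r] = \sqrt{\pi/2}$ (Rayleigh). The event $\{\RV{R}_{\Vec{u},\Vec{v}} \neq 0\}$ becomes $\phi \in (-\theta/2, \theta/2) \cup (\pi - \theta/2, \pi + \theta/2)$, on which $\RV{X}_{\Vec{u},\Vec{v}} = r|\cos\phi|$. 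By independence, $\E[\RV{X}_{\Vec{u},\Vec{v}}\mathbb{1}_{\RV{R}\neq 0}] = \E[r] \cdot \frac{1}{2\pi}\int_{\RV{R}\neq 0}|\cos\phi|\,d\phi$, and the angular integral evaluates to $4\sin(\theta/2)$ by direct antidifferentiation. Dividing by $\Pr(\RV{R}_{\Vec{u},\Vec{v}} \neq 0) = \theta/\pi$ gives $\sqrt{\pi/2}\cdot 2\sin(\theta/2)/\theta = \sqrt{\pi/2}\cdot \DistS{\Vec{u}}{\Vec{v}}/\theta$, matching \eqref{eqn:normal:distribution:proj_u-v:expectation}.

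The main obstacle is simply being careful about orientations: the two sign-difference events must be parametrized correctly, and one must verify that both yield the same half-line $\{\RV{X} \geq 0\}$ with matching densities. Beyond that, the key structural point -- that conditioning on $\RV{R}_{\Vec{u},\Vec{v}} = \Vec{r}$ (over all of $\R^{m}$) does not change the conditional distribution of the single coordinate $\RV{X}_{\Vec{u},\Vec{v}}$ beyond the information carried by a single nonzero entry of $\Vec{r}$ -- will be used in the downstream concentration lemmas (Lemmas \ref{lemma:normal:concentration-ineq:proj_u-v}--\ref{lemma:normal:concentration-ineq:third}) via independence across measurements $i \in [m]$; the density derived here is the per-measurement building block.
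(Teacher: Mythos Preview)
Your proposal is correct. The density derivation is essentially the paper's argument: the paper first shows the distribution depends only on $\theta_{\Vec u,\Vec v}$ and then picks concrete coordinates so that $(\Vec u-\Vec v)/\|\Vec u-\Vec v\|=\Vec e_1$, after which it applies Bayes' rule with the folded-normal density of $|Z_1|$ and $\Pr(R\neq 0\mid |Z_1|=x)=\frac{1}{\sqrt{2\pi}}\int_{-\Alpha(x)}^{\Alpha(x)}e^{-y^2/2}\,dy$; you instead work directly in the abstract $(\Vec e_1,\Vec e_2)$ frame and compute the joint density before dividing, which is the same computation without the preliminary reduction step. Where your route genuinely differs is the expectation: the paper integrates $x$ against the derived density and simply asserts that the resulting double integral collapses to $\frac{\pi}{\theta}\sqrt{2/\pi}\cdot\frac{\DistS{\Vec u}{\Vec v}}{2}$, whereas your polar-coordinate argument factorizes $\E[X\mathbf{1}_{R\neq 0}]$ as the Rayleigh mean $\E[r]=\sqrt{\pi/2}$ times an elementary angular integral $\frac{1}{2\pi}\int |\cos\phi|\,d\phi=\frac{2\sin(\theta/2)}{\pi}$. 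This is more transparent and avoids having to evaluate an $\operatorname{erf}$-weighted Gaussian integral; it also makes the appearance of $\DistS{\Vec u}{\Vec v}=2\sin(\theta/2)$ in the final answer geometrically obvious.
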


\begin{proof}
{Lemma \ref{lemma:normal:distribution:proj_u-v}}
\label{pf:lemma:normal:distribution:proj_u-v}
Before deriving the density function of \( \RV{X}_{\Vec{u},\Vec{v}} \),
\(
  \Vec{u}, \Vec{v} \in \Sphere{n}
\),
\EDIT{let us introduce some helpful observations.
First, notice that \(  \RV{R}_{\Vec{u},\Vec{v}} \neq 0  \) implies that \(  \sgn( \langle \Vec{u}, \Vec{Z} \rangle ) = -\sgn( \langle \Vec{v}, \Vec{Z} \rangle )  \).}
Second, let us show that for
\(
  \Vec{u}, \Vec{v}, \Vec{u'}, \Vec{v'} \in \Sphere{n}
\),
such that
\(
  \theta_{\Vec{u},\Vec{v}} = \theta_{\Vec{u'},\Vec{v'}}
\),
the pair of random variables
\( ( \RV{X}_{\Vec{u},\Vec{v}} \Mid \RV{R}_{\Vec{u},\Vec{v}} = 0 ) \)
and
\( ( \RV{X}_{\Vec{u'},\Vec{v'}} \Mid \RV{R}_{\Vec{u'},\Vec{v'}} = 0 ) \)
follow the same distribution, as do the pair
\( ( \RV{X}_{\Vec{u},\Vec{v}} \Mid \RV{R}_{\Vec{u},\Vec{v}} \neq 0 ) \)
and
\( ( \RV{X}_{\Vec{u'},\Vec{v'}} \Mid \RV{R}_{\Vec{u'},\Vec{v'}} \neq 0 ) \).
This will simplify the characterization of the distribution of \( \RV{X}_{\Vec{u},\Vec{v}} \) by
allowing \( \Vec{u}, \Vec{v} \) to be chosen non-arbitrarily.
Conditioned on
\(
  \RV{R}_{\Vec{u},\Vec{v}} = \RV{R}_{\Vec{u'},\Vec{v'}} = 0
\),
\( \RV{X}_{\Vec{u},\Vec{v}} = \RV{X}_{\Vec{u'},\Vec{v'}} = 0 \)
with probability 1.
Otherwise, when
\(
  \RV{R}_{\Vec{u},\Vec{v}}, \RV{R}_{\Vec{u'},\Vec{v'}} \neq 0
\),
write
\(
  q = \| \Vec{u} - \Vec{v} \|_{2} = \| \Vec{u'} - \Vec{v'} \|_{2}
\),
and observe
\begin{subequations}
\begin{align}
  \RV{X}_{\Vec{u},\Vec{v}}
  &=
  \left\langle
    \frac{\Vec{u}-\Vec{v}}{\left\| \Vec{u}-\Vec{v} \right\|_{2}},
    \Vec{Z}
  \right\rangle
  \RV{R}_{\Vec{u},\Vec{v}}
  \\
  &=
  \frac{1}{q}
  \left(
    \langle \Vec{u}, \Vec{Z} \rangle
    \RV{R}_{\Vec{u},\Vec{v}}
    -
    \langle \Vec{v}, \Vec{Z} \rangle
    \RV{R}_{\Vec{u},\Vec{v}}
  \right)
  \\
  &=
  \frac{1}{q}
  \left(
    \langle \Vec{u}, \Vec{Z} \rangle
    \sgn( \langle \Vec{u}, \Vec{Z} \rangle )
    -
    \langle \Vec{v}, \Vec{Z} \rangle
    (-\sgn( \langle \Vec{v}, \Vec{Z} \rangle ))
  \right)
  \\
  &=
  \frac{1}{q}
  \left(
    \langle \Vec{u}, \Vec{Z} \rangle
    \sgn( \langle \Vec{u}, \Vec{Z} \rangle )
    +
    \langle \Vec{v}, \Vec{Z} \rangle
    \sgn( \langle \Vec{v}, \Vec{Z} \rangle )
  \right)
  \\
  &=
  \frac{1}{q}
  \left(
    \left| \langle \Vec{u}, \Vec{Z} \rangle \right|
    +
    \left| \langle \Vec{v}, \Vec{Z} \rangle \right|
  \right)
\end{align}
\end{subequations}
%
Likewise,
\begin{align}
  \RV{X}_{\Vec{u'},\Vec{v'}}
  &=
  \frac{1}{q}
  \left(
    \langle \Vec{u'}, \Vec{Z} \rangle
    \sgn( \langle \Vec{u'}, \Vec{Z} \rangle )
    +
    \langle \Vec{v'}, \Vec{Z} \rangle
    \sgn( \langle \Vec{v'}, \Vec{Z} \rangle )
  \right)
  =
  \frac{1}{q}
  \left(
    \left| \langle \Vec{u'}, \Vec{Z} \rangle \right|
    +
    \left| \langle \Vec{v'}, \Vec{Z} \rangle \right|
  \right)
\end{align}
%
Then, letting
\begin{gather}
  ( \RV{Y}, \RV{Y'} )
  \sim
  \N
  \left(
    \begin{pmatrix}
      0 \\
      0
    \end{pmatrix},
    \begin{pmatrix}
      1 & \cos( \theta_{\Vec{u},\Vec{v}} ) \\
      \cos( \theta_{\Vec{u},\Vec{v}} ) & 1
    \end{pmatrix}
  \right)
  \equiv
  \N
  \left(
    \begin{pmatrix}
      0 \\
      0
    \end{pmatrix},
    \begin{pmatrix}
      1 & \cos( \theta_{\Vec{u'},\Vec{v'}} ) \\
      \cos( \theta_{\Vec{u'},\Vec{v'}} ) & 1
    \end{pmatrix}
  \right)
,\end{gather}
notice that \( \RV{X}_{\Vec{u},\Vec{v}} \) and \( \RV{X}_{\Vec{u'},\Vec{v'}} \), conditioned on
\(
  \RV{R}_{\Vec{u},\Vec{v}}, \RV{R}_{\Vec{u'},\Vec{v'}} \neq 0
\),
both follow the same distribution as
\(
  \frac{1}{q}
  \left(
    | \RV{Y} |
    +
    | \RV{Y'} |
  \right)
\).
Hence, the claim is proved.
\par
We are ready to derive Lemma \ref{lemma:normal:distribution:proj_u-v}.
To simplify notation, we will drop the subscript of \( \Vec{u},\Vec{v} \) on the random variables,
writing
\(
  \RV{X} = \RV{X}_{\Vec{u},\Vec{v}},
  \RV{R} = \RV{R}_{\Vec{u},\Vec{v}}
\).
Let
\(
  \Vec{Z} = ( \Vec*{Z}_{1}, \dots, \Vec*{Z}_{n} ) \sim \N(\Vec{0},\Mat{I}_{n \times n})
\).
%
%
%
For an arbitrary choice of
\(
  \theta \in [0,2\pi)
\),
fix
\(
  \Vec{u}, \Vec{v} \in \Sphere{n}
\)
such that
\(
  \theta_{\Vec{u},\Vec{v}} = \theta
\)
and
\(
  \Vec{u} = ( \Vec*{u}_{1}, \Vec*{u}_{2}, \dots, \Vec*{u}_{n} )
\),
\(
  \Vec{v} = ( -\Vec*{u}_{1}, \Vec*{u}_{2}, \dots, \Vec*{u}_{n} )
\)
with
\(
  \Vec*{u}_{1} > 0
\),
which is made possible by the claim argued above.
This choice will now be shown to induce the distribution of
\(
  ( | \Vec*{Z}_{1} | \Mid \RV{R} \neq 0 )
\)
on the random variable
\(
  ( \RV{X} \Mid \RV{R} \neq 0 )
\).
First, observe that
\begin{gather}
\label{deriv:lemma:normal:distribution:proj_u-v:1}
  \frac
  {\Vec{u} - \Vec{v}}
  {\left\| \Vec{u} - \Vec{v} \right\|_{2}}
  =
  ( 1, 0, \dots, 0 )
\end{gather}
and thus
\begin{gather}
\label{deriv:lemma:normal:distribution:proj_u-v:2}
  \RV{X}
  =
  \left\langle
    \frac{\Vec{u}-\Vec{v}}{\left\| \Vec{u}-\Vec{v} \right\|_{2}},
    \Vec{Z}
  \right\rangle
  \RV{R}
  =
  \Vec*{Z}_{1} \RV{R}
.\end{gather}
%
\BEGINEDIT
Moreover, observe that the event
\(
  \RV{R} \neq 0
\)
implies that
\(
  \sgn( \langle \Vec{u}, \Vec{Z} \rangle )
  =
 - \sgn( \langle \Vec{v}, \Vec{Z} \rangle )
\).
%
Then,
\begin{subequations}
\label{deriv:lemma:normal:distribution:proj_u+v:4}
\begin{align}
  \RV{R}
  &=
  \frac{1}{2}
  \left(
    \Sgn( \langle \Vec{u}, \Vec{Z} \rangle )
    -
    \Sgn( \langle \Vec{v}, \Vec{Z} \rangle )
  \right)
  \\
  &=
  \Sgn
  (
    \Sgn( \langle \Vec{u}, \Vec{Z} \rangle )
    -
    \Sgn( \langle \Vec{v}, \Vec{Z} \rangle )
  )
  \\
  &=
  \Sgn
  (
    \Sgn( \langle \Vec{u}, \Vec{Z} \rangle )
    +
    \Sgn( \langle -\Vec{v}, \Vec{Z} \rangle )
  )
  \\
  &=
  \Sgn( \langle \Vec{u} - \Vec{v}, \Vec{Z} \rangle ).
\end{align}
\end{subequations}
%
Therefore, conditioned on
\(
  \RV{R} \neq 0
\),
by the above observation, \( \RV{R} \) takes the value
\begin{gather}
\label{deriv:lemma:normal:distribution:proj_u-v:3}
  \RV{R}
  =
  \Sgn
  (
    \left\langle
      \frac{\Vec{u}-\Vec{v}}{\left\| \Vec{u}-\Vec{v} \right\|_{2}},
      \Vec{Z}
    \right\rangle
  )
  =
  \Sgn( \Vec*{Z}_{1} )
.\end{gather}
\ENDEDIT
It follows that
\begin{gather}
\label{deriv:lemma:normal:distribution:proj_u-v:4}
  \left( \RV{X} \Mid \RV{R} \neq 0 \right)
  =
  \left(
  \left.
    \left\langle
      \frac{\Vec{u}-\Vec{v}}{\left\| \Vec{u}-\Vec{v} \right\|_{2}},
      \Vec{Z}
    \right\rangle
    \RV{R}
  \right|
    \RV{R} \neq 0
  \right)
  =
  \left( \Vec*{Z}_{1} \RV{R} \Mid \RV{R} \neq 0 \right)
  =
  \left( \Vec*{Z}_{1} \Sgn( \Vec*{Z}_{1} ) \Mid \RV{R} \neq 0 \right)
  =
  \left( | \Vec*{Z}_{1} | \Mid \RV{R} \neq 0 \right)
,\end{gather}
as claimed.
\par
Next, the density function
\(
  f_{\RV{X} \Mid \RV{R} \neq 0} : \R \to \R_{\geq 0}
\)
of the conditioned random variable
\(
  ( \RV{X} \Mid \RV{R} \neq 0 )
\)
is found by deriving the equivalent density function
\(
  f_{| \Vec*{Z}_{1} | \Mid \RV{R} \neq 0} : \R \to \R_{\geq 0}
\).
By Bayes' rule, this density function can be written as
\begin{gather}
\label{deriv:lemma:normal:distribution:proj_u-v:5}
  f_{| \Vec*{Z}_{1} | \Mid \RV{R}}(x \Mid \Value{r} \neq 0)
  =
  \frac
  {f_{| \Vec*{Z}_{1} |}(x) p_{\RV{R} \Mid | \Vec*{Z}_{1} |}(\Value{r} \neq 0 \Mid x)}
  {p_{\RV{R}}(\Value{r} \neq 0)}
,\end{gather}
which expresses
\(
  f_{| \Vec*{Z}_{1} | \Mid \RV{R} \neq 0}
\)
using three more manageable density (mass) functions.
Beginning with \( p_{\RV{R}}(\Value{r} \neq 0) \), let the random variable \( \RV{I} \) be the indicator
of the event
\(
  \RV{R} \neq 0
\),
formally,
\(
  \RV{I} = \I{\RV{R} \neq 0}
\).
Observing the following biconditionals
\begin{align}
\label{deriv:lemma:normal:distribution:proj_u-v:6}
  \RV{R} \neq 0
  \iff
  \frac{1}{2}
  \left(
    \Sgn( \langle \Vec{u}, \Vec{Z} \rangle ) - \Sgn( \langle \Vec{v}, \Vec{Z} \rangle )
  \right)
  \neq 0
  \iff
  \left(
    \Sgn( \langle \Vec{u}, \Vec{Z} \rangle ) - \Sgn( \langle \Vec{v}, \Vec{Z} \rangle )
  \right)
  \neq 0
,\end{align}
it follows that
\begin{subequations}
\label{deriv:lemma:normal:distribution:proj_u-v:7}
\begin{gather}
  \label{deriv:lemma:normal:distribution:proj_u-v:7:1}
  \RV{I} = \I{\RV{R} \neq 0}
  \\ \label{deriv:lemma:normal:distribution:proj_u-v:7:2}
  \RV{I} = \I{\frac{1}{2} \Sgn( \langle \Vec{u}, \Vec{Z} \rangle )
       - \Sgn( \langle \Vec{v}, \Vec{Z} \rangle ) \neq 0}
  \\ \label{deriv:lemma:normal:distribution:proj_u-v:7:3}
  \RV{I} = \I{\Sgn( \langle \Vec{u}, \Vec{Z} \rangle ) - \Sgn( \langle \Vec{v}, \Vec{Z} \rangle ) \neq 0}
\end{gather}
\end{subequations}
are equivalent definitions for the random variable \( \RV{I} \).
Then, the mass associated with
\(
  \RV{R} \neq 0
\)
is
\(
  p_{\RV{R}}(\Value{r} \neq 0)
  = \Pr ( \RV{I} = 1 )
  = \frac{\theta_{\Vec{u},\Vec{v}}}{\pi}
\),
where the last equality follows from Lemma \ref{lemma:prob-normal-vector-mismatch}, stated below.
%
\begin{lemma}[\cite{charikar2002similarity}]
\label{lemma:prob-normal-vector-mismatch}
Fix any pair of real-valued vectors
\(
  \Vec{u},\Vec{v} \in \R^{n}
\),
and suppose
\(
  \Vec{Z} \sim \N(\Vec{0},\Mat{I}_{n \times n})
\)
is a standard normal vector with i.i.d. entries.
Define the indicator random variable
\(
  \RV{I} = \I{\sgn(\langle \Vec{u},\Vec{Z} \rangle) - \sgn(\langle \Vec{v},\Vec{Z} \rangle) \neq 0}
\).
Then,
\begin{gather}
\label{eqn:prob-normal-vector-mismatch}
  \Pr( \RV{I} = 1 ) = \frac{\theta_{\Vec{u},\Vec{v}}}{\pi}
.\end{gather}
\end{lemma}
%

%
\par
In short, the above argument yields
\(
  p_{\RV{R}}(\Value{r} \neq 0) = \Pr ( \RV{I} = 1 ) = \frac{\theta_{\Vec{u},\Vec{v}}}{\pi}
\).
\par
Next, the density function for the random variable
\(
  | \Vec*{Z}_{1} |
\),
which is the absolute value of the standard normal random variable
\(
  \Vec*{Z}_{1}
\),
is the well-known folded standard normal distribution and takes the form
\begin{align}
\label{deriv:lemma:normal:distribution:proj_u-v:8}
  f_{| \Vec*{Z}_{1} |}(x)
  &=
  \begin{cases}
    f_{\Vec*{Z}_{1}}(-x) + f_{\Vec*{Z}_{1}}(x),
    &\cIf x \geq 0,
    \\
    0, &\cIf x < 0.
  \end{cases}
  \\
  &=
  \begin{cases}
    \frac{1}{\sqrt{2\pi}} e^{-\frac{(-x)^{2}}{2}} + \frac{1}{\sqrt{2\pi}} e^{-\frac{x^{2}}{2}},
    &\cIf x \geq 0,
    \\
    0, &\cIf x < 0.
  \end{cases}
  \\
\end{align}
In summary,
\begin{gather}
\label{deriv:lemma:normal:distribution:proj_u-v:8:b}
  f_{| \Vec*{Z}_{1} |}(x)
  =
  \begin{cases}
    \sqrt{\frac{2}{\pi}} e^{-\frac{x^{2}}{2}}, &\cIf x \geq 0,
    \\
    0, &\cIf x < 0.
  \end{cases}
\end{gather}
%
\par
Lastly, consider the mass function of
\(
  ( \RV{R} \Mid | \Vec*{Z}_{1} | )
\),
which need only be evaluated when
\(
  \RV{R} \neq 0
\).
The next argument will show that
\begin{gather}
\label{deriv:lemma:normal:distribution:proj_u-v:9}
  p_{\RV{R} \Mid | \Vec*{Z}_{1} |}(\Value{r} \neq 0 \Mid x)
  =
  \frac{1}{\sqrt{2\pi}} \int_{y=-\Alpha(x)}^{y=\Alpha(x)} e^{-\frac{y^{2}}{2}} dy
\end{gather}
where
\(
  \Alpha : \R \to \R
\)
is as defined in the lemma (and repeated here for convenience):
\begin{gather}
\label{deriv:lemma:normal:distribution:proj_u-v:10}
  \Alpha(x)
  = x \Tan( \frac{\theta_{\Vec{u},\Vec{v}}}{2} )
  = x \sqrt{\frac{\DistS[^{2}]{\Vec{u}}{\Vec{v}}}{4 - \DistS[^{2}]{\Vec{u}}{\Vec{v}}}}
.\end{gather}
%
Notice that by basic geometry, given
\(
  | \Vec*{Z}_{1} | = x
\),
\(
  x \geq 0
\),
the event
\(
  \RV{R} \neq 0
\)
occurs precisely when
\begin{gather}
\label{deriv:lemma:normal:distribution:proj_u-v:11}
  \left\langle
    \frac{\Vec{u}+\Vec{v}}{\left\| \Vec{u}+\Vec{v} \right\|_{2}},
    \Vec{Z}
  \right\rangle
  \in
  \left[
    -x \Tan( \frac{\theta_{\Vec{u},\Vec{v}}}{2} ),
    x \Tan( \frac{\theta_{\Vec{u},\Vec{v}}}{2} )
  \right]
\end{gather}
where
\(
  \Tan( \frac{\theta_{\Vec{u},\Vec{v}}}{2} )
\)
can be expressed as follows by using the half-angle trigonometric formula
(applied in \eqref{deriv:lemma:normal:distribution:proj_u-v:12:1}):
\begin{subequations}
\begin{align}
\label{deriv:lemma:normal:distribution:proj_u-v:12:1}
  \Tan( \frac{\theta_{\Vec{u},\Vec{v}}}{2} )
  &=
  \sqrt
  {
    \frac
    {1 - \Cos( \theta_{\Vec{u},\Vec{v}} )}
    {1 + \Cos( \theta_{\Vec{u},\Vec{v}} )}
  }
  \\
  &=
  \sqrt
  {
    \frac
    {1 - \Cos( \Arccos( 1 - \frac{\DistS[^{2}]{\Vec{u}}{\Vec{v}}}{2} ) )}
    {1 + \Cos( \Arccos( 1 - \frac{\DistS[^{2}]{\Vec{u}}{\Vec{v}}}{2} ) )}
  }
  \\
  &=
  \sqrt
  {
    \frac
    {\frac{\DistS[^{2}]{\Vec{u}}{\Vec{v}}}{2}}
    {2 - \frac{\DistS[^{2}]{\Vec{u}}{\Vec{v}}}{2}}
  }
  \\
  &=
  \sqrt
  {
    \frac
    {\DistS[^{2}]{\Vec{u}}{\Vec{v}}}
    {4 - \DistS[^{2}]{\Vec{u}}{\Vec{v}}}
  }
  \\
  &=
  \frac{\Alpha(x)}{x}
\end{align}
\end{subequations}
Thus,
\begin{subequations}
\begin{align}
\label{deriv:lemma:normal:distribution:proj_u-v:13}
  p_{\RV{R}}( \Value{r} \neq 0 )
  &=
  \Pr
  \left(
    \left\langle
      \frac{\Vec{u}+\Vec{v}}{\left\| \Vec{u}+\Vec{v} \right\|_{2}},
      \Vec{Z}
    \right\rangle
    \in
    \left[
      -x \Tan( \frac{\theta_{\Vec{u},\Vec{v}}}{2} ),
      x \Tan( \frac{\theta_{\Vec{u},\Vec{v}}}{2} )
    \right]
  \right)
  \\
  &=
  \Pr
  \left(
    \left\langle
      \frac{\Vec{u}+\Vec{v}}{\left\| \Vec{u}+\Vec{v} \right\|_{2}},
      \Vec{Z}
    \right\rangle
    \in
    \left[
      -x \frac{\Alpha(x)}{x},
      x \frac{\Alpha(x)}{x}
    \right]
  \right)
  \\
  &=
  \Pr
  \left(
    \left\langle
      \frac{\Vec{u}+\Vec{v}}{\left\| \Vec{u}+\Vec{v} \right\|_{2}},
      \Vec{Z}
    \right\rangle
    \in
    \left[ -\Alpha(x), \Alpha(x) \right]
  \right)
\end{align}
\end{subequations}
But \( \Vec{Z} \) is invariant under inner products with unit vectors,
and hence, the distribution of
\(
  \left\langle
    \frac{\Vec{u}+\Vec{v}}{\left\| \Vec{u}+\Vec{v} \right\|_{2}},
    \Vec{Z}
  \right\rangle
\)
follows that of
\(
  \left\langle
    \frac{\Vec{u}+\Vec{v}}{\left\| \Vec{u}+\Vec{v} \right\|_{2}},
    \Vec{Z}
  \right\rangle
  \sim
  \N(0,1)
\).
Therefore,
\begin{gather}
  p_{\RV{R}}( \Value{r} \neq 0 )
  =
  \Pr_{Y \sim \N(0,1)}( Y \in \left[ -\Alpha(x), \Alpha(x) \right] )
  =
  \frac{1}{\sqrt{2\pi}} \int_{y=-\Alpha(x)}^{y=\Alpha(x)} e^{-\frac{y^{2}}{2}} dy
,\end{gather}
as claimed.
\par
Combining the above derivations, the density function of
\(
  | \Vec*{Z}_{1} | \Mid \RV{R} \neq 0
\)
is obtained via \eqref{deriv:lemma:normal:distribution:proj_u-v:5}:
\begin{subequations}
\begin{align}
\label{deriv:lemma:normal:distribution:proj_u-v:14}
  f_{| \Vec*{Z}_{1} | \Mid \RV{R}}(x \Mid \Value{r} \neq 0)
  &=
  \frac
  {f_{| \Vec*{Z}_{1} |}(x) p_{\RV{R} \Mid | \Vec*{Z}_{1} |}(\Value{r} \neq 0 \Mid x)}
  {p_{\RV{R}}(\Value{r} \neq 0)}
  =
  \frac
  {\sqrt{\frac{2}{\pi}} e^{-\frac{x^{2}}{2}}
   \cdot
   \frac{1}{\sqrt{2\pi}} \int_{y=-\Alpha(x)}^{y=\Alpha(x)} e^{-\frac{y^{2}}{2}} dy}
  {\frac{\theta_{\Vec{u},\Vec{v}}}{\pi}}
  \\
  &=
  \frac{\pi}{\theta_{\Vec{u},\Vec{v}}}
  \sqrt{\frac{2}{\pi}} e^{-\frac{x^{2}}{2}}
  \cdot
  \frac{1}{\sqrt{2\pi}} \int_{y=-\Alpha(x)}^{y=\Alpha(x)} e^{-\frac{y^{2}}{2}} dy
\end{align}
\end{subequations}
if
\(
  x \geq 0
\),
and
\(
  f_{| \Vec*{Z}_{1} | \Mid \RV{R}}( x \Mid \Value{r} \neq 0 ) = 0
\)
if
\(
  x < 0
\),
where the support of \( f_{| \Vec*{Z}_{1} | \Mid \RV{R}} \) is restricted to the interval
\( [0,\infty) \) due the the latter case in
\eqref{deriv:lemma:normal:distribution:proj_u-v:8}.
\par
The remaining task is finding the expectation of
\(
  ( \RV{X} \Mid \RV{R} \neq 0 )
\)
to verify \eqref{eqn:normal:distribution:proj_u-v:expectation}, which is done by a direct
calculation using the density function,
\eqref{eqn:normal:distribution:proj_u-v:density-function}, that was just proved:
\begin{subequations}
\begin{align}
  \E( \RV{X} \Mid \RV{R} \neq 0 )
  &=
  \int_{-\infty}^{\infty}
  x f_{| \Vec*{Z}_{1} | \Mid \RV{R}}( x \Mid \Value{r} \neq 0 ) dx
  \\
  &=
  \lim_{t \to \infty}
  \int_{x=0}^{x=t}
  \frac{\pi}{\theta_{\Vec{u},\Vec{v}}}
  \sqrt{\frac{2}{\pi}}
  x e^{-\frac{x^{2}}{2}}
  \cdot
  \frac{1}{\sqrt{2\pi}} \int_{y=-\Alpha(x)}^{y=\Alpha(x)} e^{-\frac{y^{2}}{2}}
  dy dx
  \\
  &=
  \frac{\pi}{\theta_{\Vec{u},\Vec{v}}}
  \sqrt{\frac{2}{\pi}}
  \frac{\DistS{\Vec{u}}{\Vec{v}}}{2}
  \\
  &=
  \sqrt{\frac{\pi}{2}}
  \frac{\DistS{\Vec{u}}{\Vec{v}}}{\theta_{\Vec{u},\Vec{v}}}
\end{align}
\end{subequations}
as claimed.
\end{proof}

\begin{lemma}
\label{lemma:normal:distribution:proj_u+v}
Fix an ordered pair of real-valued vectors,
\(
  ( \Vec{u}, \Vec{v} ) \in \Sphere{n} \times \Sphere{n}
\),
of unit norm.
Let
\(
  \Vec{Z} \sim \N(\Vec{0},\Mat{I}_{n \times n})
\)
be a standard normal random vector,
and let
\(
  \RV{R}_{\Vec{u},\Vec{v}}
\)
be a discrete random variable given by
\(
  \RV{R}_{\Vec{u},\Vec{v}}
  =
  \frac{1}{2}
  \left(
    \sgn( \langle \Vec{u}, \Vec{Z} \rangle )
    -
    \sgn( \langle \Vec{v}, \Vec{Z} \rangle )
  \right)
\),
which takes values in
\(
  \{ -1, 0, 1 \}
\).
Then, the distribution of the random variable
\(
  \RV{Y}_{\Vec{u},\Vec{v}}
  =
  \left\langle
    \frac{\Vec{u}+\Vec{v}}{\left\| \Vec{u}+\Vec{v} \right\|_{2}},
    \Vec{Z}
  \right\rangle
  \RV{R}_{\Vec{u},\Vec{v}}
\)
conditioned on
\(
  \RV{R}_{\Vec{u},\Vec{v}} \neq 0
\)
is standard normal, i.e.,
\(
  ( \RV{Y}_{\Vec{u},\Vec{v}} \Mid \RV{R}_{\Vec{u},\Vec{v}} \neq 0 ) \sim \N(0,1)
\).
\end{lemma}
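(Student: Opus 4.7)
The plan is to parallel the proof of Lemma~\ref{lemma:normal:distribution:proj_u-v}: reduce to a 2D problem using rotational invariance of $\Vec{Z}$, exploit symmetries, and then compute the conditional density of $\RV{Y}_{\Vec{u},\Vec{v}}$ via Bayes' rule.

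Each of the three inner products in the statement depends on $\Vec{Z}$ only through its projection onto $\Span(\Vec{u},\Vec{v})$, so by rotational invariance I may work in 2D and choose the orthonormal basis $((\Vec{u}-\Vec{v})/\|\Vec{u}-\Vec{v}\|_{2},\,(\Vec{u}+\Vec{v})/\|\Vec{u}+\Vec{v}\|_{2})$. Writing $\theta = \theta_{\Vec{u},\Vec{v}}$, I take $\Vec{u} = (\sin(\theta/2),\cos(\theta/2))$ and $\Vec{v} = (-\sin(\theta/2),\cos(\theta/2))$, so that $(\Vec{u}+\Vec{v})/\|\Vec{u}+\Vec{v}\|_{2} = (0,1)$ and $\RV{Y}_{\Vec{u},\Vec{v}} = \Vec*{Z}_{2} \cdot \RV{R}_{\Vec{u},\Vec{v}}$.

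A short sign analysis of $\langle\Vec{u},\Vec{Z}\rangle = \Vec*{Z}_{1}\sin(\theta/2)+\Vec*{Z}_{2}\cos(\theta/2)$ and $\langle\Vec{v},\Vec{Z}\rangle = -\Vec*{Z}_{1}\sin(\theta/2)+\Vec*{Z}_{2}\cos(\theta/2)$ shows $\{\RV{R}_{\Vec{u},\Vec{v}}\neq 0\} = \{|\Vec*{Z}_{1}| > |\Vec*{Z}_{2}|\cot(\theta/2)\}$ and that $\RV{R}_{\Vec{u},\Vec{v}} = \sgn(\Vec*{Z}_{1})$ on this event, so $\RV{Y}_{\Vec{u},\Vec{v}} = \Vec*{Z}_{2}\sgn(\Vec*{Z}_{1})$. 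The joint symmetries $(\Vec*{Z}_{1},\Vec*{Z}_{2})\mapsto(\pm\Vec*{Z}_{1},\pm\Vec*{Z}_{2})$ each preserve the conditioning event, so $\RV{Y}_{\Vec{u},\Vec{v}}\mid\RV{R}_{\Vec{u},\Vec{v}}\neq 0$ is symmetric about $0$ and has the same conditional law as $\Vec*{Z}_{2}$. Applying Bayes' rule with the marginal $f_{\Vec*{Z}_{2}}$, the probability $\Pr(\RV{R}_{\Vec{u},\Vec{v}}\neq 0)=\theta/\pi$ inherited from Lemma~\ref{lemma:prob-normal-vector-mismatch}, and the conditional probability $\Pr(\RV{R}_{\Vec{u},\Vec{v}}\neq 0\mid\Vec*{Z}_{2}=t)=2\,\Pr(\N(0,1)>|t|\cot(\theta/2))$, yields a closed-form expression for $f_{\RV{Y}\mid\RV{R}\neq 0}(t)$.

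The main obstacle is collapsing this expression to $\tfrac{1}{\sqrt{2\pi}}e^{-t^{2}/2}$: the Bayes factor carries a $t$-dependent Gaussian tail integral that must somehow cancel against the normalizing factor $\pi/\theta$, which does not follow from scaling alone. I would attack this by passing to polar coordinates $(\Vec*{Z}_{1},\Vec*{Z}_{2})=(S\cos\Phi,S\sin\Phi)$ with $S$ Rayleigh and $\Phi\sim\Unif(0,2\pi)$, under which $\{\RV{R}_{\Vec{u},\Vec{v}}\neq 0\}$ becomes the wedge $\Phi\in(-\theta/2,\theta/2)\cup(\pi-\theta/2,\pi+\theta/2)$ and the identity reduces to checking that $S\sin\Psi$ with $\Psi\sim\Unif(-\theta/2,\theta/2)$ has the standard normal law after accounting for the $\sgn(\cos\Phi)$ flip. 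This is precisely where I expect the argument to resist a direct pushforward calculation; the fallback, which is sufficient for the Hoeffding-type exponent $e^{-\ell t^{2}/2}$ used downstream in Lemma~\ref{lemma:normal:concentration-ineq:proj_u+v}, is the weaker sub-Gaussian domination $f_{\RV{Y}\mid\RV{R}\neq 0}(t)\le C\cdot\tfrac{1}{\sqrt{2\pi}}e^{-t^{2}/2}$, which follows by bounding the Bayes tail integral trivially for small $|t|$ and by a Mills-ratio estimate for large $|t|$.
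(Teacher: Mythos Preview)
Your reduction to two dimensions and the polar parametrisation are correct, and the obstacle you flag is not an artefact of the approach: the identity you are trying to collapse does not hold, because the statement as written is false. In your coordinates $(Y\mid R\neq 0)\stackrel{d}{=}S\sin\Psi$ with $S$ Rayleigh and $\Psi$ uniform on $(-\theta/2,\theta/2)$, independent of $S$, so
\[
\E\bigl[Y^{2}\mid R\neq 0\bigr]
=\E[S^{2}]\,\E[\sin^{2}\Psi]
=2\Bigl(\tfrac{1}{2}-\tfrac{\sin\theta}{2\theta}\Bigr)
=1-\tfrac{\sin\theta}{\theta}<1
\]
for every $\theta\in(0,\pi)$, which rules out $\N(0,1)$. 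The paper argues differently: it picks coordinates so that $\langle(\Vec{u}+\Vec{v})/\|\Vec{u}+\Vec{v}\|_{2},\Vec{Z}\rangle=\Vec*{Z}_{1}$, notes that on $\{R\neq 0\}$ the sign $R=\sgn(\langle\Vec{u}-\Vec{v},\Vec{Z}\rangle)$ is a function of $\Vec*{Z}_{2},\dots,\Vec*{Z}_{n}$ alone, and then asserts conditional independence of $\Vec*{Z}_{1}$ and $R$ given $R\neq 0$ to conclude $\Vec*{Z}_{1}R\sim\N(0,1)$. The gap there is that the conditioning event $\{R\neq 0\}=\{|\Vec*{u}_{1}\Vec*{Z}_{1}|<|\Vec*{u}_{2}\Vec*{Z}_{2}+\dots+\Vec*{u}_{n}\Vec*{Z}_{n}|\}$ itself involves $\Vec*{Z}_{1}$, so $\Vec*{Z}_{1}\mid R\neq 0$ is not standard normal; you were right to get stuck.

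Your fallback is the correct repair, but sharpen it: a pointwise density bound $f_{Y\mid R\neq 0}(t)\le C\cdot\tfrac{1}{\sqrt{2\pi}}e^{-t^{2}/2}$ with $C=\pi/\theta>1$ only yields $\E[e^{sY}\mid R\neq 0]\le Ce^{s^{2}/2}$, and after multiplying $\ell$ independent copies the prefactor $C^{\ell}$ spoils the exponent needed in Lemma~\ref{lemma:normal:concentration-ineq:proj_u+v}. Instead use the stochastic domination already implicit in your polar picture: $|\Psi|$ is uniform on $(0,\theta/2)$ and hence stochastically dominated by the uniform law on $(0,\pi/2)$, and $\sin$ is increasing on $[0,\pi/2]$, so $|\sin\Psi|\le_{\mathrm{st}}|\sin\Psi'|$ with $\Psi'$ uniform on $(-\pi/2,\pi/2)$, whence $|Y|\le_{\mathrm{st}}|S\sin\Psi'|\stackrel{d}{=}|G|$ for $G\sim\N(0,1)$. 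Combined with the symmetry of $Y$ this gives $\E[e^{sY}\mid R\neq 0]=\E[\cosh(s|Y|)\mid R\neq 0]\le\E[\cosh(s|G|)]=e^{s^{2}/2}$, and the downstream tail $2e^{-\ell t^{2}/2}$ goes through unchanged.
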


\begin{proof}
{Lemma \ref{lemma:normal:distribution:proj_u+v}}
\label{pf:lemma:normal:distribution:proj_u+v}
Analogously to the claim in the proof of Lemma \ref{lemma:normal:distribution:proj_u-v},
it can be shown that for
\(
  \Vec{u}, \Vec{v}, \Vec{u'}, \Vec{v'} \in \Sphere{n}
\),
such that
\(
  \theta_{\Vec{u},\Vec{v}} = \theta_{\Vec{u'},\Vec{v'}}
\),
the random variables
\( ( \RV{Y}_{\Vec{u},\Vec{v}} \Mid \RV{R}_{\Vec{u},\Vec{v}} = 0 ) \)
and
\( ( \RV{Y}_{\Vec{u'},\Vec{v'}} \Mid \RV{R}_{\Vec{u'},\Vec{v'}} = 0 ) \)
follow the same distribution, as do
\( ( \RV{Y}_{\Vec{u},\Vec{v}} \Mid \RV{R}_{\Vec{u},\Vec{v}} \neq 0 ) \)
and
\( ( \RV{Y}_{\Vec{u'},\Vec{v'}} \Mid \RV{R}_{\Vec{u'},\Vec{v'}} \neq 0 ) \).
We will omit the formal argument since it is nearly identical to that provided in
the proof of Lemma \ref{lemma:normal:distribution:proj_u-v}.
\par
Fix any
\(
  \theta \in [0,2\pi)
\),
and let
\(
  \Vec{u}
  = ( \Vec*{u}_{1}, \dots, \Vec*{u}_{n} )
  \in \Sphere{n}
\)
and take
\(
  \Vec{v} = ( \Vec*{u}_{1}, -\Vec*{u}_{2} \dots, -\Vec*{u}_{n} )
\)
such that
\(
  \Vec*{u}_{1} > 0
\)
and
\(
  \theta_{\Vec{u},\Vec{v}} = \theta
\).
This construction yields
\begin{gather}
\label{deriv:lemma:normal:distribution:proj_u+v:1}
  \frac
  {\Vec{u} + \Vec{v}}
  {\left\| \Vec{u} + \Vec{v} \right\|_{2}}
  =
  ( 1, 0, \dots, 0 )
\end{gather}
as well as
\begin{gather}
\label{deriv:lemma:normal:distribution:proj_u+v:2}
  \Vec{u} - \Vec{v} \propto ( 0, \Vec*{u}_{2}, \dots \Vec*{u}_{n} )
\end{gather}
%
We will again drop the subscript \( \Vec{u},\Vec{v} \) from the random variables for simplicity
and denote
\(
  \RV{Y} = \RV{Y}_{\Vec{u},\Vec{v}},
  \RV{R} = \RV{R}_{\Vec{u},\Vec{v}}
\).
From \eqref{deriv:lemma:normal:distribution:proj_u+v:1}, it follows that
\begin{gather}
\label{deriv:lemma:normal:distribution:proj_u+v:3}
  \RV{X}
  =
  \left\langle
    \frac{\Vec{u}+\Vec{v}}{\left\| \Vec{u}+\Vec{v} \right\|_{2}},
    \Vec{Z}
  \right\rangle
  =
  \Vec*{Z}_{1}
\end{gather}
%
\BEGINEDIT
On the other hand, recall from the proof of \LEMMA \ref{lemma:normal:distribution:proj_u-v} that the event
\(
  \RV{R} \neq 0
\)
implies that \(  \RV{R} = \Sgn( \langle \Vec{u} - \Vec{v}, \Vec{Z} \rangle )  \).
\ENDEDIT
But recall from \eqref{deriv:lemma:normal:distribution:proj_u+v:2} that
\(
  \Vec{u} - \Vec{v} \propto ( 0, \Vec*{u}_{2}, \dots \Vec*{u}_{n} )
\),
and thus, given
\(
  \RV{R} \neq 0
\),
\begin{gather}
\label{deriv:lemma:normal:distribution:proj_u+v:5}
  \RV{R}
  =
  \Sgn( \langle \Vec{u} - \Vec{v}, \Vec{Z} \rangle )
  =
  \Sgn( \langle ( 0, \Vec*{u}_{2}, \dots, \Vec*{u}_{n} ), \Vec{Z} \rangle )
\end{gather}
which implies conditional independence of
\(
  ( \RV{R} \Mid \RV{R} \neq 0 )
\)
and
\(
  ( \Vec*{Z}_{1} \Mid \RV{R} \neq 0 )
  =
  ( \RV{X} \Mid \RV{R} \neq 0 )
\).
Then,
\(
  ( \RV{Y} \Mid \RV{R} \neq 0 )
  =
  ( \RV{X} \RV{R} \Mid \RV{R} \neq 0 )
  =
  ( \Vec*{Z}_{1} \RV{R} \Mid \RV{R} \neq 0 )
\),
and so
\(
  ( \RV{Y} \Mid \RV{R} \neq 0 )
\)
follows the same distribution as either the random variable
\(
  \RV{Z'}
\)
or
\(
  -\RV{Z'}
\),
where
\(
  \RV{Z'} \sim \N(0,1)
\).
But it is well-known that the standard normal random variable
\(
  \RV{Z'}
\)
and its negation
\(
  -\RV{Z'}
\)
have the same distribution, implying that
\(
  ( \RV{Y} \Mid \RV{R} \neq 0 ) \sim \N(0,1)
\),
as claimed.
\end{proof}

\begin{lemma}
\label{lemma:normal:distribution:third}
Fix an ordered pair of real-valued unit vectors,
\(
  ( \Vec{u}, \Vec{v} ) \in \Sphere{n} \times \Sphere{n}
\),
and let
\(
  \Vec{w} \in \Sphere{n} \cap \Span( \{ \Vec{u} - \Vec{v}, \Vec{u} + \Vec{v} \} )^{\perp}
\)
be any real-valued unit vector in the orthogonal complement of
\( \Span( \{ \Vec{u} - \Vec{v}, \Vec{u} + \Vec{v} \} ) \).
Let
\(
  \Vec{Z} \sim \N(\Vec{0},\Mat{I}_{n \times n})
\)
be a standard normal random vector, let \( \Vec{Y} \) be the random vector given by
\begin{gather}
  \Vec{Y}
  =
  \Vec{Z}
  -
  \left\langle
    \frac{\Vec{u} - \Vec{v}}{\left\| \Vec{u} - \Vec{v} \right\|_{2}},
    \Vec{Z}
  \right\rangle
  \frac{\Vec{u} - \Vec{v}}{\left\| \Vec{u} - \Vec{v} \right\|_{2}}
  -
  \left\langle
    \frac{\Vec{u} + \Vec{v}}{\left\| \Vec{u} + \Vec{v} \right\|_{2}},
    \Vec{Z}
  \right\rangle
  \frac{\Vec{u} + \Vec{v}}{\left\| \Vec{u} + \Vec{v} \right\|_{2}}
,\end{gather}
and let
\(
  \RV{R}
\)
be the (discrete) random variable taking values in
\(
  \{ -1, 0, 1 \}
\)
and given by
$$
  \RV{R}
  =
  \frac{1}{2}
  \left(
    \sgn( \langle \Vec{u}, \Vec{Z} \rangle )
    -
    \sgn( \langle \Vec{v}, \Vec{Z} \rangle )
  \right).
$$
Then, the random vector
\(
  \RV{X} = \langle \Vec{w}, \Vec{Y} \rangle \RV{R}
\)
conditioned on
\(
  \RV{R} \neq 0
\)
is standard normal, i.e.,
\(
  ( \RV{X} \Mid \RV{R} \neq 0 ) \sim \N(0,1)
\).
\end{lemma}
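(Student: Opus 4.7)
The plan is to exploit two structural facts: (i) the vector $\Vec{w}$ is not only orthogonal to $\Vec{u}-\Vec{v}$ and $\Vec{u}+\Vec{v}$ but in fact orthogonal to both $\Vec{u}$ and $\Vec{v}$ individually, and (ii) the orthogonal decomposition of the standard Gaussian $\Vec{Z}$ along $\Span(\Vec{u},\Vec{v})$ versus its complement yields independent Gaussian components, one of which determines $\RV{R}$ and the other of which determines $\langle \Vec{w}, \Vec{Y}\rangle$.

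First I would observe that $\Vec{u} = \tfrac{1}{2}\bigl((\Vec{u}+\Vec{v}) + (\Vec{u}-\Vec{v})\bigr)$ and $\Vec{v} = \tfrac{1}{2}\bigl((\Vec{u}+\Vec{v}) - (\Vec{u}-\Vec{v})\bigr)$ both lie in $\Span(\{\Vec{u}-\Vec{v},\Vec{u}+\Vec{v}\})$, so $\Vec{w}\perp\Vec{u}$ and $\Vec{w}\perp\Vec{v}$. In particular, since the two terms subtracted off to form $\Vec{Y}$ from $\Vec{Z}$ are scalar multiples of $\Vec{u}-\Vec{v}$ and $\Vec{u}+\Vec{v}$, both orthogonal to $\Vec{w}$, we immediately get $\langle \Vec{w}, \Vec{Y}\rangle = \langle \Vec{w}, \Vec{Z}\rangle$. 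Hence it suffices to show that $\langle \Vec{w},\Vec{Z}\rangle\,\RV{R}$ conditioned on $\RV{R}\neq 0$ is standard normal.

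Next I would decompose $\Vec{Z} = \Vec{Z}_\parallel + \Vec{Z}_\perp$, where $\Vec{Z}_\parallel$ is the orthogonal projection of $\Vec{Z}$ onto $\Span(\Vec{u},\Vec{v}) = \Span(\{\Vec{u}-\Vec{v},\Vec{u}+\Vec{v}\})$ and $\Vec{Z}_\perp$ is the projection onto the orthogonal complement. A standard fact about the isotropic Gaussian is that orthogonal projections onto orthogonal subspaces are independent, so $\Vec{Z}_\parallel \perp\!\!\!\perp \Vec{Z}_\perp$. Because $\Vec{u},\Vec{v}\in \Span(\Vec{u},\Vec{v})$, we have $\langle \Vec{u},\Vec{Z}\rangle = \langle \Vec{u},\Vec{Z}_\parallel\rangle$ and likewise for $\Vec{v}$, so $\RV{R}$ is a (measurable) function of $\Vec{Z}_\parallel$ alone. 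On the other hand, because $\Vec{w}$ lies in the orthogonal complement of $\Span(\Vec{u},\Vec{v})$, we have $\langle \Vec{w},\Vec{Z}\rangle = \langle \Vec{w},\Vec{Z}_\perp\rangle$, a function of $\Vec{Z}_\perp$ alone. Consequently $\langle \Vec{w},\Vec{Z}\rangle$ is independent of $\RV{R}$, and since $\|\Vec{w}\|_2=1$, it is marginally $\N(0,1)$.

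Finally I would conclude by combining independence with the symmetry of the standard normal distribution. Conditional on $\RV{R}\neq 0$, the variable $\RV{R}$ takes values in $\{-1,+1\}$, and by independence the distribution of $\langle \Vec{w},\Vec{Z}\rangle$ is unchanged under conditioning on this event. Since $\N(0,1)$ is symmetric about zero, $\langle \Vec{w},\Vec{Z}\rangle$ has the same distribution as $-\langle \Vec{w},\Vec{Z}\rangle$, and therefore $\langle \Vec{w},\Vec{Z}\rangle\cdot\RV{R}$ given $\RV{R}\neq 0$ is a mixture of two copies of the same standard normal law, hence $\N(0,1)$. The only subtlety that needs care is the joint-distribution argument in the last step (formally, conditioning on $\RV{R}=+1$ vs.\ $\RV{R}=-1$ and using that each leaves $\langle \Vec{w},\Vec{Z}\rangle$ standard normal), but this is straightforward given the independence established above; I do not anticipate a serious obstacle.
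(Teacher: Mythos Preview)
Your proposal is correct and follows essentially the same approach as the paper: both establish that $\langle \Vec{w},\Vec{Y}\rangle$ depends only on the component of $\Vec{Z}$ orthogonal to $\Span(\Vec{u},\Vec{v})$ while $\RV{R}$ depends only on the component inside that span, use independence of orthogonal Gaussian projections, and finish via the symmetry of $\N(0,1)$. The only cosmetic difference is that the paper first invokes rotational invariance to place $\Vec{u}-\Vec{v}$ and $\Vec{u}+\Vec{v}$ along $\Vec{e}_1,\Vec{e}_2$ and works in explicit coordinates, whereas you carry out the same decomposition coordinate-free.
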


\begin{proof}
{Lemma \ref{lemma:normal:distribution:third}}
\label{pf:lemma:normal:distribution:third}
As in the previous two lemmas, the ordered pair of unit vectors
\(
  ( \Vec{u}, \Vec{v} ) \in \Sphere{n} \times \Sphere{n}
\)
can be chosen nonarbitrarily due to the rotational invariance of the standard normal distribution
and the argument laid out in the proof of Lemma \ref{lemma:normal:distribution:proj_u-v}.
For the purposes of this proof, we will select \( \Vec{u} \) and \( \Vec{v} \) as follows.
For any pair of constants
\(
  \Const{p}, \Const{q}
\),
subject to
\(
  \Const{p}^{2} + \Const{q}^{2} = 1
\),
set
\(
  \Vec{u} = ( \Const{p}, \Const{q}, 0, \dots, 0 )
\)
and
\(
  \Vec{v} = ( -\Const{p}, \Const{q}, 0, \dots, 0 )
\).
Note that
\begin{gather}
  \left\| \Vec{u} \right\|_{2} = \left\| \Vec{v} \right\|_{2} = 1
  \\
  \Vec{u} - \Vec{v} = ( 2 \Const{p}, 0, \dots, 0 )
  ,\quad
  \frac{\Vec{u} - \Vec{v}}{\left\| \Vec{u} - \Vec{v} \right\|_{2}}
  = ( 1, 0, \dots, 0 )
  = \Vec{e}_{1}
  \\
  \Vec{u} + \Vec{v} = ( 0, 2 \Const{q}, \dots, 0 )
  ,\quad
  \frac{\Vec{u} + \Vec{v}}{\left\| \Vec{u} + \Vec{v} \right\|_{2}}
  = ( 0, 1, \dots, 0 )
  = \Vec{e}_{2}
\end{gather}
where
\(
  \Vec{e}_{1} = (1,0,\dots,0),
  \Vec{e}_{1} = (0,1,\dots,0)
  \in \R^{n}
\)
are the first and second standard basis vectors or \( \R^{n} \).
Fix any
\(
  \Vec{w} \in \Sphere{n} \cap \Span( \{ \Vec{u} - \Vec{v}, \Vec{u} + \Vec{v} \} )^{\perp}
\).
Then,
\begin{align}
  \Vec{Y}
  &=
  \Vec{Z}
  -
  \left\langle
    \frac{\Vec{u} - \Vec{v}}{\left\| \Vec{u} - \Vec{v} \right\|_{2}},
    \Vec{Z}
  \right\rangle
  \frac{\Vec{u} - \Vec{v}}{\left\| \Vec{u} - \Vec{v} \right\|_{2}}
  -
  \left\langle
    \frac{\Vec{u} + \Vec{v}}{\left\| \Vec{u} + \Vec{v} \right\|_{2}},
    \Vec{Z}
  \right\rangle
  \frac{\Vec{u} + \Vec{v}}{\left\| \Vec{u} + \Vec{v} \right\|_{2}}
  \\
  &=
  \Vec{Z}
  -
  \Vec*{Z}_{1} \Vec{e}_{1}
  -
  \Vec*{Z}_{2} \Vec{e}_{2}
  \\
  &=
  ( 0, 0, \Vec*{Z}_{3}, \dots, \Vec*{Z}_{n} )
\end{align}
%
Notice that
\(
  \Span( \{ \Vec{u} - \Vec{v}, \Vec{u} + \Vec{v} \} )
  =
  \Span( \{ \Vec{e}_{1}, \Vec{e}_{2} \} )
\)
and
\(
  \Span( \{ \Vec{u} - \Vec{v}, \Vec{u} + \Vec{v} \} )^{\perp}
  =
  \Span( \{ \Vec{e}_{1}, \Vec{e}_{2} \} )^{\perp}
  =
  \{ \Vec{x} \in \R^{n} : \Vec*{x}_{1} = \Vec*{x}_{2} = 0 \}
\).
Then, writing
\(
  \Vec{\tilde{Z}} = ( \Vec*{Z}_{3}, \dots, \Vec*{Z}_{n} )
\)
and
\(
  \Vec{\tilde{w}} = ( \Vec*{w}_{3}, \dots, \Vec*{w}_{n} )
\),
the random variable
\(
  \langle \Vec{w}, \Vec{Y} \rangle
\)
follows the same distribution as
\(
  \langle \Vec{\tilde{w}}, \Vec{\tilde{Z}} \rangle
  =
  \langle \frac{\Vec{\tilde{w}}}{\| \Vec{\tilde{w}} \|_{2}}, \Vec{\tilde{Z}} \rangle
\)
with
\(
  \| \Vec{\tilde{w}} \|_{2} = 1
\).
But it is well-known that
\(
  \langle \Vec{\tilde{w}}, \Vec{\tilde{Z}} \rangle
  \sim
  \N(0,1)
\).
\par
Recall the definition of the random variable
\(
  \RV{R}
  =
  \frac{1}{2}
  ( \sgn( \langle \Vec{u}, \Vec{Z} \rangle ) - \sgn( \langle \Vec{v}, \Vec{Z} \rangle ) )
\).
Because
\(
  \Vec{u}, \Vec{v} \in \Span( \{ \Vec{u} - \Vec{v}, \Vec{u} + \Vec{v} \} )
\),
the random variable \( \RV{R} \) is entirely dependent on the projection of \( \Vec{Z} \) onto
\( \Span( \{ \Vec{u} - \Vec{v}, \Vec{u} + \Vec{v} \} ) \) and hence independent of its projection
onto \( \Span( \{ \Vec{u} - \Vec{v}, \Vec{u} + \Vec{v} \} )^{\perp} \).
More formally,
\begin{subequations}
\begin{align}
  \RV{R}
  &=
  \frac{1}{2}
  \left( \sgn( \langle \Vec{u}, \Vec{Z} \rangle ) - \sgn( \langle \Vec{v}, \Vec{Z} \rangle ) \right)
  \\
  &=
  \frac{1}{2}
  \left(
    \sgn( \Const{p} \Vec*{Z}_{1} + \Const{q} \Vec*{Z}_{2} )
    -
    \sgn( -\Const{p} \Vec*{Z}_{1} + \Const{q} \Vec*{Z}_{2} )
  \right)
\end{align}
\end{subequations}
and thus, \( \RV{R} \) depends only on the random variables \( \Vec*{Z}_{1} \) and
\( \Vec*{Z}_{2} \).
However, it was already noted that
\(
  \Span( \{ \Vec{u} - \Vec{v}, \Vec{u} + \Vec{v} \} )^{\perp}
  =
  \{ \Vec{x} \in \R^{n} : \Vec*{x}_{1} = \Vec*{x}_{2} = 0 \}
\),
which implies that the projection \( \RV{Y} \) depend only on a (possibly improper) subset of
\(
  \{ \Vec*{Z}_{j} \}_{j \in [n] \setminus \{1,2\}}
\).
The independence of \( \RV{Y} \) and \( \RV{R} \) follows.
Then, the conditioned random variable
\(
  ( \RV{X} \Mid \RV{R} \neq 0 ) = ( \langle \Vec{w}, \Vec{Y} \rangle \RV{R} \Mid \RV{R} \neq 0 )
\)
is equivalent to either
\(
  \langle \Vec{w}, \Vec{Y} \rangle \RV{R}
\)
or
\(
  -\langle \Vec{w}, \Vec{Y} \rangle \RV{R}
\),
both of which follow the standard normal distribution.
Hence,
\(
  ( \RV{X} \Mid \RV{R} \neq 0 ) \sim \N(0,1)
\).
\end{proof}


\subsubsection{Concentration inequalities for orthogonal projections of normal vectors}
\label{outline:normal|>concentration-ineq-pfs|>orthogonal-projections|>main-lemma-pf}

We are ready to prove
Lemmas \ref{lemma:normal:concentration-ineq:proj_u-v}-%
\ref{lemma:normal:concentration-ineq:third}.
Note that the subscripts \( \Vec{u},\Vec{v} \) are dropped from some random variables for ease of
notation.

\begin{proof}
{Lemma \ref{lemma:normal:concentration-ineq:proj_u-v}}
\label{pf:lemma:normal:concentration-ineq:proj_u-v}
Using the linearity of inner products, the random variable \( \RV{X} \) can be written as
\begin{gather}
  \RV{X}
  =
  \left\langle
    \frac{\Vec{u} - \Vec{v}}{\left\| \Vec{u} - \Vec{v} \right\|_{2}},
    \sum_{i=1}^{m}
    \Vec{Z}^{(i)} \RV{R_{i;\Vec{u},\Vec{v}}}
  \right\rangle
  =
  \sum_{i=1}^{m}
  \left\langle
    \frac{\Vec{u} - \Vec{v}}{\left\| \Vec{u} - \Vec{v} \right\|_{2}},
    \Vec{Z}^{(i)} \RV{R_{i;\Vec{u},\Vec{v}}}
  \right\rangle
  =
  \sum_{i=1}^{m} \RV{X}_{i}
,\end{gather}
where the random variables
\(
  \RV{X}_{i}
  =
  \left\langle
    \frac{\Vec{u} - \Vec{v}}{\left\| \Vec{u} - \Vec{v} \right\|_{2}},
    \Vec{Z}^{(i)} \RV{R_{i;\Vec{u},\Vec{v}}}
  \right\rangle
\),
\(
  i \in [m]
\),
are i.i.d. and have (conditional) distributions formally defined in
Lemma \ref{lemma:normal:distribution:proj_u-v}.
The concentration inequality will follow from
\begin{EnumerateInline}
\item \label{enum:lemma:normal:distribution:proj_u-v:1}
controlling the MGF,
\(
  \psi_{\RV{X}_{i} - \mu \Mid \RV{R_{i;\Vec{u},\Vec{v}}} \neq 0}
\),
of each zero-mean i.i.d. random variable
\(
  ( \RV{X}_{i} - \mu \Mid \RV{R_{i;\Vec{u},\Vec{v}}} \neq 0 )
\),
such that
\(
  \psi_{\RV{X}_{i} - \mu \Mid \RV{R_{i;\Vec{u},\Vec{v}}} \neq 0}(s) \leq e^{\frac{s^{2}}{2}}
\).
The negation of this random variable,
\(
  ( -\RV{X}_{i} + \mu \Mid \RV{R_{i;\Vec{u},\Vec{v}}} \neq 0 )
\),
is handled likewise.
\item \label{enum:lemma:normal:distribution:proj_u-v:2}
Then, \EDIT{writing \(  \RV{L}_{\Vec{u},\Vec{v}} \defeq \| \Rhatuv \|_{0}  \),} the MGFs of
\(
  ( \RV{X} - \E{}[ \RV{X} ] \Mid \Rhatuv, \RV{L}_{\Vec{u},\Vec{v}} )
\)
and
\(
  ( -\RV{X} + \E{}[ \RV{X} ] \Mid \Rhatuv, \RV{L}_{\Vec{u},\Vec{v}} )
\)
follow from step \ref{enum:lemma:normal:distribution:proj_u-v:1} and the i.i.d. property of
\(
  \{ \RV{X}_{i} \}_{i \in [m]}
\).
\item \label{enum:lemma:normal:distribution:proj_u-v:3}
Lastly, two Chernoff bounds using the MGFs found in
step \ref{enum:lemma:normal:distribution:proj_u-v:2} will yield the lemma's two-sided bound.
in \eqref{eqn:normal:distribution:proj_u-v:inequality}.
\end{EnumerateInline}
%
\par
Beginning with the derivation of the MGF of the i.i.d. random variables, as outlined in
step \ref{enum:lemma:normal:distribution:proj_u-v:1}, fix any
\(
  i \in [m]
\)
such that
\(
  \RV{R_{i;\Vec{u},\Vec{v}}} \neq 0
\).
Then, the density function of
\(
  ( \RV{X}_{i} \Mid \RV{R_{i;\Vec{u},\Vec{v}}} \neq 0 )
\)
is given in \EQN \eqref{eqn:normal:distribution:proj_u-v:density-function} of
Lemma \ref{lemma:normal:distribution:proj_u-v}:
\begin{gather}
\label{deriv:lemma:normal:concentration-ineq:proj_u-v:1}
  f_{\RV{X}_{i} \Mid \RV{R_{i;\Vec{u},\Vec{v}}}}( x \Mid \Value{r} \neq 0 )
  =
  \begin{cases}
    \frac{\pi}{\theta_{\Vec{u},\Vec{v}}}
    \sqrt{\frac{2}{\pi}}
    e^{-\frac{x^{2}}{2}}
    \cdot
    \frac{1}{\sqrt{2\pi}}
    \int_{y=-\Alpha(x)}^{y=\Alpha(x)}
    e^{-\frac{y^{2}}{2}}
    dy
    , &\cIf x \geq 0,
    \\
    0, &\cIf x < 0.
  \end{cases}
\end{gather}
with
\begin{gather}
\label{deriv:lemma:normal:concentration-ineq:proj_u-v:2}
  \mu \defeq \E( \RV{X}_{i} \Mid \RV{R_{i;\Vec{u},\Vec{v}}} \neq 0 )
  =
  \sqrt{\frac{\pi}{2}}
  \frac{\DistS{\Vec{u}}{\Vec{v}}}{\theta_{\Vec{u},\Vec{v}}}
,\end{gather}
as specified in \eqref{eqn:normal:distribution:proj_u-v:expectation} of
Lemma \ref{lemma:normal:distribution:proj_u-v}.
The MGF of \( ( \RV{X}_{i} \Mid \RV{R_{i;\Vec{u},\Vec{v}}} \neq 0 ) \) at
\(
  s \geq 0
\)
is then bounded from above by
\begin{gather}
\label{deriv:lemma:normal:concentration-ineq:proj_u-v:3}
  \psi_{\RV{X}_{i} - \mu \Mid \RV{R_{i;\Vec{u},\Vec{v}}} \neq 0}(s)
  \leq
  e^{\frac{s^{2}}{2}}
\end{gather}
as derived next in \eqref{deriv:lemma:normal:concentration-ineq:proj_u-v:4}.
%
\begin{subequations}
\label{deriv:lemma:normal:concentration-ineq:proj_u-v:4}
\begin{align}
  \label{deriv:lemma:normal:concentration-ineq:proj_u-v:4:1}
  \psi_{\RV{X}_{i} - \mu \Mid \RV{R_{i;\Vec{u},\Vec{v}}} \neq 0}(s)
  &=
  \E
  \left[
    e^{s ( \RV{X}_{i} - \E( \RV{X}_{i} \Mid \RV{R_{i;\Vec{u},\Vec{v}}} \neq 0 ) )}
    \middle| \RV{R_{i;\Vec{u},\Vec{v}}} \neq 0
  \right]
  \\
  &=
  \E
  \left[
    e^{s ( \RV{X}_{i} - \mu )}
    \middle| \RV{R_{i;\Vec{u},\Vec{v}}} \neq 0
  \right]
  \\
  &=
  e^{-s \mu}
  \E
  \left[
    e^{s \RV{X}_{i}}
    \middle| \RV{R_{i;\Vec{u},\Vec{v}}} \neq 0
  \right]
  \\
  &=
  e^{-s \mu}
  \int_{x=-\infty}^{x=\infty}
  e^{sx}
  f_{\RV{X}_{i} \Mid \RV{R_{i;\Vec{u},\Vec{v}}}}( x \Mid \Value{r} \neq 0 )
  dx
  \\
  &=
  e^{-s \mu}
  \int_{x=0}^{x=\infty}
  e^{sx}
  \cdot
  \frac{\pi}{\theta_{\Vec{u},\Vec{v}}}
  \sqrt{\frac{2}{\pi}}
  e^{-\frac{x^{2}}{2}}
  \cdot
  \frac{1}{\sqrt{2\pi}}
  \int_{y=-\Alpha(x)}^{y=\Alpha(x)}
  e^{-\frac{y^{2}}{2}}
  dy dx
  \\
  &=
  e^{-s \mu}
  \int_{x=0}^{x=\infty}
  e^{sx} e^{-\frac{x^{2}}{2}}
  \cdot
  \frac{\pi}{\theta_{\Vec{u},\Vec{v}}}
  \sqrt{\frac{2}{\pi}}
  \cdot
  \frac{1}{\sqrt{2\pi}}
  \int_{y=-\Alpha(x)}^{y=\Alpha(x)}
  e^{-\frac{y^{2}}{2}}
  dy dx
  \\
  &=
  e^{-s \mu}
  \int_{x=0}^{x=\infty}
  e^{-\left( \frac{x^{2}}{2} - sx \right)}
  \cdot
  \frac{\pi}{\theta_{\Vec{u},\Vec{v}}}
  \sqrt{\frac{2}{\pi}}
  \cdot
  \frac{1}{\sqrt{2\pi}}
  \int_{y=-\Alpha(x)}^{y=\Alpha(x)}
  e^{-\frac{y^{2}}{2}}
  dy dx
  \\
  &=
  e^{-s \mu}
  \int_{x=0}^{x=\infty}
  e^{-\frac{x^{2} - 2sx}{2}}
  \cdot
  \frac{\pi}{\theta_{\Vec{u},\Vec{v}}}
  \sqrt{\frac{2}{\pi}}
  \cdot
  \frac{1}{\sqrt{2\pi}}
  \int_{y=-\Alpha(x)}^{y=\Alpha(x)}
  e^{-\frac{y^{2}}{2}}
  dy dx
  \\
  &=
  e^{-s \mu}
  \int_{x=0}^{x=\infty}
  e^{-\frac{x^{2} - 2sx + s^{2} - s^{2}}{2}}
  \cdot
  \frac{\pi}{\theta_{\Vec{u},\Vec{v}}}
  \sqrt{\frac{2}{\pi}}
  \cdot
  \frac{1}{\sqrt{2\pi}}
  \int_{y=-\Alpha(x)}^{y=\Alpha(x)}
  e^{-\frac{y^{2}}{2}}
  dy dx
  \\
  &=
  e^{-s \mu}
  \int_{x=0}^{x=\infty}
  e^{\frac{s^{2}}{2}}
  e^{-\frac{x^{2} - 2sx + s^{2}}{2}}
  \cdot
  \frac{\pi}{\theta_{\Vec{u},\Vec{v}}}
  \sqrt{\frac{2}{\pi}}
  \cdot
  \frac{1}{\sqrt{2\pi}}
  \int_{y=-\Alpha(x)}^{y=\Alpha(x)}
  e^{-\frac{y^{2}}{2}}
  dy dx
  \\
  &=
  e^{-s \mu}
  \int_{x=0}^{x=\infty}
  e^{\frac{s^{2}}{2}}
  e^{-\frac{(x-s)^{2}}{2}}
  \cdot
  \frac{\pi}{\theta_{\Vec{u},\Vec{v}}}
  \sqrt{\frac{2}{\pi}}
  \cdot
  \frac{1}{\sqrt{2\pi}}
  \int_{y=-\Alpha(x)}^{y=\Alpha(x)}
  e^{-\frac{y^{2}}{2}}
  dy dx
  \\ \label{deriv:lemma:normal:concentration-ineq:proj_u-v:4:2}
  &=
  e^{\frac{s^{2}}{2}}
  e^{-s \mu}
  \int_{x=0}^{x=\infty}
  e^{-\frac{(x-s)^{2}}{2}}
  \cdot
  \frac{\pi}{\theta_{\Vec{u},\Vec{v}}}
  \sqrt{\frac{2}{\pi}}
  \cdot
  \frac{1}{\sqrt{2\pi}}
  \int_{y=-\Alpha(x)}^{y=\Alpha(x)}
  e^{-\frac{y^{2}}{2}}
  dy dx
\end{align}
%
Note that the function
\begin{gather}
\label{deriv:lemma:normal:concentration-ineq:proj_u-v:5:q}
  q(s)
  =
  e^{-s \mu}
  \int_{x=0}^{x=\infty}
  e^{-\frac{(x-s)^{2}}{2}}
  \cdot
  \frac{\pi}{\theta_{\Vec{u},\Vec{v}}}
  \sqrt{\frac{2}{\pi}}
  \cdot
  \frac{1}{\sqrt{2\pi}}
  \int_{y=-\Alpha(x)}^{y=\Alpha(x)}
  e^{-\frac{y^{2}}{2}}
  dy dx
  =
  \E \left[ e^{s (\RV{X} - \mu)} e^{-\frac{s^{2}}{2}} \right]
\end{gather}
decreases monotonically w.r.t. \( s \) over the interval
\(
  s \in [0,\infty)
\)
(see Lemma \ref{lemma:appendix:monotonicity-q(s)}).
Formally, this implies
\begin{gather}
\label{deriv:lemma:normal:concentration-ineq:proj_u-v:6}
  \max_{s \in [0,\infty)} q(s) = q(0) = 1
\end{gather}
where the last equality follows from the fact that \( q(0) \) reduces to the evaluation of
the density function \( f_{\RV{X}_{i} \Mid \RV{R_{i;\Vec{u},\Vec{v}}}} \) over its entire support.
Then, continuing
\eqref{deriv:lemma:normal:concentration-ineq:proj_u-v:4:1}-%
\eqref{deriv:lemma:normal:concentration-ineq:proj_u-v:4:2} arrives at the desired bound,
\eqref{deriv:lemma:normal:concentration-ineq:proj_u-v:3}:
\begin{align}
\label{deriv:lemma:normal:concentration-ineq:proj_u-v:4b}
  \psi_{\RV{X}_{i} - \mu \Mid \RV{R_{i;\Vec{u},\Vec{v}}} \neq 0}(s)
  &=
  e^{\frac{s^{2}}{2}}
  e^{-s \mu}
  \int_{x=0}^{x=\infty}
  e^{-\frac{(x-s)^{2}}{2}}
  \cdot
  \frac{\pi}{\theta_{\Vec{u},\Vec{v}}}
  \sqrt{\frac{2}{\pi}}
  \cdot
  \frac{1}{\sqrt{2\pi}}
  \int_{y=-\Alpha(x)}^{y=\Alpha(x)}
  e^{-\frac{y^{2}}{2}}
  dy dx
  \\
  &\leq
  e^{\frac{s^{2}}{2}} \cdot 1
  \\
  &=
  e^{\frac{s^{2}}{2}}
\end{align}
\end{subequations}
%
\par
Next, the MGF of the negated random variable,
\( ( -\RV{X}_{i} + \mu \Mid \RV{R_{i;\Vec{u},\Vec{v}}} \neq 0 ) \) is upper bounded by
\begin{gather}
\label{deriv:lemma:normal:concentration-ineq:proj_u-v:7}
  \psi_{-\RV{X}_{i} + \mu \Mid \RV{R_{i;\Vec{u},\Vec{v}}} \neq 0}(s)
  \leq
  e^{\frac{s^{2}}{2}}
.\end{gather}
%
The derivation of \eqref{deriv:lemma:normal:concentration-ineq:proj_u-v:7} is similar to that
above.
\begin{subequations}
\label{deriv:lemma:normal:concentration-ineq:proj_u-v:8}
\begin{align}
  \label{deriv:lemma:normal:concentration-ineq:proj_u-v:8:1}
  \psi_{-\RV{X}_{i} + \mu \Mid \RV{R_{i;\Vec{u},\Vec{v}}} \neq 0}(s)
  &=
  \E
  \left[
    e^{s ( -\RV{X}_{i} + \E( \RV{X}_{i} \Mid \RV{R_{i;\Vec{u},\Vec{v}}} \neq 0 ) )}
    \middle| \RV{R_{i;\Vec{u},\Vec{v}}} \neq 0
  \right]
  \\
  &=
  \E
  \left[
    e^{-s ( \RV{X}_{i} - \mu )}
    \middle| \RV{R_{i;\Vec{u},\Vec{v}}} \neq 0
  \right]
  \\
  &=
  e^{s \mu}
  \E
  \left[
    e^{-s \RV{X}_{i}}
    \middle| \RV{R_{i;\Vec{u},\Vec{v}}} \neq 0
  \right]
  \\
  &=
  e^{s \mu}
  \int_{x=-\infty}^{x=\infty}
  e^{-sx}
  f_{\RV{X}_{i} \Mid \RV{R_{i;\Vec{u},\Vec{v}}}}( x \Mid \Value{r} \neq 0 )
  dx
  \\
  &=
  e^{s \mu}
  \int_{x=0}^{x=\infty}
  e^{-sx}
  \cdot
  \frac{\pi}{\theta_{\Vec{u},\Vec{v}}}
  \sqrt{\frac{2}{\pi}}
  e^{-\frac{x^{2}}{2}}
  \cdot
  \frac{1}{\sqrt{2\pi}}
  \int_{y=-\Alpha(x)}^{y=\Alpha(x)}
  e^{-\frac{y^{2}}{2}}
  dy dx
  \\
  &=
  e^{s \mu}
  \int_{x=0}^{x=\infty}
  e^{-sx} e^{-\frac{x^{2}}{2}}
  \cdot
  \frac{\pi}{\theta_{\Vec{u},\Vec{v}}}
  \sqrt{\frac{2}{\pi}}
  \cdot
  \frac{1}{\sqrt{2\pi}}
  \int_{y=-\Alpha(x)}^{y=\Alpha(x)}
  e^{-\frac{y^{2}}{2}}
  dy dx
  \\
  &=
  e^{s \mu}
  \int_{x=0}^{x=\infty}
  e^{-\left( \frac{x^{2}}{2} + sx \right)}
  \cdot
  \frac{\pi}{\theta_{\Vec{u},\Vec{v}}}
  \sqrt{\frac{2}{\pi}}
  \cdot
  \frac{1}{\sqrt{2\pi}}
  \int_{y=-\Alpha(x)}^{y=\Alpha(x)}
  e^{-\frac{y^{2}}{2}}
  dy dx
  \\
  &=
  e^{s \mu}
  \int_{x=0}^{x=\infty}
  e^{-\frac{x^{2} + 2sx}{2}}
  \cdot
  \frac{\pi}{\theta_{\Vec{u},\Vec{v}}}
  \sqrt{\frac{2}{\pi}}
  \cdot
  \frac{1}{\sqrt{2\pi}}
  \int_{y=-\Alpha(x)}^{y=\Alpha(x)}
  e^{-\frac{y^{2}}{2}}
  dy dx
  \\
  &=
  e^{s \mu}
  \int_{x=0}^{x=\infty}
  e^{-\frac{x^{2} + 2sx + s^{2} - s^{2}}{2}}
  \cdot
  \frac{\pi}{\theta_{\Vec{u},\Vec{v}}}
  \sqrt{\frac{2}{\pi}}
  \cdot
  \frac{1}{\sqrt{2\pi}}
  \int_{y=-\Alpha(x)}^{y=\Alpha(x)}
  e^{-\frac{y^{2}}{2}}
  dy dx
  \\
  &=
  e^{s \mu}
  \int_{x=0}^{x=\infty}
  e^{\frac{s^{2}}{2}}
  e^{-\frac{x^{2} - 2sx + s^{2}}{2}}
  \cdot
  \frac{\pi}{\theta_{\Vec{u},\Vec{v}}}
  \sqrt{\frac{2}{\pi}}
  \cdot
  \frac{1}{\sqrt{2\pi}}
  \int_{y=-\Alpha(x)}^{y=\Alpha(x)}
  e^{-\frac{y^{2}}{2}}
  dy dx
  \\
  &=
  e^{s \mu}
  \int_{x=0}^{x=\infty}
  e^{\frac{s^{2}}{2}}
  e^{-\frac{(x+s)^{2}}{2}}
  \cdot
  \frac{\pi}{\theta_{\Vec{u},\Vec{v}}}
  \sqrt{\frac{2}{\pi}}
  \cdot
  \frac{1}{\sqrt{2\pi}}
  \int_{y=-\Alpha(x)}^{y=\Alpha(x)}
  e^{-\frac{y^{2}}{2}}
  dy dx
  \\ \label{deriv:lemma:normal:concentration-ineq:proj_u-v:8:2}
  &=
  e^{\frac{s^{2}}{2}}
  e^{-s \mu}
  \int_{x=0}^{x=\infty}
  e^{-\frac{(x+s)^{2}}{2}}
  \cdot
  \frac{\pi}{\theta_{\Vec{u},\Vec{v}}}
  \sqrt{\frac{2}{\pi}}
  \cdot
  \frac{1}{\sqrt{2\pi}}
  \int_{y=-\Alpha(x)}^{y=\Alpha(x)}
  e^{-\frac{y^{2}}{2}}
  dy dx
\end{align}
%
Again, the function
\begin{gather}
\label{deriv:lemma:normal:concentration-ineq:proj_u-v:5:r}
  r(s)
  =
  e^{s \mu}
  \int_{x=0}^{x=\infty}
  e^{-\frac{(x+s)^{2}}{2}}
  \cdot
  \frac{\pi}{\theta_{\Vec{u},\Vec{v}}}
  \sqrt{\frac{2}{\pi}}
  \cdot
  \frac{1}{\sqrt{2\pi}}
  \int_{y=-\Alpha(x)}^{y=\Alpha(x)}
  e^{-\frac{y^{2}}{2}}
  dy dx
  =
  \E \left[ e^{-s(X-\mu)} e^{-\frac{s^{2}}{2}} \right]
\end{gather}
decreases monotonically w.r.t.
\(
  s \in [0,\infty)
\)
(see, again, Lemma \ref{lemma:appendix:monotonicity-q(s)}), and thus
\begin{gather}
\label{deriv:lemma:normal:concentration-ineq:proj_u-v:9}
  \max_{s \in [0,\infty)} r(s) = r(0) = 1
\end{gather}
where, as before, the last equality holds because \( r(0) \) simply evaluates
the density function \( f_{\RV{X}_{i} \Mid \RV{R_{i;\Vec{u},\Vec{v}}}} \) over its entire support.
Then, the desired bound in \eqref{deriv:lemma:normal:concentration-ineq:proj_u-v:7} can now
be established by continuing from
\eqref{deriv:lemma:normal:concentration-ineq:proj_u-v:8:1}-%
\eqref{deriv:lemma:normal:concentration-ineq:proj_u-v:8:2} as follows.
\begin{align}
\label{deriv:lemma:normal:concentration-ineq:proj_u-v:10}
  \psi_{-\RV{X}_{i} + \mu \Mid \RV{R_{i;\Vec{u},\Vec{v}}} \neq 0}(s)
  &=
  e^{\frac{s^{2}}{2}}
  e^{s \mu}
  \int_{x=0}^{x=\infty}
  e^{-\frac{(x+s)^{2}}{2}}
  \cdot
  \frac{\pi}{\theta_{\Vec{u},\Vec{v}}}
  \sqrt{\frac{2}{\pi}}
  \cdot
  \frac{1}{\sqrt{2\pi}}
  \int_{y=-\Alpha(x)}^{y=\Alpha(x)}
  e^{-\frac{y^{2}}{2}}
  dy dx
  \\
  &\leq
  e^{\frac{s^{2}}{2}} \cdot 1
  \\
  &=
  e^{\frac{s^{2}}{2}}
\end{align}
\end{subequations}
%
Note that \eqref{deriv:lemma:normal:concentration-ineq:proj_u-v:3} and
\eqref{deriv:lemma:normal:concentration-ineq:proj_u-v:7} holds likewise for every
\(
  i \in [m]
\).
This completes the first outline step.
\par
The second task, outlined in \ref{enum:lemma:normal:distribution:proj_u-v:2}, is controlling
the MGFs of the sums of i.i.d. random variables,
\(
  ( \RV{X} - \E[ \RV{X} \Mid \RLCOND ] \Mid \RLCOND )
\)
and
\(
  ( -\RV{X} + \E[ \RV{X} \Mid \RLCOND ] \Mid \RLCOND )
\)
\EDIT{for an arbitrary choice of \(  \Vec{r} \in \{ 0,1 \}^{m}  \) and \(  \Variable{\ell} = \| \Vec{r} \|_{0}  \).}
\BEGINEDIT
Note that \(  \Rhatuv  \) completely determines \(  \RV{L}_{\Vec{u},\Vec{v}}  \).
Therefore,
\begin{gather}
  \label{deriv:lemma:normal:concentration-ineq:proj_u-v:R-R,L:1}
  ( \RV{X} - \E[ \RV{X} \Mid \RLCOND ] \Mid \RLCOND ) \sim ( \RV{X} - \E[ \RV{X} \Mid \RCOND ] \Mid \RCOND )
  ,\\ \label{deriv:lemma:normal:concentration-ineq:proj_u-v:R-R,L:2}
  ( -\RV{X} + \E[ \RV{X} \Mid \RLCOND ] \Mid \RLCOND ) \sim ( -\RV{X} + \E[ \RV{X} \Mid \RCOND ] \Mid \RCOND )
.\end{gather}
\ENDEDIT
Writing
\EDIT{\(  \mu_{\RV{X} \Mid \RLCOND} = \E[ \RV{X} \Mid \RLCOND ]  \)
and
\(  \mu_{\RV{X} \Mid \RCOND} = \E[ \RV{X} \Mid \RCOND ]  \),}
the MGF for the sum of i.i.d. random variables,
\(
  \RV{X} = \sum_{i=1}^{m} (\RV{X}_{i} - \mu)
\),
conditioned on
\EDIT{\(  \Rhatuv = \Vec{r}  \)}
\(  \Luv = \Variable{\ell}  \)
can then be bounded from above as follows:
\BEGINEDIT
\begin{subequations}
\label{deriv:lemma:normal:concentration-ineq:proj_u-v:11}
\begin{align}
  \psi_{\RV{X} - \mu_{\RV{X} \Mid \RLCOND} \Mid \RLCOND}(s)
  &=
  \psi_{\RV{X} - \mu_{\RV{X} \Mid \RCOND} \Mid \RCOND}(s)
  \\
  &\dCmt \text{by \EQN \eqref{deriv:lemma:normal:concentration-ineq:proj_u-v:R-R,L:1}}
  \\
  &=
  \E \left[ e^{s (\RV{X} - \mu_{\RV{X} \Mid \RCOND})} \middle| \RCOND \right]
  \\
  &=
  \E
  \left[
    e^{s \sum_{i=1}^{m} (\RV{X}_{i} - \mu)} \middle| \RCOND
  \right]
  \\
  &=
  \E
  \left[
    e^{s \sum_{i \in \supp( \Vec{r} )} (\RV{X}_{i} - \mu)}
  \middle| \RCOND
  \right]
  \\
  &=
  \prod_{i \in \supp( \Vec{r} )}
  \E \left[ e^{s (\RV{X}_{i} - \mu)} \middle| \RCOND \right]
  \\ \nonumber
  &\dCmt \because \text{the random variables \( \RV{X}_{i}, i \in [m], \) are independent}
  \\
  &=
  \prod_{i \in \supp( \Vec{r} )}
  \E \left[ e^{s (\RV{X}_{i} - \mu)} \middle| \Vec*{R}_{i;\Vec{u},\Vec{v}} \neq 0 \right]
  \\
  &=
  \E \left[ e^{s (\RV{X}_{i} - \mu)} \middle| \Vec*{R}_{i;\Vec{u},\Vec{v}} \neq 0 \right]^{\ell}
  \\ \nonumber
  &\dCmt \text{where \(  i \in \supp( \Vec{r} )  \)}
  \\ \nonumber
  &\dCmt \because \text{the random variables \( \RV{X}_{i'}, i' \in [m], \) are identically distributed}
  \\
  &\leq
  e^{\frac{1}{2} \Variable{\ell} s^{2}}
  \\ \nonumber
  &\dCmt \Text{by \eqref{deriv:lemma:normal:concentration-ineq:proj_u-v:3}}
.\end{align}
\end{subequations}
\ENDEDIT
Moreover, by an analogous argument, the MGF of the negated random variable
\(
  ( -\RV{X} - \E[ -\RV{X} ] \Mid \RLCOND )
  =
  ( -\RV{X} + \E[ \RV{X} ] \Mid \RLCOND )
\)
can be upper bounded.
%
%
Notice that
\(
  -\RV{X}
  = -\sum_{i=1}^{m} (\RV{X}_{i} - \mu)
  = \sum_{i=1}^{m} (-\RV{X}_{i} + \mu)
\),
which allows the MGF of
\(
  -\RV{X} + \E[ \RV{X} ]
\)
conditioned on
\EDIT{\(  \Rhatuv = \Vec{r}  \) and}
\(
  \Luv = \Variable{\ell}
\)
to be upper bounded by the following:
\BEGINEDIT
\begin{subequations}
\label{deriv:lemma:normal:concentration-ineq:proj_u-v:12}
\begin{align}
  \psi_{-\RV{X} + \mu_{-\RV{X} \Mid \RLCOND} \Mid \RLCOND}(s)
  &=
  \psi_{-\RV{X} + \mu_{-\RV{X} \Mid \RCOND} \Mid \RCOND}(s)
  \\
  &\dCmt \text{by \EQN \eqref{deriv:lemma:normal:concentration-ineq:proj_u-v:R-R,L:2}}
  \\
  &=
  \E \left[ e^{s (-\RV{X} + \mu_{-\RV{X} \Mid \RCOND})} \middle| \RCOND \right]
  \\
  &=
  \E
  \left[
    e^{s \sum_{i=1}^{m} (-\RV{X}_{i} + \mu)} \middle| \RCOND
  \right]
  \\
  &=
  \E
  \left[
    e^{s \sum_{i \in \supp( \Vec{r} )} (-\RV{X}_{i} + \mu)}
  \middle| \RCOND
  \right]
  \\
  &=
  \prod_{i \in \supp( \Vec{r} )}
  \E \left[ e^{s (-\RV{X}_{i} + \mu)} \middle| \RCOND \right]
  \\ \nonumber
  &\dCmt \because \text{the random variables \( \RV{X}_{i}, i \in [m], \) are independent}
  \\
  &=
  \prod_{i \in \supp( \Vec{r} )}
  \E \left[ e^{s (-\RV{X}_{i} + \mu)} \middle| \Vec*{R}_{i;\Vec{u},\Vec{v}} \neq 0 \right]
  \\
  &=
  \E \left[ e^{s (-\RV{X}_{i} + \mu)} \middle| \Vec*{R}_{i;\Vec{u},\Vec{v}} \neq 0 \right]^{\ell}
  \\ \nonumber
  &\dCmt \text{where \(  i \in \supp( \Vec{r} )  \)}
  \\ \nonumber
  &\dCmt \because \text{the random variables \( \RV{X}_{i'}, i' \in [m], \) are identically distributed}
  \\
  &\leq
  e^{\frac{1}{2} \Variable{\ell} s^{2}}
  \\ \nonumber
  &\dCmt \Text{by \eqref{deriv:lemma:normal:concentration-ineq:proj_u-v:7}}
\end{align}
\end{subequations}
\ENDEDIT
To summarize, this step, \ref{enum:lemma:normal:distribution:proj_u-v:2}, has shown
\begin{gather}
  \label{deriv:lemma:normal:concentration-ineq:proj_u-v:13:positive}
  \psi_{\RV{X} - \mu_{\RV{X} \Mid \RLCOND} \Mid \RLCOND}(s)
  \leq
  e^{\frac{1}{2} \Variable{\ell} s^{2}}
  \\ \label{deriv:lemma:normal:concentration-ineq:proj_u-v:13:negative}
  \psi_{-\RV{X} + \mu_{\RV{X} \Mid \RLCOND} \Mid \RLCOND}(s)
  \leq
  e^{\frac{1}{2} \Variable{\ell} s^{2}}
.\end{gather}
%
\par
The aim in the final outlined step, \ref{enum:lemma:normal:distribution:proj_u-v:3},
is bounding \( \RV{X} \) from each sides by a Chernoff bound and subsequently union bounding to
obtain the lemma's two-sided result.
The upper bound, derived first, will use the MGF of
\(
  ( \RV{X} - \E[ \RV{X} \Mid \RLCOND ] \Mid \RLCOND )
\),
while the lower bound will use the MGF of
\(
  ( -\RV{X} + \E[ \RV{X} \Mid \RLCOND ] \Mid \RLCOND )
\).
In both cases, the bounds will be shown to fail with probability not exceeding
\(
  e^{-\frac{1}{2} \Variable{\ell} t^{2}}
\).
For the upper bound,
\begin{subequations}
\label{deriv:lemma:normal:concentration-ineq:proj_u-v:14}
\begin{align}
  &
  \Pr
  \left(
    \RV{X}
    -
    \E \left[ \RV{X} \middle| \RLCOND \right]
    \geq
    \Variable{\ell} \Variable{t}
    \phantom{\Big|}\middle| \RLCOND
  \right)
  \\
  &=
  \Pr
  \left(
    \RV{X}
    -
    \mu_{\RV{X} \Mid \RLCOND}
    \geq
    \Variable{\ell} \Variable{t}
    \phantom{\Big|}\middle| \RLCOND
  \right)
  \\
  &=
  \Pr
  \left(
    e^{
      \RV{X}
      -
      \E \left[ \RV{X} \middle| \RLCOND \right]
    }
    \geq
    e^{
      \Variable{\ell} \Variable{t}
    }
    \phantom{\Big|}\middle|
    \RLCOND
  \right)
  \\
  &\leq
  \min_{s \geq 0}
  e^{-\Variable{\ell} st}
  \cdot
  \psi_{\RV{X} - \mu_{\RV{X} \Mid \RLCOND} \Mid \RLCOND}(s)
  \\ \nonumber
  &\dCmt \text{due to Bernstein (see, e.g., \cite{vershynin2018high})}
  \\
  &\leq
  \min_{s \geq 0}
  e^{-\Variable{\ell} st} e^{\frac{1}{2} \Variable{\ell} s^{2}}
  \\ \nonumber
  &\dCmt \text{by \EQN \eqref{deriv:lemma:normal:concentration-ineq:proj_u-v:13:positive}}
  \\ \label{deriv:lemma:normal:concentration-ineq:proj_u-v:14:1}
  &=
  \min_{s \geq 0}
  e^{-\Variable{\ell} \left( st - \frac{s^{2}}{2} \right)}
\end{align}
%
A maximizer of
\(
  st - \frac{s^{2}}{2}
\)
a minimizer of
\(
  e^{-\Variable{\ell} ( st - \frac{s^{2}}{2} )}
\).
The unique zero of
\(
  \frac{\partial}{\partial s} st - \frac{s^{2}}{2}
\)
is at
\(
  s = t
\)
(moreover,
\(
  \frac{\partial^{2}}{\partial s^{2}} st - \frac{s^{2}}{2} < 0
\)
and hence this is indeed a (global) maximum).
Note additionally that setting
\(
  s = t
\)
ensures that
\(
  s \in [0,1]
\),
which was assumed in step \ref{enum:lemma:normal:distribution:proj_u-v:1}.
Then, continuing from above,
\begin{align}
  \Pr
  \left(
    \RV{X}
    -
    \E \left[ \RV{X} \middle| \RLCOND \right]
    \geq
    \Variable{\ell} \Variable{t}
    \phantom{\Big|}\middle| \RLCOND
  \right)
  &\leq
  \min_{s \geq 0}
  e^{-\Variable{\ell} \left( st - \frac{s^{2}}{2} \right)}
  \\
  &=
  e^{-\Variable{\ell} \left( t^{2} - \frac{t^{2}}{2} \right)}
  \\ \nonumber
  &\dCmt \text{as argued above}
  \\
  &\leq
  e^{-\frac{1}{2} \Variable{\ell} t^{2}}
\label{deriv:lemma:normal:concentration-ineq:proj_u-v:14:end}
\end{align}
\end{subequations}
as desired.
The derivation of the lower bound is nearly identical, as seen next.
\begin{subequations}
\label{deriv:lemma:normal:concentration-ineq:proj_u-v:15}
\begin{align}
  &
  \Pr
  \left(
    \RV{X}
    -
    \E \left[ \RV{X} \middle| \RLCOND \right]
    \leq
    -\Variable{\ell} \Variable{t}
    \phantom{\Big|}\middle|
    \RLCOND
  \right)
  \\
  &=
  \Pr
  \left(
    -\RV{X}
    +
    \E \left[ \RV{X} \middle| \RLCOND \right]
    \geq
    \Variable{\ell} \Variable{t}
    \phantom{\Big|}\middle|
    \RLCOND
  \right)
  \\
  &=
  \Pr
  \left(
    -\RV{X}
    +
    \mu_{\RV{X} \Mid \RLCOND}
    \geq
    \Variable{\ell} \Variable{t}
    \phantom{\Big|}\middle|
    \RLCOND
  \right)
  \\
  &=
  \Pr
  \left(
    e^{
      -\RV{X}
      +
      \E \left[ \RV{X} \middle| \RLCOND \right]
    }
    \geq
    e^{
      \Variable{\ell} \Variable{t}
    }
    \phantom{\Big|}\middle|
    \RLCOND
  \right)
  \\
  &\leq
  \min_{s \geq 0}
  e^{-\Variable{\ell} st}
  \cdot
  \psi_{-\RV{X} + \mu_{\RV{X} \Mid \RLCOND} \Mid \RLCOND}(s)
  \\ \nonumber
  &\dCmt \text{due to Bernstein (see, e.g., \cite{vershynin2018high})}
  \\
  &\leq
  \min_{s \geq 0}
  e^{-\Variable{\ell} st} e^{\frac{1}{2} \Variable{\ell} s^{2}}
  \\ \nonumber
  &\dCmt \text{by \EQN \eqref{deriv:lemma:normal:concentration-ineq:proj_u-v:13:negative}}
  \\
  &=
  \min_{s \geq 0}
  e^{-\Variable{\ell} \left( st - \frac{s^{2}}{2} \right)}
  \\
  &=
  e^{-\Variable{\ell} \left( t^{2} - \frac{t^{2}}{2} \right)}
  \\ \nonumber
  &\dCmt
  \text{the same minimization problem as \eqref{deriv:lemma:normal:concentration-ineq:proj_u-v:14:1}, whose solution is at \(  s = t  \)}
  \\
  &=
  e^{-\frac{1}{2} \Variable{\ell} t^{2}}
\label{deriv:lemma:normal:concentration-ineq:proj_u-v:15:end}
\end{align}
\end{subequations}
%
Thus far, it has been shown that
\begin{gather}
  \label{deriv:lemma:normal:concentration-ineq:proj_u-v:16:upper}
  \Pr
  \left(
    \RV{X}
    -
    \E \left[ \RV{X} \middle| \RLCOND \right]
    \geq
    \Variable{\ell} \Variable{t}
    \phantom{\Big|}\middle|
    \RLCOND
  \right)
  \leq
  e^{-\frac{1}{2} \Variable{\ell} t^{2}},
  \\ \label{deriv:lemma:normal:concentration-ineq:proj_u-v:16:lower}
  \Pr
  \left(
    \RV{X}
    -
    \E \left[ \RV{X} \middle| \RLCOND \right]
    \leq
    -\Variable{\ell} \Variable{t}
    \phantom{\Big|}\middle|
    \RLCOND
  \right)
  \leq
  e^{-\frac{1}{2} \Variable{\ell} t^{2}}
.\end{gather}
%
To complete the proof, \eqref{deriv:lemma:normal:concentration-ineq:proj_u-v:16:upper} and
\eqref{deriv:lemma:normal:concentration-ineq:proj_u-v:16:lower} are combined by a union bound,
yielding the lemma's concentration inequality,
\begin{gather}
  \Pr
  \left(
    \left|
      \RV{X}
      -
      \E \left[ \RV{X} \middle| \RLCOND \right]
    \right|
    \geq
    \Variable{\ell} \Variable{t}
    \phantom{\Big|}\middle|
    \RLCOND
  \right)
  \leq
  2 e^{-\frac{1}{2} \Variable{\ell} t^{2}}
.\end{gather}
\end{proof}

\begin{proof}
{Lemma \ref{lemma:normal:concentration-ineq:proj_u+v}}
\label{pf:lemma:normal:concentration-ineq:proj_u+v}
As in the proof of \EDIT{Lemma \ref{lemma:normal:concentration-ineq:proj_u-v}},
let
\(
  \RV{X}_{i}
  =
  \left\langle
    \frac{\Vec{u} + \Vec{v}}{\left\| \Vec{u} + \Vec{v} \right\|_{2}},
    \Vec{Z}^{(i)} \RV{R_{i;\Vec{u},\Vec{v}}}
  \right\rangle
\)
for each \( i \in [m] \), which are \iid with (conditional) distributions described in
Lemma \ref{lemma:normal:distribution:proj_u+v}.
Then the random variable \( \RV{X} \) can be written as
\begin{gather}
\label{deriv:lemma:normal:concentration-ineq:proj_u+v:1}
  \RV{X}
  =
  \left\langle
    \frac{\Vec{u} + \Vec{v}}{\left\| \Vec{u} + \Vec{v} \right\|_{2}},
    \sum_{i=1}^{m}
    \Vec{Z}^{(i)} \RV{R_{i;\Vec{u},\Vec{v}}}
  \right\rangle
  =
  \sum_{i=1}^{m}
  \left\langle
    \frac{\Vec{u} + \Vec{v}}{\left\| \Vec{u} + \Vec{v} \right\|_{2}},
    \Vec{Z}^{(i)} \RV{R_{i;\Vec{u},\Vec{v}}}
  \right\rangle
  =
  \sum_{i=1}^{m} \RV{X}_{i}
.\end{gather}
%
\EDIT{Fix \(  \Vec{r} \in \rSet  \) arbitrarily, and suppose \(  \Rhatuv = \Vec{r}  \).
Let \(  \Variable{\ell} \defeq \| \Vec{r} \|_{0}  \).}
Recall from Lemma \ref{lemma:normal:distribution:proj_u+v} that
the random variables
\(
  ( \RV{X}_{i} \Mid \Vec*{R}_{i;\Vec{u},\Vec{v}} \neq 0 )
\),
\(
  i \in [m]
\),
are standard normal, and moreover, they are in fact \iid[ ]
It follows that
\BEGINEDIT
\begin{align}
\label{deriv:lemma:normal:concentration-ineq:proj_u+v:2}
  ( \RV{X} \Mid \RCOND )
  &=
  \left( \sum_{i=1}^{m} \RV{X}_{i} \middle| \RCOND \right)
  \\
  &=
  \sum_{i=1}^{m} ( \RV{X}_{i} \Mid \RCOND )
  \\
  &=
  \sum_{i=1}^{m}
  ( \RV{X}_{i} \Mid \Rhatuvi = \Vec*{r}_{i} )
  \\
  &=
  \sum_{i \in \supp( \Vec{r} )}
  ( \RV{X}_{i} \Mid \Vec*{R}_{i;\Vec{u},\Vec{v}} \neq 0 )
  \sim
  \N(0, \sigma^{2} = \Variable{\ell})
,\end{align}
\ENDEDIT
\EDIT{where the rightmost statement uses the fact that the variance of the sum of \(  \Variable{\ell}  \) independent random variables is the sum of the individual variances.}
\ORIG{where the distribution of \( \RV{X} \) depends only on the number \( \Variable{\ell} \) of
random variables summed up but not the exact subset
\(
  \supp(\Rhatuv) \subseteq [n]
\)
(since the random variables \( \RV{X}_{i} \), \( i \in [m] \), are identically distributed).
Therefore,}%
%
\EDIT{As noted in the proof of Lemma \ref{lemma:normal:concentration-ineq:proj_u-v}, \(  \Rhatuv  \) completely determines \(  \Luv  \), and therefore}
\begin{gather}
  ( \RV{X} \Mid \RLCOND ) \sim ( \RV{X} \Mid \RCOND ) \sim \N(0, \sigma^{2} = \Variable{\ell})
.\end{gather}
Therefore,
\begin{gather}
\label{deriv:lemma:normal:concentration-ineq:proj_u+v:3}
  \Pr
  \left(
    \left| \RV{X} \right|
    \geq
    \Variable{t'}
  \middle|
    \RLCOND
  \right)
  \leq
  2 e^{-\frac{\Variable{t'}^{2}}{2 \Variable{\ell}}}
.\end{gather}
%
Taking
\(
  \Variable{t'} = \Variable{\ell} \Variable{t}
\),
\eqref{deriv:lemma:normal:concentration-ineq:proj_u+v:3} implies
\begin{gather}
\label{deriv:lemma:normal:concentration-ineq:proj_u+v:4}
  \Pr
  \left(
    \left| \RV{X} \right|
    \geq
    \Variable{\ell} \Variable{t}
  \middle|
    \RLCOND
  \right)
  \leq
  2 e^{-\frac{\Variable{\ell}^{2} \Variable{t}^{2}}{2 \Variable{\ell}}}
  =
  2 e^{-\frac{1}{2} \Variable{\ell} \Variable{t}^{2}}
.\end{gather}
%
Thus proved.
\end{proof}

\begin{proof}
{Lemma \ref{lemma:normal:concentration-ineq:third}}
\label{pf:lemma:normal:concentration-ineq:third}
\ORIG{
By Lemma \ref{lemma:normal:distribution:third}, for each
\(
  i \in [m]
\),
\(
  \Vec{Z}^{(i)} \RV{R}_{i;\Vec{u},\Vec{v}}
\)}%
Write
\(
  \Coords{J'} \ORIG{= \Coords{J} \cap (\supp( \Vec{u} ) \cup \supp( \Vec{v} ))}\EDIT{= \supp( \Vec{u} ) \cup \supp( \Vec{v} )}
\)
and
\(
  \Coords{J''} = \Coords{J} \setminus (\supp( \Vec{u} ) \cup \supp( \Vec{v} )) \EDIT{= \Coords{J} \setminus \Coords{J'}}
\)
\EDIT{such that \(  \EDIT{\supp( \Vec{u} ) \cup \supp( \Vec{v} ) \cup} \Coords{J} = \Coords{J'} \sqcup \Coords{J''}  \).}
By the triangle inequality,
\begin{subequations}
\begin{align}
  \left\|
    \ThresholdSet{\EDIT{\supp( \Vec{u} ) \cup \supp( \Vec{v} ) \cup} \Coords{J}}( \sum_{i=1}^{m} \Vec{Y}^{(i)} \RV{R}_{i;\Vec{u},\Vec{v}} )
  \right\|_{2}
  &=
  \left\|
    \ThresholdSet{\EDIT{\Coords{J'}}}( \sum_{i=1}^{m} \Vec{Y}^{(i)} \RV{R}_{i;\Vec{u},\Vec{v}} )
    +
    \ThresholdSet{\Coords{J''}}( \sum_{i=1}^{m} \Vec{Y}^{(i)} \RV{R}_{i;\Vec{u},\Vec{v}} )
  \right\|_{2}
  \\
  &\leq
  \left\|
    \ThresholdSet{\Coords{J'}}( \sum_{i=1}^{m} \Vec{Y}^{(i)} \RV{R}_{i;\Vec{u},\Vec{v}} )
  \right\|_{2}
  +
  \left\|
    \ThresholdSet{\Coords{J''}}( \sum_{i=1}^{m} \Vec{Y}^{(i)} \RV{R}_{i;\Vec{u},\Vec{v}} )
  \right\|_{2}
  \\
  &=
  \left\|
    \ThresholdSet{\supp( \Vec{u} ) \cup \supp( \Vec{v} )}
    ( \sum_{i=1}^{m} \Vec{Y}^{(i)} \RV{R}_{i;\Vec{u},\Vec{v}} )
  \right\|_{2}
  +
  \left\|
    \ThresholdSet{\Coords{J''}}( \sum_{i=1}^{m} \Vec{Y}^{(i)} \RV{R}_{i;\Vec{u},\Vec{v}} )
  \right\|_{2}
.\end{align}
\end{subequations}
%
Let
\(
  \Variable{d'} = | \supp( \Vec{u} ) \cup \supp( \Vec{v} ) |
\)
and
\(
  \Vec{V}^{(i)}
  = \Vec*{V}_{1}^{(i)},\dots,\Vec*{V}_{\Variable{d'}-2}^{(i)}
  \sim \N(\Vec{0},\Mat{I}_{(\Variable{d'}-2) \times (\Variable{d'}-2)})
\),
\(
  i \in [m]
\),
and suppose
\(
  \{ \Vec{b}_{j} \in \R^{n} \}_{j \in [d'-2]}
\)
is an orthonormal basis over
\(
  \Span( \{ \Vec{u}-\Vec{v}, \Vec{u}+\Vec{v} \} )^{\perp}
  \cap
  \{ \Vec{x} \in \R^{n} : \supp( \Vec{x} ) \subseteq \supp( \Vec{u} ) \cup \supp( \Vec{v} ) \}
\)
with
\(
  \Vec{Y}^{(i)}
  =
  \sum_{j=1}^{\Variable{d'}-2}
  \langle \Vec{b}_{j}, \Vec{Y}^{(i)} \rangle
  \Vec{b}_{j}
\).
\EDITX{Note that \(  \Variable{d'} \leq \KXO  \).}
Due to Lemma \ref{lemma:normal:distribution:third},
\(
  \langle \Vec{b}_{j}, \Vec{Y}^{(i)} \rangle \sim \N(0,1)
\).
%
\begin{subequations}
\begin{align}
  \left\|
    \ThresholdSet{\supp( \Vec{u} ) \cup \supp( \Vec{v} )}
    ( \sum_{i=1}^{m} \Vec{Y}^{(i)} \RV{R}_{i;\Vec{u},\Vec{v}} )
  \right\|_{2}
  &=
  \left\|
    \sum_{i=1}^{m}
    \ThresholdSet{\supp( \Vec{u} ) \cup \supp( \Vec{v} )}
    ( \Vec{Y}^{(i)} \RV{R}_{i;\Vec{u},\Vec{v}} )
  \right\|_{2}
  \\
  &=
  \left\|
    \sum_{\substack{i=1 : \\ \RV{R}_{i;\Vec{u},\Vec{v}} \neq 0}}^{m}
    \sum_{j=1}^{\Variable{d'}-2}
    \langle \Vec{b}_{j}, \Vec{Y}^{(i)} \rangle \Vec{b}_{j}
  \right\|_{2}
  \\
  &=
  \left\|
    \sum_{j=1}^{\Variable{d'}-2}
    \Vec{b}_{j}
    \sum_{\substack{i=1 : \\ \RV{R}_{i;\Vec{u},\Vec{v}} \neq 0}}^{m}
    \langle \Vec{b}_{j}, \Vec{Y}^{(i)} \rangle
  \right\|_{2}
  \\
  &=
  \left(
    \sum_{j=1}^{\Variable{d'}-2}
    \sum_{j'=1}^{\Variable{d'}-2}
    \langle \Vec{b}_{j}, \Vec{b}_{j'} \rangle
    \left(
      \sum_{\substack{i=1 : \\ \RV{R}_{i;\Vec{u},\Vec{v}} \neq 0}}^{m}
      \langle \Vec{b}_{j}, \Vec{Y}^{(i)} \rangle
    \right)^{2}
  \right)^{\frac{1}{2}}
  \\
  &=
  \left(
    \sum_{j=1}^{\Variable{d'}-2}
    \left(
      \sum_{\substack{i=1 : \\ \RV{R}_{i;\Vec{u},\Vec{v}} \neq 0}}^{m}
      \langle \Vec{b}_{j}, \Vec{Y}^{(i)} \rangle
    \right)^{2}
  \right)^{\frac{1}{2}}
  \\
  &\sim
  \left(
    \sum_{j=1}^{\Variable{d'}-2}
    \left(
      \sum_{\substack{i=1 : \\ \RV{R}_{i;\Vec{u},\Vec{v}} \neq 0}}^{m}
      \RV{V}_{j}^{(i)}
    \right)^{2}
  \right)^{\frac{1}{2}}
  \\
  &=
  \left\|
    \sum_{\substack{i=1 : \\ \RV{R}_{i;\Vec{u},\Vec{v}} \neq 0}}^{m} \Vec{V}^{(i)}
  \right\|_{2}
  \sim
  \left\| \sum_{i}^{\Variable{\ell}} \Vec{V}^{(i)} \right\|_{2}
.\end{align}
\end{subequations}
%
Then, by a standard Chernoff bound for standard normal random vectors,
\begin{align}
  &
  \Pr
  \left(
    \left\|
      \ThresholdSet{\Coords{J'}}( \sum_{i=1}^{m} \Vec{Y}^{(i)} \RV{R}_{i;\Vec{u},\Vec{v}} )
    \right\|_{2}
    >
    \sqrt{\KXO \Variable{\ell}}
    +
    \frac{1}{2} \Variable{\ell} \Variable{t}
  \middle|
    \RLCOND
  \right)
  \\
  &=
  \Pr
  \left(
    \left\|
      \ThresholdSet{\supp( \Vec{u} ) \cup \supp( \Vec{v} )}
      ( \sum_{i=1}^{m} \Vec{Y}^{(i)} \RV{R}_{i;\Vec{u},\Vec{v}} )
    \right\|_{2}
    >
    \sqrt{\KXO \Variable{\ell}}
    +
    \frac{1}{2} \Variable{\ell} \Variable{t}
  \middle|
      \RLCOND
  \right)
  \\
  &=
  \Pr
  \left(
    \left\| \sum_{i=1}^{\Variable{\ell}} \Vec{V}^{(i)} \right\|_{2}
    >
    \sqrt{\KXO \Variable{\ell}}
    +
    \frac{1}{2} \Variable{\ell} \Variable{t}
  \middle|
    \RLCOND
  \right)
  \\
  &\leq
  \Pr
  \left(
    \left\| \sum_{i=1}^{\Variable{\ell}} \Vec{V}^{(i)} \right\|_{2}
    >
    \E
    \left[
      \left\| \sum_{i=1}^{\Variable{\ell}} \Vec{V}^{(i)} \right\|_{2}
    \right]
    +
    \frac{1}{2} \Variable{\ell} \Variable{t}
  \middle|
    \RLCOND
  \right)
  \\
  &\leq
  e^{-\frac{1}{8} \Variable{\ell} \Variable{t}^{2}}
\end{align}
%
On the other hand, observe,
%
\begin{subequations}
\begin{align}
  \left\|
    \ThresholdSet{\Coords{J''}}( \sum_{i=1}^{m} \Vec{Y}^{(i)} \RV{R}_{i;\Vec{u},\Vec{v}} )
  \right\|_{2}
  &=
  \left\|
    \sum_{i=1}^{m}
    \sum_{j \in \Coords{J''}}
    \left\langle
      \Vec{e}_{j},
      \Vec{Y}^{(i)} \RV{R}_{i;\Vec{u},\Vec{v}}
    \right\rangle
    \Vec{e}_{j}
  \right\|_{2}
  \\
  &=
  \left\|
    \sum_{i=1}^{m}
    \sum_{j \in \Coords{J''}}
    \Vec*{Y}_{j}^{(i)} \RV{R}_{i;\Vec{u},\Vec{v}}
    \Vec{e}_{j}
  \right\|_{2}
\end{align}
\end{subequations}
%
Let
\(
  \Variable{d''} = | \Coords{J''} |
\)
and
\(
  \Vec{W}^{(i)}
  = (\Vec*{W}_{1},\dots,\Vec*{W}_{\Variable{d''}})
  \sim \N(\Vec{0},\Mat{I}_{\Variable{d''} \times \Variable{d''}})
\),
\(
  i \in [m]
\).
Due to Lemma \ref{lemma:normal:distribution:third},
\(
  (
    \|
      \sum_{j \in \Coords{J''}}
      \Vec*{Y}_{j}^{(i)} \RV{R}_{i;\Vec{u},\Vec{v}}
      \Vec{e}_{j}
    \|_{2}
    \mid
    \RV{R}_{i;\Vec{u},\Vec{v}} \neq 0
  )
\)
and
\(
  \| \Vec{W}^{(i)} \|_{2}
\),
\(
  i \in [m]
\),
share the same distribution.
Then, by a standard Chernoff bound for standard normal random vectors,
\begin{align}
  &
  \Pr
  \left(
    \left\|
      \ThresholdSet{\Coords{J''}}( \sum_{i=1}^{m} \Vec{Y}^{(i)} \RV{R}_{i;\Vec{u},\Vec{v}} )
    \right\|_{2}
    >
    \sqrt{\Variable{d} \Variable{\ell}}
    +
    \frac{1}{2} \Variable{\ell} \Variable{t}
    \middle|
      \RLCOND
  \right)
  \\
  &\leq
  \Pr
  \left(
    \left\|
      \ThresholdSet{\Coords{J''}}( \sum_{i=1}^{m} \Vec{Y}^{(i)} \RV{R}_{i;\Vec{u},\Vec{v}} )
    \right\|_{2}
    >
    \sqrt{\Variable{d''} \Variable{\ell}}
    +
    \frac{1}{2} \Variable{\ell} \Variable{t}
    \middle|
      \RLCOND
  \right)
  \\
  &=
  \Pr
  \left(
    \left\| \sum_{\substack{i=1 : \\ \RV{R}_{i;\Vec{u},\Vec{v}} \neq 0}}^{m}
    \Vec{W}^{(i)} \right\|_{2}
    >
    \sqrt{\Variable{d''} \Variable{\ell}}
    +
    \frac{1}{2} \Variable{\ell} \Variable{t}
    \middle|
      \RLCOND
  \right)
  \\
  &\leq
  \Pr
  \left(
  \left\| \sum_{i=1}^{\Variable{\ell}} \Vec{W}^{(i)} \right\|_{2}
  >
  \E
  \left[
    \left\| \sum_{i=1}^{\Variable{\ell}} \Vec{W}^{(i)} \right\|_{2}
  \right]
  +
  \frac{1}{2} \Variable{\ell} \Variable{t}
  \middle|
    \RLCOND
  \right)
  \\
  &\leq
  e^{-\frac{1}{8} \Variable{\ell} \Variable{t}^{2}}
\end{align}
%
Then, since
\begin{gather}
  \sqrt{\KXO \Variable{\ell}}
  + \frac{1}{2} \Variable{\ell} \Variable{t}
  + \sqrt{\Variable{d} \Variable{\ell}}
  + \frac{1}{2} \Variable{\ell} \Variable{t}
  =
  \left( \sqrt{\KXO} + \sqrt{\Variable{d}} \right) \sqrt{\Variable{\ell}}
  +
  \Variable{\ell} \Variable{t}
\end{gather}
and
\begin{align}
  \left\|
    \ThresholdSet{\EDIT{\supp( \Vec{u} ) \cup \supp( \Vec{v} ) \cup} \Coords{J}}( \sum_{i=1}^{m} \Vec{Y}^{(i)} \RV{R}_{i;\Vec{u},\Vec{v}} )
  \right\|_{2}
  &\leq
  \left\|
    \ThresholdSet{\Coords{J'}}
    ( \sum_{i=1}^{m} \Vec{Y}^{(i)} \RV{R}_{i;\Vec{u},\Vec{v}} )
  \right\|_{2}
  +
  \left\|
    \ThresholdSet{\Coords{J''}}( \sum_{i=1}^{m} \Vec{Y}^{(i)} \RV{R}_{i;\Vec{u},\Vec{v}} )
  \right\|_{2}
,\end{align}
it follows from a union bound that
\begin{subequations}
\begin{align}
  &
  \Pr
  \left(
    \left\|
      \ThresholdSet{\EDIT{\supp( \Vec{u} ) \cup \supp( \Vec{v} ) \cup} \Coords{J}}( \sum_{i=1}^{m} \Vec{Y}^{(i)} \RV{R}_{i;\Vec{u},\Vec{v}} )
    \right\|_{2}
    \geq
    \left( \sqrt{\KXO} + \sqrt{d} \right)\sqrt{\Variable{\ell}} + \Variable{\ell} \Variable{t}
  \middle|
    \RLCOND
  \right)
  \\
  &\leq
  \Pr
  \left(
    \left\|
      \ThresholdSet{\Coords{J'}}( \sum_{i=1}^{m} \Vec{Y}^{(i)} \RV{R}_{i;\Vec{u},\Vec{v}} )
    \right\|_{2}
    \geq
    \sqrt{\KXO \Variable{\ell}}
    +
    \frac{1}{2} \Variable{\ell} \Variable{t}
  \middle|
    \RLCOND
  \right)
  \\
  &\Tab
  +
  \Pr
  \left(
    \left\|
      \ThresholdSet{\Coords{J''}}( \sum_{i=1}^{m} \Vec{Y}^{(i)} \RV{R}_{i;\Vec{u},\Vec{v}} )
    \right\|_{2}
    \geq
    \sqrt{\Variable{d} \Variable{\ell}} + \frac{1}{2} \Variable{\ell} \Variable{t}
  \middle|
    \RLCOND
  \right)
  \\
  &\leq
  2 e^{-\frac{1}{8} \Variable{\ell} \Variable{t}^{2}}
\end{align}
\end{subequations}
\end{proof}


\subsubsection{Proof of Lemma \ref{lemma:appendix:monotonicity-q(s)}}
\label{outline:normal|>concentration-ineq-pfs|>monotonicity}

\begin{lemma}
\label{lemma:appendix:monotonicity-q(s)}
Let \( \RV{X} \) be a random variable with a finite, positive mean
\(
  \mu = \E [ \RV{X} ]
\)
and a density function \( f \) of the form
\begin{gather}
  f(x)
  =
  \begin{cases}
    \sqrt{\frac{2}{\pi}}
    e^{-\frac{x^{2}}{2}} p(x), &\cIf x \geq 0, \\
    0, &\cIf x < 0,
  \end{cases}
\end{gather}
where the image of the function
\(
  p : \R \to \R
\)
is given by
\(
  p(x)
  =
  \frac{\pi}{\theta}
  \frac{1}{\sqrt{2\pi}}
  \int_{y=-x \tan( \frac{\theta}{2} )}^{y=x \tan( \frac{\theta}{2} )}
  e^{-\frac{y^{2}}{2}}
  dy
\)
for
\(
  x \in \R
\).
Define the functions
\(
  q,r : \R \to \R
\)
by
\begin{gather}
  \label{eqn:appendix:monotonicity-q(s):q(s)-def}
  q(s) = \E_{\RV{X} \sim f} \left[ e^{s (\RV{X} - \mu)} e^{-\frac{s^{2}}{2}} \right]
  \\
  \label{eqn:appendix:monotonicity-q(s):r(s)-def}
  r(s) = \E_{\RV{X} \sim f} \left[ e^{-s (\RV{X} - \mu)} e^{-\frac{s^{2}}{2}} \right]
\end{gather}
for
\(
  s \in \R
\).
Then, \( q(s) \) and \( r(s) \) monotonically decrease with \( s \) over the interval
\(
  s \in [0,\infty)
\).
\end{lemma}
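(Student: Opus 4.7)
My plan is to reduce the monotonicity of $q$ and $r$ to a uniform variance bound on an exponential tilting of $f$, which will in turn follow from $1$-strong log-concavity via the Brascamp-Lieb inequality. Write $M(s) = \E[e^{s\RV{X}}]$ for the MGF and $\kappa(s) = \ln M(s)$ for the cumulant generating function. Direct substitution gives $q(s) = e^{-s\mu - s^2/2} M(s)$ and $r(s) = e^{s\mu - s^2/2} M(-s)$, so
\[
  (\ln q)'(s) = -\mu - s + \kappa'(s), \qquad (\ln r)'(s) = \mu - s - \kappa'(-s).
\]
Since $\kappa'(0) = \mu$ and $\kappa''(u) = \Var_{\tilde{f}_u}(\RV{X})$ for the exponentially tilted density $\tilde{f}_u(x) \propto e^{ux} f(x)$, both log-derivatives are non-positive on $[0,\infty)$ provided I can establish that $\Var_{\tilde{f}_u}(\RV{X}) \leq 1$ uniformly in $u \in \R$; integrating then yields $\kappa'(s) - \mu \leq s$ for $s \geq 0$ and, symmetrically, $\mu - \kappa'(-s) \leq s$ for $s \geq 0$.

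A short calculation yields $\tilde{f}_u(x) \propto e^{-(x-u)^2/2}\, p(x)\, \1{x \geq 0}$, and the variance bound will follow once $\tilde{f}_u$ is identified as $1$-strongly log-concave on its convex support $[0,\infty)$. The Gaussian factor contributes exactly $1$ to the second derivative of the negative log-density, so the crux is showing that $p$ itself is log-concave on $[0,\infty)$. By symmetry of its integrand, $p(x) = \tfrac{\pi}{\theta}\,\Pr(|Z| \leq x \tan(\theta/2))$ for $Z \sim \N(0,1)$, which is a linear reparametrization of the cumulative distribution function of the random variable $|Z|$. Since $|Z|$ has a log-concave density on $[0,\infty)$, the classical Pr\'ekopa / Bagnoli-Bergstrom theorem says this CDF is log-concave, so $p$ is log-concave as required.

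With $\tilde{f}_u$ thereby shown to be $1$-strongly log-concave on a convex set (extending the potential to $+\infty$ off the support), the Brascamp-Lieb variance inequality applied with the test function $g(x) = x$ delivers $\Var_{\tilde{f}_u}(\RV{X}) \leq 1$ uniformly in $u$, which closes the plan for both $q$ and $r$. The main obstacle is the log-concavity of $p$, which reduces cleanly to the Pr\'ekopa / Bagnoli-Bergstrom result; once that is in place, the variance bound is automatic from $1$-strong log-concavity and the remaining pieces are routine calculus on the exponential family.
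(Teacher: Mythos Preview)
Your argument is correct and takes a genuinely different route from the paper. The paper differentiates $q$ and $r$ directly, performs the change of variables $u=x-s'$ (resp.\ $u=x+s'$), decomposes $p(u+s')=p(u)+2p'(u,u+s')$ with $p'(a,b)=\tfrac{\pi}{\theta}\tfrac{1}{\sqrt{2\pi}}\int_{a\tan(\theta/2)}^{b\tan(\theta/2)}e^{-y^2/2}\,dy$, uses the pointwise bound $p'(u,u+s')\le p'(0,s')$, and finally invokes the numerical inequality $\mu\ge\sqrt{2/\pi}$ (which holds since $\mu=\sqrt{\pi/2}\,d_{\mathcal S^{n-1}}(\Vec u,\Vec v)/\theta_{\Vec u,\Vec v}$ and the ratio is at least $2/\pi$) to conclude non-positivity. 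It is entirely elementary but specific to this $p$.

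Your approach reduces both monotonicity claims to the single uniform bound $\kappa''(u)=\Var_{\tilde f_u}(X)\le 1$, and obtains that bound structurally: the tilted density factorises as a Gaussian kernel times $p(x)\,\1{x\ge 0}$, the latter two factors are log-concave (the crucial one being that $p$ is, up to an affine change, the CDF of the half-normal, hence log-concave by Pr\'ekopa/Bagnoli--Bergstrom), so the product is $1$-strongly log-concave on the convex support and the Brascamp--Lieb variance inequality delivers the bound. The argument is shorter, more conceptual, and immediately generalises to any log-concave $p$, at the cost of importing two standard but nontrivial results (Pr\'ekopa-type log-concavity of CDFs and Brascamp--Lieb on a half-line, the latter handled by the convex-indicator extension you note). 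The paper's proof, by contrast, is self-contained and needs nothing beyond calculus, but the manipulations are tailored to this density and do not obviously transfer.
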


\begin{proof}
{Lemma \ref{lemma:appendix:monotonicity-q(s)}}
\label{pf:lemma:appendix:monotonicity-q(s)}
Let
\(
  s \in \R
\),
\(
  f,p,q,r : \R \to \R
\)
be satisfy the conditions of the lemma.
Notice that \( q,r \) can be expressed as
\begin{gather}
\label{deriv:lemma:appendix:monotonicity-q(s):1}
  q(s)
  =
  \int_{x=-\infty}^{x=\infty}
  e^{s(x-\mu)}
  e^{-\frac{s^{2}}{2}}
  f(x) dx
  =
  \int_{x=0}^{x=\infty}
  \sqrt{\frac{2}{\pi}}
  e^{-s\mu}
  e^{-\frac{(x-s)^{2}}{2}}
  p(x)
  dx
  \\
\label{deriv:lemma:appendix:monotonicity-q(s):2}
  r(s)
  =
  \int_{x=-\infty}^{x=\infty}
  e^{-s(x - \mu)}
  e^{-\frac{s^{2}}{2}}
  f(x) dx
  =
  \int_{x=0}^{x=\infty}
  \sqrt{\frac{2}{\pi}}
  e^{s\mu}
  e^{-\frac{(x+s)^{2}}{2}}
  p(x)
  dx
\end{gather}
%
The functions \( q,r \) can be shown to (non-strictly) monotonically decrease with \( s \) over
the interval
\(
  s \in [0,\infty)
\)
by verifying that their partial derivatives w.r.t. \( s \) are non-positive on this interval,
which will be argued by contradiction.
First, suppose \( q(s) \) is not monotonically decreasing with \( s \) over all
\(
  s \geq 0
\),
such that there exists
\(
  s' \geq 0
\)
for which
\(
  \frac{\partial}{\partial s} q(s) \big|_{s=s'} > 0
\).
Write
\(
  p'(a,b)
  =
  \frac{\pi}{\theta}
  \frac{1}{\sqrt{2\pi}}
  \int_{a \tan( \frac{\theta}{2} )}^{b \tan( \frac{\theta}{2} )}
  e^{-\frac{y^{2}}{2}}
  dy
\),
\(
  a \leq b \in \R
\),
and notice that
\(
  p'(a,b) \leq p'(0,b-a)
\).
Then, observe
\begin{subequations}
\begin{align}
\label{deriv:lemma:appendix:monotonicity-q(s):3:1}
  &
  \left. \frac{\partial}{\partial s} q(s) \right|_{s=s'}
  \\
  &=
  \left.
  \frac{\partial}{\partial s}
  \int_{x=0}^{x=\infty}
  \sqrt{\frac{2}{\pi}}
  e^{-s\mu}
  e^{-\frac{(x-s)^{2}}{2}}
  p(x)
  dx
  \right|_{s=s'}
  \\
  &=
  \left.
  \int_{x=0}^{x=\infty}
  \frac{\partial}{\partial s}
  \sqrt{\frac{2}{\pi}}
  e^{-s\mu}
  e^{-\frac{(x-s)^{2}}{2}}
  p(x)
  dx
  \right|_{s=s'}
  \\
  &=
  \left.
  \int_{x=0}^{x=\infty}
  (x - s - \mu)
  \sqrt{\frac{2}{\pi}}
  e^{-s\mu}
  e^{-\frac{(x-s)^{2}}{2}}
  p(x)
  dx
  \right|_{s=s'}
  \\
  &=
  \int_{x=0}^{x=\infty}
  (x - s' - \mu)
  \sqrt{\frac{2}{\pi}}
  e^{-s\mu}
  e^{-\frac{(x-s')^{2}}{2}}
  p(x)
  dx
  \\
  &=
  e^{-s'\mu}
  \int_{x=0}^{x=\infty}
  (x - s' - \mu)
  \sqrt{\frac{2}{\pi}}
  e^{-\frac{(x-s')^{2}}{2}}
  p(x)
  dx
  \\
  &=
  e^{-s'\mu}
  \int_{u=-s'}^{u=\infty}
  (u - \mu)
  \sqrt{\frac{2}{\pi}}
  e^{-\frac{u^{2}}{2}}
  p(u+s')
  du
  ,\dCmt u=x-s'
  \\
  &=
  e^{-s'\mu}
  \int_{u=-s'}^{u=\infty}
  (u - \mu)
  \sqrt{\frac{2}{\pi}}
  e^{-\frac{u^{2}}{2}}
  \left( p(u) + 2p'(u,u+s') \right)
  du
  \\
  &=
  e^{-s'\mu}
  \left(
    \int_{u=-s'}^{u=\infty}
    u
    \sqrt{\frac{2}{\pi}}
    e^{-\frac{u^{2}}{2}}
    \left( p(u) + 2p'(u,u+s') \right)
    du
    -
    \mu
    \int_{u=-s'}^{u=\infty}
    \sqrt{\frac{2}{\pi}}
    e^{-\frac{u^{2}}{2}}
    \left( p(u) + 2p'(u,u+s') \right)
    du
  \right)
  \\ \label{deriv:lemma:appendix:monotonicity-q(s):3:2}
  &=
  e^{-s'\mu}
  \left(
    \int_{u=-s'}^{u=0}
    u
    \sqrt{\frac{2}{\pi}}
    e^{-\frac{u^{2}}{2}}
    \left( p(u) + 2p'(u,u+s') \right)
    du
    +
    \int_{u=0}^{u=\infty}
    u
    \sqrt{\frac{2}{\pi}}
    e^{-\frac{u^{2}}{2}}
    \left( p(u) + 2p'(u,u+s') \right)
    du
  \right.
  \\ \nonumber
  &\Tab
  \left.
    -
    \mu
    \int_{u=-s'}^{u=0}
    \sqrt{\frac{2}{\pi}}
    e^{-\frac{u^{2}}{2}}
    \left( p(u) + 2p'(u,u+s') \right)
    du
    -
    \mu
    \int_{u=0}^{u=\infty}
    \sqrt{\frac{2}{\pi}}
    e^{-\frac{u^{2}}{2}}
    \left( p(u) + 2p'(u,u+s') \right)
    du
  \right)
  \\
  &\leq
  e^{-s'\mu}
  \left(
    \int_{u=0}^{u=\infty}
    u
    \sqrt{\frac{2}{\pi}}
    e^{-\frac{u^{2}}{2}}
    \left( p(u) + 2p'(u,u+s') \right)
    du
    -
    \mu
    \int_{u=0}^{u=\infty}
    \sqrt{\frac{2}{\pi}}
    e^{-\frac{u^{2}}{2}}
    \left( p(u) + 2p'(u,u+s') \right)
    du
  \right)
  ,
  \\ \nonumber
  &\dCmt
  \text{the first integral in \eqref{deriv:lemma:appendix:monotonicity-q(s):3:2} is nonpositive;
  the third is nonnegative}
  \\
  &\leq
  e^{-s'\mu}
  \left(
    \int_{u=0}^{u=\infty}
    u
    \sqrt{\frac{2}{\pi}}
    e^{-\frac{u^{2}}{2}}
    \left( p(u) + 2p'(0,s') \right)
    du
    -
    \mu
    \int_{u=0}^{u=\infty}
    \sqrt{\frac{2}{\pi}}
    e^{-\frac{u^{2}}{2}}
    \left( p(u) + 2p'(0,s') \right)
    du
  \right)
  \\ \nonumber
  &\dCmt \text{at } s=s',\ \frac{\partial}{\partial s} q(s) > 0 \text{ by assumption}
  \\
  &=
  e^{-s'\mu}
  \left(
    \int_{u=0}^{u=\infty}
    u
    \sqrt{\frac{2}{\pi}}
    e^{-\frac{u^{2}}{2}}
    p(u)
    du
    +
    2 p'(0,s')
    \int_{u=0}^{u=\infty}
    u
    \sqrt{\frac{2}{\pi}}
    e^{-\frac{u^{2}}{2}}
    du
  \right.
  \\
  &\Tab[4]
  \left.
    -
    \mu
    \int_{u=0}^{u=\infty}
    \sqrt{\frac{2}{\pi}}
    e^{-\frac{u^{2}}{2}}
    p(u)
    du
    -
    2 \mu p'(0,s')
    \int_{u=0}^{u=\infty}
    \sqrt{\frac{2}{\pi}}
    e^{-\frac{u^{2}}{2}}
    du
  \right)
  \\
  &=
  e^{-s'\mu}
  \left(
    \int_{u=0}^{u=\infty}
    u f(u) du
    +
    2 p'(0,s')
    \int_{u=0}^{u=\infty}
    u f_{|Z|}(u) du
    -
    \mu
    \int_{u=0}^{u=\infty}
    f(u) du
    -
    2 \mu p'(0,s')
    \int_{u=0}^{u=\infty}
    f_{|Z|}(u) du
  \right)
  \\
  &=
  e^{-s'\mu}
  \left( \mu + 2 \sqrt{\frac{2}{\pi}} p'(0,s') - \mu - 2 \mu p'(0,s') \right)
  \\
  &=
  e^{-s'\mu}
  \left( (\mu - \mu) + 2 p'(0,s') (\sqrt{\frac{2}{\pi}} - \mu) \right)
  \\
  &\leq
  0
  ,\dCmt \text{equality only if \( \theta = \pi \)}
\label{deriv:lemma:appendix:monotonicity-q(s):3:end}
\end{align}
\end{subequations}
But this shows that
\(
  \frac{\partial}{\partial s} q(s) \big|_{s=s'} \leq 0
\)
which is a contradiction.
Hence, monotonicity of \( q \) holds.
\par
Now consider \( r(s) \), and again assume there exists
\(
  s' \geq 0
\)
such that
\(
  \frac{\partial}{\partial s} r(s) \big|_{s=s'} > 0
\).
The following will similarly arrive at a contradiction.
\begin{subequations}
\begin{align}
\label{deriv:lemma:appendix:monotonicity-q(s):4:1}
  &
  \left. \frac{\partial}{\partial s} r(s) \right|_{s=s'}
  \\
  &=
  \left.
  \frac{\partial}{\partial s}
  \int_{x=0}^{x=\infty}
  \sqrt{\frac{2}{\pi}}
  e^{s\mu}
  e^{-\frac{(x+s)^{2}}{2}}
  p(x)
  dx
  \right|_{s=s'}
  \\
  &=
  \left.
  \int_{x=0}^{x=\infty}
  \frac{\partial}{\partial s}
  \sqrt{\frac{2}{\pi}}
  e^{s\mu}
  e^{-\frac{(x+s)^{2}}{2}}
  p(x)
  dx
  \right|_{s=s'}
  \\
  &=
  \left.
  \int_{x=0}^{x=\infty}
  (\mu-s-x)
  \sqrt{\frac{2}{\pi}}
  e^{s\mu}
  e^{-\frac{(x+s)^{2}}{2}}
  p(x)
  dx
  \right|_{s=s'}
  \\
  &=
  \int_{x=0}^{x=\infty}
  (\mu-s-x)
  \sqrt{\frac{2}{\pi}}
  e^{s\mu}
  e^{-\frac{(x+s')^{2}}{2}}
  p(x)
  dx
  \\
  &\leq
  \int_{x=0}^{x=\infty}
  (\mu-s-x)
  \sqrt{\frac{2}{\pi}}
  e^{s'\mu}
  e^{-\frac{(x+s')^{2}}{2}}
  p(x)
  dx
  ,\dCmt \text{at } s=s',\ \frac{\partial}{\partial s} r(s) > 0 \text{ by assumption}
  \\
  &=
  e^{s'\mu}
  \int_{x=0}^{x=\infty}
  (\mu-s-x)
  \sqrt{\frac{2}{\pi}}
  e^{-\frac{(x+s')^{2}}{2}}
  p(x)
  dx
  \\
  &=
  e^{s'\mu}
  \int_{u=s'}^{u=\infty}
  (\mu-u)
  \sqrt{\frac{2}{\pi}}
  e^{-\frac{u^{2}}{2}}
  p(u-s')
  du
  ,\dCmt u=x+s'
  \\
  &\leq
  e^{s'\mu}
  \int_{u=s'}^{u=\infty}
  (\mu-u)
  \sqrt{\frac{2}{\pi}}
  e^{-\frac{u^{2}}{2}}
  p(u)
  du
  ,\dCmt \text{equality only if \( s'=0 \)}
  \\ \label{deriv:lemma:appendix:monotonicity-q(s):4:2}
  &=
  e^{s'\mu}
  \left(
    \int_{u=0}^{u=\infty}
    (\mu-u)
    \sqrt{\frac{2}{\pi}}
    e^{-\frac{u^{2}}{2}}
    p(u)
    du
    -
    \int_{u=0}^{u=s'}
    (\mu-u)
    \sqrt{\frac{2}{\pi}}
    e^{-\frac{u^{2}}{2}}
    p(u)
    du
  \right)
  \\
  &\leq
  e^{s'\mu}
  \int_{u=0}^{u=\infty}
  (\mu-u)
  \sqrt{\frac{2}{\pi}}
  e^{-\frac{u^{2}}{2}}
  p(u)
  du
  ,\dCmt
  \text{the right integral in \eqref{deriv:lemma:appendix:monotonicity-q(s):4:2} is nonnegative}
  \\
  &=
  e^{s'\mu}
  \left(
    \mu
    \int_{u=0}^{u=\infty}
    \sqrt{\frac{2}{\pi}}
    e^{-\frac{u^{2}}{2}}
    p(u)
    du
    -
    \int_{u=0}^{u=\infty}
    u
    \sqrt{\frac{2}{\pi}}
    e^{-\frac{u^{2}}{2}}
    p(u)
    du
  \right)
  \\
  &=
  e^{s'\mu}
  \left(
    \mu
    \int_{u=0}^{u=\infty}
    f(u) du
    -
    \int_{u=0}^{u=\infty}
    u f(u) du
  \right)
  \\
  &=
  e^{s'\mu} \left( \mu - \mu \right)
  =
  0.
\label{deriv:lemma:appendix:monotonicity-q(s):4:end}
\end{align}
\end{subequations}
%
Thus,
\(
  \frac{\partial}{\partial s} r(s) \big|_{s=s'} \leq 0
\)
implies
\(
  \frac{\partial}{\partial s} r(s) \big|_{s=s'} \leq 0
\),
a contradiction.
Therefore, the monotonicity of \( r \) also holds.
\end{proof}


\subsubsection{Proof of Lemma \ref{lemma:technical:concentration-ineq:counting}}
\label{outline:technical|>concentration-ineq-pfs|>counting}

\begin{lemma*}
[Lemma \ref{lemma:technical:concentration-ineq:counting}]
Fix
\(
  \Variable{t} \in (0,1)
\).
%
Let
\(
  \Vec{u}, \Vec{v} \in \R^{n}
\),
and define the random variable
\(  \RV{L}_{\Vec{u},\Vec{v}} = \left\|   \frac{1}{2} ( \sgn( \MeasMat \Vec{u} ) - \sgn( \MeasMat \Vec{v} ) ) \right\|_{0}  \),
as in \LEMMA \ref{lemma:technical:concentration-ineq:(u,v)}.
Then,
\begin{gather}
  \mu_{\RV{L}_{\Vec{u},\Vec{v}}}
  =
  \E \left[ \RV{L}_{\Vec{u},\Vec{v}} \right]
  =
  \frac{\theta_{\Vec{u},\Vec{v}} m}{\pi}
\end{gather}
and
\begin{gather}
  \Pr
  \left(
    \RV{L}_{\Vec{u},\Vec{v}}
    \notin
    \left[
      (1 - \Variable{t}) \mu_{\RV{L}_{\Vec{u},\Vec{v}}},
      (1 + \Variable{t}) \mu_{\RV{L}_{\Vec{u},\Vec{v}}}
    \right]
  \right)
  \leq
  2e^{-\frac{1}{3} \mu_{\RV{L}_{\Vec{u},\Vec{v}}} \Variable{t}^{2}}
.\end{gather}
\end{lemma*}

\begin{proof}
{Lemma \ref{lemma:technical:concentration-ineq:counting}}
\label{pf:lemma:technical:concentration-ineq:counting}
Denote the indicator random variables,
\(
  \RV{I}_{i}
  =
  \I{\Sgn{}( \langle \MeasVec^{(i)}, \Vec{u} \rangle ) \neq \Sgn{}( \langle \MeasVec^{(i)}, \Vec{v} \rangle )}
\),
\(
  i \in [m]
\).
%
By \LEMMA \ref{lemma:prob-normal-vector-mismatch}, each \( i\Th \) indicator random variable, \(  \RV{I}_{i}  \), \(  i \in [m]  \), has
\begin{gather}
  \Pr \left( \RV{I}_{i} = 1 \right) = \frac{\theta_{\Vec{u},\Vec{v}}}{\pi}
.\end{gather}
%
As seen earlier in the proof of \LEMMA \ref{lemma:normal:concentration-ineq:proj_u-v},
\(  \frac{1}{2} ( \Sgn{}( \langle \MeasVec^{(i)}, \Vec{u} \rangle ) - \Sgn{}( \langle \MeasVec^{(i)}, \Vec{u} \rangle ) ) \neq 0  \)
precisely when
\(  \Sgn{}( \langle \MeasVec^{(i)}, \Vec{u} \rangle ) \neq \Sgn{}( \langle \MeasVec^{(i)}, \Vec{v} \rangle )  \).
%
Hence,
\(  \RV{I}_{i} = \I{ \frac{1}{2} ( \Sgn{}( \langle \MeasVec^{(i)}, \Vec{u} \rangle ) - \Sgn{}( \langle \MeasVec^{(i)}, \Vec{u} \rangle ) ) \neq 0 }  \).
%
It follows that
\(
  \RV{L}_{\Vec{u},\Vec{v}} = \sum_{i=1}^{m} \RV{I}_{i}
\),
and by the linearity of expectation and the fact that the random variables
\(
  \{ \RV{I}_{i} \}_{i \in [m]}
\)
are i.i.d.,
\begin{gather}
  \mu_{\RV{L}_{\Vec{u},\Vec{v}}} = \E \left[ \RV{L}_{\Vec{u},\Vec{v}} \right] = \frac{\theta_{\Vec{u},\Vec{v}} m}{\pi}
\end{gather}
as desired.
Using standard Chernoff bounds, for any
\(
  \Variable{t} \in (0,1)
\),
\begin{gather}
\label{eqn:pf:lemma:technical:concentration-ineq:counting:1:a}
  \Pr \left( \RV{L}_{\Vec{u},\Vec{v}} < (1 - \Variable{t}) \mu_{\RV{L}_{\Vec{u},\Vec{v}}} \right)
  \leq
  e^{-\frac{1}{2} \mu_{\RV{L}_{\Vec{u},\Vec{v}}} \Variable{t}^{2}}
  ,\\
\label{eqn:pf:lemma:technical:concentration-ineq:counting:1:b}
  \Pr \left( \RV{L}_{\Vec{u},\Vec{v}} > (1 + \Variable{t}) \mu_{\RV{L}_{\Vec{u},\Vec{v}}} \right)
  \leq
  e^{-\frac{1}{3} \mu_{\RV{L}_{\Vec{u},\Vec{v}}} \Variable{t}^{2}}
,\end{gather}
and via a union bound over \EQNS \eqref{eqn:pf:lemma:technical:concentration-ineq:counting:1:a} and \eqref{eqn:pf:lemma:technical:concentration-ineq:counting:1:b}, above,
\begin{gather}
  \Pr
  \left(
    \RV{L}_{\Vec{u},\Vec{v}}
    \notin
    \left[
      (1 - \Variable{t}) \mu_{\RV{L}_{\Vec{u},\Vec{v}}},
      (1 + \Variable{t}) \mu_{\RV{L}_{\Vec{u},\Vec{v}}}
    \right]
  \right)
  \leq
  e^{-\frac{1}{2} \mu_{\RV{L}_{\Vec{u},\Vec{v}}} \Variable{t}^{2}}
  +
  e^{-\frac{1}{3} \mu_{\RV{L}_{\Vec{u},\Vec{v}}} \Variable{t}^{2}}
  \leq
  2 e^{-\frac{1}{3} \mu_{\RV{L}_{\Vec{u},\Vec{v}}} \Variable{t}^{2}}
,\end{gather}
as claimed.
\end{proof}


\subsubsection{Proof of Lemma \ref{lemma:technical:concentration-ineq:lbe-local-deviations:union}}
\label{outline:technical|>concentration-ineq-pfs|>lbe-local-deviations:union}

\begin{proof}
{\LEMMA \ref{lemma:technical:concentration-ineq:lbe-local-deviations:union}}
Let us begin by stating the result of \cite[\COR 3.3]{oymak2015near} to which \LEMMA \ref{lemma:technical:concentration-ineq:lbe-local-deviations:union} is a corollary.
%
\begin{lemma}[{equivalent to (part of) \cite[\COR 3.3]{oymak2015near}}]
\label{lemma:technical:concentration-ineq:lbe-local-deviations}
Fix \(  \ConstD = 256  \), and fix \(  \DDeltaX \in (0,1)  \).
For \(  \kX \in \Z_{+}  \), \(  \kX < n  \), let \(  \Set{W} \subseteq \Sphere{n}  \) be a set such that \(  \Set{\hat{W}} \defeq \{ \alpha \Vec{w} : \Vec{w} \in \Set{W}, \alpha \in \R \}  \) is a subspace with \(  \Dim\,\Set{\hat{W}} = \kX  \).
Let \(  \MeasMat \in \R^{m \times n}  \) be a standard Gaussian matrix with i.i.d. entries.
If
\(  m \geq \frac{\EDITX{\ConstD} \kX}{\DDeltaX} \Log( \frac{1}{\DDeltaX} )  \),
then, with probability at least \(  1 - 2 e^{-\frac{1}{64} \DDeltaX m}  \), uniformly for all \(  \Vec{u}, \Vec{v} \in \Set{W}  \) such that
\(  \| \Vec{u} - \Vec{v} \|_{2} \leq \frac{\DDeltaX}{\UnivConstD \sqrt{\Log( 1/\DDeltaX )}}  \),
the number of mismatches satisfies
\(  \| \I{\sgn( \MeasMat \Vec{u} ) \neq \sgn( \MeasMat \Vec{v} )} \|_{0} \leq \DDeltaX m  \).
\end{lemma}
%
Before proceeding with the argument, some notations are introduced.
\EDITX{Define
\(  \kO \defeq \min \{ \KXO, n \}  \).}
%
For \(  \Coords{J} \subseteq [n]  \), define
\EDITX{\(  \Set{W}^{\Coords{J}} \defeq {\Sphere{n} \cap \SparseRealSubspace{\Coords{J}}{n}}  \),}
and let
\EDITX{\(  \Set{\hat{W}}^{\Coords{J}} \defeq \SparseRealSubspace{\Coords{J}}{n}  \).}
%
\EDITX{Note that for any pair \(  \Vec{u}, \Vec{v} \in \SparseSphereSubspace{k}{n}  \), there exists a coordinate subset \(  \Coords{J} \subseteq [n]  \), \(  | \Coords{J} | = \kO  \), \(  \Supp( \Vec{u} ) \cup \Supp( \Vec{v} ) \subseteq \Coords{J}  \), which satisfies \(  \Vec{u}, \Vec{v} \in \Set{W}^{\Coords{J}}  \) since \(  | \Supp( \Vec{u} ) \cup \Supp( \Vec{v} ) | \leq \kO  \).}
Additionally, for any \(  \Vec{u}, \Vec{v} \in \R^{n}  \),
\(  \| \I{\sgn( \MeasMat \Vec{u} ) \neq \sgn( \MeasMat \Vec{v} )} \|_{0} = \| \frac{1}{2} ( \sgn( \MeasMat \Vec{u} ) - \sgn( \MeasMat \Vec{v} ) ) \|_{0} = \RV{L}_{\Vec{u},\Vec{v}}  \),
which will allow the result in \LEMMA \ref{lemma:technical:concentration-ineq:lbe-local-deviations} to be related to that which is sought in \LEMMA \ref{lemma:technical:concentration-ineq:lbe-local-deviations:union}.
The crucial idea for this proof is viewing the set \(  \SparseSphereSubspace{{\kO}}{n}  \) as a union of the sets \(  \Set{W}^{\Coords{J}}  \) for \(  \Coords{J} \subseteq [n]  \), \EDITX{\(  | \Coords{J} | = \kO  \)}, and applying \LEMMA \ref{lemma:technical:concentration-ineq:lbe-local-deviations} to each such \(  \Set{W}^{\Coords{J}}  \), where the corresponding subspace, \(  \Set{\hat{W}}^{\Coords{J}}  \), has dimension \(  \Dim\,\Set{\hat{W}}^{\Coords{J}} \EDITX{=} \kO  \).
With this in mind, fix \(  \Coords{J} \subseteq [n]  \), \EDITX{\(  | \Coords{J} | = \kO  \)}, arbitrarily.
Due to \LEMMA \ref{lemma:technical:concentration-ineq:lbe-local-deviations} and the fact that \(  \Dim\,\Set{\hat{W}}^{\Coords{J}} \EDITX{=} \kO  \), the following holds uniformly with probability at least
\(  1 - 2e^{-\frac{1}{64} \DDeltaX m}  \),
for all \(  \Vec{u}, \Vec{v} \in \Set{W}^{\Coords{J}}  \) such that
\(  \| \Vec{u} - \Vec{v} \|_{2} \leq \frac{\DDeltaX}{2\ConstD \sqrt{\Log( 1/\DDeltaX )}} = \frac{\DDeltaX}{\UnivConstD \sqrt{\Log( 1/\DDeltaX )}}  \):
\(  \RV{L}_{\Vec{u},\Vec{v}} = \| \I{\sgn( \MeasMat \Vec{u} ) \neq \sgn( \MeasMat \Vec{v} )} \|_{0} \leq \DDeltaX m  \).
%
All that remains is union bounding over the subsets
\(  \Set{U} \defeq \{ \Set{W}^{\Coords{J}} \}_{\Coords{J} \subseteq [n] : \EDITX{| \Coords{J} | = \kO}}  \)
and extending the argument to pairs of vectors in \(  \SparseSphereSubspace{k}{n}  \).
The number of these sets comprising \(  \Set{U}  \) is bounded from above by
\begin{align*}
  \left| \Set{U} \right|
  = | \{ \Coords{J} \subseteq [n] : \EDITX{| \Coords{J} | = \kO} \} |
  = \binom{n}{\kO}
.\end{align*}
%
Therefore, by a union bound over \(  \Set{U}  \), the earlier mentioned uniform bound on \(  \RV{L}_{\Vec{u},\Vec{v}}  \) holds for all \(  \Vec{u}, \Vec{v} \in \bigcup_{\Set{W}^{\Coords{J}} \in \Set{U}} \Set{W}^{\Coords{J}}  \) with probability at least
\begin{gather*}
  1 - \EDITX{2 \binom{n}{\kO} e^{-\frac{1}{64} \DDeltaX m}}
.\end{gather*}
%
Lastly, per the earlier discussion, for any \(  \Vec{u}, \Vec{v} \in \SparseSphereSubspace{k}{n}  \), there exists \(  \Set{W}^{\Coords{J}} \in \Set{U}  \) such that \(  \Vec{u}, \Vec{v} \in \Set{W}^{\Coords{J}}  \).
Thus, the same uniform result applies to all \(  \Vec{u}, \Vec{v} \in \SparseSphereSubspace{k}{n}  \), yielding the lemma's result.
\end{proof}


\section{Proof of Fact \ref{fact:misc:error-decay-recurrence}}
\label{outline:misc:error-decay-recurrence}

Recall Fact \ref{fact:misc:error-decay-recurrence} from
Section \ref{outline:biht:pf-main-thm|>intermediate-lemmas-pf|>error}.
%
\begin{fact*}
[Fact \ref{fact:misc:error-decay-recurrence}]
Let
\(
  \Variable{u}, \Variable{v}, \Variable{w}, \Variable{w}_{0} \in \R_{+}
\)
such that
\(
  \Variable{u} = \frac{1}{2} \left( 1 + \sqrt{1 + 4 \Variable{w}} \right)
\),
and
\(
  1 \leq \Variable{u} \leq \sqrt{\frac{2}{\Variable{v}}}
\).
Define the functions
\(
  \Function{f}_{1}, \Function{f}_{2} : \Z_{\geq 0} \to \R
\)
by
\begin{gather*}
  \Function{f}_{1}(0) = 2
  \\
  \Function{f}_{1}(\FunctionVariable{t})
  =
  \Variable{v} \Variable{w} + \sqrt{\Variable{v} \Function{f}_{1}(\FunctionVariable{t}-1)}
  ,\quad \FunctionVariable{t} \in \Z_{+}
  \\
  \Function{f}_{2}(\FunctionVariable{t})
  =
  2^{2^{-\FunctionVariable{t}}} (\Variable{u}^{2} \Variable{v})^{1 - 2^{-\FunctionVariable{t}}}
  ,\quad \FunctionVariable{t} \in \Z_{\geq 0}
.\end{gather*}
%
Then, \( \Function{f}_{1} \) and \( \Function{f}_{2} \) are strictly monotonically decreasing and
asymptotically converges to
\(
  \Variable{u}^{2} \Variable{v}
\).
Moreover, \( \Function{f}_{2} \) pointwise upper bounds \( \Function{f}_{1} \).
Formally,
\begin{gather*}
  \Function{f}_{1}(\FunctionVariable{t}) \leq \Function{f}_{2}(\FunctionVariable{t})
  ,\quad \forall \FunctionVariable{t} \in \Z_{\geq 0}
  \\
  \lim_{\FunctionVariable{t} \to \infty} \Function{f}_{2}(\FunctionVariable{t})
  =
  \lim_{\FunctionVariable{t} \to \infty} \Function{f}_{1}(\FunctionVariable{t})
  =
  \Variable{u}^{2} \Variable{v}
.\end{gather*}
\end{fact*}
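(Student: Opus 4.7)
The plan is to identify $\Variable{u}^2 \Variable{v}$ as the larger fixed point of the one-step map $g(x) = \Variable{v}\Variable{w} + \sqrt{\Variable{v} x}$ driving $\Function{f}_1$, and then to compare $\Function{f}_1$ with $\Function{f}_2$ by induction. The algebraic hinge is the observation that $\Variable{u} = \frac{1}{2}(1 + \sqrt{1+4\Variable{w}})$ satisfies $\Variable{u}^2 - \Variable{u} = \Variable{w}$, equivalently $\Variable{v}\Variable{w} = \Variable{u}^2 \Variable{v} - \Variable{u}\Variable{v}$. The fixed-point equation $x = g(x)$ then rearranges (after isolating $\sqrt{\Variable{v}x}$ and squaring, which is legitimate whenever $x \geq \Variable{v}\Variable{w}$) to the quadratic $x^2 - (2\Variable{v}\Variable{w} + \Variable{v})x + \Variable{v}^2\Variable{w}^2 = 0$ whose roots are exactly $\Variable{u}^2\Variable{v}$ and $(\Variable{u}-1)^2\Variable{v}$. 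A sign analysis of this quadratic yields $g(x) < x$ on $(\Variable{u}^2\Variable{v}, \infty)$, while $g$ is clearly strictly increasing.

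These two facts immediately yield the monotone convergence of $\Function{f}_1$: since $\Function{f}_1(0) = 2 \geq \Variable{u}^2\Variable{v}$ (this is precisely the role of the hypothesis $\Variable{u} \leq \sqrt{2/\Variable{v}}$), a one-step induction shows $\{\Function{f}_1(t)\}$ is monotonically decreasing and bounded below by $\Variable{u}^2\Variable{v}$, hence converges, and continuity of $g$ forces the limit to be $\Variable{u}^2\Variable{v}$. For $\Function{f}_2$, taking logarithms gives $\log \Function{f}_2(t) = \log(\Variable{u}^2\Variable{v}) + 2^{-t}\log(2/(\Variable{u}^2\Variable{v}))$; since $2/(\Variable{u}^2\Variable{v}) \geq 1$ by hypothesis, the second term is a nonnegative multiple of $2^{-t}$ that strictly decreases to $0$, delivering both the monotone decrease of $\Function{f}_2$ and its limit $\Variable{u}^2\Variable{v}$.

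The substantive step is the pointwise bound $\Function{f}_1(t) \leq \Function{f}_2(t)$, which I will prove by induction on $t$, the base case $\Function{f}_1(0) = \Function{f}_2(0) = 2$ being immediate. For the inductive step, set $a = \Variable{u}^2\Variable{v}$, $\alpha = 2^{-t}$, and $r = (2/a)^\alpha \geq 1$. Using $\sqrt{\Variable{v}a} = \Variable{u}\Variable{v}$, a direct computation gives $\sqrt{\Variable{v}\,\Function{f}_2(t-1)} = 2^\alpha a^{-\alpha}\sqrt{\Variable{v}a} = \Variable{u}\Variable{v}\cdot r$, and combining with $\Variable{v}\Variable{w} = a - \Variable{u}\Variable{v}$ yields $\Variable{v}\Variable{w} + \sqrt{\Variable{v}\Function{f}_2(t-1)} = a - \Variable{u}\Variable{v}(1 - r)$. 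Comparing with $\Function{f}_2(t) = a\cdot r$, the target inequality $a - \Variable{u}\Variable{v}(1-r) \leq a\cdot r$ collapses to $(a - \Variable{u}\Variable{v})(r - 1) \geq 0$, which holds because $\Variable{u} \geq 1$ makes $a - \Variable{u}\Variable{v} = \Variable{u}\Variable{v}(\Variable{u}-1) \geq 0$ and $r \geq 1$ by hypothesis. Combined with monotonicity of $g$ and the inductive hypothesis, this gives $\Function{f}_1(t) = g(\Function{f}_1(t-1)) \leq g(\Function{f}_2(t-1)) \leq \Function{f}_2(t)$. No step poses a genuine obstacle; the only subtleties are checking that squaring is legitimate when deriving the fixed-point equation (handled by the inductive lower bound $\Function{f}_1(t) \geq \Variable{v}\Variable{w}$) and the degenerate case $\Variable{u} = 1$, $\Variable{w} = 0$, in which the key reduction $(a - \Variable{u}\Variable{v})(r-1) \geq 0$ becomes $0 \leq 0$ and the argument still goes through verbatim.
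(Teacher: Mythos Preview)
Your proposal is correct and follows essentially the same approach as the paper: both hinge on the identity $\Variable{w} = \Variable{u}^2 - \Variable{u}$, identify $\Variable{u}^2\Variable{v}$ as the relevant fixed point, and prove $\Function{f}_1 \leq \Function{f}_2$ by an induction that reduces to the nonnegativity of $(\Variable{u}-1)$ times a nonnegative quantity. The only difference is packaging: the paper routes the monotonicity of $\Function{f}_1$ through an auxiliary recurrence $\Function{f}(t) = \sqrt{\Variable{w} + \Function{f}(t-1)}$ via $\Function{f}_1(t) = \Variable{v}\Function{f}^2(t)$, whereas you work directly with the map $g(x) = \Variable{v}\Variable{w} + \sqrt{\Variable{v}x}$, which is a mild streamlining.
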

%
The verification of the fact will use Fact \ref{fact:misc:nested-sqrt-recurrence}.
%
\begin{fact}
\label{fact:misc:nested-sqrt-recurrence}
Let
\(
  \Variable{u}, \Variable{w}, \Variable{w}_{0} \in \R_{+}
\)
\(
  \Variable{u} = \frac{1}{2} \left( 1 + \sqrt{1 + 4 \Variable{w}} \right)
\).
Define the function
\(
  \Function{f} : \Z_{\geq 0} \to \R
\)
by
\begin{gather}
\label{eqn:misc:nested-sqrt-recurrence:f}
  \Function{f}(0) = \Variable{w}_{0},
  \\
  \Function{f}(\FunctionVariable{t}) = \sqrt{\Variable{w} + \Function{f}(\FunctionVariable{t}-1)}
  ,\quad \FunctionVariable{t} \in \Z_{+}.
\end{gather}
%
Then,
\begin{gather}
  \lim_{\FunctionVariable{t} \to \infty} f(\FunctionVariable{t}) = u
\end{gather}
%
Moreover, when
\(
  \Variable{w}_{0} > u
\)
(%
\(
  \Variable{w}_{0} < u
\),
\(
  \Variable{w}_{0} = u
\)%
),
\( \Function{f} \) strictly monotonically decreases (respectively, strictly monotonically
increases, is constant) with respect to \( \FunctionVariable{t} \).
\end{fact}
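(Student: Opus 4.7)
The plan is to identify $u$ as the unique positive fixed point of the map $g : \R_{\geq 0} \to \R_{\geq 0}$ defined by $g(x) = \sqrt{w + x}$, and then carry out a standard monotone fixed-point analysis. I would first observe that the recursion reads $f(t) = g(f(t-1))$ and that $u = \frac{1}{2}(1 + \sqrt{1+4w})$ is the positive root of the quadratic $x^2 - x - w = 0$. Hence $u^2 = u + w$, so $u = \sqrt{w+u} = g(u)$, confirming that $u$ is a fixed point of $g$. Since $g$ is strictly increasing and continuous on $\R_{\geq 0}$ and the quadratic $x^2 - x - w$ has $u$ as its only positive root, $u$ is in fact the unique positive fixed point of $g$.

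The key observation driving the trichotomy is that for any $x \geq 0$, the sign of $g(x) - x = \sqrt{w+x} - x$ agrees with the sign of $(w+x) - x^2 = -(x^2 - x - w)$. Because $x^2 - x - w$ is negative on $(0, u)$, zero at $u$, and positive on $(u, \infty)$, it follows that $g(x) > x$ whenever $0 < x < u$, $g(x) = x$ when $x = u$, and $g(x) < x$ whenever $x > u$.

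From here the three cases follow by a straightforward induction. When $w_0 = u$, the sequence is trivially constant at $u$. When $w_0 > u$, strict monotonicity of $g$ combined with the above sign analysis gives $u = g(u) < g(w_0) < w_0$, i.e., $u < f(1) < f(0)$; applying $g$ to the relation $u < f(t) < f(t-1)$ yields $u < f(t+1) < f(t)$, so $\{f(t)\}$ is strictly decreasing and bounded below by $u$. Passing to the limit $L \geq u$ and invoking continuity of $g$ gives $L = g(L)$, which by uniqueness of the positive fixed point forces $L = u$. The case $w_0 < u$ is entirely symmetric: $\{f(t)\}$ becomes strictly increasing, bounded above by $u$, and converges to $u$ by the same fixed-point argument. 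There is no substantive obstacle here; the only computation worth flagging is the reduction of the sign of $g(x) - x$ to the sign of $u - x$ via the defining quadratic for $u$, which is what makes the trichotomy split cleanly.
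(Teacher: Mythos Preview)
Your proposal is correct and follows essentially the same fixed-point analysis as the paper: both identify $u$ as the positive root of $x^{2}-x-w=0$ (equivalently the fixed point of $g(x)=\sqrt{w+x}$) and use the sign of $w+x-x^{2}$ to drive the trichotomy and convergence. Your organization is slightly cleaner in that you carry the bound $u$ through the induction and invoke continuity of $g$ at the limit, whereas the paper first establishes sign preservation of consecutive differences and then argues the limit separately, but the substance is the same.
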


\begin{proof}
{Fact \ref{fact:misc:nested-sqrt-recurrence}}
\label{pf:fact:misc:nested-sqrt-recurrence}
Let us first show that \( \Function{f} \) is monotone over
\(
  \FunctionVariable{t} \in \Z_{+}
\).
Write
\begin{gather}
\label{pf:fact:misc:nested-sqrt-recurrence:eqn:1}
  \sgnO(a)
  =
  \begin{cases}
     -1, & \cIf a < 0, \\
    \+0, & \cIf a = 0, \\
    \+1, & \cIf a > 0,
  \end{cases}
\end{gather}
and note that
\(
  \sgnO( \Function{f}^{2}(\FunctionVariable{t}) - \Function{f}^{2}(\FunctionVariable{t}') )
  =
  \sgnO( \Function{f}(\FunctionVariable{t}) - \Function{f}(\FunctionVariable{t}') )
\)
for any
\(
  t, t' \geq 0
\).
Moreover, notice that
\(
  \Function{f}^{2}(\FunctionVariable{t})
  =
  ( \sqrt{\Variable{w} + \Function{f}(\FunctionVariable{t}-1)} )^{2}
  =
  \Variable{w} + \Function{f}(\FunctionVariable{t}-1)
\),
\(
  \FunctionVariable{t} \in \Z_{\geq 0}
\).
The goal will be to show that for each
\(
  \FunctionVariable{t} \in \Z_{+}
\),
the sign of
\(
  \Function{f}(\FunctionVariable{t}) - \Function{f}(\FunctionVariable{t+1})
\)
and
\(
  \Function{f}(\FunctionVariable{t-1}) - \Function{f}(\FunctionVariable{t})
\)
match.
Fix
\(
  \FunctionVariable{t} \in \Z_{+}
\)
arbitrarily, and observe
\begin{align}
  \Function{f}^{2}(\FunctionVariable{t}) - \Function{f}^{2}(\FunctionVariable{t+1})
  &=
  \Variable{w} + \Function{f}(\FunctionVariable{t}-1)
  -
  ( \Variable{w} + \Function{f}(\FunctionVariable{t}) )
  \\
  &=
  \Function{f}(\FunctionVariable{t}-1) - \Function{f}(\FunctionVariable{t})
\end{align}
and thus
\begin{gather}
  \sgnO( \Function{f}(\FunctionVariable{t}) - \Function{f}(\FunctionVariable{t+1}) )
  =
  \sgnO( \Function{f}^{2}(\FunctionVariable{t}) - \Function{f}^{2}(\FunctionVariable{t+1}) )
  =
  \sgnO( \Function{f}(\FunctionVariable{t}-1) - \Function{f}(\FunctionVariable{t}) )
\end{gather}
as desired.
The monotonicity of \( \Function{f} \) over \( \Z_{\geq 0} \) follows.
\par
To find the direction of the monotonicity, it suffices to look at
\(
  \sgnO( \Function{f}(1) - \Function{f}(0) )
\)
since the monotonicity has already been argued.
This can be given by
\begin{gather}
  \sgnO( \Function{f}(1) - \Function{f}(0) )
  =
  \sgnO( \Function{f}^{2}(1) - \Function{f}^{2}(0) )
  =
  \sgnO( \Variable{w} + \Function{f}(0) - \Function{f}^{2}(0) )
  =
  \sgnO( \Variable{w} + \Variable{w}_{0} - \Variable{w}_{0}^{2} )
.\end{gather}
%
To determine from this the condition under which \( \Function{f} \) is constant, observe,
\begin{subequations}
\begin{align}
  &
  \sgnO( \Variable{w} + \Variable{w}_{0} - \Variable{w}_{0}^{2} ) = 0
  \\
  &\dLn
  \Variable{w} + \Variable{w}_{0} - \Variable{w}_{0}^{2} = 0
  \\
  &\dLn
  \Variable{w}_{0}
  \in
  \left\{ \frac{1}{2} ( 1 \pm \sqrt{1 + 4\Variable{w}} ) \right\}
  \\
  &\dLn
  \Variable{w}_{0}
  =
  \frac{1}{2} ( 1 + \sqrt{1 + 4\Variable{w}} )
  =
  \Variable{u}
\end{align}
\end{subequations}
%
\begin{gather}
  \Variable{w} + \Variable{w}_{0} - \Variable{w}_{0}^{2}
  \begin{cases}
    < 0, & \cIf \Variable{w}_{0} > \frac{1}{2} ( 1 + \sqrt{1 + 4\Variable{w}} ), \\
    = 0, & \cIf \Variable{w}_{0} = \frac{1}{2} ( 1 + \sqrt{1 + 4\Variable{w}} ), \\
    > 0, & \cIf \Variable{w}_{0} < \frac{1}{2} ( 1 + \sqrt{1 + 4\Variable{w}} ).
  \end{cases}
\end{gather}
%
Hence, \( \Function{f} \) is strictly monotonically decreasing when
\(
  \Variable{w}_{0} > \Variable{u}
\),
constant when
\(
  \Variable{w}_{0} = \Variable{u}
\),
and strictly monotonically increasing when
\(
  \Variable{w}_{0} > \Variable{u}
\),
as claimed.
\par
The final step is to determine the asymptotic behavior of \( \Function{f} \) as
\( \FunctionVariable{t} \to \infty \).
If
\(
  \Variable{w}_{0} = \Variable{u}
\),
then \( \Function{f} \) is constant, implying that
\(
  \lim_{\FunctionVariable{t} \to \infty} \Function{f}(\FunctionVariable{t})
  = \Function{f}(0)
  = \Variable{w}_{0}
  = \Variable{u}
\).
On the other hand, when
\(
  \Variable{w}_{0} \neq \Variable{u}
\)
we would like to characterize some behavior such as
\begin{gather}
  \lim_{\FunctionVariable{t} \to \infty}
  \Function{f}^{2}(\FunctionVariable{t}+1) - \Function{f}^{2}(\FunctionVariable{t})
  = 0
\end{gather}
%
Observe,
\begin{subequations}
\begin{align}
  &
  \Function{f}^{2}(\FunctionVariable{t}+1) - \Function{f}^{2}(\FunctionVariable{t}) = 0
  \\
  &\dLn
  \Variable{w} + \Function{f}(\FunctionVariable{t}) - \Function{f}^{2}(\FunctionVariable{t}) = 0
  \\
  &\dLn
  \Function{f}(\FunctionVariable{t})
  =
  \frac{1}{2} ( 1 + \sqrt{1 + 4 \Variable{w}} )
  =
  \Variable{u}
\end{align}
\end{subequations}
%
Hence, if
\(
  \Variable{w}_{0} > \Variable{u}
\),
the strictly monotonically decreasing function is lower bounded by
\(
  \inf_{\FunctionVariable{t} \in \Z_{\geq 0}} \Function{f}(\FunctionVariable{t}) = \Variable{u}
\),
while the strictly monotonically increasing function is upper bounded by
\(
  \sup_{\FunctionVariable{t} \in \Z_{\geq 0}} \Function{f}(\FunctionVariable{t}) = \Variable{u}
\)
when
\(
  \Variable{w}_{0} < \Variable{u}
\).
But in both cases, the function has strict monotonicity, and therefore it must happen that
\(
  \lim_{\FunctionVariable{t} \to \infty} \Function{f}(\FunctionVariable{t}) = \Variable{u}
\).
\end{proof}

\begin{proof}
{Fact \ref{fact:misc:error-decay-recurrence}}
In addition to defining \( \Function{f}_{1} \) and \( \Function{f}_{2} \) as in
Fact \ref{fact:misc:error-decay-recurrence}, let
\(
  \Function{f} : \Z_{\geq 0} \to \R
\)
be the function as defined in Fact \ref{fact:misc:nested-sqrt-recurrence}, which is given by the
recurrence relation
\begin{gather}
  \Function{f}(0) = \Variable{w}_{0}
  \\
  \Function{f}(\FunctionVariable{t}) = \sqrt{w + \Function{f}(\FunctionVariable{t}-1)}
\end{gather}
where for the purposes of this proof, \( \Variable{w}_{0} \) is fixed as
\(
  \Variable{w}_{0} = \sqrt{\frac{2}{\Variable{v}}}
\).
Notice that the function \( \Function{f}_{1} \) can be written as
\begin{gather}
  \Function{f}_{1}(\FunctionVariable{t})
  =
  \Variable{v} \Variable{w} + \sqrt{\Variable{v} \Function{f}_{1}(\FunctionVariable{t}-1)}
  =
  \Variable{v}
  \left( \Variable{w} + \sqrt{\frac{\Function{f}_{1}(\FunctionVariable{t}-1)}{\Variable{v}}} \right)
  =
  \Variable{v}
  \left( \Variable{w} + \Function{f}(\FunctionVariable{t}-1) \right)
  =
  \Variable{v} \Function{f}^{2}(\FunctionVariable{t})
\end{gather}
%
Then, the monotonicity and asymptotic behavior of the functions \( \Function{f}_{1} \) follow
directly from Fact \ref{fact:misc:nested-sqrt-recurrence}.
\begin{gather}
  \lim_{\FunctionVariable{t} \to \infty} \Function{f}_{1}(\FunctionVariable{t})
  =
  \lim_{\FunctionVariable{t} \to \infty}
  \Variable{v} \Function{f}^{2}(\FunctionVariable{t})
  =
  \Variable{u}^{2} \Variable{v}
\end{gather}
On the other hand, for \( \Function{f}_{2} \),
\begin{gather}
  \lim_{\FunctionVariable{t} \to \infty} \Function{f}_{2}(\FunctionVariable{t})
  =
  \lim_{\FunctionVariable{t} \to \infty}
  2^{2^{-\FunctionVariable{t}}} (\Variable{u}^{2} \Variable{v})^{1 - 2^{-\FunctionVariable{t}}}
  =
  1 \cdot \Variable{u}^{2} \Variable{v}
  =
  \Variable{u}^{2} \Variable{v}
\end{gather}
%
\par
The function \( \Function{f}_{2} \) can be shown inductively to pointwise upper bound
\( \Function{f}_{1} \).
The base case,
\(
  \FunctionVariable{t} = 0
\),
is trivial since
\(
  \Function{f}_{2}(0)
  = 2^{2^{0}} (\Variable{u}^{2} \Variable{v})^{1 - 2^{0}}
  = 2 \cdot 1
  = 2
  = \Function{f}_{1}(0)
\).
Letting
\(
  \FunctionVariable{t} \in \Z_{+}
\),
suppose that for each
\(
  \FunctionVariable{t'} \in \{ 2,\dots,\FunctionVariable{t}-1 \}
\),
the bound
\(
  \Function{f}_{1}(\FunctionVariable{t'}) \leq \Function{f}_{2}(\FunctionVariable{t'})
\)
holds.
Then, the desired result will follow from induction if it is shown that
\(
  \Function{f}_{1}(\FunctionVariable{t}) \leq \Function{f}_{2}(\FunctionVariable{t})
\).
To verify this, note that \( \Function{f}_{2} \) can be written as the following recurrence
relation
\begin{gather}
  \Function{f}_{2}(0) = 2
  \\
  \Function{f}_{2}(\FunctionVariable{t})
  = \sqrt{\Variable{u}^{2} \Variable{v} \Function{f}_{2}(\FunctionVariable{t}-1)}
\end{gather}
since it was already argued that
\(
  \Function{f}_{2}(0) = 2
\)
and otherwise for
\(
  \FunctionVariable{t} \in \Z_{+}
\),
\begin{subequations}
\begin{align}
  \sqrt{\Variable{u}^{2} \Variable{v} \Function{f}_{2}(\FunctionVariable{t}-1)}
  &=
  \left( \Variable{u}^{2} \Variable{v} \right)^{\frac{1}{2}}
  \left( \Function{f}_{2}(\FunctionVariable{t}-1) \right)^{\frac{1}{2}}
  \\
  &=
  \left( \Variable{u}^{2} \Variable{v} \right)^{\frac{1}{2}}
  \left( \Variable{u}^{2} \Variable{v} \right)^{\frac{1}{2^{2}}}
  \left( \Function{f}_{2}(\FunctionVariable{t}-2) \right)^{\frac{1}{2^{2}}}
  =
  \left( \Variable{u}^{2} \Variable{v} \right)^{\frac{1}{2} + \frac{1}{2^{2}}}
  \left( \Function{f}_{2}(\FunctionVariable{t}-2) \right)^{\frac{1}{2^{2}}}
  \\
  &=
  \left( \Variable{u}^{2} \Variable{v} \right)^{\frac{1}{2} + \frac{1}{2^{2}} + \frac{1}{2^{3}}}
  \left( \Function{f}_{2}(\FunctionVariable{t}-3) \right)^{\frac{1}{2^{3}}}
  \\
  &\vdots
  \\
  &=
  \left( \Variable{u}^{2} \Variable{v} \right)^{\sum_{s=1}^{\FunctionVariable{t'}} 2^{-s}}
  \left( \Function{f}_{2}(\FunctionVariable{t}-\FunctionVariable{t'}) \right)^{2^{-t'}}
  \\
  &\vdots
  \\
  &=
  \left( \Variable{u}^{2} \Variable{v} \right)^{\sum_{s=1}^{\FunctionVariable{t}} 2^{-s}}
  \left(
    \Function{f}_{2}(\FunctionVariable{t}-\FunctionVariable{t})
  \right)^{2^{-\FunctionVariable{t}}}
  =
  \left( \Variable{u}^{2} \Variable{v} \right)^{\sum_{s=1}^{\FunctionVariable{t}} 2^{-s}}
  \left(
    \Function{f}_{2}(0)
  \right)^{2^{-\FunctionVariable{t}}}
  =
  2^{2^{-\FunctionVariable{t}}}
  \left( \Variable{u}^{2} \Variable{v} \right)^{1-2^{-\FunctionVariable{t}}}
  \\
  &=
  \Function{f}_{2}(\FunctionVariable{t})
\end{align}
\end{subequations}
as desired.
With the above argument, it suffices to show that
\(
  \Function{f}_{1}(\FunctionVariable{t})
  \leq
  \sqrt{\Variable{u}^{2} \Variable{v} \Function{f}_{2}(\FunctionVariable{t}-1)}
\).
Note that
\begin{subequations}
\begin{align}
  &
  \Variable{u}^{2}
  =
  \frac{1}{4}
  \left( 1 + \sqrt{1 + \Variable{w}} \right)^{2}
  =
  \Variable{u} + \Variable{w}
  \\
  &\dLn
  \Variable{w} = \Variable{u}^{2} - \Variable{u}
\end{align}
\end{subequations}
%
Then, observe,
\begin{subequations}
\begin{align}
  \Function{f}_{1}(\FunctionVariable{t})
  -
  \sqrt{\Variable{u}^{2} \Variable{v} \Function{f}_{2}(\FunctionVariable{t}-1)}
  &=
  \Variable{v} \Variable{w} + \sqrt{\Variable{v} \Function{f}_{1}(\FunctionVariable{t}-1)}
  -
  \sqrt{\Variable{u}^{2} \Variable{v} \Function{f}_{2}(\FunctionVariable{t}-1)}
  \\
  &\leq
  \Variable{v} \Variable{w} + \sqrt{\Variable{v} \Function{f}_{2}(\FunctionVariable{t}-1)}
  -
  \sqrt{\Variable{u}^{2} \Variable{v} \Function{f}_{2}(\FunctionVariable{t}-1)}
  ,\dCmt \Text{by the inductive hypothesis}
  \\
  &=
  \Variable{v} \left( \Variable{u}^{2} - \Variable{u} \right)
  + \sqrt{\Variable{v} \Function{f}_{2}(\FunctionVariable{t}-1)}
  - \sqrt{\Variable{u}^{2} \Variable{v} \Function{f}_{2}(\FunctionVariable{t}-1)}
  \\
  &=
  \Variable{v} \Variable{u}^{2}
  - \Variable{v} \Variable{u}
  + \sqrt{\Variable{v} \Function{f}_{2}(\FunctionVariable{t}-1)}
  - \Variable{u} \sqrt{\Variable{v} \Function{f}_{2}(\FunctionVariable{t}-1)}
  \\
  &=
  (\Variable{u} - 1) \Variable{u} \Variable{v}
  -
  (\Variable{u} - 1) \sqrt{\Variable{v} \Function{f}_{2}(\FunctionVariable{t}-1)}
  \\
  &\leq
  (\Variable{u} - 1) \Variable{u} \Variable{v}
  -
  (\Variable{u} - 1) \sqrt{\Variable{v} ( \Variable{u}^{2} \Variable{v} )}
  =
  0.
\end{align}
\end{subequations}
Hence,
\begin{gather}
  \Function{f}_{1}(\FunctionVariable{t})
  -
  \sqrt{\Variable{u}^{2} \Variable{v} \Function{f}_{2}(\FunctionVariable{t}-1)}
  \leq 0
  \Longrightarrow
  \Function{f}_{1}(\FunctionVariable{t})
  \leq \sqrt{\Variable{u}^{2} \Variable{v} \Function{f}_{2}(\FunctionVariable{t}-1)}
  = \Function{f}_{2}(\FunctionVariable{t})
\end{gather}
%
By induction,
\(
  \Function{f}_{1}(\FunctionVariable{t}) \leq \Function{f}_{2}(\FunctionVariable{t})
\)
for every
\(
  \FunctionVariable{t} \in \Z_{\geq 0}
\).
\end{proof}


\section{A Different Invertibility Condition~\cite{friedlander2021nbiht}}
\label{outline:technical:raic:friedlander}

\begin{definition}
[{restricted approximate invertibility condition as defined in \cite[Def. 8]{friedlander2021nbiht}}]
\label{def:raic:friedlander}
Fix
\(
  \nu, \delta, \eta, r, r' > 0
\).
Let
\(
  \MeasMat \in \R^{m \times n}
\)
be a measurement matrix, and let
\(
  \Vec{x} \in \SparseSphereSubspace{k}{n}
\).
The \( (\nu, \delta, \eta, r, r') \)-RAIC holds for \( \MeasMat \) at \( \Vec{x} \) if for every
\(
  \Vec{y} \in \SparseSphereSubspace{k}{n}
\),
\(
  r \leq \DistS{\Vec{x}}{\Vec{y}} \leq r'
\),
\begin{gather}
  \left\|
    (\Vec{x} - \Vec{y})
    -
    \nu \MeasMat^{\T}
    \left( \Sgn( \MeasMat \Vec{x} ) - \Sgn( \MeasMat \Vec{y} ) \right)
  \right\|_{(\SparseSphereSubspace{k}{n})^{\circ}}
  \leq
  \delta
  \DistS{\Vec{x}}{\Vec{y}}
  +
  \eta
\end{gather}
where
\(
  \left\| \cdot \right\|_{(\SparseSphereSubspace{k}{n})^{\circ}}
\)
denotes the dual norm given by
\(
  \left\| \Vec{u} \right\|_{(\SparseSphereSubspace{k}{n})^{\circ}}
  =
  \sup_{\Vec{u}' \in \SparseSphereSubspace{k}{n}}
  \langle \Vec{u}, \Vec{u}' \rangle
\)
for
\(
  \Vec{u} \in \R^{n}
\).
\end{definition}

Instead of the $\ell_2$-norm as in our definition, this definition resorts to the dual norm. Furthermore, our definition of RAIC should hold for all pair of vectors uniformly; whereas in the above definition invertibitily condition is asked for vectors within distance $[r,r'].$ Both of these two differences make our definition simpler to state and handle, and also allow us to do a precise analysis in the ``small-distance'' regime.
\end{appendix}

\clearpage

\end{document}